\documentclass[11pt]{article}


\usepackage{latexsym}
\usepackage{amsmath}
\usepackage{amssymb}
\usepackage{amsthm}
\usepackage{hyperref}
\usepackage{cite}
\usepackage{graphicx}
\usepackage{color}
\usepackage{enumitem}
\usepackage{bm}
\usepackage[left=1in,right=1in,top=1in,bottom=1in]{geometry}
\usepackage{xspace}
\setlist[description]{font=\normalfont\itshape\textbullet\space}
\setcounter{MaxMatrixCols}{20}
\usepackage[all]{xy}
\usepackage{doi, url}

\renewcommand{\paragraph}[1]{\vspace{6pt} \noindent \textbf{#1}\xspace}

\theoremstyle{plain}
\newtheorem{theorem}{Theorem}[section]
\newtheorem{maintheorem}{Theorem}

\newtheorem*{thmsearch}{Theorem~\ref{thm:search_decision}}
\newtheorem*{corsearch}{Corollary~\ref{cor:search_decision}}
\newtheorem*{thmdto3}{Theorem~\ref{thm:d_to_3}}
\newtheorem*{corp}{Corollary~P}
\newcommand{\refcorp}{Cor.~\hyperref{}{}{cor:p}{P}}
\newtheorem{maincorollary}{Corollary}

\newtheorem{corollary}[theorem]{Corollary}
\newtheorem{lemma}[theorem]{Lemma}
\newtheorem{observation}[theorem]{Observation}
\newtheorem{proposition}[theorem]{Proposition}

\theoremstyle{definition}

\newtheorem{remark}[theorem]{Remark}

\newtheorem{definition}[theorem]{Definition}
\newtheorem{example}[theorem]{Example}
\newtheorem{question}[theorem]{Open Question}

\newcommand{\SL}{\mathrm{SL}}
\newcommand{\GL}{\mathrm{GL}}
\newcommand{\F}{\mathbb{F}}
\newcommand{\Z}{\mathbb{Z}}
\newcommand{\Q}{\mathbb{Q}}
\newcommand{\C}{\mathbb{C}}
\newcommand{\R}{\mathbb{R}}

\renewcommand{\hom}{\mathrm{Hom}}

\newcommand{\rk}{\mathrm{rk}}

\newcommand{\poly}{\mathrm{poly}}

\newcommand{\M}{\mathrm{M}}
\newcommand{\T}{\mathrm{T}}
\renewcommand{\S}{\mathrm{S}}
\newcommand{\Mon}{\mathrm{Mon}}

\newcommand{\tuple}[1]{\mathbf{#1}}
\newcommand{\tens}[1]{\mathtt{#1}}
\newcommand{\spa}[1]{\mathcal{#1}}

\newcommand{\cA}{\spa{A}}
\newcommand{\cB}{\spa{B}}

\newcommand{\tA}{\tens{A}}
\newcommand{\tB}{\tens{B}}
\newcommand{\tE}{\tens{E}}
\newcommand{\tF}{\tens{F}}

\newcommand{\vA}{\tuple{A}}
\newcommand{\vB}{\tuple{B}}

\newcommand{\vzero}{\tuple{0}}
\newcommand{\tzero}{\tens{0}}

\newcommand{\oracle}{\mathcal{O}}

\newcommand{\linspan}{\mathrm{span}}

\newcommand{\algprobm}[1]{{\sc #1}\xspace}
\newcommand{\GI}{\algprobm{GI}}
\newcommand{\GIlong}{\algprobm{Graph Isomorphism}}
\newcommand{\GpI}{\algprobm{GpI}}
\newcommand{\GpIlong}{\algprobm{Group Isomorphism}}
\newcommand{\CubicFormlong}{\algprobm{Cubic Form Equivalence}}
\newcommand{\DFormlong}{\algprobm{Degree-$d$ Form Equivalence}}
\newcommand{\NcCubicFormlong}{\algprobm{Trilinear} \algprobm{Form} \algprobm{Equivalence}} 
\newcommand{\AlgIsolong}{\algprobm{Algebra Isomorphism}}
\newcommand{\AltMatSpIsomlong}{\algprobm{Alternating Matrix Space Isometry}}
\newcommand{\AltMatSpIsomWords}{\algprobm{Alternating} \algprobm{Matrix} \algprobm{Space} \algprobm{Isometry}} 
\newcommand{\SymMatSpIsomlong}{\algprobm{Symmetric Matrix Space Isometry}}
\newcommand{\MatSpIsomlong}{\algprobm{Matrix Space Isometry}}
\newcommand{\MatSpConjlong}{\algprobm{Matrix Space Conjugacy}}
\newcommand{\MatSpIsomlongWords}{\algprobm{Matrix} \algprobm{Space} \algprobm{Isometry}} 
\newcommand{\MatSpConjlongWords}{\algprobm{Matrix} \algprobm{Space} \algprobm{Conjugacy}} 
\newcommand{\AltMatSpMonIsomlong}{\algprobm{Alternating Matrix Space Monomial Isometry}}
\newcommand{\MatSpEquivlong}{\algprobm{Matrix Space Equivalence}}
\newcommand{\ThreeTI}{\algprobm{3TI}}
\newcommand{\DeeTI}{$d$\algprobm{TI}}
\newcommand{\ThreeTIlong}{\algprobm{3-Tensor Isomorphism}}

\newcommand{\TI}{\algprobm{TI}}
\newcommand{\TIlong}{\algprobm{Tensor Isomorphism}}
\newcommand{\DeeTIlong}{\algprobm{$d$-Tensor Isomorphism}}
\newcommand{\CodeEq}{\algprobm{CodeEq}}
\newcommand{\CodeEqlong}{\algprobm{Code Equivalence}}
\newcommand{\MonCodeEqlong}{\algprobm{Monomial Code Equivalence}}
\newcommand{\MatLieConjlong}{\algprobm{Matrix Lie Algebra Conjugacy}}

\newcommand{\cc}[1]{\mathsf{#1}}

\DeclareMathOperator{\id}{id}
\DeclareMathOperator{\rank}{rank}
\DeclareMathOperator{\diag}{diag}
\DeclareMathOperator{\chr}{char}

\newcommand{\too}%
{\xrightarrow{\text{\raisebox{-3pt}{$\sim$}}\,}}

\newcommand{\logbf}{\mathbf{log}}
\newcommand{\expbf}{\mathbf{exp}}

\usepackage[T1]{fontenc}
\def\DJ{{\hbox{D\kern-.8em\raise.15ex\hbox{--}\kern.35em}}}

\usepackage{arydshln}
\hypersetup{
    pdftitle={Isomorphism problems for groups, 3-tensors, and cubic forms: 
completeness and reductions},
    pdfauthor={Joshua A. Grochow and Youming Qiao},
    bookmarksnumbered=true,     
    bookmarksopen=true,         
    bookmarksopenlevel=1,       
    colorlinks=true,            
    pdfstartview=Fit,           
    pdfpagemode=UseOutlines,    
    pdfpagelayout=TwoPageRight
}

\title{
Isomorphism problems for tensors, groups, 
and cubic forms: 
completeness and reductions
}

\author{
Joshua A. Grochow
\footnote{Departments of Computer Science and Mathematics, University of Colorado, Boulder. \tt{jgrochow@colorado.edu}}
\and
Youming Qiao
\footnote{Centre for Quantum Software and Information, University of 
Technology Sydney. \tt{youming.qiao@uts.edu.au}}
}

\date{\today}

\begin{document}

\pagenumbering{gobble}  

\maketitle

\begin{abstract}
In this paper we consider the problems of testing isomorphism of tensors, 
$p$-groups, cubic 
forms, algebras, and more, which arise from a variety of areas, including machine learning, group theory, and cryptography. These problems can all be cast as orbit problems on multi-way 
arrays under different group actions. Our first two main results are:
 \begin{enumerate}
\item \label{abs:TI-complete}All the aforementioned isomorphism problems are equivalent under polynomial-time reductions, in conjunction with the recent results of Futorny--Grochow--Sergeichuk 
(\emph{Lin. Alg. Appl.}, 2019).

\item \label{abs:d-to-3} Isomorphism of $d$-tensors reduces to isomorphism of $3$-tensors, for any $d \geq 3$. 
\end{enumerate}
All but one of the reductions for the preceding contributions work over arbitrary fields. Together they
suggest that the aforementioned isomorphism problems form a rich and robust equivalence class,
which we call \TIlong-complete, or \TI-complete for short.
Furthermore, this provides a unified viewpoint on these hard isomorphism testing 
problems arising from a variety of areas.

We then leverage the techniques used in the above results to prove two first-of-their-kind results for \GpIlong (\GpI):
\begin{enumerate}
\setcounter{enumi}{2}
\item \label{abs:class} We give a reduction from testing isomorphism of $p$-groups of exponent $p$ and small class ($c < p$) to isomorphism of $p$-groups of exponent $p$ and class 2. The latter are widely believed to be the hardest cases of \GpI, but as far as we know, this is the first reduction from any more general class of groups to this class.

\item \label{abs:search-to-decision} We give a search-to-decision reduction for isomorphism of $p$-groups of exponent $p$ and class $2$ in time $|G|^{O(\log \log |G|)}$. While search-to-decision reductions for \GIlong (\GI) have been known for more than 40 years, as far as we know this is the first non-trivial search-to-decision reduction in the context of \GpI.
\end{enumerate}

Our main technique for (\ref{abs:TI-complete}), (\ref{abs:class}), and (\ref{abs:search-to-decision}) is a 
linear-algebraic analogue of the classical graph coloring gadget, which was used 
to obtain the search-to-decision reduction for \algprobm{GI}. This gadget 
construction may be of independent interest and utility.
The technique for (\ref{abs:d-to-3}) gives a method for encoding an arbitrary tensor into an algebra.
\end{abstract}

\newpage
\pagenumbering{arabic}     

\section{Introduction}\label{sec:intro}


\newcommand{\Sec}[1]{Sec.~\ref{#1}}
\newcommand{\Thm}[1]{Thm.~\ref{#1}}
\newcommand{\Obs}[1]{Obs.~\ref{#1}}
\newcommand{\Prop}[1]{Prop.~\ref{#1}}
\newcommand{\Cor}[1]{Cor.~\ref{#1}}
\newcommand{\Lem}[1]{Lem.~\ref{#1}}

\paragraph{Isomorphism problems in light of Babai's breakthrough on Graph 
Isomorphism.}
In late 2015, Babai presented a quasipolynomial-time algorithm for  
\GIlong (\GI) \cite{Bab16}. This is widely regarded as one of the major 
breakthroughs in 
theoretical computer science of the past decade. Indeed, \GI
 has been at the heart of complexity theory 
nearly since its inception: both Cook and Levin were thinking about \GI when they 
defined $\cc{NP}$ \cite[Sec. 1]{AllenderDas}, \algprobm{Graph (Non-)Isomorphism} 
played a special role in the creation of the 
class $\cc{AM}$ \cite{babai85, GMR85, BM88}, 
and it still stands today 
as one of the few 
natural candidates for a problem that 
is ``$\cc{NP}$-intermediate,'' that is, in $\cc{NP}$, but neither in $\cc{P}$ nor 
$\cc{NP}$-complete \cite{Ladner} (see \cite{StackExchangeIntermediate} for 
additional 
candidates). Beyond its practical applications (e.\,g., \cite{SV17, irniger} and references therein) and its 
naturality, part of its fascination comes from its universal property: \GI is 
universal for isomorphism problems for ``explicitly given'' structures \cite[Sec.~15]{ZKT}, that is, first-order structures on a set $V$ 
where, e.\,g., a $k$-ary relation on $V$ is given by listing out a subset $R 
\subseteq V^k$. 

In light of Babai's breakthrough
on \GI \cite{Bab16}, it is natural to consider ``what's next?'' 
for isomorphism problems. That is, what isomorphism problems stand as crucial 
bottlenecks to further improvements on \GI, and what isomorphism problems should 
naturally draw our attention for further exploration? Of course, 
one of the main open questions in the area remains whether or not \GI is in $\cc{P}$.
Babai \cite[arXiv version, Sec.~13.2 and 13.4]{Bab16} already lists several 
isomorphism problems for further study, including \GpIlong, 
\algprobm{Linear Code Equivalence}, and \algprobm{Permutation Group Conjugacy}. In 
this paper we expand this list in what we argue is a very natural direction, 
namely to \emph{isomorphism problems for multi-way arrays}, also known as 
tensors.\footnote{There have been some disputes on the terminologies; see the 
preface of \cite{Lan12}. Our approach is to use 
multi-way arrays as the basic 
underlying object, and to use tensors as the multi-way arrays under a certain group 
action.}

\paragraph{Group actions on 3-way arrays.} 3-way arrays 
are simply arrays with 3 indices, generalizing the case of matrices (=2-way arrays). In this paper we consider entries of the arrays being from a field 
$\F$, so a 3-way array is just $\tA=(a_{i,j,k})$, $i\in[\ell]$, $j\in[n]$, 
$k\in[m]$, and $a_{i,j,k}\in\F$. 

Let $\GL(n, \F)$ be the general linear group of degree $n$ over $\F$, and let 
$\M(n,\F)$ denote the set of $n \times n$ matrices. There are three natural group 
actions on $\M(n,\F)$: for $A \in \M(n,\F)$, (1) $(P,Q) \in \GL(n,\F) \times 
\GL(n,\F)$ sends $A$ to $P^t A Q$, (2) $P \in \GL(n,\F)$ sends $A$ to $P^{-1} A 
P$, and (3) $P \in \GL(n,\F)$ sends $A$ to $P^t A P$.
These three actions 
then endow $A$ with different algebraic/geometric interpretations: (1) a linear map 
from a vector space $V$ to another vector space $W$, (2) a linear map from $V$ to 
itself, and (3) a bilinear map from $V\times V$ to $\F$. 

Likewise, 3-way arrays $\tA=(a_{i,j,k})$, $i, j, k\in[n]$, can be naturally acted 
by $\GL(n, \F)\times \GL(n, \F)\times \GL(n, \F)$ in one way, by $\GL(n, \F)\times 
\GL(n, \F)$ in two different ways, and by $\GL(n, \F)$ in two different ways. 
These five actions endow various families of 3-way arrays with 
different algebraic/geometric 
meanings, including 3-tensors, bilinear maps, matrix (associative or Lie) 
algebras, and trilinear forms (a.k.a. 
non-commutative cubic forms). (See \Sec{sec:problems} for detailed explanations.) 
Over finite fields, the associated isomorphism problems are in $\cc{NP}\cap\cc{coAM}$, following the essentially same 
$\cc{coAM}$ 
protocol as for \GI.

With these group actions in mind, 3-way arrays capture a variety of important structures in several mathematical and computational disciplines. 
They arise naturally in quantum mechanics (states are described by 
tensors), the complexity of matrix multiplication (matrix multiplication is
described by a tensor, and its algebraic complexity is essentially its tensor 
rank), the Geometric Complexity Theory approach \cite{Mul11} to the Permanent versus 
Determinant Conjecture \cite{Val79} (tensors describe the boundary of the 
determinant orbit 
closure, e.\,g., \cite[Sec.~13.6.3]{Lan12} and
\cite[Sec.~3.5.1]{grochowPhD} for introductions,
and \cite{HL16, H17} for applications), data analysis \cite{KB09},  
machine learning 
\cite{PSS18}, computational group theory \cite{LQ17, BMW18}, and cryptography 
\cite{Pat96,JQSY19}.

\paragraph{Main results.} 
The five natural actions on 3-way arrays mentioned above each lead to a different 
isomorphism problem on 3-way arrays; we discuss these problems and their 
interpretations in \Sec{sec:problems}. 
Our first main result, \Thm{thm:main}, shows that these 
isomorphism problems for 3-way arrays are all equivalent under 
polynomial-time reductions. Due to the 
algebraic or geometric interpretations, these problems are further 
equivalent to isomorphism problems on certain classes of groups, 
cubic forms, trilinear forms (a.k.a. non-commutative cubic forms), 
associative algebras, and Lie algebras. 
One consequence of these results (\refcorp), along with those of \cite{FGS19}, is a reduction 
from \GpI for $p$-groups of exponent $p$ and class $< p$ to \GpI for $p$-groups of 
exponent $p$ and class 2. Although the latter have long been believed to be the 
hardest cases of \GpI, as far as we are aware, this is the first reduction from a 
more general class of groups to this class.

Although these equivalences may have been expected by some experts, it had not been 
immediately clear to us for some time during this project. 
To get a sense for the non-obviousness, let us postulate  the following hypothetical question. 
Recall that two matrices $A, B\in \M(n, \F)$ are called \emph{equivalent} if 
there exists $P, Q\in\GL(n, \F)$ such that $P^{-1}AQ=B$, and they are \emph{conjugate} if 
there 
exists $P\in \GL(n,\F)$ such that $P^{-1}AP=B$. Can we reduce testing \algprobm{Matrix Conjugacy}
to testing \algprobm{Matrix Equivalence}? Of course since they are both in 
$\cc{P}$ there is a trivial reduction; to avoid this, let us consider only 
reductions $r$ which send a matrix $A$ to a matrix $r(A)$ such that $A$ and $B$ 
are conjugate iff $r(A)$ and $r(B)$ are equivalent. Nearly all reductions between 
isomorphism problems that we are aware of have this form (so-called ``kernel 
reductions'' \cite{FortnowGrochowPEq}; 
cf. functorial reductions \cite{BabaiSR}). 
After some thought, we realize that this is essentially 
impossible. The reason is that the equivalence class of a matrix is completely determined by its 
rank, while the conjugacy class of a matrix is determined by its rational canonical form. Among $n \times n$ matrices there are only $n+1$ equivalence classes, but there are at least $|\F|^n$ rational canonical forms (coming from the choice of minimal polynomial/companion matrix). Even when $\F$ is a finite field, such a reduction would thus require an exponential increase in dimension, and when $\F$ is infinite, such a reduction is impossible (regardless of running time).

Nonetheless, one of our results is that for \emph{spaces} of matrices (one form of 3-way arrays), conjugacy testing does indeed reduce to equivalence testing!
This is in sharp contrast to the case of single matrices. In 
the above setting, it means that there exists a polynomial-time computable map 
$\phi$ from $\M(n, \F)$ to \emph{subspaces of} $\M(s, \F)$, such that $A, B$ are 
conjugate up to a scalar if and only if $\phi(A), \phi(B)\leq \M(s, \F)$ are 
equivalent as matrix spaces. Such a reduction may not be clear 
at first sight.

Our second main result reduces \DeeTI to \ThreeTI, for any fixed $d \geq 3$. 
From one viewpoint, this can be seen as a linear algebraic analogue of the 
now-classical reduction from $d$-uniform \algprobm{Hypergraph Isomorphism} to \GI (e.\,g., \cite{ZKT}). 
However, as the reader will see, the reduction here is quite a bit more involved, 
using quiver algebras and the Wedderburn--Mal'cev Theorem on complements of the 
Jacobson radical in associative algebras. From another viewpoint, this can be seen 
as a step towards showing that \ThreeTI is not only universal among isomorphism 
problems on 3-way arrays \cite{FGS19}, but perhaps \ThreeTI is already universal 
for isomorphism problems on $d$-way arrays for any $d$; see \Sec{sec:universality}.
These first two results indicate the robustness and naturality of the notion of 
$\cc{TI}$-completeness.

Our next set of results reduce \GIlong and \algprobm{Linear Code Equivalence} to these 
isomorphism problems for 3-way arrays (\Sec{sec:GI_code}). This 
shows that these isomorphism problems for 3-way arrays form a set of potentially 
harder problems than these two problems, as 
also supported by the current difference in their practical 
difficulties.\footnote{There is a heuristic 
algorithm for \algprobm{Linear Code Equivalence} by Sendrier \cite{Sen00}, which 
is practically effective in many cases, 
though for self-dual codes it reverts to an exponential search.}  It currently 
seems unlikely to us that either 
\GIlong or \CodeEqlong is $\cc{TI}$-complete. 

Finally, our third main contribution is to 
show a search-to-decision reduction for these 
tensor problems (\Thm{thm:search_decision}), which may be of independent interest,  leveraging our technique from above.
While such a reduction has long been known for \GI, for \GpIlong in general this 
remains a long-standing open question. Our techniques allow us to give a  
search-to-decision reduction for isomorphism of $p$-groups of class 2 and exponent $p$ in time 
$|G|^{O(\log \log |G|)}$ in the model of matrix groups over finite fields.
This group class is widely regarded to be the hardest cases of 
\GpIlong. As far as we know, this is the first 
non-trivial search-to-decision reduction for testing isomorphism of a class of finite groups. 

\paragraph{Implications of main results for practical 
computations.} Our first 
main result may partly help to explain the difficulties from various areas when dealing 
with these isomorphism problems. There is currently a significant difference between isomorphism 
problems for 3-way arrays and that for graphs. Namely, in sharp contrast to 
\GIlong---for 
which very effective practical algorithms have existed for some
time \cite{McK80,MP14}---the 
problems we consider here all still pose great difficulty even on relatively small 
examples in practice. Indeed, such problems have been proposed to be difficult 
enough 
for cryptographic purposes \cite{Pat96,JQSY19}. As further evidence of their 
practical 
difficulty, current 
algorithms implemented for \AltMatSpIsomlong\footnote{An $n\times n$ matrix $A$ 
over $\F$ is alternating if for every $v\in \F^n$, $v^tAv=0$. When $\F$ is not of 
characteristic $2$, this is equivalent to the skew-symmetry condition.}---a 
problem we show is $\cc{TI}$-complete---can handle the 
cases when the 3-way array is of size $10\times 10\times 10$ over $\F_{13}$, but 
absolutely  not for 3-way arrays of size $100\times 100\times 100$, even though in 
this case the input can still be stored  in only a 
few megabytes.\footnote{We thank James B. Wilson, who maintains a suite of 
algorithms for $p$-group isomorphism testing, for communicating this insight to us 
from his hands-on 
experience. We of course maintain responsibility for any possible 
misunderstanding, or lack of knowledge regarding the performance of other 
implemented algorithms.}
In \cite{PSS18}, motivated by machine learning applications, 
computations on one $\cc{TI}$-complete problem were performed in Macaulay2 \cite{M2}, but these could 
not go beyond small examples either. 
Our results imply that the complexities of these problems arising in many fields%
---from computational group theory to cryptography to machine learning---are all 
equivalent.

\paragraph{Isomorphism problems for 3-way arrays as a bottleneck for graph 
isomorphism.}
In addition to their many 
incarnations and practical uses mentioned above, the isomorphism problems we 
consider on 3-way arrays can be further motivated by their relationship to \GI. 
Specifically, these problems both form a key bottleneck to putting \GI into 
$\cc{P}$, and pose a great challenge for extending techniques used to solve \GI.

Isomorphism problems for 3-way arrays 
stand as a key bottleneck to put \GI in $\cc{P}$. 
This is because, as Babai 
pointed out 
\cite{Bab16}, \GpIlong is a key bottleneck to putting \GI into $\cc{P}$. 
Indeed, the current-best upper bounds on these two problems are now quite 
close: $n^{O(\log n)}$ for \GpIlong (originally due to \cite{FN70, 
Mil78}\footnote{Miller attributes this to 
Tarjan.}, with improved constants   
\cite{Wil14, Ros13a, Ros13b}), 
and 
$n^{O(\log^2 n)}$ for \GI \cite{Bab16} (see \cite{HBD17} for calculation of the exponent). 
Within \GpIlong, it is 
widely regarded, for several reasons 
(e.\,g., \cite{Bae38, HigmanEnum, SergeichukPgpWild, WilsonWildSlides}), that the 
bottleneck is the class of $p$-groups 
of class 2 and exponent $p$ (i.e., $G/Z(G)$ is abelian and $g^p=1$ for all $g$, 
$p$ odd). 
Then 3-way arrays enter the picture by Baer's Correspondence 
\cite{Bae38}, which shows that the isomorphism problem for these groups is 
equivalent to 
telling whether two linear spaces of skew-symmetric matrices over $\F_p$ are 
equivalent up to transformations of the form $A \mapsto P^t A P$. This is the 
\AltMatSpIsomlong problem, which we show in this paper is 
$\cc{TI}$-complete.\footnote{Because of the difference in verbosity of inputs, 
solving \GpIlong for this class of groups in time $\poly(|G|)$ 
is equivalent  to solving \AltMatSpIsomlong in time $p^{O(n+m)}$ for $n\times n$ matrix 
spaces of dimension $m$ over $\F_p$. The current state of the art is 
$p^{O(n^2)}$, which corresponds to the nearly-trivial upper bound of $|G|^{O(\log 
|G|)}$ on \GpIlong.}

To see why the techniques for \GI face great difficulty when dealing with 
isomorphism problems for multi-way arrays, recall that
most algorithms for \GI, including Babai's  
\cite{Bab16}, 
are built on 
two families of techniques: group-theoretic, and combinatorial. 
One of the main differences is 
that the underlying group 
action for \GI is a permutation group acting on a combinatorial structure, whereas 
the underlying group actions for isomorphism problems for 3-way arrays are matrix 
groups acting on (multi)linear structures. 

Already in moving from permutation groups to matrix groups, we find many new
computational difficulties 
 that arise naturally in basic subroutines used in isomorphism testing. For 
 example, the membership problem for 
permutation groups is well-known to be efficiently solvable by Sims's algorithm 
\cite{Sim78} (see, e.\,g., \cite{Ser03} for a textbook treatment),
while for matrix groups this was only recently 
shown to be 
solvable with a number-theoretic 
oracle over finite fields of odd characteristic 
\cite{BBS09}. 
Correspondingly, 
when moving from combinatorial structures to (multi)linear 
algebraic structures, we also find severe limitation on 
the use of most combinatorial techniques, like individualizing a vertex. For example,  
it is quite expensive to 
enumerate all 
vectors in a vector space, while it is usually considered efficient to go through 
all 
elements in a set. 
Similarly, within a set, any subset has a unique complement, whereas within 
$\F_q^n$, a subspace can have up to $q^{\Theta(n^2)}$ complements. 

Given all the differences between the combinatorial and linear-algebraic 
worlds, it may be surprising that combinatorial techniques for \GIlong can nonetheless be useful for \GpIlong. Indeed, guided by the 
postulate that alternating matrix spaces can be viewed as a linear algebraic 
analogue of graphs, Li and the second author \cite{LQ17} adapted the individualisation and 
refinement technique, as used by Babai, Erd\H{o}s and Selkow \cite{BES80}, to 
tackle \AltMatSpIsomlong over $\F_q$. This algorithm was recently 
improved \cite{BGL+19}. However, this technique, though helpful to improve from 
the brute-force
$q^{n^2}\cdot \poly(n, \log q)$ time, seems still limited to getting $q^{O(n)}$-time algorithms.

\paragraph{New techniques.} 
Our first new 
technique for the above results on 3-way arrays is to develop a linear-algebraic analogue 
of the coloring gadget used in the 
context of \GIlong (see, e.\,g., \cite{KST93}). 
These gadgets
help us to restrict to various subgroups of the general linear group. 
Recall that, in relating \GI with other isomorphism problems, coloring is a very 
useful idea. Given a graph $G=(V, E)$, a coloring of vertices is 
 a function $c:V \to C$ where $C$ is a set of ``colors.'' Colored isomorphism between two 
vertex-colored graphs asks only for isomorphisms that send vertices of one color 
to vertices of that same color.
If we are interested in 
making a specific vertex $v\in V$ special (``individualizing'' that vertex), we can assign this 
vertex a unique color. To reduce \algprobm{Colored Graph Isomorphism} to 
ordinary \GIlong uses certain gadgets, and we adapt this idea to the context of 
3-way arrays.  We note that 
\cite{FGS19} construct a related such gadget. In this paper, we develop a new gadget which we use both by itself, and in combination with the gadget from \cite{FGS19} (albeit in a new context), 
see \Sec{sec:related} and \Sec{sec:reduction_gadget}.

Our second new technique, used to show the reduction from \DeeTI to \ThreeTI, is a 
simultaneous generalization of our reduction from \ThreeTI to \AlgIsolong and the 
technique Grigoriev used \cite{Grigoriev83} to show that isomorphism in a certain 
restricted class of algebras is equivalent to \GI. In brief 
outline: a 3-way 
array $\tA$ specifies the structure constants of an algebra with basis $x_1, 
\dotsc, x_n$ via $x_i \cdot x_j := \sum_{k} \tA(i,j,k) x_k$, and this is 
essentially how we use it in the reduction from \ThreeTI to \AlgIsolong. For 
arbitrary $d \geq 3$, we would like to similarly use a $d$-way array $\tA$ to 
specify how $d$-tuples of elements in some algebra $\cA$ multiply. The issue is 
that for $\cA$ to be an algebra, our construction must still specify how 
\emph{pairs} of elements multiply. The basic idea is to let pairs (and triples, 
and so on, up to $(d-2)$-tuples) multiply ``freely'' (that is, without additional 
relations), and then to use $\tA$ to rewrite any product of $d-1$ generators as a 
linear combination of the original generators. While this construction as 
described already gives one direction of the reduction (if $\tA \cong \tB$, then 
$\cA \cong \cB$), the other direction is trickier. For that, we modify the 
construction to an algebra in which short products (less than $d-2$ generators) do 
not quite multiply freely, but almost. After the fact, we found out that this 
construction generalizes the one used by Grigoriev \cite{Grigoriev83} to show that 
\GI was equivalent \AlgIsolong for a certain class of algebras (see 
\Sec{sec:related} for a comparison).

\paragraph{Organization.} We aim to reach as wide an audience as 
possible, so we start with a detailed introduction to the 
various isomorphism problems on 
3-way arrays, and their algebraic and geometric 
interpretations in \Sec{sec:problems}. We then describe our results in 
detail in \Sec{sec:result} and consider related work in \Sec{sec:related}. An illustration of the key technique is in 
\Sec{sec:technique}. These sections may be viewed as 
an extended abstract. 

The remainder of the paper gives detailed proofs of all results. \Sec{sec:prel} contains additional
preliminaries. In \Sec{sec:reduction_gadget}, we present those 
reductions which use the linear-algebraic coloring technique, thus proving
\Thm{thm:main}(\ref{thm:main:isom}) and 
\Thm{thm:search_decision}. We then finish the proof of 
\Thm{thm:main} 
by presenting the remaining reductions in \Sec{sec:reduction_other}. 
\Thm{thm:d_to_3} is proved in \Sec{sec:dto3}.
In \Sec{sec:conclusion}, we put forward a theory of 
universality for basis-explicit linear structures, in analogy with \cite{ZKT}. 
While not yet complete, this seems to provide another justification for studying 
\TIlong and related problems, and it motivates some interesting open 
questions. In Appendix~\ref{app:cubic} we give a reduction from \CubicFormlong to \DFormlong for any $d \geq 3$ (for $d > 6$ this is easy; for $d=4$ it requires some work).

\section{Preliminaries: Group actions on 3-way arrays} 
\label{sec:problems}
The formulas for most natural group actions on 3-way arrays are somewhat unwieldy; our experience suggests that they are more easily digested when presented in the context of some of the natural interpretations of 3-way arrays as mathematical objects. To connect the interpretations with the formulas themselves, one technical tool is very useful, namely, given a 3-way array $\tA(i,j,k)$, we define its \emph{frontal slices} to be the matrices $A_k$ defined by $A_k(i,j) := \tA(i,j,k)$; that is, we think of the box of $\tA$ as arranged so that the $i$ and $j$ axes lie in the page, while the $k$-axis is perpendicular to the page. Similarly, its \emph{lateral slices} (viewing the 3D box of $\tA$ ``from the side'') are defined by $L_j(i,k) := \tA(i,j,k)$. An $\ell \times n \times m$ 3-way array thus has $m$ frontal slices and $n$ lateral slices.

A natural action on arrays of size $\ell \times n \times m$ is that of $\GL(\ell, \F) \times \GL(n,\F) \times \GL(m,\F)$ by change of 
basis in each of the 3 directions, namely $((P,Q,R) \cdot \tA)(i',j',k') = 
\sum_{i,j,k} \tA(i,j,k) P_{ii'} Q_{jj'} R_{kk'}$. We will see several 
interpretations of this action below.

\paragraph{3-tensors.} A 3-way array $\tA(i,j,k)$, 
where $i\in[\ell]$, 
$j\in[n]$, and $k\in[m]$, is naturally identified as a vector in 
$\F^\ell\otimes\F^n\otimes\F^m$. Letting $\vec{e_i}$ denote the $i$th 
standard basis vector of $\F^n$,  a standard basis of 
$\F^\ell\otimes\F^n\otimes\F^m$ is 
$\{\vec{e_i}\otimes\vec{e_j}\otimes\vec{e_k}\}$. Then $\tA$ represents the vector 
$\sum_{i,j,k}\tA(i,j,k)\vec{e_i}\otimes\vec{e_j}\otimes\vec{e_j}$ in 
$\F^\ell\otimes\F^n\otimes\F^m$. The natural action by 
$\GL(\ell, \F)\times\GL(n, \F)\times\GL(m, \F)$ above corresponds to changes of 
basis of the three vector spaces in the tensor product. The problem of deciding 
whether 
two 3-way arrays are the same under this action is called 
\ThreeTIlong.\footnote{Some authors call this \algprobm{Tensor Equivalence}; we use 
``\algprobm{Isomorphism}'' both because this is the natural notion of isomorphism 
for such objects, and because we will be considering many different equivalence 
relations on essentially the same underlying objects.}

\paragraph{Matrix spaces.} 
Given a 3-way array $\tA$, it is natural to consider the linear span of its frontal slices, $\cA = \langle A_1, \dotsc, A_m \rangle$, also called a \emph{matrix space}. 
One convenience of this viewpoint is that the action of $\GL(m,\F)$ becomes implicit: it corresponds to 
change of basis \emph{within} the matrix space $\cA$. 
This allows us to generalize the three natural equivalence relations on matrices to matrix 
spaces: (1) two $\ell \times n$ matrix spaces $\cA$ 
and $\cB$ are \emph{equivalent} if there exists $(P, Q) \in \GL(\ell, \F) \times 
\GL(n, \F)$ such 
that $P\cA Q = \cB$, where $P\cA Q := \{PA Q : A \in \cA\}$; (2) two 
$n \times n$ matrix spaces $\cA, \cB$ are \emph{conjugate} if there exists $P \in 
\GL(n, \F)$ such that $P \cA P^{-1} = \cB$; and (3) they are \emph{isometric} if 
$P \cA 
P^t = \cB$. The corresponding decision problems, when $\cA$ is 
given by a basis $A_1, \dotsc, A_d$, are \MatSpEquivlong, \MatSpConjlong, 
and \MatSpIsomlong, respectively. 

\hyperdef{}{sec:problems:nilpotent}{}
\paragraph{Nilpotent groups.} If $A,B$ are two subsets of a group $G$, then 
$[A,B]$ 
denotes the sub\emph{group} generated by all elements of the form $[a,b] = 
aba^{-1}b^{-1}$, for $a \in A, b \in B$. The \emph{lower central series} of a 
group $G$ is defined as follows: $\gamma_1(G) = G$, $\gamma_{k+1}(G) = 
[\gamma_k(G), G]$. A group is \emph{nilpotent} if there is some $c$ such that 
$\gamma_{c+1}(G) = 1$; the smallest such $c$ is called the \emph{nilpotency class} 
of $G$, or sometimes just ``class'' when it is understood from context. A finite 
group is nilpotent if and only if it is the product of its Sylow subgroups; in 
particular, all groups of prime power order are nilpotent.

\paragraph{Bilinear maps, finite groups, and systems of polynomials.} 
While the 
matrix 
space viewpoint has the merit of 
drawing an analogy with the more familiar object of matrices, other interpretations 
lead to standard complexity problems that may be more familiar to some readers. 
For example, from an $\ell \times n \times m$ 3-way array $\tA$, we can construct 
a bilinear map (=system of $m$ bilinear forms) $f_\tA:\F^\ell\times\F^n\to\F^m$, 
sending $(u, v)\in \F^\ell\times 
\F^n$ to $(u^t A_1 v, \dots, u^tA_m v)^t$, where the $A_k$ are the frontal slices of $\tA$.\footnote{In this paper elements in $\F^n$ are column vectors.} 
The group action defining \MatSpEquivlong 
is equivalent to the action 
of $\GL(\ell, \F)\times\GL(n, \F)\times \GL(m, \F)$ on such bilinear maps.  

When $\ell=n$, the action in \MatSpIsomlong is equivalent to the natural action of 
$\GL(n, \F)\times \GL(m, \F)$ on such bilinear maps.
Two bilinear maps that are essentially the same up to such basis changes 
are sometimes called pseudo-isometric \cite{BW12}.

Bilinear maps of the form $V\times V\to W$ turn out to arise naturally in group theory and 
algebraic geometry. When $A_k$ are skew-symmetric over $\F_p$, $p$ an 
odd prime, Baer's correspondence \cite{Bae38} gives a bijection between finite 
$p$-groups 
of class 2 
and exponent $p$, that is, in which $g^p = 1$ for all $g$ and in which $[G, G] 
\leq Z(G)$, and their corresponding bilinear maps $G/Z(G) \times G/Z(G) \to 
[G,G]$, given by $(gZ(G), hZ(G)) \mapsto [g,h]=ghg^{-1}h^{-1}$. Two such groups 
are isomorphic if and only if their corresponding bilinear maps are 
pseudo-isometric, if and only if, using the matrix space terminology, the 
matrix spaces they span are isometric. When $A_k$ are symmetric, by the classical 
correspondences between symmetric matrices and homogeneous quadratic forms, a 
symmetric bilinear map naturally yields a quadratic map from $\F^n$ to $\F^m$. The 
two quadratic maps are isomorphic if and only if the corresponding bilinear 
maps are pseudo-isometric.

\paragraph{Cubic forms \& trilinear forms.}
From a 3-way array 
$\tA$ 
we  can also construct a cubic form (=homogeneous degree 3 polynomial) $\sum_{i,j,k} 
\tA(i,j,k) x_i x_j x_k$, where $x_i$ are formal variables. 
If we consider the variables as commuting---or, equivalently, if $\tA$ is 
symmetric, meaning it is unchanged by permuting its three indices---we get an 
ordinary cubic form; if we consider them as non-commuting, we get a trilinear form 
(or ``non-commutative cubic form''). In either case, 
the natural notion of isomorphism here comes from the 
action of $\GL(n,\F)$ on the $n$ variables $x_i$, in which $P \in \GL(n,\F)$ transforms 
the preceding form into $\sum_{ijk} \tA(i,j,k) (\sum_{i'} P_{ii'} x_{i'})(\sum_{j'} 
P_{jj'} x_{j'})(\sum_{k'} P_{kk'} x_{k'})$. In terms of 3-way arrays, we get $(P 
\cdot \tA)(i', j', k') = \sum_{ijk} \tA(i,j,k) P_{ii'} P_{jj'} P_{kk'}$. The 
corresponding isomorphism 
problems are called \CubicFormlong (in the commutative case) and \NcCubicFormlong.

\paragraph{Algebras.} We may also consider a 3-way array 
$\tA(i,j,k)$, $i, j, 
k\in[n]$, as the structure 
constants of an algebra (which need not be associative, commutative, nor unital), 
say with basis $x_1, \dotsc, x_n$, and with multiplication given by $x_i \cdot x_j 
= \sum_k \tA(i,j,k) x_k$, and then extended (bi)linearly. Here the natural notion 
equivalence comes from the action of $\GL(n,\F)$ by change of basis on the $x_i$. 
Despite the seeming similarity of this action to that on cubic forms, it turns out 
to be quite different: given $P \in \GL(n,\F)$, let $\vec{x}' = P\vec{x}$; then we 
have $x_i' \cdot x_j' = (\sum_{i} P_{i' i} x_i)\cdot (\sum_{j} P_{j' j} x_j) 
 = \sum_{i,j} P_{i' i} P_{j' j} x_i \cdot x_j$ 
 $= \sum_{i,j,k} P_{i' i} P_{j' j} \tA(i,j,k) x_k = \sum_{i,j,k} P_{i' i} P_{j' j} 
 \tA(i,j,k) \sum_{k'} (P^{-1})_{kk'} x_{k'}$.
Thus $\tA$ becomes $(P \cdot \tA)(i',j',k') = \sum_{ijk} \tA(i,j,k) P_{i' i} P_{j' j} 
(P^{-1})_{k k'}$. The inverse in the third factor here is the crucial difference 
between this case and that of cubic or trilinear forms above, 
similar to the difference between matrix conjugacy and matrix isometry. The 
corresponding isomorphism problem is called \AlgIsolong.

\paragraph{Summary.} 
The isomorphism problems of the above structures all have 3-way arrays as the 
underlying object, but are determined by different group actions. It is not hard to 
see that there are 
essentially five group actions in total: \ThreeTIlong, \MatSpConjlong, 
\MatSpIsomlong, \NcCubicFormlong, and \algprobm{Algebra} \algprobm{Isomorphism}.
It turns out that these cover all the natural isomorphism problems on 3-way arrays 
in which the group acting is a product of $\GL(n,\F)$ (where $n$ is the side 
length of the arrays); see 
\Sec{sec:prelim:tensor} for discussion.

\section{Main results}\label{sec:result}

\subsection{Equivalence of isomorphism problems for 3-way arrays} 

\begin{definition}[{$d\cc{TI}, \cc{TI}$}]
For any field $\F$, $d\cc{TI}_\F$ denotes the class of problems that are 
polynomial-time Turing (Cook) reducible to \DeeTIlong over 
$\F$.\footnote{We follow a natural convention: when $\F$ is finite, a fixed 
algebraic extension of a finite field such as $\overline{\F}_p$, the rationals, or 
a fixed algebraic extension of the rationals such as $\overline{\Q}$, we consider 
the usual model of Turing machines; when $\F$ is $\mathbb{R}$, $\mathbb{C}$, the 
$p$-adic rationals $\Q_p$, or other more ``exotic'' fields, we consider this in 
the Blum--Shub--Smale model over $\F$.} When we write $d\cc{TI}$ without 
mentioning the field, the result holds for any field. 
$\cc{TI}_{\F} = \bigcup_{d \geq 1} d\cc{TI}_{\F}$. 
\end{definition}

We now state our first main theorem. 
\begin{maintheorem}\label{thm:main}
\ThreeTIlong reduces to each of the following problems in polynomial time.

\begin{enumerate}
\item \GpIlong for $p$-groups exponent $p$ ($g^p=1$ for all $g$) and class 2 
($G/Z(G)$ is abelian) given by generating matrices over $\F_{p^e}$. Here we consider only $\cc{3TI}_{\F_{p^e}}$ where $p$ is 
an odd prime.

\item \label{thm:main:isom} 
\MatSpIsomlong, even for alternating or symmetric matrix spaces.

\item 
\MatSpConjlong, and even the special cases: 
\begin{enumerate}
\item \algprobm{Matrix Lie Algebra Conjugacy}, for solvable Lie algebras $L$ of 
derived length 2.\footnote{And even further,  where $L/ [L, L] \cong \F$.} 
\item \algprobm{Associative Matrix Algebra Conjugacy}.\footnote{Even for algebras 
$A$ whose Jacobson radical $J(A)$ squares to zero and $A/J(A) \cong \F$.} 
\end{enumerate}

\item 
\algprobm{Algebra Isomorphism}, and even the special cases:
\begin{enumerate}
\item \algprobm{Associative Algebra Isomorphism}, for algebras that are 
commutative and unital, and for algebras that are commutative and 3-nilpotent 
($abc=0$ for all $a,b,c, \in A$)
\item \algprobm{Lie Algebra Isomorphism}, for 2-step nilpotent Lie algebras ($[u,[v,w]] = 0$ $\forall u,v,w$) 
\end{enumerate}
\item 
\CubicFormlong and \NcCubicFormlong.
\end{enumerate}
The algebras in (3) are given by a set of matrices which linearly span the 
algebra, while in (4) they are given by structure constants (see ``Algebras'' in 
\Sec{sec:problems}).
\end{maintheorem}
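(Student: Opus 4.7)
The plan is to derive all five items from a single central construction---a linear-algebraic ``coloring gadget'' reducing \ThreeTIlong to \MatSpIsomlong---and then chain that with a small number of further tricks to handle the algebra, form, and group settings. I would begin with item~\ref{thm:main:isom} and then spread out from there.

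For item~\ref{thm:main:isom}, given a 3-way array $\tA$ with frontal slices $A_1,\dots,A_m$ of shape $\ell\times n$, the natural first attempt is to embed each slice as an off-diagonal block in a larger alternating matrix,
\[
\widetilde{A}_k := \begin{pmatrix} 0 & A_k \\ -A_k^{t} & 0 \end{pmatrix} \in \M(\ell+n,\F),
\]
and take $\widetilde{\cA} := \langle \widetilde{A}_1,\dots,\widetilde{A}_m\rangle$. A block-diagonal isometry $P = \diag(P_1,P_2)$ acts on $\widetilde{\cA}$ exactly as $(P_1,P_2)\in\GL(\ell,\F)\times\GL(n,\F)$ acts on $\cA = \langle A_1,\dots,A_m\rangle$ by \MatSpEquivlong, while the implicit change-of-basis $\GL(m,\F)$ on $\widetilde{\cA}$ as a matrix space supplies the third tensor factor. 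The catch is that a generic isometry $P\in\GL(\ell+n,\F)$ need not respect the block decomposition, so I would enlarge $\widetilde{\cA}$ by a small collection of \emph{coloring matrices}---alternating matrices whose off-diagonal blocks have a distinguished rank pattern (e.g.\ identity in one block and zero elsewhere)---engineered so that every isometry of the enlarged space is forced to preserve the splitting $\F^{\ell}\oplus\F^{n}$ (making $\ell\neq n$, or adding an orientation tag, suppresses the unwanted block-swap). The symmetric variant is obtained by replacing $-A_k^{t}$ with $A_k^{t}$ and mirroring the coloring matrices; in characteristic two one treats alternating and symmetric separately with minor care.

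Item~1 then follows immediately by Baer's correspondence: over $\F_p$ with $p$ odd, an alternating matrix space is precisely the commutator form of a $p$-group of class $2$ and exponent $p$, and isometry of matrix spaces matches isomorphism of the corresponding groups. For item~3 (\MatSpConjlong, including the Lie and associative restrictions) I would combine the coloring gadget with a standard doubling trick: represent $A\in\M(n,\F)$ by the block matrix $\bigl(\begin{smallmatrix} 0 & A \\ 0 & 0\end{smallmatrix}\bigr)$, so that conjugation by $\diag(P,Q)$ implements the equivalence action $A\mapsto PAQ^{-1}$; the coloring piece then forces block-diagonality of the conjugating matrix, and a further layer selects the declared Lie/associative subalgebras (solvable of derived length $2$ with $L/[L,L]\cong\F$, or associative with $J(A)^{2}=0$ and $A/J(A)\cong\F$) by bracketing or multiplying against a central element. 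For item~4 I would use the structure-constants encoding $x_i\cdot x_j := \sum_k \tA(i,j,k)x_k$ placed inside a two-step nilpotent ambient algebra, where the coloring gadget lives in the bottom graded piece (enforcing the three $\GL(n,\F)$-factor separation on generators) and the $\tA$-dependent relations live in the top piece; symmetrising or antisymmetrising the generators delivers the commutative unital, $3$-nilpotent commutative, and $2$-step nilpotent Lie sub-cases. For item~5 I would pass from this algebra to its cubic trace form, or equivalently build a symmetric 3-way array $\widehat{\tA}$ on a three-block space of dimension $3n$ whose only nonzero symmetric ``triples'' recover the slices of $\tA$; the \CubicFormlong\ and \NcCubicFormlong\ versions differ only in whether one commutes variables or not.

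The main obstacle is proving correctness of the coloring gadget itself. It must simultaneously (i) force every isometry of the enlarged matrix space to be block-diagonal with the prescribed block sizes, (ii) be compatible with the alternating (or symmetric) constraint, (iii) introduce no spurious equivalences beyond those coming from the genuine $\GL\times\GL\times\GL$ tensor action, and (iv) work uniformly over essentially every field. A secondary obstacle is confining the algebra- and form-side reductions into the narrow algebraic classes listed in the theorem without destroying the encoded tensor data; this should be achievable by layering a ``free plus $\tA$-relations'' construction on top of the coloring gadget, so that the $\tA$-information sits in a single graded piece of a nilpotent filtration while the gadget controls the acting group on the generators.
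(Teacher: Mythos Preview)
Your plan for items~1 and~2 is essentially the paper's: embed each frontal slice in an off-diagonal alternating block, then attach identity-block ``coloring'' slices whose rank pattern forces any isometry to be block upper-triangular (Proposition~\ref{prop:3-tensor_isometry}), and invoke Baer's correspondence for item~1 (Corollary~\ref{cor:pgp}).

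For items~3,~4,~5, however, the paper does \emph{not} use coloring gadgets at all, and your plan to do so is both more work and riskier than necessary. The paper instead uses much simpler ``corner'' constructions (Section~\ref{sec:reduction_other}). For \MatSpConjlong\ (item~3), the bare embedding $\tilde A_i = \bigl(\begin{smallmatrix} 0 & A_i \\ 0 & 0\end{smallmatrix}\bigr)$ already suffices: from $\tilde A_i P = P \tilde B_i'$ one reads off $P_{2,1}A_i = 0$ for all $i$, and non-degeneracy of $\tA$ forces $P_{2,1}=0$ with no gadget needed (Proposition~\ref{prop:3-tensor_conjugacy}). The Lie and associative special cases are then obtained by adjoining the single idempotent $M_0 = \diag(I_\ell,0)$. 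For items~4 and~5, the paper does not reduce directly from \ThreeTI\ but \emph{chains through item~2}: starting from an $n\times n$ matrix space $\cA$ (already alternating or symmetric), it builds an $(n+m)^3$ array with zero frontal slices in positions $1,\dots,n$ and $\bigl(\begin{smallmatrix} A_i & 0 \\ 0 & 0\end{smallmatrix}\bigr)$ in positions $n+1,\dots,n+m$; the zero slices alone force the off-diagonal block of $\tilde P$ to vanish (Proposition~\ref{prop:isometry_algebra}). \CubicFormlong\ is not handled by a direct symmetric $3n$-block construction but by composing with Agrawal--Saxena's reduction from commutative \AlgIsolong\ (Corollary~\ref{cor:comm_cubic_form}).

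The hazard in your approach is that for \AlgIsolong\ and \NcCubicFormlong\ a \emph{single} $P\in\GL$ acts on all three directions simultaneously (with one inverse in the algebra case), so a gadget cannot be tuned independently on the three factors the way it can for isometry; making your ``gadget in the bottom graded piece'' rigorous would require redesigning the rank argument for this diagonal action, whereas the paper's chaining avoids the issue entirely.
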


\begin{remark}
Agrawal \& Saxena \cite[Thm.~5]{AS05} gave a reduction from \CubicFormlong over $\F$ to \algprobm{Ring Isomorphism} for commutative, unital, associative algebras over $\F$, when every element of $\F$ has a cube root. 
For finite fields $\F_q$, the 
only such fields are those for which $q=p^{2e+1}$ and $p\equiv 2 \pmod{3}$, which 
is asymptotically half of all primes. As explained after the proof of 
\cite[Thm.~5]{AS05}, the use of cube roots
seems inherent in their reduction.

Using our results in conjunction with \cite{FGS19}, we get a new reduction from 
\CubicFormlong to \algprobm{Ring Isomorphism} 
(for the same class of rings) which works over 
any field of characteristic 0 or $p > 3$. 
Note that our reduction is very different from the one in 
\cite{AS05}.
\end{remark}

Figure~\ref{fig:main} below summarizes where the various parts of \Thm{thm:main} are proven.

We then resolve an open question well-known to the experts:\footnote{We asked several experts who knew of the question, but we were unable to find a written reference. Interestingly, Oldenburger \cite{oldenburger} worked on what we would call \DeeTIlong as far back as the 1930s. We would be grateful for any prior written reference to the question of whether \DeeTI reduces to \ThreeTI.}

\setcounter{maincorollary}{\arabic{maintheorem}}
\begin{maintheorem} \label{thm:d_to_3}
\DeeTIlong reduces to \AlgIsolong.
\end{maintheorem}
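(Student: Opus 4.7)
The plan is to reduce \DeeTIlong to \AlgIsolong by constructing, for each $d$-way array $\tA$ of side length $n$, a finite-dimensional associative algebra $\cA(\tA)$ (presented by structure constants of polynomial size) such that $\tA\cong\tB$ as $d$-tensors iff $\cA(\tA)\cong\cA(\tB)$ as algebras; combined with the equivalences of Theorem~\ref{thm:main}, this gives \DeeTI\,$\subseteq$\,\ThreeTI. The template is suggested by the \ThreeTI-to-\AlgIsolong reduction, where a $3$-way array is read literally as structure constants $x_i\cdot x_j=\sum_k \tA(i,j,k)x_k$. The natural $d$-analogue is to impose
\[
x_{i_1}\cdots x_{i_{d-1}}\;=\;\sum_{k=1}^n \tA(i_1,\dots,i_{d-1},k)\,x_k,
\]
declare shorter products formally free, and kill length-$\geq d$ products. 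But the naive version is inconsistent with associativity (rewriting a length-$d$ word as $x_{i_0}\cdot(x_{i_1}\cdots x_{i_{d-1}})$ would force parasitic relations among length-$2$ words). To handle this, and simultaneously to accommodate independent matrices on each of the $d$ tensor slots, my plan is to spread the construction across a quiver with $d$ vertices $v_0,\dots,v_{d-1}$: introduce $n$ arrows $x_i^{(\ell)}\colon v_{\ell-1}\to v_\ell$ for each $\ell\in[d-1]$ and $n$ ``output'' arrows $y_k\colon v_0\to v_{d-1}$, make all composable paths of length $\leq d-2$ free, kill non-composable and length-$\geq d$ products, and impose
\[
x_{i_1}^{(1)}\,x_{i_2}^{(2)}\cdots x_{i_{d-1}}^{(d-1)}\;=\;\sum_k \tA(i_1,\dots,i_{d-1},k)\,y_k.
\]
Associativity is automatic, since no composable path exceeds length $d-1$.

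The forward implication is immediate: a $d$-tuple $(P_1,\dots,P_d)\in\GL(n,\F)^d$ with $\tA=(P_1,\dots,P_d)\cdot\tB$ lifts to an isomorphism $\cA(\tB)\to\cA(\tA)$ fixing vertex idempotents and acting by $P_\ell$ on the $x^{(\ell)}$-arrows and by $P_d$ on the $y$-arrows. The reverse implication, which is the heart of the theorem, proceeds via the Jacobson radical $J$ and the Wedderburn--Mal'cev theorem. By construction $\cA(\tA)/J\cong\F^{d}$ (the span of the vertex idempotents), so Wedderburn--Mal'cev allows any isomorphism $\phi\colon\cA(\tA)\to\cA(\tB)$ to be conjugated by some element of $1+J$ into one that respects a chosen semisimple complement of $J$, i.e., sends vertex idempotents to vertex idempotents. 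Breaking the symmetry among vertices by small padding distinctions (different numbers of auxiliary arrows at each vertex) forces the vertex labelling to be preserved. The $J$-power filtration then canonically isolates each arrow space as $e_{v_{\ell-1}}(J/J^2)e_{v_\ell}$ and the output space as $e_{v_0}(J^{d-1})e_{v_{d-1}}$; reading off $\phi$'s restrictions gives a $d$-tuple of invertible matrices, which the imposed relation forces to intertwine $\tA$ and $\tB$.

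The hard part will be rigidity. Two pitfalls are (i) spurious permutations of the vertex idempotents, ruled out by arranging asymmetric local invariants at each vertex; and (ii) spurious ``mixing'' of the arrow layers via relations straddling different $J$-power strata, ruled out by ensuring the $J$-adic filtration coincides with the quiver grading. As the introduction hints, this is where short products (of length $<d-2$) cannot quite be free but must be made ``almost free'' --- one needs just enough relations among them to ensure the filtration aligns, without forcing spurious identifications of generators at different levels. Calibrating this balance --- essentially the simultaneous generalization of the \ThreeTI-to-\AlgIsolong reduction and Grigoriev's~\cite{Grigoriev83} reduction from \GIlong to \AlgIsolong for a restricted class of algebras --- is the crux of the proof; once it is in place, the structure constants of $\cA(\tA)$ are readable in polynomial time and the reduction is complete.
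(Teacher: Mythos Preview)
Your proposal is correct and matches the paper's construction: the same quiver (a directed path $1\to 2\to\cdots\to d$ with extra ``output'' arrows $1\to d$), the same defining relation equating length-$(d-1)$ paths to linear combinations of output arrows, and the same Wedderburn--Mal'cev strategy for the converse. Two simplifications relative to your plan: (1) no padding is needed for vertex rigidity, since the quiver already has no automorphisms and the paper invokes Grigoriev's theorem on $\cA/R(\cA)^2$ to force any isomorphism to fix each $e_i$; (2) once the $e_i$ are fixed, the arrow spaces are read off directly as the Peirce components $e_i\cA e_{i+1}$ and $e_1\cA e_d$ rather than via the $J$-power filtration, so the ``almost-free calibration'' you anticipate turns out to be unnecessary.
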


Since the main result of \cite{FGS19} reduces the problems in Theorem~\ref{thm:main} to 
\ThreeTIlong (cf. \cite[Rmk.~1.1]{FGS19}), we have:

\begin{maincorollary} \label{cor:main}
Each of the problems listed in Theorem~\ref{thm:main} is $\cc{TI}$-complete.\footnote{For \CubicFormlong, we only show that it is in $\cc{TI}_\F$ when $\chr 
 \F > 3$ or $\chr \F = 0$.}
In particular, $d\TI$ and $\ThreeTI$ are equivalent.
\end{maincorollary}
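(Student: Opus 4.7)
The plan is to assemble Theorem~A, Theorem~B, and the universality result of Futorny--Grochow--Sergeichuk \cite{FGS19} into a short chain of reductions. For the first claim I would proceed problem by problem. Fix any problem $P$ on the list in Theorem~A. Theorem~A already supplies one direction, $\ThreeTI \le_p P$. The opposite direction, $P \le_p \ThreeTI$, is provided by the main result of \cite{FGS19}, which, as the text recalls via \cite[Rmk.~1.1]{FGS19}, reduces each of the structures on the list---matrix space isometry and conjugacy, the various flavors of algebra isomorphism, cubic and trilinear form equivalence, and isomorphism of class-$2$ exponent-$p$ $p$-groups---to $\ThreeTI$. Composing the two reductions gives $P \equiv_p \ThreeTI$, so $P$ is $\ThreeTI$-complete.

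For the second statement I need $\cc{3TI} = d\cc{TI}$ for every $d \ge 3$. The inclusion $\cc{3TI} \subseteq d\cc{TI}$ I would dispatch by a direct padding: given a 3-way array $\tA$ of format $n_1 \times n_2 \times n_3$, form a $d$-way array $\tA'$ of format $n_1 \times n_2 \times n_3 \times 1 \times \cdots \times 1$ by appending $d-3$ trivial axes. The extra factors $\GL(1)^{d-3}$ act purely by scalars, which can be absorbed into any one of the three $\GL(n_i)$'s, so $\tA \cong \tB$ as 3-tensors if and only if $\tA' \cong \tB'$ as $d$-tensors. For the reverse inclusion $d\cc{TI} \subseteq \cc{3TI}$, I would invoke Theorem~B to obtain $\DeeTI \le_p \AlgIsolong$ and then reduce $\AlgIsolong$ to $\ThreeTI$ by \cite{FGS19}. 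Taking the union over $d$ collapses $\cc{TI}$ onto $\cc{3TI}$; combined with the first paragraph, this identifies every problem on the list as $\cc{TI}$-complete.

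The argument is essentially bookkeeping, so no real obstacle arises. The one point that deserves a quick check is that every item in Theorem~A's list---in particular the more exotic flavors of algebra and Lie isomorphism, and the group-theoretic item~(1)---lies within the scope of the universality result of \cite{FGS19}; the paragraph immediately preceding the corollary already asserts this coverage, so the proof can simply cite that remark together with Theorems~A and~B, and no further technical work is needed.
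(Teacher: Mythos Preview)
Your proposal is correct and matches the paper's own argument essentially line for line: Theorem~A gives $\ThreeTI \le_p P$, the main result of \cite{FGS19} gives $P \le_p \ThreeTI$, and for the second claim you combine the padding argument (which is exactly Observation~\ref{obs:d}) with Theorem~B plus \cite{FGS19} to collapse $d\cc{TI}$ onto $\cc{3TI}$. The only small omission is the characteristic caveat in the footnote for \CubicFormlong, where the reduction back to $\ThreeTI$ (via the standard symmetrization into trilinear forms) requires $6$ to be a unit; but this is a side remark rather than a gap in the logic.
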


\begin{remark}
This phenomenon is reminiscent of the transition in hardness from 2 to 3 in 
$k$-\algprobm{SAT}, $k$-\algprobm{Coloring}, $k$-\algprobm{Matching}, and many 
other $\cc{NP}$-complete problems. It is interesting that an analogous 
phenomenon---a transition to some sort of ``universality'' from 2 to 3---occurs in 
the setting of isomorphism problems, which we believe are not $\cc{NP}$-complete 
 over finite fields.
\end{remark}

\newif\iffgs
\fgsfalse

\begin{figure}[!htbp]
\[
\hspace{-0.35in}
\xymatrix{
 & & \parbox{1in}{\centering $d$\algprobm{-Tensor Iso.} \\ $U_1 \otimes \dotsb \otimes U_d$} \ar@/^1pc/[dddr]^(0.7){\text{\Thm{thm:d_to_3}}}\\
 & & \ar@/_/[dll]_{\parbox{0.5in}{\scriptsize \text{\Prop{prop:3-tensor_isometry},} \\ \text{\Cor{cor:pgp}}}} \parbox{1in}{\centering \algprobm{3-Tensor Iso.} \\ $U \otimes V \otimes W$}\ar@/^/[drr]^{\text{Prop.~\ref{prop:3-tensor_conjugacy}}} & & \\
\parbox{1.3in}{\centering \algprobm{Bilinear Map Iso.} \\ p-\algprobm{Group Iso.} \\ $V \otimes V \otimes W$} \iffgs \ar@/_/[urr]_{\text{\cite{FGS19}}}\fi \ar[dr]_{\text{Prop.~\ref{prop:isometry_algebra}}} \ar@/^/[drrr]^(.65){\text{Prop.~\ref{prop:isometry_algebra}}} \ar@/^1.4pc/[ddrrr]^(.65){\text{\Cor{cor:pseudo_special}}} & & & &  \iffgs \ar@/^/[ull]^{\text{\cite{FGS19}}}\fi \parbox{1in}{\centering \algprobm{Matrix Space} \\ \algprobm{Conjugacy} \\ $V \otimes V^* \otimes W$} \\
&  \parbox{1in}{\centering \algprobm{Trilinear} \\ \algprobm{Form} \algprobm{Equiv.} \\ $V \otimes V \otimes V$} \iffgs \ar[uur]_(.65){\text{\cite{FGS19}}}\fi & & \iffgs \ar[uul]_(.35){\text{\cite{FGS19}}}\fi \parbox{1in}{\centering \algprobm{Algebra Iso.} \\ $V \otimes V \otimes V^*$} &  \\
 & \parbox{1in}{\centering \algprobm{Cubic Form} \ar[u]^(.45){\text{Special case, when $6$ is a unit}} \\ \algprobm{Equiv.}} & & \ar[ll]^{\text{\cite{AS05, AS06}}} \ar[u]_{\text{Special case}} \parbox{1in}{\centering \algprobm{Commutative} \\ \algprobm{Algebra Iso.}}
 }
\]

\caption{\label{fig:main} Reductions for Thm.~\ref{thm:main}. An arrow $A \to B$ indicates that $A$ reduces to $B$, i.\,e., $A \leq_m^p B$. For \Cor{cor:main}, the five tensor problems in the center circle all reduce to \ThreeTI via \cite{FGS19}. For the ``$V \otimes V \otimes W$'' notation, see \Sec{sec:prelim:tensor}. } 
\end{figure}
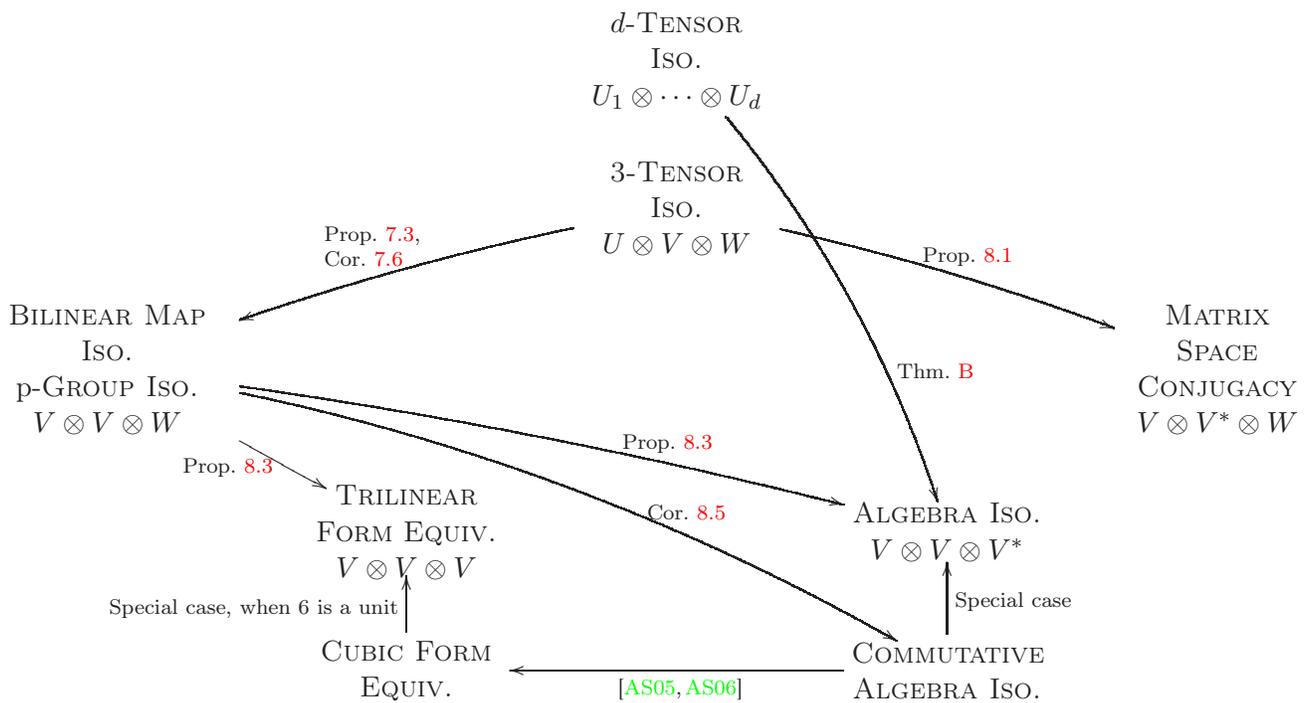

\begin{remark}
Here is a brief summary of what is known about the complexity of some of these 
problems.
Over a finite field $\F_q$, these problems are in $\cc{NP}\cap \cc{coAM}$. For $\ell \times n \times m$ 3-way arrays, the brute-force algorithms run in time $q^{O(\ell^2 + n^2 + m^2)}$, 
as $\GL(n,\F_q)$ can be enumerated in time $q^{\Theta(n^2)}$. Note 
that polynomial-time in the input size here would be $\poly(\ell, n, m, \log q)$.
Over any field $\F$, these problems are in $\cc{NP}_{\F}$ in the Blum--Shub--Smale model. 
When the input arrays are over $\Q$ and we ask for isomorphism over $\C$ or 
$\mathbb{R}$, these problems are in $\cc{PSPACE}$ using quantifier elimination. By 
Koiran's celebrated result on Hilbert's Nullstellensatz, for equivalence over $\C$ 
they are in $\cc{AM}$ assuming the Generalized Riemann Hypothesis \cite{Koi96}. 
When the input is over $\Q$ and we ask for equivalence over $\Q$, it is unknown 
whether these problems are even decidable; classically this is studied under 
\AlgIsolong for associative, unital algebras over 
$\Q$ (see, e.\,g., \cite{AS06, Poonen}), but by 
\Cor{cor:main}, the question of decidability is open for all of these problems.

Over finite fields,
several of these problems can be solved efficiently when one of the side lengths 
of the array is small. For $d$-dimensional spaces of $n \times n$ matrices, \MatSpConjlong and \algprobm{Isometry}
can be solved in 
$q^{O(n^2)}\cdot \poly(d,n,\log q)$ time: once we fix an element of 
$\GL(n,\F_q)$, the isomorphism problem reduces to solving linear systems of equations.
Less trivially, \MatSpConjlongWords 
can be solved in time 
$q^{O(d^2)}\cdot \poly(d,n,\log q)$ and \ThreeTI for $n \times m \times d$ tensors 
in time $q^{O(d^2)}\cdot \poly(d,n,m,\log q)$, since once we fix an element of 
$\GL(d,\F_q)$, the isomorphism problem either becomes an instance of, or reduces to 
\cite{IQ17},  \algprobm{Module Isomorphism}, which admits several polynomial-time 
algorithms \cite{BL08, CIK97, IKS10, Sergeichuk2000}. Finally, one can solve 
\MatSpIsomlongWords in time $q^{O(d^2)}\cdot 
\poly(d,n, \log q)$: once one fixes an element 
of 
$\GL(d,\F_q)$, there is a rather involved algorithm  
\cite{IQ17}, which uses the $*$-algebra technique originated from the study of 
computing with 
$p$-groups \cite{Wil09a,BW12}.
\end{remark}

\subsection{Relations with Graph Isomorphism and Code 
Equivalence}\label{sec:GI_code}
We observe then \GIlong and \CodeEqlong reduce to \ThreeTIlong. In particular, the 
class $\cc{TI}$ contains the classical graph isomorphism class $\cc{GI}$. 

Recall \CodeEqlong asks to decide whether two linear codes are the 
same up to a linear transformation preserving the Hamming weights of codes. Here 
the linear codes are just subspaces of $\F_q^n$ of dimension $d$, represented by 
linear bases. Linear transformations preserving the Hamming weights include 
permutations and monomial transformations. Recall that the latter consists of matrices 
where every row and every column has exactly one non-zero entry. 
Indeed, over many fields this is without loss of generality, as Hamming-weight-preserving 
linear maps are always induced by monomial transformations (first proved over 
finite fields \cite{MacWilliams}, and more recently over much more general 
algebraic objects, e.\,g., \cite{GNW}). 
\CodeEq has long 
been studied in the coding theory community; see e.g. \cite{PR97,Sen00}.

For \CodeEqlong, we observe that previous results already combine to give:
\begin{observation}\label{obs:code_3TI}
\CodeEqlong (under permutations) reduces to \ThreeTIlong.
\end{observation}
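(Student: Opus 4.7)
The plan is to give a direct encoding of a linear code as a 3-way array that respects the two group actions. Given a code $C_i \le \F_q^n$ of dimension $d$ with generator matrix $G_i \in \F_q^{d \times n}$ ($i=1,2$), I associate the 3-way array $\tA_{G_i} \in \F_q^{d \times n \times n}$ defined by $\tA_{G_i}(a,j,k) = G_i(a,j)\,\delta_{jk}$. In slice-language, the $k$-th frontal slice of $\tA_{G_i}$ is the rank-one matrix $g_k^{(i)} e_k^{\top}$, where $g_k^{(i)}$ is the $k$-th column of $G_i$ and $e_k$ is the $k$-th standard basis vector of $\F_q^n$. Thus $\tA_{G_i}$ is a very structured 3-way array: it is supported on the ``diagonal'' $\{(a,j,k) : j=k\}$, and its slices in the third direction are rank-one and individually supported in distinct coordinate columns.

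The forward direction is a direct verification. If $C_1$ and $C_2$ are permutation-equivalent via $\pi \in S_n$, so that $G_2 = A G_1 P_\pi$ for some $A \in \GL(d,\F_q)$, then the triple $(A, P_\pi, P_\pi) \in \GL(d,\F_q) \times \GL(n,\F_q) \times \GL(n,\F_q)$ carries $\tA_{G_1}$ to $\tA_{G_2}$, because applying $P_\pi$ in the second and third factors simultaneously preserves the diagonal support $\{j=k\}$ and permutes the slices in the expected way.

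The technical core is the converse: any tensor isomorphism $(P,Q,R)\cdot \tA_{G_1} = \tA_{G_2}$ must come from a code permutation equivalence. Expanding the $k'$-th frontal slice of the left-hand side gives $\sum_k R_{k,k'}\, (P^\top g_k^{(1)})(Q_{k,\cdot})$, which must equal the rank-one matrix $g_{k'}^{(2)} e_{k'}^{\top}$ whose column support is the singleton $\{k'\}$. Comparing column $j$ on both sides yields, for every $j \ne k'$, the relation $\sum_k R_{k,k'} Q_{k,j}\,(P^\top g_k^{(1)}) = 0$. Exploiting that the $P^\top g_k^{(1)}$ are the columns of a full-rank $d\times n$ matrix, and combining all such relations over varying $k'$, forces each row of $Q$ to have support of size one, i.e.\ $Q$ is a monomial matrix, and symmetrically so is $R$; the two monomial matrices share the same underlying permutation $\pi$. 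The remaining diagonal scalings can be absorbed into the $\GL(d)$ factor, recovering a genuine permutation equivalence $G_2 = A G_1 P_\pi$.

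The main obstacle is establishing rigidity for arbitrary (not necessarily ``generic'') codes, since linear dependencies among the columns of $G_1$ could in principle permit non-monomial isomorphisms of $\tA_{G_1}$. I plan to handle this in one of two equivalent ways: either (i) preprocess by replacing $G$ with an augmented generator $\tilde G = \bigl(\begin{smallmatrix} G \\ I_n \end{smallmatrix}\bigr)$, making the columns of $\tilde G$ linearly independent so the rigidity argument goes through cleanly; or (ii) invoke the classical theorem of \cite{MacWilliams} that every Hamming-weight-preserving linear map of $\F_q^n$ is monomial, so that it suffices to produce a monomial equivalence and the collapse to a permutation equivalence is automatic for the inputs of interest. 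Either path reduces \CodeEqlong (under permutations) to \ThreeTIlong in polynomial time.
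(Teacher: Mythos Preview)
Your construction has a genuine gap: it cannot distinguish permutation equivalence from monomial equivalence, and the proposed fixes do not repair this.

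\textbf{The core issue.} The tensor $\tA_G = \sum_k g_k \otimes e_k \otimes e_k$ is acted on by $\GL(d)\times\GL(n)\times\GL(n)$, and nothing in that action can ``see'' whether a transformation on the code coordinates is a permutation or merely monomial. Concretely, over $\F_3$ take $G_1=(1,1)$ and $G_2=(1,2)$. The codes $C_1=\{(0,0),(1,1),(2,2)\}$ and $C_2=\{(0,0),(1,2),(2,1)\}$ are monomially equivalent (via $\diag(1,2)$) but not permutation equivalent. Yet $\tA_{G_1}\cong\tA_{G_2}$ via $(1,\diag(1,2),I)$. So even the basic construction gives false positives for permutation equivalence.

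\textbf{Fix (i) makes things worse, not better.} With $\tilde G=\bigl(\begin{smallmatrix}G\\ I_n\end{smallmatrix}\bigr)$, the $\GL(d+n)$ action on the first factor is now strong enough that for \emph{any} $G_1,G_2$ of the same size, $(P,I_n,I_n)$ with $P=\bigl(\begin{smallmatrix}I_d & G_2-G_1\\ 0 & I_n\end{smallmatrix}\bigr)$ sends $\tA_{\tilde G_1}$ to $\tA_{\tilde G_2}$, since $P\tilde G_1=\tilde G_2$. So the augmented tensors are always isomorphic, and the reduction collapses.

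\textbf{Fix (ii) is applied in the wrong direction.} MacWilliams says weight-preserving $\Rightarrow$ monomial; it does not say monomial $\Rightarrow$ permutation. Monomial equivalence is strictly coarser than permutation equivalence over $\F_q$ with $q>2$, as the example above shows.

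\textbf{What the paper does.} The paper's proof is a two-line composition of existing reductions: \CodeEqlong (under permutations) reduces to \MatLieConjlong by \cite{GrochowLie}, which is a special case of \MatSpConjlong, which reduces to \ThreeTI by \cite{FGS19}. No new construction is needed. If you want a direct reduction, you would need gadgets that force a \emph{permutation} action (not merely monomial); the paper's separate Proposition~\ref{prop:MonCodeEq} shows how elaborate such gadgets already are just to enforce monomiality.
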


\begin{proof}
 \CodeEqlong reduces to \MatLieConjlong \cite{GrochowLie}, a special case of 
 \MatSpConjlong, which in turn reduces to \ThreeTI \cite{FGS19}.
\end{proof}

Using the linear-algebraic coloring gadget, we can extend this to equivalence of 
codes under monomial transformations (see \Sec{sec:technique}). Given two 
$d\times n$ matrices $A, B$ over $\F$ of rank $d$, the \MonCodeEqlong
problem is to decide whether there exist $Q\in \GL(d, \F)$ and a monomial 
matrix $P\in \Mon(n, \F)\leq \GL(n, \F)$ 
(product of a diagonal matrix and a permutation matrix) such that $QAP =B$. 

\begin{proposition} \label{prop:MonCodeEq}
\MonCodeEqlong reduces to \ThreeTIlong.
\end{proposition}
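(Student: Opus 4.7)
The plan is to encode each $d\times n$ matrix $A$ as a 3-way array $T_A$ so that $T_A$ and $T_B$ are isomorphic as 3-tensors if and only if $A$ and $B$ are monomially equivalent, using the linear-algebraic coloring gadget developed in \Sec{sec:reduction_gadget} to enforce monomial (rather than arbitrary $\GL(n,\F)$) structure on the appropriate tensor factors.

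I would begin with the natural encoding $T_A := \sum_{i=1}^n a_i \otimes e_i \otimes e_i \in \F^d \otimes \F^n \otimes \F^n$, where $a_i$ is the $i$-th column of $A$. The forward direction is a routine calculation: given $QAP = B$ with $P \in \Mon(n,\F)$ written as a permutation composed with a diagonal scaling, one can act on the two $\F^n$ factors by suitably chosen monomial matrices (distributing the diagonal entries of $P$ between the two factors so as to preserve the diagonal support pattern $\{e_i \otimes e_i\}$) and on the $\F^d$ factor by $Q$, sending $T_A$ to $T_B$. The converse is the delicate direction: a general tensor isomorphism $(Q, R_1, R_2) \in \GL(d,\F)\times \GL(n,\F)^2$ need not preserve the diagonal support pattern of $T_A$ (for instance, coincidences or linear dependencies among the $a_i$ can create non-monomial isomorphisms via alternative rank-$1$ decompositions), so one cannot read off a monomial equivalence directly from tensor isomorphism of $T_A$ and $T_B$.

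To close this gap, I would augment the construction by a rigid gadget $G$ produced by the linear-algebraic coloring technique, obtaining $T_A' := T_A \oplus G \in \F^{d+d'} \otimes \F^n \otimes \F^n$ whose stabilizer in $\GL(n,\F) \times \GL(n,\F)$ (acting on the last two factors) is exactly the diagonal embedding of $\Mon(n,\F)$. The coloring gadget supplies a distinct "color" for each coordinate line of $\F^n$, forcing any isomorphism to permute these lines and rescale them individually; one additionally marks the first factor (say, with a second round of coloring) so that any isomorphism $T_A' \to T_B'$ must respect the direct-sum decomposition of $\F^{d+d'}$ into the code-carrying block and the gadget block. After these constraints are in place, the $(Q, R_1, R_2)$-action decouples into an arbitrary $Q \in \GL(d,\F)$ on the code block together with a monomial action on the two $\F^n$ factors, and the forward calculation then reads off a witness of monomial equivalence.

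The main obstacle will be ensuring that the stabilizer of $G$ is exactly $\Mon(n,\F)$, rather than a strictly larger group (such as the full torus or a group permuting coordinate lines without individual scaling) or a strictly smaller one (such as the permutation subgroup $S_n$ alone), and that no unintended identifications among the coordinate lines are introduced by the gadget. Once the gadget from \Sec{sec:reduction_gadget} is adapted to supply precisely this stabilizer, the verification of $T_A' \cong T_B' \iff A \cong_{\text{mon}} B$ reduces to the forward calculation above, applied in both directions.
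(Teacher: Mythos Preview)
Your high-level plan—rigidify the $\F^n$ factor(s) with a coloring gadget so that only monomial transformations survive—is the right instinct, but the execution diverges from the paper's proof in a way that creates a real problem, and the proposal leaves the hard step unwritten.

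The paper's proof (given in full in \Sec{sec:technique}) is considerably more direct: it treats $A$ as a $d\times n\times 1$ array, keeping only \emph{one} $\F^n$ factor, and enlarges it to a $(d+2n)\times n\times(1+2n)$ array by attaching a rank-$2$ identity block $I_2$ to each lateral slice $L_j$. The rank argument is then immediate: each $L_j$ has rank $\le 3$, but any nontrivial combination of two or more lateral slices has rank $\ge 4$, so the action on the middle factor must be monomial. A short inspection of the frontal slices finishes the converse.

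Your encoding $T_A=\sum_i a_i\otimes e_i\otimes e_i$ introduces a second $\F^n$ factor, and this is where the gap appears. You state that the required gadget $G$ should have stabilizer in $\GL(n,\F)\times\GL(n,\F)$ equal to the \emph{diagonal embedding} of $\Mon(n,\F)$. Taken literally, this breaks the forward direction: if $R_1=R_2=M$ is forced, then a monomial equivalence $QADP_\sigma=B$ with diagonal entries $\alpha_i$ translates to the requirement that each $\alpha_i$ be a \emph{square} in $\F$, which fails over $\F_q$ for odd $q$ or over $\Q$. What you actually need is the larger subgroup $\{(M_1,M_2)\in\Mon(n,\F)^2:\text{same underlying permutation}\}$, and constructing a gadget with exactly this stabilizer is not what \Sec{sec:reduction_gadget} provides (Lemma~\ref{lem:gadget} handles a single $\GL(n)$ acting by $P^t(\cdot)P$, not two independent copies). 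You defer this construction entirely, but it is the entire content of the reduction; without it the proposal is a restatement of the goal. The paper sidesteps the whole issue by never duplicating the $\F^n$ factor in the first place.
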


Since \GIlong reduces to \CodeEqlong \cite{Luks} (see \cite{miyazakiCodeEq}) and \cite{PR97} (even over arbitrary fields \cite{GrochowLie}), by 
\Obs{obs:code_3TI} and \Thm{thm:main}, we have the following.
\begin{corollary}
\GIlong reduces to \AltMatSpIsomlong.
\end{corollary}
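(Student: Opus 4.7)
The plan is to compose three reductions already established (either in prior work or earlier in this paper) to obtain the desired reduction from \GIlong to \AltMatSpIsomlong. Since polynomial-time many-one reductions compose, it suffices to exhibit such a chain.

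First, I would invoke the classical reduction from \GIlong to \CodeEqlong. Over finite fields this was observed by Luks (with the form recorded in \cite{miyazakiCodeEq}) and by Petrank--Roth \cite{PR97}; as noted in the excerpt, \cite{GrochowLie} extends this to arbitrary fields, so no characteristic restriction is needed. Second, I would apply \Obs{obs:code_3TI}, which states that \CodeEqlong reduces to \ThreeTIlong (via a further composition with the reduction of \CodeEqlong to \MatLieConjlong from \cite{GrochowLie}, and then \cite{FGS19} taking \MatSpConjlong into \ThreeTI). Third, I would apply \Thm{thm:main}(\ref{thm:main:isom}), which gives a reduction from \ThreeTIlong to \MatSpIsomlong restricted to alternating matrix spaces, i.e., \AltMatSpIsomlong.

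Concatenating these three reductions yields the desired polynomial-time reduction \GIlong $\le_m^p$ \AltMatSpIsomlong. Each step is polynomial-time, and each is a kernel-style reduction, so their composition is again a polynomial-time kernel reduction.

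There is essentially no new obstacle in this corollary: the content is purely that the listed reductions can be chained. The only mild subtlety to verify is that the hypotheses of \Thm{thm:main}(\ref{thm:main:isom}) and of \cite{FGS19} are applicable over the field $\F$ coming out of the \GI-to-\CodeEq reduction; since the intermediate results apply over arbitrary fields, this is immediate, and one is free to choose, e.g., $\F = \F_2$ or any other convenient field when starting from \GI.
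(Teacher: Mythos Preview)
Your proposal is correct and follows essentially the same chain of reductions as the paper: \GIlong $\to$ \CodeEqlong (prior work), then \Obs{obs:code_3TI} to reach \ThreeTIlong, and finally \Thm{thm:main}(\ref{thm:main:isom}) to reach \AltMatSpIsomlong. The paper also notes a more direct alternative via \Prop{prop:GI}, but the argument you gave matches the paper's stated derivation of the corollary.
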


Using our linear-algebraic gadgets, we also
reprove this result using a much more 
direct reduction (see \Prop{prop:GI}). Besides being a different 
construction, another reason for the additional proof is that the technique leads to the 
search-to-decision reduction, which we discuss below.

\subsection{Application to \GpIlong: reducing the nilpotency class}
For several reasons, the hardest cases of \GpIlong are believed to be $p$-groups 
of class 2 and exponent $p$; recall that these are groups in which every element 
has order $p$, the order of the group is $p^n$, and $G/Z(G)$ is abelian. See 
\hyperref{}{}{sec:problems:nilpotent}{Nilpotent groups} above. While this belief  
has been widely held for many decades, we are not aware of any prior reduction 
from a more general 
class of groups to this class. However, by combining our results with the Lazard 
correspondence, we immediately get such a reduction.

\hyperdef{}{cor:p}{}
\begin{corp}
Let $p$ be an odd prime. 
For groups generated by $m$ matrices of size $n \times n$, \GpIlong for $p$-groups 
of exponent $p$ and class $c < p$ reduces to \GpIlong for $p$-groups of exponent 
$p$ and class $2$ in time $\poly(n, m, \log p)$.
\end{corp}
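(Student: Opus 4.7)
The plan is to compose three polynomial-time reductions, with the Lazard correspondence serving as the key bridge between $p$-groups of higher class and Lie algebras (and hence, via structure constants, 3-way arrays). First, I would convert the input $p$-group $G$ (of exponent $p$ and class $c < p$, given by $n \times n$ matrix generators) into a nilpotent Lie algebra $L(G)$ over $\F_p$ of the same class $c$. Second, I would use the fact that Lie algebra isomorphism, when a Lie algebra is specified by structure constants, is a special case of \AlgIsolong, which by \cite{FGS19} reduces to \ThreeTIlong. Third, I would apply Theorem~\ref{thm:main}(1) to reduce the resulting \ThreeTI instance to \GpIlong for $p$-groups of exponent $p$ and class $2$.

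The substantive step is the first. The hypothesis $c < p$ is precisely what makes the Baker--Campbell--Hausdorff series truncate at depth $c$ to a Lie polynomial whose rational coefficients have denominators involving only primes at most $c$, hence invertible modulo $p$. This gives the Lazard correspondence: a functorial, isomorphism-preserving bijection between $p$-groups of exponent $p$ and class $c$ and nilpotent $\F_p$-Lie algebras of class $c$, with both directions polynomial-time computable. Algorithmically, from the matrix generators I would first compute a power-commutator (polycyclic) presentation of $G$ using standard matrix-group machinery for $p$-groups, producing a basis of at most $\log_p |G| \le n^2$ generators in time $\poly(n, m, \log p)$; then I would apply the inverse BCH formula on this basis to extract the structure constants of $L(G)$ as a 3-way array over $\F_p$ of side length at most $n^2$.

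Steps~2 and~3 are then immediate invocations of known results: the structure constants of $L(G)$ form an instance of \AlgIsolong (indeed of the Lie-algebra special case), which by \cite{FGS19} reduces to \ThreeTI, and Theorem~\ref{thm:main}(1) maps that \ThreeTI instance to the desired \GpI instance for $p$-groups of exponent $p$ and class~$2$. Composing the three steps yields a reduction running in $\poly(n, m, \log p)$, as required.

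The main obstacle, and the only nontrivial verification, is the algorithmic realization of Lazard in the matrix-group input model: one must check that producing a polycyclic presentation from matrix generators and then evaluating the truncated inverse BCH formula can both be done within the stated complexity. Both tasks are standard in computational group theory for $p$-groups of moderate class, so the obstacle is bookkeeping rather than conceptual; the mathematical content lies entirely in the fact that $c < p$ makes the correspondence available, so that the class-reduction problem collapses onto the already-established \TI-completeness of class-$2$, exponent-$p$ \GpIlong.
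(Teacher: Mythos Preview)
Your proposal is correct and follows essentially the same three-step outline as the paper: Lazard correspondence to pass to $\F_p$-Lie algebras, then \cite{FGS19} to reduce to \ThreeTI, then Theorem~\ref{thm:main}(1) (via Corollary~\ref{cor:pgp}) to reach class-$2$ exponent-$p$ $p$-groups.

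The one implementation difference is in Step~1. You propose computing a polycyclic presentation of $G$ first and then applying the abstract inverse BCH formula to extract structure constants. The paper instead exploits the matrix model directly: since $G$ is given by $n\times n$ matrices, one applies the matrix logarithm (a truncated power series, well-defined because $G$ is conjugate to a unipotent subgroup) to the generating matrices themselves, obtaining $\logbf(G)$ as a matrix Lie subalgebra of $\M(n,\F_p)$ (this is Proposition~\ref{prop:lazard_matrices}). One then computes the full Lie algebra by linear-algebraic breadth-first search (dimension at most $n^2$) and reads off structure constants. This avoids the detour through polycyclic presentations and keeps the complexity analysis transparent. Your route through polycyclic presentations also works, but requires invoking more matrix-group machinery (e.g., Luks's algorithms for solvable matrix groups) to stay within the time bound; the paper's direct matrix-logarithm approach is both simpler and makes the $\poly(n,m,\log p)$ bound immediate.
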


\begin{proof}
By the Lazard correspondence 
(reproduced as \Thm{thm:lazard} below) 
two $p$-groups of exponent $p$ and class $c < p$ are isomorphic if and only if 
their corresponding $\F_p$-Lie algebras are. 
By \Prop{prop:lazard_matrices}, we can construct a generating set for the 
corresponding Lie algebra by applying the power series for logarithm to the 
generating matrices of $G$. This Lie algebra is thus a subalgebra of $n \times n$ 
matrices, so we can generate the entire Lie algebra (using the linear-algebra 
version of breadth-first search; its dimension is $\leq n^2$) and compute its 
structure constants in time polynomial in $n$, $m$, and $\log p$. 
Then use \cite{FGS19} to reduce isomorphism of Lie 
algebras to \TI, and then apply \Thm{thm:main} (specifically, \Cor{cor:pgp}) to reduce to isomorphism of 
$p$-groups of exponent $p$ and class $2$ given by a matrix generating set.
\end{proof}

The only obstacle to getting this proof to work in the Cayley table model is that 
our reduction from \TI to \AltMatSpIsomlong (\Prop{prop:3-tensor_isometry}) blows 
up the dimension quadratically, which means the size of the group becomes 
$|G|^{O(\log |G|)}$ after the reduction. See 
Question~\ref{question:search_decision}.

\subsection{Search to decision reductions} 
Reducing search problems to their associated decision problems 
is a classical and intriguing topic in complexity 
theory. Aside from the now-standard search-to-decision reduction for SAT, one of the earliest results in this direction was by Valiant in the 1970's \cite{valiant}. A celebrated 
result of Bellare and Goldwasser shows that, assuming $\cc{EE}\neq\cc{NEE}$, there exists a language in $\cc{NP}$ for 
which search does not reduce to decision under polynomial-time reductions 
\cite{BG94}. 
However, as usual for such statements based on complexity-theoretic assumptions, 
the problems constructed by such a proof are considered somewhat unnatural. For 
natural problems, on the one hand, there are search-to-decision reductions for 
$\cc{NP}$-complete problems and for \GI. On the other hand, such is not 
known, nor expected to be the case, for Nash Equilibrium  \cite{CDT09} (for which decision is 
trivial). 

Reducing search to decision is particularly intriguing for testing isomorphism 
of groups. One difficulty is that it is not clear how to 
guess a partial solution, and then make progress by restricting to a 
subgroup. In general, testing isomorphism 
of certain algebraic structures (groups, algebras, etc.) forms a large family of 
problems 
for which search-to-decision reductions are not known.

Because of the close relationship between \ThreeTI and isomorphism of various 
algebraic structures, one might expect similar difficulties in reducing search to 
decision for \ThreeTI, 
and thus for $\cc{TI}$-complete problems as well. 
Nonetheless, for \AltMatSpIsomlong, we are able to use the linear-algebraic 
coloring gadgets to get a non-trivial search-to-decision reduction.

\begin{maintheorem}\label{thm:search_decision}
There is a search-to-decision reduction for \AltMatSpIsomlong which, given 
$n \times n$ alternating matrix spaces $\cA, \cB$ over $\F_q$,  computes an isometry between them if they are 
isometric, in time $q^{\tilde{O}(n)}$. The reduction queries the decision oracle 
with inputs of dimension at most $O(n^2)$.
\end{maintheorem}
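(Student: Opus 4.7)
The plan is to adapt the classical search-to-decision reduction for \GIlong to the alternating matrix space setting, by individualising basis vectors one at a time using the linear-algebraic coloring gadget developed in \Sec{sec:reduction_gadget}. In \GIlong, one fixes a vertex and queries the decision oracle to determine which vertices of the other graph are compatible targets; here, we will fix a vector $v \in \F_q^n$ and query the oracle to determine which vectors in the target space $\cB$ lives in can serve as its image under an isometry.

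The core technical tool I need is that the coloring gadget can individualise a \emph{specific vector} $v \in \F_q^n$, preserves the alternating structure, and composes cleanly when several vectors are individualised in sequence. Concretely, given $n \times n$ alternating matrix spaces $\cA, \cB$ together with ordered tuples $(u_1,\dotsc,u_k)$ and $(v_1,\dotsc,v_k)$ of vectors in $\F_q^n$, the gadget should produce alternating matrix spaces $\cA', \cB'$ of side length $O(n^2)$ such that $\cA'$ and $\cB'$ are isometric if and only if there exists $P \in \GL(n,\F_q)$ with $P \cA P^t = \cB$ and $P u_i = v_i$ for each $i$. This is essentially the ``one vector at a time'' analogue of the gadget already used in the proof of \Prop{prop:3-tensor_isometry}.

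With this gadget in hand, the algorithm greedily determines the images of the standard basis $e_1,\dotsc,e_n$ of $\F_q^n$. Assuming $\cA$ and $\cB$ are isometric, at round $k$ we have chosen $v_1,\dotsc,v_{k-1}$ that extend to a full isometry. For each candidate $v \in \F_q^n$, we build the gadgeted spaces that individualise $(e_1,\dotsc,e_k)$ in $\cA$ with the matched tuple $(v_1,\dotsc,v_{k-1},v)$ in $\cB$, and query the decision oracle. By the gadget's correctness, the YES answers are exactly the $v$'s such that the partial assignment extends to an isometry of $\cA$ and $\cB$, and by induction at least one such $v$ exists (namely $P e_k$ for any witnessing $P$). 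We set $v_k$ to any such $v$, and after $n$ rounds the matrix with columns $v_1,\dotsc,v_n$ is the desired isometry. The cost is $n$ rounds times $q^n$ candidates per round, giving $n q^n = q^{\tilde O(n)}$ oracle calls, each on alternating matrix spaces of side length $O(n^2)$.

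The main technical obstacle is establishing the gadget lemma itself: verifying that the gadget pins each individualised vector down \emph{pointwise} (not merely up to scalar or up to some larger stabiliser), that it stays within the class of alternating matrix spaces, and that iterating it $k$ times still only blows up the dimension by an additive $O(kn)$, so that the final queries have side length $O(n^2)$. Once this gadget lemma is in place, the greedy scheme and the complexity accounting are routine.
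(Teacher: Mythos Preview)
Your high-level plan---greedy individualisation of basis vectors via linear-algebraic colouring gadgets, with $q^{O(n)}$ candidates per round---is exactly the paper's strategy. But the specific gadget lemma you posit is stronger than what the paper actually builds, and this matters.

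You ask for a gadget that pins each $u_i$ to $v_i$ \emph{pointwise}. Observe first that truly pointwise individualisation is impossible: for any alternating matrix space, the full scalar group $\F_q^\times \cdot I$ lies in the isometry group, so a gadget can at best force $P e_i = \alpha_i v_i$ for some scalars $\alpha_i$. That by itself is harmless, but the harder issue is getting the $\alpha_i$'s tied to the \emph{specific} index~$i$ (i.e.\ forcing the top-left $k\times k$ block of $P$ to be diagonal rather than merely monomial). The rank-based gadgets in the paper work by giving the $i$th lateral slice a characteristic rank range; to make the ranges for $e_1,\dotsc,e_k$ pairwise disjoint (so that no permutation of the first $k$ coordinates is allowed) one needs gaps of size $\Theta(n)$ between consecutive ranks, hence total extra dimension $\Theta(k^2 n)$, which is $\Theta(n^3)$ at $k=n$. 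Your claimed $O(kn)$ blowup for a pointwise (or even diagonal) gadget is therefore not delivered by the techniques at hand.

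The paper sidesteps this by settling for the weaker monomial gadget (all $k$ individualising blocks have the \emph{same} size $2n$, giving $O(kn)$ blowup), which only guarantees that the first $i$ coordinates are mapped among themselves by a monomial matrix. Two consequences flow from this weaker guarantee that your proposal does not address. First, the gadget enforces a block-diagonal shape $\diag(P_{1,1},P_{2,2})$, so at each step one must guess not only the next vector $v_i$ but also a complement of $\langle v_i\rangle$ inside the current residual subspace; this is the step that keeps the per-round cost at $q^{O(n)}$. Second, after $n-1$ rounds one only knows the isometry up to an unknown monomial matrix, so the paper finishes with a brute-force search over all $(q-1)^n\, n! = q^{\tilde O(n)}$ monomial matrices. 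Your scheme, as written, omits both the complement guessing and the final monomial enumeration; once you replace your hoped-for pointwise gadget with the monomial one that actually exists at size $O(n^2)$, these two extra ingredients become necessary, and you recover precisely the paper's argument.
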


As a consequence, a $q^{\tilde{O}(\sqrt{n})}$-time decision algorithm would result 
in a $q^{\tilde{O}(n)}$-time search algorithm, in contrast with the brute-force 
$q^{O(n^2)}$ running time. Note that in this context, the size of the input is 
$\poly(n,\log q)$, so a $q^{\tilde{O}(\sqrt{n})}$ running time is still quite 
generous.

By the connection between \AltMatSpIsomlong and \GpIlong for $p$-groups of class $2$ and exponent $p$, we have 
the following. 
Note that the natural succinct input representation mentioned in the following 
result can have size $\poly(\ell, \log p) = \poly(\log |G|)$. 

\begin{maincorollary} \label{cor:search_decision}
Let $p$ be an odd prime, and let \algprobm{GpIso2Exp($p$)} denote the 
isomorphism problem for $p$-groups of class 2 and exponent $p$ in the model of 
matrix groups over $\F_p$.
For groups of order $p^\ell$, there is a search-to-decision reduction 
for \algprobm{GpIso2Exp($p$)} running in time $|G|^{O(\log 
\log |G|)}=p^{\tilde{O}(\ell)}$.
\end{maincorollary}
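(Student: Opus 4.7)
The plan is to combine Theorem~\ref{thm:search_decision} with Baer's correspondence, applied in both directions. First, given two $p$-groups $G_1, G_2$ of class $2$ and exponent $p$ presented as matrix groups over $\F_p$, I extract each associated alternating bilinear form via Baer's correspondence: using linear-algebraic operations on the generating matrices I compute $[G_i, G_i]$, $Z(G_i)$, and $\F_p$-bases for $V_i := G_i / Z(G_i)$ and $W_i := [G_i, G_i]$, and express the commutator bracket $V_i \times V_i \to W_i$ as a tuple of alternating matrices spanning an alternating matrix space $\cA_i$. Since Baer (equivalently, Lazard restricted to class~$2$) is an equivalence of categories, $G_1 \cong G_2$ iff the three invariants $\dim V_i$, $\dim W_i$, and $\dim Z(G_i)/[G_i,G_i]$ coincide and $\cA_1, \cA_2$ are isometric as alternating matrix spaces. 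If $|G_i| = p^\ell$, then $n := \dim V_i$ and $\dim W_i$ are each at most $\ell$.

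Second, I invoke Theorem~\ref{thm:search_decision} on $(\cA_1, \cA_2)$, obtaining (when they are isometric) an isometry in time $p^{\tilde{O}(n)} \le p^{\tilde{O}(\ell)} = |G|^{O(\log \log |G|)}$, matching the claimed runtime. Combined with any vector-space isomorphism $Z(G_1)/[G_1,G_1] \to Z(G_2)/[G_2,G_2]$, this isometry lifts through the reverse Baer correspondence to an explicit isomorphism $G_1 \to G_2$, which can be written down as the images of the given matrix generators.

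Third, I simulate each call to the \AltMatSpIsomlong decision oracle using the \algprobm{GpIso2Exp($p$)} decision oracle. A query involves alternating matrix spaces of matrix size $n' \le O(n^2)$ and linear dimension $m' \le O(n^2)$; by the reverse Baer construction I build the associated $p$-group of order $p^{n' + m'} \le p^{O(\ell^2)}$ as a subgroup of unitriangular matrices of degree $O(\ell^2)$ over $\F_p$. This matrix-group description has bit-length $\poly(\ell, \log p) = \poly(\log |G|)$, and (again by Baer) the two spaces are isometric iff the resulting groups are isomorphic, so the query is a legitimate input to the oracle in the matrix-group model.

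The main technical care lies in Step~1: verifying that $Z(G_i)$, $[G_i, G_i]$, and the commutator bilinear map are all extractable in $\poly(\log |G|)$ time from matrix generators---standard but nontrivial in the matrix-group model---and that the auxiliary invariant $\dim Z(G_i)/[G_i,G_i]$ is likewise computable so that the group-isomorphism question reduces cleanly to alternating-matrix-space isometry. The remaining steps are essentially bookkeeping: invoke Theorem~\ref{thm:search_decision}, and translate each oracle query via the reverse Baer construction, whose output stays within the required $\poly(\log |G|)$ bit-length.
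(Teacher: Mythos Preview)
Your proposal is correct and follows essentially the same approach as the paper: translate the input groups to alternating matrix spaces via Baer, run Theorem~\ref{thm:search_decision}, and simulate each \AltMatSpIsomlong oracle call by building the corresponding Baer group as a unitriangular matrix group and querying the \algprobm{GpIso2Exp($p$)} oracle. One small difference worth noting: the paper's constructive Baer correspondence (Lemma~\ref{lem:baer_matrix}) works with the commutator map on $G/[G,G]$ rather than $G/Z(G)$, so the extra central factor $Z(G)/[G,G]$ is absorbed into the (possibly degenerate) bilinear map rather than tracked as a separate invariant as you do; both conventions work, and your explicit handling of $\dim Z(G_i)/[G_i,G_i]$ is arguably cleaner.
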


\section{Related work} \label{sec:related}
The most closely related work is that of Futorny, Grochow, and Sergeichuk 
\cite{FGS19}. They show that a large family of 
isomorphism problems on 3-way 
arrays---including those involving multiple 3-way arrays simultaneously, or 3-way 
arrays that are partitioned into blocks, or 3-way arrays where some of the blocks 
or sides are acted on by the same group (e.\,g., \MatSpIsomlong)---all reduce to 
\ThreeTI. Our work complements theirs in that all our reductions 
for \Thm{thm:main}
go in the 
opposite direction, reducing \ThreeTI to other problems. Some of our other results 
relate \GI and \CodeEqlong to \ThreeTI; the latter problems were not considered in 
\cite{FGS19}. 
\Thm{thm:d_to_3} considers $d$-tensors for any $d \geq 3$, which were not 
considered in \cite{FGS19}.

In \cite{AS05,AS06}, Agrawal and Saxena considered \algprobm{Cubic Form 
Equivalence} and testing isomorphism of commutative, associative, unital algebras.
They showed that \GI reduces to \AlgIsolong; \algprobm{Commutative Algebra Isomorphism} reduces to 
\CubicFormlong;
and \algprobm{Homogeneous} \algprobm{Degree-$d$} \algprobm{Form} \algprobm{Equivalence} reduces to 
\AlgIsolong
assuming that the underlying field has $d$th root for every field element. 
By combining a reduction from \cite{FGS19}, 
\Prop{prop:3-tensor_isometry}, and \Cor{cor:pseudo_special}, we get a new 
reduction from \CubicFormlong to \algprobm{Algebra Isomorphism} that works over 
any field in which $3!$ is a unit, which is fields of characteristic $0$ or $p > 3$.

There are several other works which consider related isomorphism problems. 
Grigorev \cite{Grigoriev83} showed that \GI is equivalent to 
isomorphism of unital, associative algebras $A$ such that the radical $R(A)$ 
squares to zero and $A/R(A)$ is abelian. Interestingly, we show 
$\cc{TI}$-completeness for conjugacy of 
matrix algebras with the same abstract structure (even when $A/R(A)$ is only 
1-dimensional). Note the latter problem is equivalent to asking whether two 
representations of $A$ are equivalent up to automorphisms of $A$. In the proof of \Thm{thm:d_to_3}, which uses algebras in which $R(A)^d=0$ when reducing from \DeeTI, we use Grigoriev's result.

Brooksbank and Wilson \cite{BW15} showed a reduction from \algprobm{Associative 
Algebra Isomorphism} (when given by structure constants) to \algprobm{Matrix 
Algebra Conjugacy}. 
Grochow \cite{GrochowLie}, among other things, showed that \GI and \CodeEq reduce to \MatLieConjlong, which is a special case of \MatSpConjlong.

In \cite{KS06}, Kayal and Saxena considered testing isomorphism of finite rings 
when the rings are given by structure constants. This problem 
generalizes testing isomorphism of algebras over finite fields. They 
put this problem in $\cc{NP} \cap \cc{coAM}$ \cite[Thm.~4.1]{KS06}, reduce \GI to 
this problem \cite[Thm.~4.4]{KS06}, and 
prove that counting the number of ring automorphism (\#RA) is in 
$\cc{FP}^{\cc{AM} 
\cap \cc{coAM}}$ \cite[Thm.~5.1]{KS06}. They also present a $\cc{ZPP}$ reduction 
from \GI to \#RA, 
and show that the decision version of the ring automorphism problem is in $\cc{P}$.

To summarize this zoo of isomorphism problems and reductions, we include Figure~\ref{fig:summary} for reference.

\begin{figure}[!htbp]
\newcommand{\fieldfootnote}{\hyperref{}{}{fn:field}{\textsuperscript{*}}}
\newcommand{\fieldfootnoteother}{\hyperref{}{}{fn:field2}{\textsuperscript{$\dagger$}}}
\newcommand{\ronyaifootnote}{\hyperref{}{}{fn:ronyai}{\textsuperscript{$\ddagger$}}}
\[
\hspace{-0.5in}
\xymatrix{
\text{\parbox{1.1in}{\centering \algprobm{Symmetric $d$-Tensor Diagonal Iso.}}} \ar@/^/[d] &  \text{\parbox{1.1in}{\centering \algprobm{Matrix $p$-Group Iso.} (class 2, exp. $p$)}}  \ar[d]_{\text{\cite{Bae38}}} & & & \text{$d$-\algprobm{Tensor Iso.}} \ar@/^/[dddl]^(0.3){\text{\Thm{thm:d_to_3}}} 
\\
\text{\parbox{1.1in}{\centering \algprobm{Degree-$d$ Form Eq.}}} \ar[rd]^(0.35){\text{\cite{AS05}\fieldfootnote}} \ar@/^/[u]^{\fieldfootnoteother}  &\text{\parbox{1in}{\centering \algprobm{Alt. Mat. Space Isom.}}} \ar[u] \ar@/^/[r]^{\text{\cite{FGS19}}} \ar[d]^(0.6){\text{\Prop{prop:isometry_algebra}}} & \ar@/^/[l]^{\text{\Thm{thm:main}}} \text{\parbox{1in}{3-\algprobm{Tensor Iso.}}}  \ar@/^1.25pc/[urr]^{\text{\Obs{obs:d}}} \ar@/_/[r]_{\text{\Thm{thm:main}}}& \ar@/_/[l]_{\text{\cite{FGS19}}} \text{\parbox{1in}{\centering\algprobm{Mat. Space Conj.}}} \\
\text{\parbox{1in}{\centering \algprobm{Cubic Form Eq.}}} \ar[u]^{\text{\Prop{prop:cubic_to_d}}} \ar@/^/[r]^{\text{\parbox{1.0in}{\centering \Thm{thm:main}\fieldfootnoteother}}} \ar[ru]^(0.3){\text{\cite{FGS19}\fieldfootnoteother}} & \text{\parbox{1in}{\centering \algprobm{Ring Iso.} (basis)}} \ar@/^2pc/[ld] \ar[drr] \ar@/^/[l]^{\text{\cite{AS05,AS06}}} & & \text{\parbox{1in}{\centering \algprobm{Mat. Assoc. Alg. Conj.}}}\ar[u]  &\text{\parbox{1in}{\centering\algprobm{Mat. Lie Alg. Conj.}}} \ar[ul] \\
\text{\parbox{1in}{\centering \algprobm{Ring Iso.} (gens/rels)}} & & & \text{\parbox{1in}{\centering\algprobm{Unital Assoc. Alg. Iso.}}} \ar[u]^{\text{\cite{BW15}}} & \text{\parbox{1in}{\centering\algprobm{Comm. Mat. Lie Alg. Conj.}}} \ar[u] \\
& & \text{\parbox{1in}{\centering \algprobm{Mon. Code Eq.}}} \ar[uuu]_(0.71){\text{\Prop{prop:MonCodeEq}}} & \ar[ld] \text{\parbox{1in}{\centering\algprobm{Semisimple Mat. Lie Alg. Conj.}}} \ar[ruu] &  \ar[dl] \text{\parbox{1in}{\centering\algprobm{Diag. Mat. Lie Alg. Conj.}}} \ar[u]\\
\text{\parbox{1in}{\centering \algprobm{Factoring Integers}}} \ar[uu]_{\text{\cite{AS05}}} \ar@/^1pc/[rrruu]^(0.3){\text{\parbox{0.55in}{\centering \cite{Ronyai88} (over $\Q$)\ronyaifootnote}}}
& \text{\parbox{1in}{\centering \algprobm{String Isomorphism}}} & \ar[l]^(0.35){\text{\parbox{0.7in}{\centering \cite{Luks82}, cf. \cite{Luks}}}}\text{\GI} \ar[luuu]^(0.6){\text{\cite{AS05,KS06}}} \ar@/_2.5pc/[luuuu]_(0.7){\text{\Prop{prop:GI}}} \ar[u]_(0.6){\text{cf. \cite{GrochowLie}}} \ar[r]_(0.35){\text{\cite{PR97,Luks}}} \ar[ru]_(0.6){\text{\cite{GrochowLie}}}  & \text{\parbox{1in}{\centering\algprobm{Perm. Code Eq.}}} \ar[ru]_(0.6){\text{\cite{GrochowLie}}} \ar[r]_(0.4){\text{\cite{BCGQ}}}& \text{\parbox{1in}{\centering \algprobm{Perm. Group Conj.}}} \\
\text{\parbox{1in}{\centering \algprobm{Alt. Mat. Space Isom.} ($\F_{p^e}$, verbose)}} \ar[r]^{\text{\cite{Bae38}}} & \text{\parbox{1.1in}{\centering \algprobm{$p$-Group Iso.} (class 2, exp. $p$, table)}} \ar[l] \ar[r] & \text{\parbox{1in}{\centering \algprobm{Group Iso.} (table)}} \ar[u]_{\text{(Classical, cf. \cite{ZKT})}} & 
\ar@{--}(18,10);(165,10)^(0.9){\text{\normalsize $\cc{TI}$-complete}} 
\ar@{--}(18,10);(18,-63) 
\ar@{--}(18,-63);(165,-63) 
\ar@{--}(165,10);(165,-63) 
\ar@{--}(18,10);(-15,10)_{\text{\normalsize $\cc{TI}$-complete\fieldfootnote\fieldfootnoteother}} 
\ar@{--}(-15,10);(-15,-44) 
\ar@{--}(-15,-44);(18,-44) 
 }
\]

\caption[Summary of isomorphism problems around \GIlong and 
\ThreeTIlong.]{\label{fig:summary} 
Summary of isomorphism problems around \GIlong and 
\TIlong. $A \to B$ indicates that $A$ reduces to $B$, i.\,e., $A \leq_m^p B$. Unattributed arrows indicate $A$ is clearly a special case of $B$. Note that the 
definition of ring used in \cite{AS05} is commutative, finite, and unital; by 
``algebra'' we mean an algebra (not necessarily associative, let alone commutative 
nor unital) over a field. The reductions between \algprobm{Ring Iso.} (in the 
basis representation) and \algprobm{Degree-$d$ Form Eq.} and \algprobm{Unital 
Associative Algebra Isomorphism} are for rings over a field. The equivalences 
between \AltMatSpIsomlong and $p$-\GpIlong are for matrix spaces over $\F_{p^e}$. 
Some 
\TI-complete problems from \Thm{thm:main} are left 
out for clarity.

\hrulefill

\hyperdef{}{fn:field}{{\color{red}\textsuperscript{*}}} These results only hold over fields where every element has a $d$th root. In particular, 
\algprobm{Degree $d$ Form Equivalence} and \algprobm{Symmetric $d$-Tensor 
Isomorphism} are \ThreeTI-complete over fields with $d$-th roots. 
A finite field $\F_q$ has this property if and only if $d$ is coprime to 
$q-1$. 

\hyperdef{}{fn:filed2}{{\color{red}\textsuperscript{$\dagger$}}} These results only hold over rings where $d!$ is a unit.

\hyperdef{}{fn:ronyai}{{\color{red}\textsuperscript{$\ddagger$}}}Assuming the Generalized Riemann Hypothesis, R\'{o}nyai \cite{Ronyai88} shows a Las Vegas randomized polynomial-time reduction from factoring square-free integers---probably not much easier than the general case---to isomorphism of 4-dimensional algebras over $\Q$. Despite the additional hypotheses, this is notable as the target of the reduction is algebras of \emph{constant} dimension, in contrast to all other reductions in this figure.
 }
\end{figure}

\section{Overview of one new technique, and one full proof}\label{sec:technique}

In this section we describe one of the key new
techniques in this paper: a linear-algebraic 
coloring gadget. We exhibit this gadget by giving the full proof of 
\Prop{prop:MonCodeEq} as an example. A related gadget was used in 
\cite{FGS19} to show reductions \emph{to} \ThreeTI; our reductions all go in the 
opposite direction. Furthermore, whereas the gadgets used in \cite{FGS19} were 
primarily to ensure that two different blocks could not be mixed, our gadgets 
allow us to ensure that certain slices of a tensor can be permuted, while 
disallowing more general linear transformations. 

In the context of \GI, there are many ways to reduce \algprobm{Colored GI} to ordinary \GI; here we give one example, which will serve as an analogy for our linear-algebraic gadget. To individualize a vertex $v \in G$ (give it a unique color), attach to it a large ``star'': if $|V(G)|=n$, add $n+1$ new vertices to $G$ and attach them all to $v$; call the resulting graph $G_v$. This has the effect that any automorphism of $G_v$ must fix $v$, since $v$ has a degree strictly larger than any other vertex. Furthermore, if $H_w$ is obtained by a similar construction, then there is an isomorphism $G \to H$ which sends $v \mapsto w$ if and only if $G_v \cong H_w$. Finally, if we attach stars of size $n+1$ to multiple vertices $v_1, \dotsc, v_k$, then any automorphism of $G$ must permute the $v_i$ amongst themselves, and there is an isomorphism $G \to H$ sending $\{v_1, \dotsc, v_k\} \mapsto \{w_1, \dotsc, w_k \}$ if and only if the corresponding enlarged graphs are isomorphic.

We adapt this idea to the context of 3-way arrays. 
Let 
$\tA$ 
be an $\ell \times n \times m$ 3-way array, with lateral slices $L_1, L_2, \dotsc, 
L_n$ (each an $\ell \times m$ matrix). 
For any vector $v \in \F^n$, we get an associated lateral matrix $L_v$, which is a linear combination of the lateral slices as given, namely $L_v := \sum_{j=1}^n v_j L_j$ (note that when $v=\vec{e_j}$ is the $j$-th standard basis vector, the associated lateral matrix is indeed $L_j$). By analogy with adjacency matrices of graphs, $L_v$ is a natural analogue of the neighborhood of a vertex in a graph. Correspondingly, we get a notion of ``degree,'' which we may define as
\begin{eqnarray*}
\deg_\tA(v) & := & \rk L_v = \rk (\sum_{j=1}^n v_j L_j) 
=  \dim \linspan\{L_v \vec{w} : \vec{w} \in \F^m \} 
= \dim \linspan\{\vec{u}^t L_v : \vec{u} \in \F^\ell \}
\end{eqnarray*}
The last two characterizations are analogous to the fact that the degree of a vertex $v$ in a graph $G$ may be defined as the number of ``in-neighbors'' (nonzero entries the corresponding row of the adjacency matrix) or the number of ``out-neighbors'' (nonzero entries in the corresponding column). 

To ``individualize'' $v$, 
we can enlarge $\tA$ with a gadget to increase $\deg_\tA(v)$, as in the graph case. Note that $\deg_\tA(v) \leq \min\{\ell,m\}$ because the lateral matrices are all of size $\ell \times m$. For notational simplicity, let us individualize $v = \vec{e_1} = (1,0,\dotsc,0)^t$. To individualize $v$, we will increase its degree by $d = \min\{\ell,m\}+1 > \max_{v \in \F^n} \deg_\tA (v)$. Extend $\tA$ to a new 3-way array $\tA_v$ of size $(\ell+d) \times n \times (m+d)$; in the ``first'' $\ell \times n \times m$ ``corner'', we will have the original array $\tA$, and then we will append to it an identity matrix in one slice to increase $\deg(v)$. More specifically, the lateral slices of $\tA_v$ will be 
\[
L_1' = \begin{bmatrix} L_1 & 0 \\ 0 & I_d \end{bmatrix} \qquad \text{ and } \qquad L_j' = \begin{bmatrix} L_j & 0 \\ 0 & 0 \end{bmatrix} \quad ( \text{for } j > 1).
\]
Now we have that $\deg_{\tA_v}(v) \geq d$. This almost does what we want, but now note that any vector $w=(w_1,\dotsc,w_n)$ with $w_1 \neq 0$ has $\deg_{\tA_v}(w) =\rk (w_1 L_1' + \sum_{j \geq 2} w_j L_j) \geq d$. We can nonetheless consider this a sort of linear-algebraic individualization.

Leveraging this trick, we can then individualize an entire basis of $\F^n$ 
simultaneously, so that $d \leq \deg(v) < 2d$ for any vector $v$ in our basis, and 
$\deg(v') \geq 2d$ for any nonzero $v'$ outside the basis   
(not a scalar multiple of one of the basis vectors), as we do in the following 
proof of \Prop{prop:MonCodeEq}. 
This is 
also a 3-dimensional analogue of the reduction from \GI to \CodeEq \cite{Luks,miyazakiCodeEq,PR97} 
(where they use Hamming weight instead of rank).

\begin{proof}[Proof of \Prop{prop:MonCodeEq}]
Without loss of generality we assume $d>1$, as the problem is easily solvable when $d=1$. 
We treat a $d \times n$ matrix $A$ as a 3-way array of size $d \times n \times 1$, 
and then follow the outline proposed above, of individualizing the entire standard 
basis $\vec{e_1}, \dotsc, \vec{e_n}$. Since the third direction only has length 1, 
the maximum degree of any column is 1, so it suffices to use gadgets of rank 2. 
More specifically, 
we build a $(d + 2n) \times n \times (1 + 2n)$ 3-way array $\tA$ whose lateral slices are
\[
L_j = \left[\begin{array}{ccccccc}
a_{1,j} & \vzero_{1 \times 2} & \vzero_{1 \times 2} & \cdots & \vzero_{1 \times 2} & \cdots & \vzero_{1 \times 2} \\
\vdots & \vdots & \vdots & \ddots & \vdots & \ddots & \vdots\\
a_{d,j} & \vzero_{1 \times 2} & \vzero_{1 \times 2} & \cdots & \vzero_{1 \times 2}  & \cdots & \vzero_{1 \times 2} \\
\vzero_{2 \times 1} & \vzero_{2 \times 2} & \vzero_{2 \times 2} & \cdots & \vzero_{2 \times 2}  & \cdots & \vzero_{2 \times 2} \\
\vdots & \vdots & \vdots & \ddots & \vdots & \ddots & \vdots \\
\vzero_{2 \times 1} & \vzero_{2 \times 2} & \vzero_{2 \times 2} & \cdots & I_2 & \cdots & \vzero_{2 \times 2} \\
\vdots & \vdots & \vdots & \ddots & \vdots & \ddots & \vdots \\
\vzero_{2 \times 1} & \vzero_{2 \times 2} & \vzero_{2 \times 2} & \cdots & \vzero_{2 \times 2}  & \cdots & \vzero_{2 \times 2} \\
\end{array}\right]
\]
where the $I_2$ block is in the $j$-th block of size 2 (that is, rows $d + 2(j-1) + \{1,2\}$ and columns $2(j-1) + \{1,2\}$).
It will also be useful to visualize the frontal slices of $\tA$, as follows. Here each entry of the ``matrix'' below is actually a $(1+2n)$-dimensional vector, ``coming out of the page'':
$$
\tA=\begin{bmatrix}
\tilde a_{1,1} & \tilde a_{1,2} & \dots & \tilde a_{1,n} \\
\vdots & \vdots & \ddots & \vdots \\
\tilde a_{d,1} & \tilde a_{d,2} & \dots & \tilde a_{d,n}\\
e_{1,1} & \vzero & \dots & \vzero \\
e_{1,2} & \vzero & \dots & \vzero \\
\vzero & e_{2,1} & \dots & \vzero \\
\vzero & e_{2,2} & \dots & \vzero \\
\vdots & \vdots & \ddots & \vdots \\
\vzero & \vzero & \dots & e_{n,1} \\
\vzero & \vzero & \dots & e_{n,2}
\end{bmatrix}, 
\begin{array}{rcl}
\multicolumn{3}{c}{\text{where}} \\
\tilde a_{i,j} & = & \begin{bmatrix} a_{i,j} \\ \vzero_{2n \times 1} \end{bmatrix} \in \F^{1 + 2n} \\
e_{i,j} & =& \vec{e}_{1+2(i-1)+j} \in \F^{1+2n} \text{ for } i\in[n], j\in[2] \\ \\
\multicolumn{3}{c}{\text{ and the frontal slices are}} \\ \\
A_1 & = & \begin{bmatrix} A \\ \vzero_{2n \times n} \end{bmatrix} \\
A_{1 + 2(i-1) + j} & = & E_{d + 2(i-1) + j, i} \qquad \text{ for } i\in[n], j\in[2]
\end{array}
$$
(In $\tA$ we turn the vectors $\tilde a_{i,j}$ and $e_{i,j}$ ``on their side'' so they become perpendicular to the page. )

We claim that $A$ and $B$ are monomially equivalent as codes if and only if $\tA$ 
and $\tB$ are isomorphic as 3-tensors.

($\Rightarrow$) Suppose $QADP = B$ where $Q \in \GL(n,\F)$, $D = \diag(\alpha_1, \dotsc, \alpha_n)$ and $P \in S_n \leq \GL(n,\F)$. Then by examining the frontal slices it is not hard to see that for $Q' = \begin{bmatrix} Q & 0 \\ 0 & (DP)^{-1} \otimes I_2 \end{bmatrix}$ (where $DP^{-1} \otimes I_2$ denotes a $2n \times 2n$ block matrix, where the pattern of the nonzero blocks and the scalars are governed by $(DP)^{-1}$, and each $2 \times 2$ block is either zero or a scalar multiple of $I_2$) we have $Q' A_1 DP = B_1$ and $Q' A_{1 + 2(i-1) + j} DP = B_{1 + 2(\pi(i)-1) + j}$, where $\pi$ is the permutation corresponding to $P$. Thus $\tA$ and $\tB$ are isomorphic tensors, via the isomorphism $(Q', DP, \diag(I_1, P))$.

($\Leftarrow$) 
Suppose there exist $Q\in \GL(d+2n, \F)$, $P\in \GL(n, \F)$, 
and $R\in \GL(1+2n, \F)$, such that $Q\tA P =\tB^R$. 
First, note that every
lateral slice of $\tA$ is of rank either $2$ or $3$, 
and the actions of $Q$ and $R$ do 
not change the ranks of the lateral slices.
Furthermore, any non-trivial linear 
combination of more than $1$ lateral slice results in a lateral matrix of rank 
$\geq 4$. It follows that $P$ cannot take nontrivial linear combinations of the lateral slices, hence it must be monomial. 

Now consider the frontal 
slices. Note that, as we assume $d>1$, every frontal slice of $Q\tA P$, except the 
first one, is of 
rank $1$. Therefore, $R$ must be of the form 
$\begin{bmatrix} r_{1,1} & \vzero_{1 \times (n-1)} \\ \vec{r'} & R' \end{bmatrix}$ 
where $R'$ is $(n-1) \times (n-1)$. Since $R$ is invertible, we must have $r_{1,1}\neq 0$, and the first frontal slice of $\tB^R$ 
contains all the rows of $B$ scaled by $r_{1,1}$ in its first $d$ rows. The first frontal slice of $Q\tA P$ is a matrix that generates, by definition (and since we've shown $P$ is monomial), a code monomially equivalent to $A$. Since the first frontal slices of $Q \tA P$ and $\tB^R$ are equal, and the latter is just a scalar multiple of $B_1$, we have that $A$ and $B$ are monomially equivalent as codes as well.
\end{proof}

\section{Preliminaries}\label{sec:prel}

\begin{table}[!htbp]
\begin{center}
\begin{tabular}{lll}
Font & Object & Space of objects \\ \hline
$A, B, \dotsc$ & matrix & $\M(n,\F)$ or $\M(\ell \times n, \F)$ \\
$\vA, \vB, \dotsc$ & matrix tuple & $\M(n,\F)^m$ or $\M(\ell \times n, \F)^m$ \\
$\cA, \cB, \dotsc$ & matrix space & [Subspaces of 
$\M(n,\F)$ or $\Lambda(n,\F)$]\\
$\tA, \tB, \dotsc$ & 3-way array & $\T(\ell \times n \times m, \F)$
\end{tabular}
\end{center}
\caption{\label{table:notation} Summary of notation related to 3-way arrays and tensors.}
\end{table}
\paragraph{Vector spaces.}
Let $\F$ be a field. In this paper we only consider 
finite-dimensional vector spaces over $\F$. We use $\F^n$ to denote the 
vector space of 
length-$n$ \emph{column} vectors. The $i$th standard basis 
vector of $\F^n$ is denoted as 
$\vec{e_i}$. Depending on the context, $\vzero$ may denote the zero vector space, 
a zero vector, or an all-zero matrix. Let $S$ be 
a subset of 
vectors. We use $\langle S\rangle$ to denote the subspace 
spanned by elements in $S$. 

\hyperdef{}{sec:prel:groups}{}
\paragraph{Some groups.} The general linear group of degree $n$ over a field $\F$ 
is denoted by $\GL(n, \F)$. The symmetric group of degree $n$ is denoted by 
$\S_n$. The natural embedding of $\S_n$ into $\GL(n, \F)$ is to represent 
permutations by permutation matrices. A monomial matrix in $\M(n, \F)$ is a matrix 
where each row and each column has exactly one non-zero entry. All monomial 
matrices form a subgroup of $\GL(n, \F)$ which we call the monomial 
subgroup, denoted by $\Mon(n, \F)$, which is isomorphic to the semi-direct product 
$\F^n \rtimes S_n$.
The subgroup of $\GL(n, \F)$ consisting of block upper-triangular matrices with a 
fixed block structure is called a (standard) parabolic subgroup.

\paragraph{Matrices.} Let $\M(\ell\times n, \F)$ be the linear space of 
$\ell\times n$ matrices over 
$\F$, and $\M(n, \F):=\M(n\times n, \F)$. 
Given $A\in \M(\ell\times 
n, \F)$, $A^t$ denotes the transpose of $A$. 

A matrix $A\in \M(n, \F)$ is \emph{symmetric}, if for any $u, v\in \F^n$, 
$u^tAv=v^tAu$, or equivalently $A = A^t$. 
That is, $A$ represents a symmetric bilinear form. 
A matrix $A\in \M(n, \F)$ is \emph{alternating}, if for any $u\in \F^n$, 
$u^tAu=0$. That is, $A$ represents an alternating bilinear form. 
Note that in characteristic $\neq 2$, alternating is the same as 
skew-symmetric, but in characteristic 2 they differ (in characteristic 2, 
skew-symmetric=symmetric). The linear space of $n\times n$ 
alternating matrices over $\F$ is denoted by $\Lambda(n, \F)$. 

The $n\times n$ \emph{identity matrix} is 
denoted by $I_n$, and when $n$ is clear from the context, we may just write $I$. 
The \emph{elementary matrix} $E_{i,j}$ is the matrix with the $(i,j)$th entry 
being $1$, 
and other entries being $0$. The \emph{$(i,j)$-th elementary alternating matrix} is the matrix $E_{i,j} - E_{j,i}$.

\paragraph{Matrix tuples.} We use $\M(\ell\times n, \F)^{m}$ to denote the linear 
space of $m$-tuples of $\ell\times n$ matrices. Boldface letters like $\vA$ and 
$\vB$ denote matrix tuples. Let $\vA=(A_1, \dots, A_m), \vB=(B_1, \dots, B_m)\in 
\M(\ell\times n, \F)^m$.  Given $P\in\M(\ell, \F)$ and $Q\in\M(n, \F)$, $P\vA 
Q:=(PA_1Q, \dots, PA_mQ)\in \M(\ell, \F)$. Given $R=(r_{i,j})_{i,j\in[m]}\in \M(m, 
\F)$, $\vA^R:=(A_1', \dots, A_m')\in\M(m,\F)$ where 
$A_i'=\sum_{j\in[m]}r_{j, i}A_j$. 
\begin{remark}
In particular, note that $A_i'$ 
corresponds to the entries in the $i$th \emph{column} of $R$. 
While this choice is immaterial (we could have chosen the opposite 
convention), all of our later calculations are consistent with this 
convention.
\end{remark}

Given $\vA, \vB\in\M(\ell\times n, \F)^m$, we say that $\vA$ and $\vB$ are 
\emph{equivalent}, if there exist $P\in\GL(\ell, \F)$ and 
$Q\in\GL(n, \F)$, such that $P\vA Q=\vB$. Let $\vA, \vB\in\M(n, \F)^m$. Then $\vA$ 
and $\vB$ are \emph{conjugate}, if there exists $P\in\GL(n, \F)$, such that 
$P^{-1}\vA 
P=\vB$. And $\vA$ and $\vB$ are \emph{isometric}, if there exists $P\in\GL(n, 
\F)$, such 
that $P^t\vA P=\vB$. Finally, $\vA$ and $\vB$ are pseudo-isometric, if there exist 
$P\in\GL(n, \F)$ and $R\in\GL(m, \F)$, such that $P^t\vA P=\vB^R$.

\paragraph{Matrix spaces.} Linear subspaces of $\M(\ell\times n, \F)$ are called 
matrix 
spaces. Calligraphic letters like $\cA$ and $\cB$ denote 
matrix spaces. By a slight abuse of notation, 
for $\vA\in 
\M(\ell\times n, \F)^m$, we use $\langle \vA\rangle$ to denote 
the subspace spanned by those matrices in $\vA$.

\paragraph{3-way arrays.} Let $\T(\ell\times n\times m, \F)$ be the linear space 
of $\ell\times n\times m$ 3-way arrays over $\F$. We use the fixed-width teletypefont for 3-way 
arrays, like $\tA$, $\tB$, etc.. 

Given $\tA\in \T(\ell\times n\times m, \F)$, we can 
think of $\tA$ as a 3-dimensional table, where the $(i,j,k)$th entry is denoted as 
$\tA(i,j,k)\in \F$. We can slice $\tA$ along one direction and obtain several 
matrices, which are then called slices. For example, slicing along the first 
coordinate, we obtain the \emph{horizontal} slices, namely $\ell$ matrices $A_1, 
\dots, A_\ell\in \M(n\times m, \F)$, where $A_i(j,k)=\tA(i,j,k)$. Similarly, we 
also obtain the \emph{lateral} slices by slicing along the second coordinate, and 
the \emph{frontal} slices by slicing along the third coordinate. 

We will often represent a 3-way array as a matrix whose entries are vectors. That 
is, given $\tA\in\T(\ell\times n\times m, \F)$, we can write
$$
\tA=\begin{bmatrix}
w_{1,1} & w_{1,2} & \dots & w_{1,n}\\
w_{2,1} & w_{2,2} & \dots & w_{2,n}\\
\vdots & \ddots & \ddots & \vdots \\
w_{\ell,1} & w_{\ell,2} & \dots & w_{\ell, n}
\end{bmatrix},
$$
where $w_{i,j}\in \F^m$, so that $w_{i,j}(k)=\tA(i,j,k)$. Note that, while 
$w_{i,j}\in \F^m$ are column vectors, in the above representation of $\tA$, we 
should think of them as along 
the direction ``orthogonal to the paper.'' Following \cite{KB09}, we call 
$w_{i,j}$ the 
\emph{tube fibers} of $\tA$. Similarly, we can have the \emph{row fibers} 
$v_{i,k}\in\F^n$ such that $v_{i,k}(j)=\tA(i,j,k)$, and the \emph{column fibers} 
$u_{j,k}\in\F^\ell$ such that $u_{j,k}(i)=\tA(i,j,k)$.

Given $P\in \M(\ell, \F)$ and $Q\in \M(n, \F)$, let 
$P\tA Q$ be the $\ell \times 
n\times m$ $3$-way array whose $k$th frontal slice is $P A_k Q$. For 
$R=(r_{i,j})\in 
\GL(m, \F)$, let $\tA^R$ be the $\ell\times n\times m$ $3$-way array whose 
$k$th 
frontal slice is $\sum_{k'\in[m]}r_{k',k}A_{k'}$.  
Note that these notations are consistent with the notations for matrix 
tuples above, when we consider the matrix tuple $\vA = (A_1, \dotsc, A_k)$ 
of frontal slices of $\tA$.

Let $\tA\in \T(\ell\times n\times m, \F)$ be a 3-way array. We say that $\tA$ is 
\emph{non-degenerate} as a 
3-tensor if 
the horizontal slices of $\tA$ are linearly independent, the lateral slices are 
linearly independent, and the frontal slices are linearly independent. 
Let $\vA=(A_1, 
\dots, A_m)\in \M(\ell\times n, \F)^m$ be a matrix tuple consisting of the frontal 
slices of $\tA$. Then it is easy to see that the frontal slices of $\tA$ are 
linearly independent if and only if $\dim(\langle\vA\rangle)=m$. The lateral 
(resp., horizontal) slices of $\tA$ are linearly independent if and only if the 
intersection of the right (resp., left) kernels of $A_i$ is zero.
\begin{observation}\label{obs:nondeg}
Given $3$-way arrays $\tA$ and $\tB$, we can construct non-degenerate 
3-way arrays $\tA'$ and $\tB'$ in polynomial time, such that $\tA$ and $\tB$ are 
isomorphic as 
3-tensors if and only if $\tA'$ and $\tB'$ are isomorphic as 3-tensors.
\end{observation}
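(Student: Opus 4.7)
The plan is to replace each tensor by its ``core'' in a Tucker-style decomposition, using the multilinear rank as the key isomorphism invariant. For a 3-way array $\tA \in \T(\ell \times n \times m, \F)$, define $r_k(\tA)$ to be the dimension of the span of its $k$-slices (equivalently, the rank of the $k$-th flattening), for $k \in \{1,2,3\}$. Since any $(P, Q, R) \in \GL(\ell, \F) \times \GL(n, \F) \times \GL(m, \F)$ induces an invertible change of basis on each flattening, the triple $(r_1, r_2, r_3)$ is invariant under the tensor action. So if $(r_1(\tA), r_2(\tA), r_3(\tA)) \neq (r_1(\tB), r_2(\tB), r_3(\tB))$, then $\tA \not\cong \tB$, and I output any hard-coded pair of non-isomorphic non-degenerate tensors (e.g., the $1\times 1\times 1$ arrays $[0]$ and $[1]$).

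Otherwise, suppose the multilinear ranks agree and equal $(r_1, r_2, r_3)$. I reduce one direction at a time. First, by standard linear algebra I compute $R \in \GL(m, \F)$ such that $\tA^R$ has its first $r_3$ frontal slices linearly independent and its remaining $m - r_3$ frontal slices equal to zero. Deleting the zero slices yields a tensor $\tA^{(1)}$ of size $\ell \times n \times r_3$. The $R$-action does not change any of the three multilinear ranks (each is the rank of a flattening), and deleting identically-zero frontal slices does not change the span of the horizontal or lateral slices either (each horizontal or lateral slice of the reduced tensor is just the restriction of the corresponding slice of $\tA^R$ to its non-zero coordinates). I then repeat this in the lateral direction to produce $\tA^{(2)}$ of size $\ell \times r_2 \times r_3$, and finally in the horizontal direction to produce $\tA' = \tA^{(3)}$ of size $r_1 \times r_2 \times r_3$. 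At each stage, the same argument shows that previously-established independence is retained, so $\tA'$ is non-degenerate. Construct $\tB'$ analogously. Everything is polynomial in $\ell, n, m$ and $\log|\F|$ (or in the Blum--Shub--Smale model).

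For the equivalence of isomorphism, let $\tilde{\tA'}$ denote the tensor of size $\ell \times n \times m$ obtained from $\tA'$ by padding with zeros in the ``upper-left corner'' layout. By construction, the reduction from $\tA$ to $\tA'$ is the composition of $\GL$-actions (by the invertible matrices found above) followed by the deletion of identically zero slices; reversing this shows $\tA \cong \tilde{\tA'}$ as elements of $\T(\ell \times n \times m, \F)$, and likewise $\tB \cong \tilde{\tB'}$. Thus $\tA \cong \tB$ if and only if $\tilde{\tA'} \cong \tilde{\tB'}$. It remains to observe that $\tilde{\tA'} \cong \tilde{\tB'}$ if and only if $\tA' \cong \tB'$. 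The ``if'' direction is immediate by padding the $\GL$-transformations with identity blocks. For the ``only if'' direction, note that the column space of the $k$-th flattening of $\tilde{\tA'}$ is exactly the span of the first $r_k$ standard basis vectors; any isomorphism $(P, Q, R)$ to $\tilde{\tB'}$ must map this span to the corresponding span for $\tilde{\tB'}$, forcing each of $P, Q, R$ into block upper-triangular form with an invertible upper-left block of size $r_k \times r_k$, and those upper-left blocks then furnish an isomorphism between $\tA'$ and $\tB'$.

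The main obstacle I expect is ensuring that the sequential reduction does not spoil independence established in earlier stages, and handling the passage between a non-degenerate core and its zero-padded embedding. Both are handled by the invariance of flattening ranks under the $\GL$-action together with the elementary fact that deleting identically zero slices preserves the row/column spaces in the remaining directions; the block upper-triangular argument in the final step is standard once one identifies the image of each flattening's column space.
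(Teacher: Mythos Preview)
The paper states Observation~\ref{obs:nondeg} without proof, so there is no argument in the paper to compare against; your proposal is effectively supplying the missing details. The overall approach---compute the multilinear rank, pass to the core, and use a block upper-triangular argument to recover an isomorphism of cores from an isomorphism of zero-padded tensors---is correct and is the natural thing to do here.

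There is one small slip: in the case where the multilinear ranks of $\tA$ and $\tB$ disagree, you propose to output the pair $([0],[1])$ of $1\times 1\times 1$ arrays. But $[0]$ is \emph{not} non-degenerate: its single frontal slice is the zero matrix, which is not linearly independent. You need to output two non-degenerate, non-isomorphic tensors instead; the simplest fix is to take two non-degenerate tensors of different sizes (hence different multilinear ranks), e.g., the $1\times 1\times 1$ array $[1]$ and any non-degenerate $2\times 2\times 2$ array. With that correction, the argument goes through.
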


\paragraph{Multi-way arrays.} For $d \geq 3$, we use similar notation to 3-way arrays, which we will not belabor. Here we merely observe:

\begin{observation} \label{obs:d}
For any $d' \geq d$, $d$-\TI reduces to $d'$-\TI.
\end{observation}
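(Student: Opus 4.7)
The plan is to use the simplest possible padding. Given an instance of $d$-\TI consisting of two $d$-way arrays $\tA, \tB$ of shape $n_1 \times \cdots \times n_d$, I construct $d'$-way arrays $\tA', \tB'$ of shape $n_1 \times \cdots \times n_d \times 1 \times \cdots \times 1$ (with $d'-d$ trailing axes of length $1$) by setting
\[
\tA'(i_1, \ldots, i_d, 1, \ldots, 1) := \tA(i_1, \ldots, i_d),
\]
and analogously for $\tB'$. This is clearly polynomial-time computable.

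For the forward direction, if $(P_1, \ldots, P_d) \in \prod_{k=1}^{d} \GL(n_k, \F)$ witnesses $\tA \cong \tB$ as $d$-tensors, then $(P_1, \ldots, P_d, 1, \ldots, 1)$ witnesses $\tA' \cong \tB'$ as $d'$-tensors (the trailing trivial factors in $\GL(1,\F) = \F^*$ are set to $1$).

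For the reverse direction, suppose $(P_1, \ldots, P_d, \lambda_{d+1}, \ldots, \lambda_{d'}) \in \prod_{k=1}^{d} \GL(n_k, \F) \times (\F^*)^{d'-d}$ is an isomorphism from $\tA'$ to $\tB'$. Since the trailing axes are one-dimensional, the action of each $\lambda_j$ simply multiplies the entire array by $\lambda_j$. Letting $\mu := \prod_{j=d+1}^{d'} \lambda_j \in \F^*$, the condition $\tB' = (P_1, \ldots, P_d, \lambda_{d+1}, \ldots, \lambda_{d'}) \cdot \tA'$ translates, after restricting to the single nontrivial slice, to $\tB = \mu \, (P_1, P_2, \ldots, P_d) \cdot \tA = (\mu P_1, P_2, \ldots, P_d) \cdot \tA$, so $\tA \cong \tB$ as $d$-tensors via $(\mu P_1, P_2, \ldots, P_d)$.

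There is no real obstacle here; the only point requiring a moment's attention is that the nontrivial $\GL(1,\F) = \F^*$ actions on the padded axes could in principle introduce extra isomorphisms, but any such scalar factor can be absorbed into $P_1$ since the $d$-tensor action already allows arbitrary scaling in each of its factors. Thus the construction gives a kernel (mapping) reduction from $d$-\TI to $d'$-\TI over any field, uniform in $d' \geq d$.
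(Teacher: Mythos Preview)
Your proof is correct and essentially identical to the paper's own proof: the same padding by trailing length-$1$ axes, the same forward direction via $(P_1,\dotsc,P_d,1,\dotsc,1)$, and the same observation that in the reverse direction the extra $\F^*$ scalars can be absorbed into $P_1$.
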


\begin{proof}
Given an $n_1 \times \dotsb \times n_d$ $d$-way array $\tA$, we embed it as a $d'$-way array $\tilde \tA$ of format $n_1 \times \dotsb \times n_d \times 1 \times 1 \times \dotsb \times 1$. If $\tA \cong \tB$ as $d$-tensors, say via $(P_1, \dotsc, P_d)$, then $\tilde \tA \cong \tilde \tB$ as $d'$-tensors via $(P_1, \dotsc, P_d,1,1,\dotsc,1)$. Conversely, if $\tilde \tA \cong \tilde \tB$ via $(P_1,\dotsc, P_d,\alpha_{d+1},\dotsc,\alpha_{d'})$, then $\tA \cong \tB$ via $(\alpha_{d+1} \alpha_{d+2} \dotsb \alpha_{d'} P_1, \dotsc, P_d)$. That is, all that can ``go wrong'' under this embedding is multiplication by scalars, but those scalars can be absorbed into any one of the $P_i$.
\end{proof}

\paragraph{Algebras and their algorithmic representations.} An algebra $A$ 
consists of a vector space $V$ and a bilinear map $\circ: V\times V\to V$. This 
bilinear map defines the 
product $\circ$ in this algebra. Note that we do not assume $A$ to be unital 
(having an identity), associative, alternating, nor satisfying the Jacobi 
identity. In the literature, an algebra without such properties is sometimes 
called a non-associative algebra (but also, as usual, associative algebras are a special case of non-associative 
algebras).

As in Section~\ref{sec:intro}, after fixing 
an ordered basis $(b_1, \dots, b_n)$ where $b_i\in\F^n$ of $V\cong \F^n$, this 
bilinear map $\circ$ can be represented by an $n\times n\times n$ 3-way array $\tA$, 
such that $b_i\circ b_j=\sum_{k\in[n]}\tA(i,j,k)b_k$. This is the structural 
constant representation of $\tA$. Algorithms for associative algebras and Lie 
algebras have been studied intensively in this model, e.\,g., \cite{IR99,Gra00}.

It is also natural to consider matrix spaces that are closed under multiplication 
or commutator. More specifically, let $\cA\leq \M(n, \F)$ be a matrix space. If 
$\cA$ is closed under multiplication, that is, for any $A, B\in\cA$, $AB\in \cA$, 
then $\cA$ is a matrix (associative) algebra with the product being the matrix 
multiplication. If $\cA$ is closed under commutator, that is, for any $A, B\in 
\cA$, $[A,B]=AB-BA\in\cA$, then $\cA$ is a matrix Lie algebra with the product 
being the commutator. Algorithms for matrix algebras and matrix Lie algebras have 
also been studied, e.\,g., \cite{EG00,Iva00,IR99}.

\paragraph{The Lazard correspondence for $p$-groups.} The Lazard correspondence is 
a correspondence between certain classes of groups and Lie algebras, which 
extends the usual correspondence between Lie groups and Lie algebras (say, over 
$\R$) to some groups and Lie algebras in positive characteristic. Here we state 
just enough to give a sense of it; for further details we refer to Khukhro's book 
\cite{khukhro} and Naik's thesis \cite{naik}. While the thesis is quite long, it also includes a 
reader's guide, and collects many results scattered across the literature or 
well-known to the experts in one place, building the theory from the ground up and 
with many examples.

Recall that a \emph{Lie ring} is an abelian group $L$ equipped with a bilinear map $[,]$, called the Lie bracket, which is (1) alternating ($[x,x]=0$ for all $x \in L$) and (2) satisfies the Jacobi identity $[x,[y,z]] + [y,[z,x]] + [z,[x,y]] = 0$ for all $x,y,z \in L$. Let $L^1 = L$, and $L^{i+1} = [L,L^i]$, which is the subgroup (of the underlying additive group) generated by all elements of the form $[x,y]$ for $x \in L, y \in L^{i}$. Then $L$ is \emph{nilpotent} if $L^{c+1}=0$ for some finite $c$; the smallest such $c$ is the \emph{nilpotency class}. (Lie algebras are just Lie rings over a field.)

The correspondence between Lie algebras and Lie groups over $\R$ uses the Baker--Campbell--Hausdorff (BCH)
formula to convert between a Lie algebra and a Lie group, so we start there. The BCH formula is the solution to the problem that for non-commuting matrices $X,Y$, $e^X e^Y \neq e^{X + Y}$ in general (where the matrix exponential here is defined using the power series for $e^x$). Rather, using commutators $[A,B] = AB - BA$, we have
\[
\exp(X) \exp(Y) = \exp\left(X + Y + \frac12 [X,Y] + \frac{1}{12}\left([X,[X,Y]] - [Y, [X,Y]] \right) - \frac{1}{24} [Y, [X, [X, Y]]] + \dotsb\right),
\]
where the remaining terms are iterated commutators that all involve at least 5 $X$s and $Y$s, and successive terms involve more and more.  Applying the exponential function to a Lie algebra in characteristic zero yields a Lie group. The BCH formula can be inverted, giving the correspondence in the other direction.

In a nilpotent Lie algebra, the BCH formula has only finitely many nonzero terms, so issues of convergence disappear and we may consider applying the correspondence over finite fields or rings; the only remaining obstacle is that the denominators appearing in the formula must be units in the ring. It turns out that the correspondence 
continues to work in characteristic $p$ so long as one does not need to use the 
$p$-th term of the BCH formula (which includes division by $p$), and the latter is 
avoided whenever a nilpotent group has class strictly less than $p$. While the 
correspondence does apply more generally, here we only state the version for 
finite groups. For any fixed nilpotency class $c$, computing the Lazard 
correspondence is efficient in theory; for how to compute it in practice when the groups are given by polycyclic presentations, 
see \cite{cicaloEtAl}.

Let $\mathbf{Grp}_{p,n,c}$ denote the set of finite groups of order $p^n$ and class $c$, and let $\mathbf{Lie}_{p,n,c}$ denote the set of Lie rings of order $p^n$ and class $c$. We note that for nilpotency class 2, the Baer correspondence is the same as the Lazard correspondence.

\begin{theorem}[{Lazard Correspondence for finite groups, see, e.\,g., \cite[Ch.~9 \& 10]{khukhro} or \cite[Ch.~6]{naik}}] \label{thm:lazard}
For any prime $p$ and any $1 \leq c < p$, there are functions $\logbf \colon \mathbf{Grp}_{p,n,c} \leftrightarrow \mathbf{Lie}_{p,n,c} : \expbf$ such that (1) $\logbf$ and $\expbf$ are inverses of one another, (2) two groups $G,H \in \mathbf{Grp}_{p,n,c}$ are isomorphic if and only if $\logbf(G)$ and $\logbf(H)$ are isomorphic, and (3) if $G$ has exponent $p$, then the exponent of the underlying abelian group of $\logbf(G)$ has exponent $p$. More strongly, $\logbf$ is an isomorphism of categories $\mathbf{Grp}_{p,n,c} \cong \mathbf{Lie}_{p,n,c}$. 
\end{theorem}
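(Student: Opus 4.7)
The plan is to construct $\logbf$ and $\expbf$ via truncated versions of the classical Campbell--Baker--Hausdorff (BCH) formula and its inverse. The essential point underlying everything is that BCH is a universal formula with rational coefficients: up through weight $c$, the only denominators appearing divide $c!$, so under the hypothesis $c < p$ these coefficients are units in $\Z_{(p)}$, hence make sense in any $p$-group/$p$-Lie ring of class $\leq c$. Nilpotency of class $c$ ensures that all terms of weight $> c$ vanish, so the formulas involve only finitely many terms.

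First, I would work formally with Lie polynomials over $\Q$ in two non-commuting variables $X,Y$ to write down two power series: the BCH series $\mu(X,Y) := X + Y + \tfrac{1}{2}[X,Y] + \tfrac{1}{12}([X,[X,Y]]-[Y,[X,Y]]) + \cdots$, and its formal ``inverse'' expressing $X+Y$ and $[X,Y]$ as Lie polynomials in the variables $U := \mu(X,Y)$ and $V := \mu(Y,X)^{-1}\mu(X,Y)$ (or some equivalent packaging). Both of these have been computed explicitly in the literature; I would truncate each at weight $c$ and call the results $\mu_c$ and its inverse $\nu_c$. Given $G \in \mathbf{Grp}_{p,n,c}$, define $\logbf(G)$ to be the set $G$ with addition and bracket extracted from applying $\nu_c$ to the group operation; given $L \in \mathbf{Lie}_{p,n,c}$, define $\expbf(L)$ to be the set $L$ with multiplication $X \cdot Y := \mu_c(X,Y)$.

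The required verifications are: (i) $\logbf(G)$ satisfies the Lie ring axioms (additive associativity and commutativity, bilinearity of $[,]$, alternating, Jacobi); (ii) $\expbf(L)$ satisfies the group axioms (associativity, identity $0$, inverses $-X$); (iii) $\logbf \circ \expbf = \mathrm{id}$ and $\expbf \circ \logbf = \mathrm{id}$; (iv) a set map $\phi \colon G \to H$ is a group homomorphism iff it is a Lie ring homomorphism $\logbf(G) \to \logbf(H)$, giving functoriality and hence the isomorphism criterion; (v) if $G$ has exponent $p$ then $\logbf(G)$ has additive exponent $p$. For (v), note that in any nilpotent group of class $c$, applying $\mu_c$ to $n$ copies of $X$ yields $nX$ in the Lie ring since every iterated bracket $[X,[X,[\cdots,X]\cdots]]$ is alternating hence zero, so $X^n = 1$ in $G$ translates to $nX = 0$ in $\logbf(G)$.

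The main obstacle is (i)--(iii), which amount to showing that certain Lie-polynomial identities valid over $\Q$ remain valid after truncation at weight $c$ and reduction modulo $p$. The identities themselves follow from the classical Campbell--Hausdorff theorem together with associativity of the free nilpotent pro-group (or, equivalently, by working in the universal enveloping algebra modulo its degree-$(c{+}1)$ filtration ideal). The subtle point is that naive truncation need not preserve an identity: one must check that all ``error terms'' discarded under truncation are themselves of weight $>c$, hence already zero in the class-$c$ setting. Once this bookkeeping is done, reducing modulo $p$ is harmless precisely because $c < p$ keeps every denominator a unit. Functoriality (iv) is then essentially automatic, since $\mu_c$ and $\nu_c$ are universal expressions and any homomorphism of the underlying algebraic structure commutes with their evaluation.
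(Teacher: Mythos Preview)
The paper does not prove this theorem; it is stated as a known result with citations to Khukhro's book and Naik's thesis, and the only additional remark is that part (3) follows from a specific lemma in Naik. Your outline is essentially the standard argument found in those references: build the correspondence from the truncated BCH formula and its inverse formulas, verify the axioms hold because the relevant Lie-polynomial identities already hold over $\Q$ and all denominators through weight $c$ are coprime to $p$, and observe that functoriality is automatic since the formulas are universal. Your argument for (v) is also the standard one. There is nothing to compare against in the paper itself, and your sketch is correct in outline; the only thing I would tighten is the claim about denominators, which requires a bit more care than ``divide $c!$'' (one typically argues via the free nilpotent Lie ring or via Lazard's explicit bounds), but this is exactly the bookkeeping you flag as the main obstacle.
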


Part (3) can be found as a special case of \cite[Lemma~6.1.2]{naik}.

For $p$-groups given by $d \times d$ matrices over the finite field $\F_{p^e}$, we will need one additional fact about the correspondence, namely that it also results in a Lie algebra of $d \times d$ matrices. (Being able to bound the dimension of the Lie algebra and work with it in a simple linear-algebraic way 
seems crucial for our reduction to work efficiently.) In fact, the BCH correspondence is \emph{easier} to see for matrix groups using the matrix exponential and matrix logarithm; most of the work for BCH and Lazard is to get the correspondence to work even \emph{without} the matrices. In some sense, this is thus the ``original'' setting of this correspondence. Though it is surely not new, we could not find a convenient reference for this fact about matrix groups over finite fields, so we state it formally here.

\begin{proposition} \label{prop:lazard_matrices}
Let $G \leq \GL(d,\F_{p^e})$ be a finite $p$-subgroup of $d \times d$ matrices 
over a finite field of characteristic $p$. Then $\logbf(G)$ (from the Lazard 
correspondence) can be realized as a finite Lie subalgebra of $d \times d$ 
matrices over $\F_{p^e}$. Given a generating set for $G$ of $m$ matrices, a generating set for $\logbf(G)$ can be constructed in $\poly(d,n,\log p)$ time.
\end{proposition}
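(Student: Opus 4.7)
The plan is to define a matrix logarithm on $G$ by truncating the usual power series, verify via the matrix Baker–Campbell–Hausdorff (BCH) correspondence that the image closes up inside a matrix Lie subalgebra realizing $\logbf(G)$, and extract a basis of that Lie subalgebra by a linear-algebraic breadth-first search.

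Since $G \le \GL(d,\F_{p^e})$ is a finite $p$-subgroup in characteristic $p$, each $g\in G$ has $p$-power order, so its minimal polynomial divides $(x-1)^{p^k}$ and $N_g := g-I$ is nilpotent (in particular $N_g^d=0$). In the Lazard regime of interest --- as invoked in the proof of Corollary~P, where $G$ has exponent $p$ and nilpotency class $c<p$ --- the relation $g^p=I$ combined with the characteristic-$p$ Frobenius gives $N_g^p=0$. Define the truncated logarithm
\[
\log(g)\;:=\;\sum_{k=1}^{p-1}(-1)^{k-1}\frac{N_g^{\,k}}{k}\;\in\;\mathfrak{gl}(d,\F_{p^e}),
\]
whose denominators $1,\dots,p-1$ are units in $\F_p\subseteq\F_{p^e}$ and whose series truncates by $N_g^p=0$. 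Let $\mathcal{L}\subseteq\mathfrak{gl}(d,\F_{p^e})$ be the $\F_{p^e}$-Lie subalgebra generated under $[X,Y]=XY-YX$ by $\log(g_1),\dots,\log(g_m)$; since $\mathcal{L}$ is a subspace of $\mathfrak{gl}(d,\F_{p^e})$, one has $\dim_{\F_{p^e}}\mathcal{L}\le d^2$.

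To see that $\mathcal{L}$ realizes $\logbf(G)$, I would invoke the matrix version of Lazard's correspondence (cf.\ \cite{khukhro,naik}): for $G$ of exponent $p$ and class $c<p$ inside $\GL(d,\F_{p^e})$, the formal power series identities $\exp\log g=g$ and $\log(gh)=\mathrm{BCH}(\log g,\log h)$ hold as \emph{literal} matrix equalities in $\mathfrak{gl}(d,\F_{p^e})$. The argument reduces the classical characteristic-zero identities modulo $p$, using that depth-$\le c$ BCH terms have denominators with prime factors $\le c<p$ and that $N_g^p=0$ truncates the remaining power series. Consequently $\log\colon G\to\log(G)$ is a bijection, $\log(G)$ is closed under matrix $+$ and $[\cdot,\cdot]$, the Lie-ring structure it carries agrees with the Lazard structure on $\logbf(G)$, and the $\F_{p^e}$-span $\mathcal{L}$ realizes $\logbf(G)\otimes_{\F_p}\F_{p^e}$ as a matrix Lie algebra of class $c$. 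I expect the main obstacle to be precisely this matrix-identity verification: although abstract Lazard is standard, showing every omitted BCH term vanishes as an honest matrix --- by the combined action of $N_g^p=0$ and depth-$\ge p$ bracket vanishing in $\mathcal{L}$ --- requires careful bookkeeping, especially when $p$ is small relative to $d$.

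For the algorithmic claim, each $X_i := \log(g_i)$ is computable in $\poly(d,\log p)$ time from $\le\min(p-1,d-1)\le d-1$ matrix multiplications and scalar inversions in $\F_{p^e}$ (the series terminates by $N_{g_i}^p=0$ or $N_{g_i}^d=0$, whichever hits first). To expand to a basis of $\mathcal{L}$, I would run linear-algebraic breadth-first search: maintain a basis $B\subseteq\mathcal{L}$ in row-reduced echelon form (viewing $\mathfrak{gl}(d,\F_{p^e})\cong\F_{p^e}^{d^2}$), initialized to $\{X_1,\dots,X_m\}$; in each round compute $[X_i,Y]$ for every $i\in[m]$ and $Y\in B$, and append any commutator that is linearly independent of the current span. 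Since $\dim\mathcal{L}\le d^2$, $B$ is extended at most $d^2$ times, so the search terminates in $\poly(d,m,\log p)$ and produces a basis --- and hence structure constants --- for $\logbf(G)$.
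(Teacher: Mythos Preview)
Your approach is essentially the same as the paper's: define the matrix logarithm via the truncated power series, argue that the BCH (and inverse BCH) identities hold as literal matrix equalities by deferring to the derivations in \cite{khukhro,naik}, and extract a basis by linear-algebraic breadth-first search. The paper's proof is likewise labeled a ``proof sketch'' and is equally informal about the matrix-identity verification you flag as the main obstacle.

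The one noteworthy difference is in how the log series is truncated. You use the exponent-$p$ hypothesis together with the Frobenius identity $(g-I)^p=g^p-I=0$ (valid since $g$ and $I$ commute) to get $N_g^p=0$, so only the terms $k=1,\dots,p-1$ appear and all denominators are units. The paper instead conjugates $G$ into the upper-unitriangular subgroup of $\GL(d,\F_{p^e})$ via Sylow's theorem, so that each $N_g$ is strictly upper-triangular and $N_g^d=0$. The paper's device applies to arbitrary finite $p$-subgroups (not only those of exponent $p$), matching the proposition as stated; your restriction to exponent $p$ is narrower, though you acknowledge it and it is the only case invoked in Corollary~P. Conversely, your Frobenius argument cleanly avoids any worry about denominators $\ge p$ appearing in the log series when $d>p$, a point the paper's sketch does not explicitly address.
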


\begin{proof}[Proof sketch]
$G$ is conjugate in $\GL(d,\F_{p^e})$ to a group of upper unitriangular matrices (upper triangular with all 1s on the diagonal); this is a standard fact that can be seen in several ways, for example, by noting that the group $U$ of all upper unitriangular matrices in $\GL(d,\F_{p^e})$ is a Sylow $p$-subgroup, and applying Sylow's Theorem. (Note that we do not need to do this conjugation algorithmically, though it is possible to do so; this is only for the proof.) 
Thus we may write every $g \in G$ as $1 + n$, where the sum here is the ordinary sum of matrices, $1$ denotes the identity matrix, and $n$ is strictly upper triangular. In particular, $n^d = 0$ (ordinary exponentiation of matrices). Thus the Taylor series for the logarithm
\[
\log(1 + n) = n - \frac{n^2}{2} + \frac{n^3}{3} - \dotsb
\]
has only finitely many terms, so we may use it even over $\F_{p^e}$. 

In the Lie algebra we would like addition to be ordinary matrix addition; however, it turns out that we can write this addition in terms of a formula involving only commutators of group elements. Deriving this formula---the so-called first BCH inverse formula---for the matrices will be the same, step for step, as deriving the first inverse BCH formula in general. Since the formulae are identical, the additive structures on $\log(G)$ (using the matrix logarithm) and $\logbf(G)$ (from the Lazard correspondence) are identical. Similar considerations apply to the matrix commutator $[\log(g), \log(h)] = \log(g) \log(h) - \log(h)\log(g)$, now using the second BCH inverse formula. Overall, we conclude that $\logbf(G)$ (using Lazard) and $\log(G)$ (using the matrix logarithm) are isomorphic Lie algebras.

Equivalently, we may note that the derivation of the inverse BCH formula in \cite{khukhro,naik} uses a free nilpotent associative algebra as an ambient setting in which both the group and the corresponding Lie algebra live; in our case, we may replace the ambient free nilpotent associative algebra with the algebra of $d \times d$ strictly upper-triangular matrices over $\F_{p^e}$, and all the derivations remain the same, \emph{mutatis mutandis}. See, for example, \cite[p.~105, ``Another remark...'']{khukhro}.
\end{proof}

\subsection{Tensor notation} \label{sec:prelim:tensor}
To see that those problems in Section~\ref{sec:problems} exhaust 
distinct 
isomorphism problems coming from change-of-basis on 3-way arrays (without 
introducing multiple arrays, or block structure, or going to subgroups of $\GL(n, 
\F)$), and to keep track of the relation between all the above problems, we use 
standard mathematical notation for spaces of tensors (however, we won't actually 
need the full abstract definition here; for a formal introduction see, e.\,g., 
\cite{Lan12}).

Much as the three natural equivalence relations on matrices differ by how 
the groups act on the rows and columns, the same is true for tensors, but 
on the rows, columns, and depths (the ``row-like'' sub-arrays which are 
``perpendicular to the page''). There are two aspects to the notation: 
first, we keep track of which group is acting where by introducing names $U,V,W$ for 
the different vector spaces involved (this is also the standard basis-free 
notation, e.\,g., \cite{Lan12}) and the groups acting on them, viz. 
$\GL(U), \GL(V), \GL(W)$, etc. Thus, while it is possible that $\dim U = 
\dim V$ and thus $\GL(U) \cong \GL(V)$, the notation helps make clear 
which group is acting where. Second, to take into account the 
contragradient (``inverse'') action, given a vector space $V$, $V^*$ 
denotes its dual space, consisting of the linear functions $V \to \F$. 
$\GL(V)$ acts on $V^*$ by sending a linear function $\ell \in V^*$ to the 
function $(g \cdot \ell)(v) = \ell(g^{-1}(v))$. 
In this notation, the three different actions on matrices correspond to the notations 
\[
U \otimes V \text{ (left-right action)} \qquad\qquad V \otimes V^* \text{ (conjugacy)} \qquad\qquad V \otimes V \text{ (isometry).} 
\]

When we have a matrix \emph{space} $\cA \subseteq M(n \times m,\F)$ instead of a single matrix $A$, we introduce an additional vector space $W$, which is naturally isomorphic to $\cA$ as a vector space. The action of $\GL(W)$ on $W$ serves to change basis \emph{within} the matrix space, while leaving the space itself unchanged. In this notation, the problems we mention above are listed in Table~\ref{table:problems}.

\begin{table}[!htbp]
\label{table:problems}
\begin{center}
\begin{tabular}{lcc}
Notation & Name & Group Action  \\ \hline
$U \otimes V \otimes W$ &  
\begin{tabular}{@{}c@{}}\MatSpEquivlong \\ \ThreeTIlong\end{tabular}
& $\cA \mapsto g \cA 
h^{-1}$  \\ \hline
$V \otimes V \otimes W$ & 
\begin{tabular}{@{}c@{}}\MatSpIsomlong \\ \algprobm{Bilinear Map 
Pseudo-Isometry}\end{tabular}  & $\cA \mapsto g \cA g^T $ \\ \hline
$V \otimes V^* \otimes W$ & \MatSpConjlong & $\cA \mapsto g \cA g^{-1}$ \\ \hline
$V \otimes V \otimes V$ & \NcCubicFormlong & 
$f(\vec{x}) \mapsto f(g^{-1} \vec{x})$ \\ \hline
$V \otimes V \otimes V^*$ & \AlgIsolong & 
$\mu(\vec{x}, 
\vec{y})\mapsto g 
\mu(g^{-1}\vec{x}, g^{-1}\vec{y})$
\end{tabular}
\end{center}
\caption{\label{table:problems} The cast of isomorphism problems on 3-way arrays. In Section~\ref{sec:prelim:tensor} we show how this exhausts the possibilities. }
\end{table}

To see that the family of problems in Table~\ref{table:problems}
exhausts the possible isomorphism problems on (undecorated) 3-way 
arrays, we note that in this notation there are some 
``different-looking'' isomorphism problems that are trivially 
equivalent. The first is re-ordering the spaces: the isomorphism 
problem for $V \otimes V \otimes W$ is trivially equivalent to that 
for $V \otimes W \otimes V$, simply by permuting indices, viz. 
$\tA'(i,j,k) = \tA(i,k,j)$. The second is about dual vector spaces. 
Although a vector space $V$ and its dual $V^*$ are technically 
different, and the group action differs by an inverse transpose, we 
can choose bases in $V$ and $V^*$ so that there is a linear 
isomorphism $V \to V^*$ which induces a bijection between orbits; 
for example, the orbits of the action $g \cdot A = gAg^t$ are the 
same as the orbits of the action $g \cdot A = g^{-t} A g^{-1}$, even 
though technically the former corresponds to $V \otimes V$ and the 
latter to $V^* \otimes V^*$. This means that if we are considering 
the isomorphism problem in a tensor space such as $V \otimes V 
\otimes W$, we can dualize each of the vector spaces $V,W$ 
separately, so long as when we do so, we dualize all instances of 
that vector space. For example, the isomorphism problem in $V 
\otimes V \otimes W$ is trivially equivalent to that in $V^* \otimes 
V^* \otimes W$, but is not obviously equivalent to that in $V 
\otimes V^* \otimes W$ (though we will show such a reduction in this 
paper). As a consequence, when the action on all three directions 
comes from the same group, there are only two choices: $V \otimes V 
\otimes V$ and $V \otimes V \otimes V^*$; the remaining choices are 
trivially equivalent to one of these two. Using these, we see that 
the Table~\ref{table:problems} in fact covers all possibilities up to these 
trivial equivalences.

\paragraph{Special cases of interest.} As in the case of isometry of 
matrices, 
wherein skew-symmetric and symmetric matrices play a special role, the same is 
true for isometry of matrix spaces. We say a matrix space $\cA$ is symmetric if 
every matrix $A \in \cA$ is symmetric, and similarly for skew-symmetric or alternating. 
\SymMatSpIsomlong is equivalent to asking whether two 
polynomial maps from $\F^n$ to $\F^m$ specified by homogeneous quadratic forms are 
the same under the action of $\GL(n, \F)\times \GL(m, \F)$. This problem has been proposed 
by Patarin \cite{Pat96} as the basis of security for certain identification and 
signature schemes. 
\AltMatSpIsomlong is a particular case of interest, being 
in many ways a linear-algebraic analogue of \GI \cite{LQ17} (in addition to its 
close relation with \GpIlong for $p$-groups of class 2 and exponent $p$).

Among trilinear forms, we can identify commutative cubic forms as those 
for which the coefficient 3-way array $\tA$ is symmetric under all 6 permutations 
of its 3 indices $\tA(i,j,k) = \tA(j,i,k) = \dotsb = 
\tA(k,i,j)$. 
Over rings in which $6$ is a unit, cubic forms embed into trilinear forms via the 
standard map $f \mapsto T$ where $T_{i_1,i_2,i_3} = \frac{1}{3!} \sum_{\pi \in 
S_3} [x_{i_{\pi(1)}} x_{i_{\pi(2)}} x_{i_{\pi(3)}}] f$, where $[x^e] f$ denotes 
the coefficient of $x^e$ in $f$. Thus, over such rings \CubicFormlong, as studied 
by Agrawal and Saxena \cite{AS05, AS06}, is a special case of \NcCubicFormlong.

Special cases of \AlgIsolong that are of interest include those of unital, 
associative algebras (commutative, e.\,g., as studied in \cite{AS05, AS06, KS06}, 
and non-commutative, such as group algebras) and Lie algebras. 

Interesting cases of \MatSpConjlong arise naturally whenever we have an algebra 
$A$ (say, associative or Lie) that is given to us as a subalgebra of the algebra 
$\M(n,\F)$ of $n \times n$ matrices. Two such matrix algebras can be isomorphic as 
abstract algebras, but the more natural notion of ``isomorphism of matrix 
algebras'' is that of conjugacy, which respects both the algebra structure and the 
presentation in terms of matrices. This is the linear-algebraic analogue of 
permutational isomorphism (=conjugacy) of permutation groups, and has been studied 
for matrix Lie algebras \cite{GrochowLie} and associative matrix algebras 
\cite{BW15}. (For those who know what a representation is: it also turns out to be 
equivalent to asking whether two representations of an algebra $A$ are equivalent 
up to automorphisms of $A$, a problem which naturally arises as a subroutine in, 
e.\,g., \GpIlong, where it is often known as \algprobm{Action Compatibility}, e.\,g., 
\cite{GQ17}.)

\subsection{On the type of reduction} \label{sec:reductions}
As these problems arise from several different fields, there are various properties one might hope for in the notion of reduction. Most of our reductions satisfy all of the following properties; see Remark~\ref{rmk:reductions} below for details.

\begin{enumerate}
\item \label{reduction:kernel} Kernel reductions: there is a function $r$ from objects of one type to objects of the other such that $A \sim_1 B$ if and only if $r(A) \sim_2 r(B)$.  See \cite{FortnowGrochowPEq} for some discussion on the relation between kernel reductions and more general reductions.

\item \label{reduction:efficient} Efficiently computable: $r$ as above is computable in polynomial time. In fact, we believe, though have not checked fully, that all of our reductions are computable by uniform constant-depth (algebraic) circuits; over finite fields and algebraic number fields, we believe they are in uniform $\cc{TC}^0$ (the threshold gates are needed to do some simple arithmetic on the indices). That is, there is a small circuit which, given $A,i,j,k$ as input will output the $(i,j,k)$ entry of the output.

\item \label{reduction:projection} Polynomial-size projections (``p-projections'') \cite{valiantProj}: each coordinate of the output is either one of the input variables or a constant, and the dimension of the output is polynomially bounded by the dimension of the input. (In fact, in many cases, the dimension of the output is only linearly larger than that of the input.)

\item \label{reduction:functorial} Functorial. For each type of tensor there is naturally a category of such tensors (see \cite{MacLane} for generalities on categories). For example, for \ThreeTI, $U \otimes V \otimes W$, the objects of the category are three-tensors, and a morphism between $\tA \in U \otimes V \otimes W$ and $\tB \in U' \otimes V' \otimes W'$ is given by linear maps $P: U \to U'$, $Q\colon V \to V'$, and $R\colon W \to W'$ such that $(P,Q,R) \cdot \tA = \tB$. Isomorphism of 3-tensors is the special case when all three of $P,Q,R$ are invertible. Analogous categories can be defined for the other problems we consider, such as $V \otimes V^* \otimes W$. A \emph{functor} between two categories $\mathcal{C}, \mathcal{D}$ is a pair of maps $(r,\overline{r})$ such that (1) $r$ maps objects of $\mathcal{C}$ to objects of $\mathcal{D}$, (2) if $f\colon A \to B$ is a morphism in $\mathcal{C}$, then $\overline{r}(f)\colon r(A) \to r(B)$ is a morphism in $\mathcal{D}$, (3) for any $A \in \mathcal{C}$, $\overline{r}(\id_A) = \id_{r(A)}$, and (4) if $f\colon A \to B$ and $g\colon B \to C$ are morphisms in $\mathcal{C}$, then $\overline{r}(g \circ f) = \overline{r}(g) \circ \overline{r}(f)$.

All our reductions are functorial on the categories in which we only consider 
isomorphisms; we 
suspect that they are also functorial on the entire categories (that is, including 
non-invertible homomorphisms). Furthermore, all our reductions yield another map 
$\overline{s}$ such that for any isomorphism $f'\colon r(A) \to r(B)$, 
$\overline{s}(f)$ is an isomorphism $A \to B$, and $\overline{s}(\overline{r}(f)) 
= f$ for any isomorphism $f\colon A \to B$. If we only consider isomorphisms (and 
not other homomorphisms), we believe all known reductions between isomorphism 
problems have this form, cf. \cite{BabaiSR}.

\item \label{reduction:algebraic} Containment in the sense used in the literature on wildness. There are several definitions in the literature which typically are equivalent when restricted to so-called matrix problems. For a few such definitions, see, e.\,g., \cite[Def.~1.2]{FGS19}, \cite{Sergeichuk2000}, or \cite[Def.~XIX.1.3]{SimsonSkowronski}. For those problems in this paper to which the preceding definitions could apply, our reductions have the defined property. However, since we are working in a slightly more general setting, we would like to suggest the following natural generalization of these notions. 
Given two pairs $(G,V)$ and $(H,W)$ of algebraic groups $G,H$ acting on algebraic varieties $V,W$, an \emph{algebraic containment} is an algebraic map $r\colon V \to W$ (each coordinate of the output is given by a polynomial in the coordinates of the input) that is also a kernel reduction. In our case, all our spaces $V,W$ are affine space $\F^n$ for some $n$, and our maps $r$ are in fact of degree 1. (It might be interesting to consider whether using higher degree allows for more efficient reductions.) We may also require it to be ``functorial,'' in the sense that there is a homomorphism of algebraic groups $\overline{r}\colon G \to H$ (simultaneously an algebraic map and a group homomorphism) such that
\[
\overline{r}(g) \cdot r(v) = r(g \cdot v).
\]
and a section $\overline{s}\colon H \dashrightarrow G$, such that $\overline{s} \circ \overline{r} = \id_G$ and
\[
h \cdot r(v) = r(v') \Longrightarrow \overline{s}(h) \circ v = v',
\]
where the dashed arrow above indicates that $\overline{s}$ need only be defined on a subset of $H$, namely, those $h \in H$ such that there exist $v,v' \in V$ with $h \cdot r(v) = r(v')$ (but on this subset it should still act like a homomorphism, in the sense that $\overline{s}(h h') = \overline{s}(h) \overline{s}(h')$).

\end{enumerate}

\begin{remark} \label{rmk:reductions}
We believe all of our reductions satisfy all of the above properties, with the 
possible exceptions that \Prop{prop:3-tensor_isometry} and 
\Prop{prop:3-tensor_conjugacy} are only projections (\ref{reduction:projection}) 
and algebraic containments (\ref{reduction:algebraic}) on the set of 
\emph{non-degenerate} 3-tensors. These reductions still satisfy the other three 
properties on the set of all tensors: They are kernel reductions by construction; 
non-degeneracy presents no obstacle to polynomial-time computation 
(Observation~\ref{obs:nondeg}); and two tensors are isomorphic iff their 
non-degenerate parts are isomorphic, so they are still functorial. The obstacle to 
being projections or algebraic containments on the set of all 3-tensors here is 
closely related to the fact that the map sending a matrix to its row echelon form 
(or even just zero-ing out a number of rows so that the remaining non-zero rows 
are linearly independent) is neither a projection nor an algebraic map. We would 
find it interesting if there were reductions for these results satisfying all of 
the above properties for all 3-tensors.
\end{remark}

\section{Reductions using the linear algebraic coloring 
gadgets}\label{sec:reduction_gadget}

In this section, we present the remaining reductions that use the linear algebraic 
coloring idea. We first reduce \GIlong to \AltMatSpIsomlong, using a gadget to restrict the full 
general linear group to the monomial matrix 
group, 
similar to that in Section~\ref{sec:technique}. However, unlike in the 
case there, the use here requires slightly more care because of the 
alternating condition. 
We then 
reduce \ThreeTIlong to \AltMatSpIsomlong. 
The gadget there restricts the full general linear group to a 
parabolic subgroup. We note that such a gadget has appeared in \cite{FGS19}, while 
ours is a slight modification of that to be compatible with the alternating 
structure. Finally, we combine the two gadgets to give a search-to-decision 
reduction for \AltMatSpIsomlong over finite fields. 

\subsection{From \GIlong to \AltMatSpIsomlong}\label{sec:graphiso_alternating}

\begin{proposition}\label{prop:GI}
\GIlong reduces to \AltMatSpIsomlong.
\end{proposition}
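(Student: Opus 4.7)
The plan is to directly encode a graph $G = (V, E)$ with $V = [n]$ as an alternating matrix space $\cA_G \leq \Lambda(N, \F)$ for some $N$ polynomial in $n$. The natural encoding is to designate the first $n$ standard basis vectors as ``vertex coordinates'' and, for each edge $\{i, j\} \in E$, to include the elementary alternating matrix $E_{i,j} - E_{j,i}$ in the spanning set of $\cA_G$. For the forward direction, any graph isomorphism $\pi : G \to H$ lifts to a permutation matrix $P_\pi \in \GL(N, \F)$ (acting as a permutation on the vertex coordinates and extended appropriately on the gadget coordinates described below) which satisfies $P_\pi^t \cA_G P_\pi = \cA_H$, since conjugation by $P_\pi$ permutes elementary alternating matrices according to $\pi$.

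For the converse, a general isometry $P^t \cA_G P = \cA_H$ need not be induced by a permutation, so I would use a linear-algebraic coloring gadget in the spirit of Section~\ref{sec:technique} to force $P$ to be monomial on the vertex coordinates. Concretely, I would enlarge the ambient space by $nd$ gadget coordinates for a parameter $d$ larger than the maximum degree, indexing them by pairs $(i, k)$ for $i \in [n], k \in [d]$, and for each such pair include $E_{i, (i,k)} - E_{(i,k), i}$ in $\cA_G$. A rank calculation on lateral slices (as defined in Section~\ref{sec:technique}) then shows that a vertex basis vector $e_i$ has lateral rank $d + \deg_G(i)$, while any vector $v$ supported on $s \ge 2$ vertex coordinates has lateral rank at least $sd$, strictly larger. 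Since isometries satisfy $\rank(L^{\cA_H}_v) = \rank(L^{\cA_G}_{Pv})$, this rank separation forces $P$ to send each standard vertex basis vector of $\cA_H$ to a scalar multiple of a standard vertex basis vector of $\cA_G$; an analogous argument on the gadget coordinates (whose lateral ranks are small and tied to a specific vertex coordinate) forces $P$ to respect the vertex/gadget block decomposition and to act as a monomial matrix $D P_\pi$ on the vertex block, for some permutation $\pi \in S_n$ and diagonal $D$.

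Once $P$ is monomial on the vertex coordinates, applying $P^t (\cdot) P$ to $E_{i,j} - E_{j,i}$ produces a nonzero scalar multiple of $E_{\pi^{-1}(i), \pi^{-1}(j)} - E_{\pi^{-1}(j), \pi^{-1}(i)}$. Since this matrix must lie in $\cA_H$, and $\cA_H$ (modulo its gadget part) is spanned by elementary alternating matrices indexed by the edges of $H$, we conclude $\{\pi^{-1}(i), \pi^{-1}(j)\} \in E(H)$, so $\pi^{-1}$ is a graph isomorphism $G \to H$. The main obstacle will be executing the rank analysis within the alternating setting: the condition $A^t = -A$ restricts which rank-increasing gadgets are available (in particular, one cannot use a diagonal identity block as in Prop.~\ref{prop:MonCodeEq}), and one must verify that edge matrices do not interfere with the gadget's rank separation on mixed vectors combining vertex and gadget coordinates. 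A careful accounting, possibly using a larger $d$ or several gadget matrices per vertex paired to preserve the alternating constraint, should suffice, and care in characteristic $2$ (where alternating differs from skew-symmetric) may require slight adjustments.
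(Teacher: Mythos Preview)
Your proposal is correct and takes essentially the same approach as the paper: encode edges by elementary alternating matrices, attach to each vertex coordinate a block of gadget coordinates via elementary alternating matrices, and use a rank analysis on lateral slices to force the isometry to be monomial on the vertex block, after which the diagonal part drops out and the permutation part gives a graph isomorphism. The paper organizes this as two steps (first reducing \GI to \AltMatSpMonIsomlong, then proving a separate Lemma~\ref{lem:gadget} reducing monomial isometry to ordinary isometry), and simply takes $d=n$ rather than $d>\max\deg$, but the content is the same.
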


For this proof we will need the concept of monomial isometry; 
see \hyperref{}{}{sec:prel:groups}{Some Groups} above. Recall that a matrix 
is monomial if, equivalently, it can be written as $DP$ where $D$ is a nonsingular 
diagonal matrix and $P$ is a permutation matrix. 
We say two matrix spaces $\cA, \cB$ are \emph{monomially isometric} if there is some $M \in \Mon(n,\F)$ such that $M^t \cA M = \cB$.

\begin{proof}
For a graph $G=([n], E)$, let $\vA_G$ be the alternating matrix tuple $\vA_G = (A_1, \dotsc, A_{|E|})$ with $A_e = E_{i,j} - E_{j,i}$ where $e=\{i,j\} \in E$, and let $\cA_G = \langle \vA_G \rangle$ be the alternating matrix space spanned by that tuple. If $P$ is a permutation matrix giving an isomorphism between two graphs $G$ and $H$, then it is easy to see that $P^t \cA_G P = \cA_H$, and thus the corresponding matrix spaces are isometric. The converse direction is not clear (and may even be false). Instead, we will first extend the spaces $\cA_G$ and $\cA_H$ by gadgets which enforce that $\cA_G$ and $\cA_H$ are isometric iff they are monomially isometric (Lemma~\ref{lem:gadget}). Given Lemma~\ref{lem:gadget}, it thus suffices to reduce \GI to \AltMatSpMonIsomlong.

Let us establish the latter reduction. We will show that $G \cong H$  if and only 
if $\cA_G$ and $\cA_H$ are monomially isometric. The forward direction was handled 
above. For the converse, suppose $P^t D^t \cA_G DP = \cA_H$ 
where $D$ is diagonal and $P$ is a permutation matrix.  
We claim that in this case, $P$ in fact gives an isomorphism from $G$ to $H$. First let us establish that $P$ alone gives an isometry between $\cA_G$ and $\cA_H$. Note that for any diagonal matrix $D=\diag(\alpha_1, \dotsc, \alpha_n)$ and any elementary alternating matrix $E_{i,j} - E_{j,i}$, we have $D^t (E_{i,j} - E_{j,i}) D = \alpha_i \alpha_j (E_{i,j} - E_{j,i})$. Since $\cA_G$ has a basis of elementary alternating matrices, the action of $D$ on this basis is just to re-scale each basis element, and thus $D^t \cA_G D = \cA_G$. Thus, we have $P^t \cA_G P = \cA_H$. 

Finally, note that $P^t (E_{i,j} - E_{j,i}) P = E_{\pi(i), \pi(j)} - E_{\pi(j), 
\pi(i)} = A_{\pi(e)}$, where $\pi \in \S_n$ is the permutation corresponding to 
$P$, and by abuse of notation we write $\pi(e) = \pi(\{i,j\}) = \{\pi(i), 
\pi(j)\}$ as well. Since the elementary alternating matrices are linearly 
independent, and $\cA_H$ has a basis of elementary alternating matrices, the only 
way for $A_{\pi(e)}$ to be in $\cA_H$ is for it to be equal to one of the basis 
elements (one of the matrices in $\vA_H$). In other words, $\pi(e)$ must be an 
edge of $H$. As $P$ is invertible, we thus have that $P$ gives an isomorphism $G 
\cong H$. 
\end{proof}

\begin{lemma} \label{lem:gadget}
\AltMatSpMonIsomlong reduces to \AltMatSpIsomWords. 

More specifically, there is a $\poly(n,m)$-time algorithm $r$ taking alternating matrix tuples to alternating matrix tuples, such that for $\vA, \vB \in \Lambda(n,\F)^m$, the matrix spaces $\cA = \langle \vA \rangle$ and $\cB = \langle \vB \rangle$ are monomially isometric if and only if the matrix spaces $\langle r(\vA) \rangle$ and $\langle r(\vB) \rangle$ are isometric.
\end{lemma}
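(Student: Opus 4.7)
The plan is to produce, from an alternating matrix tuple $\vA \in \Lambda(n,\F)^m$, an enlarged alternating tuple $r(\vA) \in \Lambda(N, \F)^{m+n}$ with $N = O(n)$, built by padding the matrices $A_k$ with zero rows and columns (so the original $n \times n$ block sits in the upper-left) and appending $n$ ``gadget'' alternating matrices $G_1, \dots, G_n \in \Lambda(N, \F)$, one individualizing each standard basis vector $\vec{e_i}$ of $\F^n$. Concretely, I would attach to each coordinate $i \in [n]$ a small, \emph{private} block of auxiliary coordinates (disjoint from the blocks used by other $G_j$'s) and define $G_i$ as a sum of elementary alternating matrices $E_{i,a} - E_{a,i}$ linking $i$ to each auxiliary partner $a$ in its block. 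This is the alternating analogue of the ``star'' individualization of \Sec{sec:technique}, with elementary alternating matrices in place of the rank-$2$ identity blocks used in the proof of \Prop{prop:MonCodeEq}.

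For the forward direction I would check that a monomial isometry $P = DQ$ of $\cA = \langle \vA \rangle$, with $D = \diag(\alpha_1, \dots, \alpha_n)$ and $Q$ the permutation matrix of some $\pi \in \S_n$, extends to an isometry $P'$ of $\langle r(\vA) \rangle$: $P'$ acts as $P$ on the first $n$ coordinates, permutes the auxiliary blocks along $\pi$, and applies scalings on each auxiliary block chosen inversely to $\alpha_i$ so that $P'^t G_i P' = \beta_i\, G_{\pi(i)}$ for some nonzero $\beta_i$. Thus $P'$ preserves both the padded piece (giving the image of $\cA$) and the gadget piece (up to permuting and rescaling the appended slots, which can be absorbed by acting on the third direction).

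The main obstacle is the converse: any isometry $P \in \GL(N,\F)$ of $\langle r(\vA) \rangle$ must restrict to a monomial isometry on $\cA$. The plan is a rank-profile argument in the spirit of the proof of \Prop{prop:MonCodeEq}. By choosing the number of auxiliary partners per $i$ so that each $G_i$ has rank $r$ strictly exceeding the maximum rank occurring in $\cA$, and because the auxiliary blocks assigned to different $i$ are disjoint, any nonzero combination $\sum_{i \in S} c_i G_i$ has rank exactly $r \cdot |S|$. Consequently, the lines $\F \cdot G_i$ become canonically identifiable inside $\langle r(\vA) \rangle$ as the unique rank-$r$ elements outside $\cA$, and the support pattern of the $G_i$ pins down the decomposition $\F^N = \F^n \oplus \F^{N-n}_{\mathrm{aux}}$. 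The isometry $P$ must therefore permute the lines $\F \cdot G_i$, so $P$ sends $\vec{e_i}$ to a scalar multiple of some $\vec{e_{\pi(i)}}$; that is, $P|_{\F^n}$ is monomial.

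The delicate step will be ruling out a $P$ that cleverly mixes main and auxiliary coordinates yet still preserves $\langle r(\vA) \rangle$ and the gadget rank profile. I expect this will require either enlarging the gadget (for example, giving the auxiliary coordinates additional rigidity through an asymmetric number of partners, or by including secondary gadget matrices that lock in the auxiliary support) or a direct coordinate analysis exploiting that $P^t G_i P$ must lie in the gadget subspace and have support confined to a single auxiliary block. Once this rigidity is secured, the monomial-equivalence-to-isometry equivalence follows immediately, and polynomial-time computability of $r$ in $n$ and $m$ is clear from the explicit construction.
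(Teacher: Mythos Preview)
Your construction has a concrete error in the rank computation that breaks the argument. The gadget matrix
\[
G_i \;=\; \sum_{a \in \mathrm{aux}(i)} \bigl(E_{i,a} - E_{a,i}\bigr)
\]
has rank exactly $2$, \emph{regardless} of how many auxiliary partners you attach: its image is spanned by $\vec{e_i}$ and $\sum_{a} \vec{e_a}$. (This is the alternating analogue of the fact that a star graph has an adjacency matrix of rank $2$.) So you cannot make $\rk(G_i)$ exceed the maximum rank in $\cA$ by enlarging $\mathrm{aux}(i)$, and the frontal-slice rank profile you rely on to single out the lines $\F\cdot G_i$ simply does not exist. Combinations $\sum_{i\in S} c_i G_i$ then have rank $2|S|$, not $r|S|$, and mix freely in rank with elements of $\cA$.

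The paper's fix is instructive and differs from your plan in two ways. First, rather than one bundled $G_i$ per coordinate, it appends $n^2$ \emph{separate} elementary alternating matrices $E_{i,in+j}-E_{in+j,i}$ (one for each pair $(i,j)\in[n]\times[n]$), each of rank $2$. Second, the rigidity argument is carried out on the \emph{lateral} slices of the associated $3$-way array, not the frontal ones: the $i$-th lateral slice (for $i\in[n]$) then carries a full $I_n$ block contributed by the $n$ gadget matrices attached to $i$, giving it rank in $[n,2n)$, while every other lateral slice has rank $1$, and any genuine combination of two or more of the first $n$ lateral slices has rank $\ge 2n$. Since the isometry $P$ acts on lateral slices by taking linear combinations (via its right action), this forces the upper-left $n\times n$ block of $P$ to be monomial. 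Your ``delicate step'' is thus handled not by enriching a single $G_i$, but by spreading the gadget across many frontal slices so that the desired high-rank block materialises in the lateral direction.
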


\begin{proof}
For $\vA = (A_1, \dotsc, A_m) \in \Lambda(n,\F)^m$, define $r(\vA)$ to be the alternating matrix tuple 
$\tilde\vA=(\tilde A_1, \dots, \tilde A_{m+n^2})\in \Lambda(n + n^2, \F)^{m+n^2}$, 
where
\begin{enumerate}
\item For $k=1, \dots, m$, $\tilde A_k=\begin{bmatrix}
A_k & \vzero \\
\vzero & \vzero \\
\end{bmatrix}$.
\item For $k=m+(i-1)n+j$, $i\in[n]$, $j\in[n]$, $\tilde A_k$ is the elementary 
alternating matrix $E_{i,in+j} - E_{in+j,i}$.
\end{enumerate}
At this point, some readers may wish to look at the large 
matrix in Equation~\ref{eq:tA_tilde} 
and/or at Figure~\ref{fig:gadget}.

It is clear that $r$ can be computed in time $\tilde O((m+n^2)(n^2 + n)) = \poly(n,m)$. Given alternating matrix tuples $\vA, \vB$, let $\cA,\cB$ be the corresponding matrix spaces they span, and let $\tilde\cA = \langle r(\vA) \rangle$ and $\tilde\cB = \langle r(\vB) \rangle$. We claim that $\cA$ and $\cB$ are monomially isometric if and only if $\tilde\cA$ and $\tilde\cB$ are isometric.

To prove this, it will help to think of our matrix tuples $\vA, \tilde \vA$, etc. as (corresponding to) 3-way arrays, and to view these 3-way arrays from two different directions. Towards this end, write the 3-way array corresponding to $\vA$ as
$$
\tA=\begin{bmatrix}
\vzero & a_{1,2} & a_{1,3} & \dots & a_{1,n} \\
-a_{1,2} & \vzero & a_{2,3} & \dots & a_{2,n} \\
-a_{1,3} & -a_{2,3} & \vzero & \dots & a_{3,n}\\
\vdots & \ddots & \ddots & \ddots & \vdots \\
-a_{1,n} & -a_{2,n} & -a_{3,n} & \dots & \vzero
\end{bmatrix}, 
$$
where $a_{i,j}$ are vectors in $\F^m$ (``coming out of the page''), namely $a_{i,j}(k) = A_k(i,j)$. The frontal slices of this array are precisely the matrices $A_1, \dotsc, A_m$.

The 3-way array corresponding to $\tilde \vA = r(\vA)$ is then the  $(n+1)n\times 
(n+1)n\times (m+n^2)$ array:
\begin{equation}\label{eq:tA_tilde}
\tilde\tA=\left[
\begin{array}{ccccc;{2pt/2pt}ccc;{2pt/2pt}ccc;{2pt/2pt}c;{2pt/2pt}ccc}
\vzero & \tilde a_{1,2} & \tilde a_{1,3} & \dots & \tilde a_{1,n} & e_{1,1} & 
\dots & e_{1,n} & \vzero & \dots & \vzero & \ldots & \vzero & \dots & \vzero  \\
-\tilde a_{1,2} & \vzero & \tilde a_{2,3} & \dots & \tilde a_{2,n} & \vzero & 
\dots & \vzero & e_{2,1} & \dots & e_{2,n} & \ldots &  \vzero & \dots & \vzero  \\ 
\vdots & \ddots & \ddots & \ddots & \vdots & \ddots & \ddots & \ddots & \ddots & 
\ddots & \ddots  & \ldots & \ddots & \ddots & \vdots \\
-\tilde a_{1,n} & -\tilde a_{2,n} & -\tilde a_{3,n} & \dots & \vzero & \vzero & 
\dots & \vzero & \vzero & \dots & \vzero &\ldots & e_{n,1} & \dots & e_{n, n} \\ 
\hdashline[2pt/2pt]
-e_{1,1} & \vzero & \vzero & \dots & \vzero & \vzero & \dots & \vzero & \vzero & 
\dots & \vzero & \ldots  & \vzero & \dots & \vzero \\
\vdots & \vdots & \vdots & \dots & \vdots & \vdots & \dots & \vdots & \vdots & 
\dots & \vdots & \ldots  & \vdots & \dots & \vdots \\
-e_{1,n} & \vzero & \vzero & \dots & \vzero & \vzero & \dots & \vzero & \vzero & 
\dots & \vzero & \ldots  & \vzero & \dots & \vzero \\ \hdashline[2pt/2pt]
\vzero & -e_{2,1} & \vzero & \dots & \vzero & \vzero & \dots & \vzero & \vzero & 
\dots & \vzero & \ldots  & \vzero & \dots & \vzero \\
\vdots & \vdots & \vdots & \dots & \vdots & \vdots & \dots & \vdots & \vdots & 
\dots & \vdots & \ldots  & \vdots & \dots & \vdots  \\
\vzero & -e_{2,n} & \vzero & \dots & \vzero & \vzero & \dots & \vzero & \vzero & 
\dots & \vzero & \ldots  & \vzero & \dots & \vzero \\ \hdashline[2pt/2pt]
\vdots & \vdots & \vdots & \dots & \vdots & \vdots & \dots & \vdots & \vdots & 
\dots & \vdots & \ldots  & \vdots & \dots & \vdots  \\ \hdashline[2pt/2pt]
\vzero & \vzero & \vzero & \dots & -e_{n,1} & \vzero & \dots & \vzero & \vzero & 
\dots & \vzero & \ldots  & \vzero & \dots & \vzero \\
\vdots & \vdots & \vdots & \dots & \vdots & \vdots & \dots & \vdots & \vdots & 
\dots & \vdots & \ldots  & \vdots & \dots & \vdots \\
\vzero & \vzero & \vzero & \dots & -e_{n,n} & \vzero & \dots & \vzero & \vzero & 
\dots & \vzero & \ldots  & \vzero & \dots & \vzero \\
\end{array}\right],
\end{equation}
where $\tilde a_{i,j}=\begin{bmatrix} a_{i,j} \\ \vzero\end{bmatrix}\in 
\F^{m+n^2}$ (here think of the vector $a_{i,j}$ as a column vector, \emph{not} 
coming out of the page; in the above array we then lay the column vector $\tilde 
a_{i,j}$ ``on its side'' so that it is coming out of the page), and 
$e_{i,j}:=e_{m+(i-1)n+j}\in \F^{m+n^2}$, which we can equivalently write as 
$\begin{bmatrix} \vzero_m \\ e_i \otimes e_j\end{bmatrix}$, where we think of $e_i 
\otimes e_j$ here as a vector of length $n^2$. Note that all the the nonzero 
blocks besides upper-left ``$\tA$'' block only have nonzero entries that are 
strictly \emph{behind} the nonzero entries in the upper-left block. 

\begin{figure}[!htbp]
\[
\xymatrix@R=8pt@C=6pt{
*{}\ar@{-}'[rrrrr] 
\ar@{-}'[dddd]  
\ar@{}'[rrrrrdddd]|{A}  
& & & & &*{} \ar@{-}'[dddd]
\ar@{-}'[rd] 
& \\ 
 & *{} 
 & & & & & *{} \ar@{-}'[dddd] 
 \ar@{-}'[rrr] & & 
 & *{} \ar@{-}'[rd] 
 \ar@{-}'[d] \\ 
& & & & & & *{} \ar@{-}'[rrr] 
\ar@{-}'[rd] 
\ar@{.}'[rrrrd]|{I_n} 
& & & *{} \ar@{-}'[rd] 
& *{} \ar@{-}'[d] \\ 
& & & & & & *{} \ar@{--}'[rd]'[rdrrr]
& *{} \ar@{-}'[rrr] \ar@{--}'[d] & & & *{}  
\ar@{-}'[rrr] 
\ar@{-}'[d] 
& & & *{} \ar@{-}'[d] 
\ar@{-}'[rd] 
\\
*{} \ar@{-}'[rrrrr]  
\ar@{-}'[rd] 
& & & & &*{} \ar@{-}'[rd] & 
& *{} & & & *{} \ar@{-}'[rrr] 
\ar@{-}'[rd] 
\ar@{.}'[rrrrd]|{I_n} 
& & & *{} \ar@{-}'[rd] 
& *{} \ar@{-}'[d] \\ 
& *{} \ar@{-}'[rrrrr] 
\ar@{-}'[dd]'[dddr]'[dddrr]'[ddr]'[r]'[rdr]'[rdrdd] 
& *{} 
\ar@{.}'[dddr]|{\text{-}I_n} 
& *{} \ar@{--}'[rd]'[rddd]
& & & *{} 
& & & & &*{} \ar@{-}'[rrr] 
& & & *{} \ar@{.}'[rrrd] \\ 
& & & *{} \ar@{--}'[r] & *{} & & & & & & & & & & & & & & & & & & & & & & & \\
& *{} \ar@{-}'[r] & *{} & & & & & & & & & & & & & & & & & & & & & & & & & \\ 
& & *{} 
& *{} \ar@{-}'[r] \ar@{-}'[dd]'[dddr]'[dddrr]'[ddr]'[r]'[rdr]'[rdrdd] & *{} \ar@{.}'[dddr]|{\text{-}I_n}& & & & & & & & & & & & & & & & & & & & & & & \\  
& & & & & *{} & & & & & & & & & & & & & & & & & & & & & & \\ 
& & & *{} \ar@{-}'[r] & *{} & & & & & & & & & & & & & & & & & & & & & & & \\ 
& & & & *{} & *{} \ar@{.}'[dddrr] & & & & & & & & & & & & & & & & & & & & & & \\ 
& & & & & & & & & & & & & & & & & & & & & & & & & & & \\ 
& & & & & & & & & & & & & & & & & & & & & & & & & & & \\ 
& & & & & & & & & & & & & & & & & & & & & & & & & & & 
}
\]
\caption{ \label{fig:gadget} Pictorial representation of the reduction for Lemma~\ref{lem:gadget}.}
\end{figure}
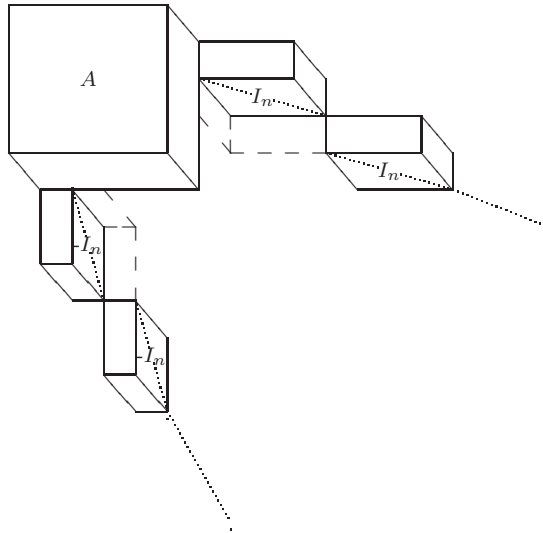

The second viewpoint, which we will also use below, is to consider the lateral 
slices of $\tA$, or equivalently, to view $\tA$ from the side. When viewing $\tA$ 
from the side, we see the $(n+1)n \times (m + n^2) \times (n+1)n$  3-way array: 
\begin{equation} \label{eq:alat}
\tA^{lat}=
\left[\begin{array}{cccc;{2pt/2pt}ccc;{2pt/2pt}c;{2pt/2pt}ccc}
\ell_{1,1} & \ell_{1,2} & \dots & \ell_{1,m} & e_{n+1} & \dots & e_{2n} & 
\dots & 0 & \dots & 0\\
\vdots & \ddots & \ddots & \vdots & \vdots & \ddots & \vdots & \ddots & \vdots & 
\ddots & \vdots \\
\ell_{n, 1} & \ell_{n,2} & \dots & \ell_{n,m} & 0 & \dots & 0 & \dots & 
e_{n^2+1} & 
\dots & e_{n^2+n}\\ \hdashline[2pt/2pt]
0 & 0 & \dots  & 0 & e_{1} & \dots & 0 & \dots & 0 & \dots & 0 \\
\vdots & \vdots & \ddots & \vdots & \vdots & \ddots & \vdots& \dots  & \vdots & 
\ddots & 
\vdots \\
0 & 0 & \dots & 0 & 0 & \dots & e_{1} & \dots & 0 & \dots & 0\\ \hdashline[2pt/2pt]
\vdots & \ddots & \ddots & \vdots & \vdots & \ddots & \vdots & \ddots & \vdots & 
\ddots & \vdots \\ \hdashline[2pt/2pt]
0 & 0 & \dots  & 0 & 0 & \dots & 0 & \dots & e_{n} & \dots & 0 \\
\vdots & \vdots & \ddots & \vdots & \vdots & \ddots & \vdots& \dots  & \vdots & 
\ddots & 
\vdots \\ 
0 & 0 & \dots & 0 & 0 & \dots & 0 & \dots & 0 & \dots & e_{n}\\
\end{array}\right],
\end{equation}
where every $\ell_{i,k}\in \F^{n^2+n}$ has only the first $n$ components 
being possibly non-zero, namely, $\ell_{i,k}(j) = A_k(i,j)$ for $i \in [n], j \in [n], k \in [m]$ and $\ell_{i,k}(j) = 0$ for any $j > n$.

\paragraph{For the only if direction,} suppose there exist $P\in \Mon(n, 
\F)$ and 
$Q\in \GL(m, \F)$, 
such that $P^t\vA P=\vB^Q$. We can construct $\tilde P\in \Mon(n+n^2, \F)$ 
and $\tilde 
Q\in \GL(m+n^2, \F)$ such that $\tilde P^t\tilde\vA\tilde P=\tilde\vB^{\tilde Q}$. In 
fact, we will show that we can take $\tilde P=\begin{bmatrix} P & \vzero \\ \vzero & 
P'\end{bmatrix}$ where $P'\in \Mon(n^2, \F)$, and $\tilde 
Q=\begin{bmatrix} Q & \vzero 
\\ \vzero & Q'\end{bmatrix}$ where $Q'\in \Mon(n^2, \F)$. It is not hard 
to see 
that this form already ensures that the first $m$ matrices in the vector $\tilde 
P^t\tilde\vA\tilde P$ and those of $\tilde \vB^{\tilde Q}$ are the same, 
since when $\tilde P, \tilde Q$ are of this form, those first $m$ matrices are 
controlled entirely by the $P$ (resp., $Q$) in the upper-left block of $\tilde P$ 
(resp., $\tilde Q$).

The remaining question is then how to design appropriate $P'$ and $Q'$ to take 
care of the last $n^2$ matrices in these tuples. This actually boils down to applying
the following simple identity, but ``in 3 dimensions:'' Let $P$ be the permutation matrix 
corresponding to 
$\sigma\in \S_n$, so that $Pe_i=e_{\sigma(i)}$, and $e_i^tP=e_{\sigma^{-1}(i)}^t$. 
Let $D=\diag(\alpha_1, \dots, \alpha_n)$ be a diagonal matrix. Then 
\begin{equation}\label{eq:simple}
P^tDP=\diag(\alpha_{\sigma^{-1}(1)}, \dots, \alpha_{\sigma^{-1}(n)}).
\end{equation}

To see how Equation~\ref{eq:simple} helps in our 
setting, it is easier to focus attention on the lower right $n^2\times n^2$ sub-array of 
$\tA^{lat}$, which can be represented as a symbolic matrix
$$
M=\begin{bmatrix}
x_1 I_n & \vzero & \dots & \vzero \\
\vzero & x_2 I_n & \dots & \vzero \\
\vdots & \ddots & \ddots & \vdots \\
\vzero & \vzero & \dots & x_n I_n
\end{bmatrix}.
$$
Here we think of the $x_i$'s as independent variables, whose indices correspond to ``how far into the page'' they are. That is, $x_i$ corresponds to the vector $\vec{e}_i$ in $\tA^{lat}$, which is coming out of the page and has its only nonzero entry $i$ slices back from the page. 

Then the action of $P$ permutes the $x_i$'s and multiplies them by some scalars, 
the action of $P'$ is on the left-hand 
side, and the action of $Q'$ is on the right-hand side. Let $\sigma$ be 
the permutation supporting $P$. Then $P$ sends $M$ to 
$$
M^P=\begin{bmatrix}
\alpha_{\sigma(1)}x_{\sigma(1)} I_n & \vzero & \dots & \vzero \\
\vzero & \alpha_{\sigma(2)}x_{\sigma(2)} I_n & \dots & \vzero \\
\vdots & \ddots & \ddots & \vdots \\
\vzero & \vzero & \dots & \alpha_{\sigma(n)}x_{\sigma(n)} I_n
\end{bmatrix}.
$$
So setting $P'=\sigma\otimes I_n$, $Q'$ the monomial matrix supported by 
$\sigma\otimes I_n$ with scalars being $1/\alpha_i$'s, we have $P'^tM^P 
Q'=M$ by 
Equation~\ref{eq:simple}.

\paragraph{For the if direction,} suppose there exist $\tilde P\in 
\GL(n+n^2, \F)$ and $\tilde 
Q\in \GL(m+n^2, \F)$, such that $\tilde P^t\tilde\vA\tilde P =\tilde\vB^{\tilde 
Q}$. The key feature of these gadgets now comes into play: consider the lateral 
slices of $\tilde\tA$, which are the frontal slices of $\tA^{lat}$ (which may be 
easier to visualize by looking at Equation~\ref{eq:alat} and Figure~\ref{fig:gadget}). 
The first $n$ lateral slices of 
$\tilde\tA$ and $\tilde\tB$ are of rank $\geq n$ and $<2n$, while the other 
lateral slices are of rank $<n$ (in fact, they are of rank 1; note that without loss of generality we may assume $n > 1$, for the only $1 \times 
1$ alternating matrix space is the zero space). 
Furthermore, left multiplying a lateral slice by $\tilde P^t$ and right
multiplying it by $\tilde Q$ does not change its rank. However, the action of $\tilde P$ here is by $\tilde P^t \tilde \vA \tilde P$, and while the $\tilde P^t$ here corresponds to left multiplication on the lateral slices (=frontal slices of $\tA^{lat}$), the $\tilde P$ on the right here corresponds to taking linear combinations of the lateral slices. In other words, just as $\tA^{lat}$ is the ``side view'' of $\tilde \tA$, $(\tilde P^t \tA^{lat} \tilde Q)^{\tilde P}$ is the side view of $(\tilde P^t \tilde \tA \tilde P)^{\tilde Q}$. Taking linear combinations of the lateral slices could, in principle, alter their rank; we will use the latter possibility to show that $\tilde P$ must be of a constrained form.

Write $\tilde P = \begin{bmatrix}
P_{1,1} & P_{1,2} \\
P_{2,1} & P_{2,2}
\end{bmatrix}$ where $P_{1,1}$ is of size $n\times n$. We first claim that 
$P_{1,2}=\vzero$. For if not, then in $(\tA^{lat})^{\tilde P}$ (the side view), 
one of the last $n^2$ frontal slices receives a nonzero contribution from one of 
the first $n$ frontal slices of $\tA^{lat}$. Looking at the form of these slices 
from Equation~\ref{eq:alat}, we see that any such nonzero combination will have 
rank $\geq n$, but this is a contradiction since the corresponding slice in 
$\tB^{lat}$ has rank $1$. Thus $P_{1,2}=\vzero$, and therefore $P_{1,1}$ must be 
invertible, since $\tilde P$ is.

Finally, we claim that $P_{1,1}$ has to be a monomial matrix. If not, then some frontal slice of $(\tA^{lat})^{\tilde P}$ among the first $n$ would have a contribution from more than one of these $n$ slices. Considering the lower-right $n^2 \times n^2$ sub-matrix of such a slice, we see that it would have rank exactly $kn$ for some $k \geq 2$, which is again a contradiction since the first $n$ slices of $\tB^{lat}$ all have rank $< 2n$. It follows 
that 
$P_{1,1}^t A_i P_{1,1}$, $i\in[m]$, are in $\cB$, and thus
$\cA$ and $\cB$ are monomially isometric via $P_{1,1}$. 
\end{proof}

\subsection{From \ThreeTIlong to \MatSpIsomlong and \algprobm{Matrix Group Isomorphism}}\label{sec:3tensor_alternating}

\begin{proposition} \label{prop:3-tensor_isometry}
\ThreeTIlong reduces to 
\AltMatSpIsomlong. Symbolically, isomorphism in $U \otimes V \otimes W$ reduces to 
isomorphism in $V' \otimes V' \otimes W'$ (or even to $\bigwedge^2 V' \otimes W$), 
where $\ell = \dim U \leq n = \dim V$ and $m = \dim W$, $\dim V' = \ell + 7n + 3$ 
and $\dim W' = m+\ell(2n+1)+n(4n+2)$.
\end{proposition}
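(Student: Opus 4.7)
My plan is to construct, for any $\ell \times n \times m$ array $\tA$, a non-degenerate alternating array $\tilde\tA$ of format $(\ell+7n+3)\times(\ell+7n+3)\times(m+\ell(2n+1)+n(4n+2))$ such that $\tA\cong\tB$ in $U\otimes V\otimes W$ iff $\langle\tilde\vA\rangle$ and $\langle\tilde\vB\rangle$ are pseudo-isometric as alternating matrix spaces. By Observation~\ref{obs:nondeg} I may assume $\tA,\tB$ are non-degenerate. The \emph{core} of the construction is the antidiagonal alternating embedding of each frontal slice $A_k\in\M(\ell\times n,\F)$ as $\bigl(\begin{smallmatrix} 0 & A_k \\ -A_k^t & 0\end{smallmatrix}\bigr)$, placed in the upper-left $(\ell+n)\times(\ell+n)$ corner of an $(\ell+7n+3)\times(\ell+7n+3)$ alternating matrix. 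A block-diagonal pseudo-isometry $\tilde P=\mathrm{diag}(P_1,P_2,\ast)$ together with a compatible $\tilde R$ acts on this core block exactly as $A_k\mapsto P_1^t A_k P_2$ composed with a change of basis on the slice labels, thus realizing the full $\GL(\ell)\times\GL(n)\times\GL(m)$ action of 3-tensor isomorphism.

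The heart of the proof is to append gadgets that \emph{force} any pseudo-isometry of $\tilde\tA$ to be block-diagonal with respect to the decomposition $V'=U\oplus V\oplus V''$, where $V''$ is the $6n+3$-dimensional auxiliary space. I plan to use two families of alternating coloring gadgets in the spirit of Section~\ref{sec:technique} and Lemma~\ref{lem:gadget}, but adapted for parabolic (rather than monomial) constraints, along the lines of the gadget of \cite{FGS19}. One family of $\ell(2n+1)$ slices will ``individualize'' the first $\ell$ coordinates (the $U$-block), and another family of $n(4n+2)$ slices will individualize the next $n$ coordinates (the $V$-block); both are built from pairs $E_{i,j}-E_{j,i}$ and small identity blocks placed in disjoint regions of $V''$, so that the matrices remain alternating. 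The identity-block sizes will be chosen so that the $m$ core slices, the $U$-gadgets, and the $V$-gadgets occupy three separated rank regimes, and so that no nontrivial linear combination across the three families can match the rank profile of any single slice in the target.

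For the forward direction, I plan to extend a given 3-tensor isomorphism $(P_1,P_2,R)$ to a pseudo-isometry by block-diagonal $\tilde P$ and $\tilde R$ whose action on the gadget blocks is dictated by a 3-dimensional analogue of the identity $P^tDP=\mathrm{diag}(\alpha_{\sigma^{-1}(1)},\dotsc,\alpha_{\sigma^{-1}(n)})$, mimicking the proof of Lemma~\ref{lem:gadget}. The ``only if'' direction is the technical core: given $\tilde P^t\tilde\tA\tilde P=\tilde\tB^{\tilde R}$, I will examine the rank profiles of frontal and lateral slices of the enlarged array and use the separation of rank regimes to argue successively that $\tilde P$ has no nonzero entries from $V\oplus V''$ into $U$, nor from $V''$ into $V$, so that $\tilde P$ is block-upper-triangular. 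The alternating action (by $P^t\cdot P$ on both sides) then forces it to be block-diagonal, and reading off the $U$- and $V$-diagonal blocks along with the induced slice-basis change yields the required 3-tensor isomorphism.

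The main obstacle will be the rank bookkeeping in the ``only if'' direction: I must ensure that no exotic linear combination of gadget and core slices can spoof the rank profile of any single gadget or core slice. The extra $6n+3$ auxiliary coordinates in $V'$ together with the two-track gadget counts $\ell(2n+1)$ and $n(4n+2)$ appear to be sized precisely to provide enough distinct ``parking spots'' for the gadget identity blocks in disjoint coordinates of $V''$, while accommodating the doubling required by the alternating condition; once this rank separation is in place, the block-diagonal constraint on $\tilde P$ follows by a case analysis on the permissible ranks of frontal and lateral slices in $\tilde\tB^{\tilde R}$.
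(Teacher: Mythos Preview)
Your approach is essentially the paper's: the same antidiagonal alternating core, the same two gadget families of sizes $\ell(2n+1)$ and $n(4n+2)$ attaching identity blocks of sizes $2n+1$ and $4n+2$ into disjoint regions of $V''$, and a rank analysis on lateral slices to control the block shape of any pseudo-isometry $\tilde P$.

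One point of your plan is slightly off, however. You propose to show $\tilde P$ has zero blocks ``from $V\oplus V''$ into $U$'' and ``from $V''$ into $V$'', giving a block-triangular $\tilde P$, and then to invoke the alternating action $P^t(\cdot)P$ to force block-diagonality. That last step does not work: conjugating an alternating matrix by a block-triangular $P$ does not in general kill the off-diagonal blocks of $P$. What the paper does instead is squeeze \emph{more} out of the lateral-slice rank analysis. The two gadget identity blocks $-I_{2n+1}$ and $-I_{4n+2}$ live in disjoint row ranges of $V''$, so any lateral combination involving both a $U$-slice and a $V$-slice has rank at least $(2n+1)+(4n+2)=6n+3>5n+2$, which is too large to match any lateral slice of $\tilde\tB$. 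This yields \emph{four} vanishing blocks directly: $P_{1,2}=P_{2,1}=P_{1,3}=P_{2,3}=0$. The resulting shape
\[
\tilde P=\begin{bmatrix} P_{1,1} & 0 & 0 \\ 0 & P_{2,2} & 0 \\ P_{3,1} & P_{3,2} & P_{3,3}\end{bmatrix}
\]
is \emph{not} block-diagonal, but a direct computation shows that on the core slices $\bigl(\begin{smallmatrix}0&A_k&0\\-A_k^t&0&0\\0&0&0\end{smallmatrix}\bigr)$ the surviving $P_{3,1},P_{3,2}$ contribute nothing to the $(1,2)$-block, so one reads off $P_{1,1}^tA_kP_{2,2}\in\langle\vB\rangle$ regardless. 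So: drop the ``alternating forces block-diagonal'' step, push the rank bookkeeping one notch further to get $P_{2,1}=0$, and accept that $P_{3,1},P_{3,2}$ may survive but are harmless.
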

\begin{proof}
We will exhibit a function $r$ from 3-way arrays to matrix tuples such that two 
3-way arrays $\tA,\tB \in T(\ell \times n \times m, \F)$ which are non-degenerate 
as 3-tensors, are isomorphic as 3-tensors if and only if the matrix spaces 
$\langle r(\tA) \rangle, \langle r(\tB)\rangle $ are isometric. Note that we can 
assume our input tensors are non-degenerate by Observation~\ref{obs:nondeg}.
The construction is a bit involved, so we will first 
describe the construction in detail, and then prove the desired statement. 

\paragraph{The gadget construction.} 
Given a 3-way array $\tA \in T(\ell \times n \times m, \F)$, let $\vA$ denote the corresponding $m$-tuple of matrices, $\vA \in M(\ell \times n)^m$. The first step is to construct $s(\tA) \in \Lambda(\ell+n, \F)^m$, defined by $s(\tA) = (A_1^{\Lambda}, \dotsc, A_m^{\Lambda})$ where $A_i^{\Lambda}=\begin{bmatrix} 
\vzero & A_i \\-A_i^t & \vzero \end{bmatrix}$. Already, note that if $\tA \cong \tB$, then $s(\tA)$ and $s(\tB)$ are pseudo-isometric matrix tuples (equivalently, $\langle s(\tA) \rangle$ and $\langle s(\tB) \rangle$ are isometric matrix spaces). 

However, it is not clear whether the converse should hold. 
Indeed, suppose $P s(\tA) P^T = s(\tB)^{Q}$ for some $P \in \GL(\ell + n, \F), Q 
\in \GL(m,\F)$. If we write $P$ as a block matrix $\begin{bmatrix} P_{11} & P_{12} 
\\ P_{21} & P_{22} \end{bmatrix}$, where $P_{11} \in M(\ell, \F)$ and $P_{22} \in 
M(n,\F)$, then by considering the (1,2) block we get that $P_{11} A_i 
P_{22}^t - P_{21}^t A_i^t P_{12} = \sum_{j=1}^m q_{ij} B_j$ for all 
$i=1,\dotsc,m$, whereas what we would want is the same equation but without the $P_{21}^t A_i^t 
P_{12}$ term. 
To remedy this, it would suffice if we could extend the tuple $s(\tA)$ 
to $r(\tA)$ so that any pseudo-isometry $(P,Q)$ between $r(\tA)$ and $r(\tB)$ will 
have $P_{21} = 0$. 

To achieve this, we start from $s(\tA)=\vA^{\Lambda} \in\Lambda(n+\ell, \F)^m$, and construct 
$r(\tA) \in \Lambda(\ell+7n+3, \F)^{m+\ell(2n+1)+n(4n+2)}$ as follows. 
Here we write it out symbolically, on the next page is the same thing in matrix 
format, and in Figure~\ref{fig:3-tensor_isometry} is a picture of the construction.
Let 
$s=m+\ell(2n+1)+n(4n+2)$. Write $r(\tA)=(\tilde A_1, \dots, \tilde A_s)$, 
where 
$\tilde A_i\in\Lambda(\ell+7n+3, \F)$ are defined as follows:
\begin{itemize}
\item For $1\leq i\leq m$, $\tilde A_i=\begin{bmatrix} A_i^{\Lambda} & \vzero \\ \vzero & 
\vzero \end{bmatrix}$. Recall that $A_i^{\Lambda} \in \Lambda(\ell+n, \F)$.
\item For the next $\ell(2n+1)$ slices, that is, 
$m+1\leq i\leq m+\ell(2n+1)$, we can naturally represent $i-m$ by $(p, 
q)$ where 
$p\in[\ell]$, $q\in [2n+1]$. We then let $\tilde A_i$ be the elementary 
alternating 
matrix $E_{p,\ell+n+q} - E_{\ell+n+q,p}$.

\item For the next $n(4n+2)$ slices, that is
$m+\ell(2n+1)+1\leq i\leq m+\ell(n+1)+n(4n+2)$, we can naturally 
represent $i-m-\ell(n+1)$ by $(p, q)$ where $p\in[n]$, $q\in[4n+2]$. We then let 
$\tilde A_i$ be the elementary alternating matrix  $E_{\ell+p, n+\ell+2n+1+q} - E_{n+\ell+2n+1+q, \ell+p}$. 
\end{itemize}

We may view the above construction is as follows. 
Write the frontal view of $\tA$ as 
\[
\tA = \left[ \begin{array}{ccc}
a_{1,1}' & \dots & a_{1,n}'\\
\vdots & \ddots & \vdots \\
a_{\ell, 1}' & \dots & a_{\ell, n}'\\ 
\end{array}\right],
\]
where $a_{i,j}'\in \F^m$, 
which we think of as a column vector, but when place in the above array, we think 
of it as coming out of the page.

Let $\tilde\tA$ be the 
$3$-way array whose frontal slices are $\tilde A_i$, so 
$\tilde\tA\in\T((\ell+7n+3)\times
(\ell+7n+3)\times(m+\ell(2n+1)+n(4n+2)), \F)$. Then the frontal view of 
$\tilde\tA$ is 
$$
\tilde\tA=\left[ \begin{array}{ccc;{2pt/2pt}ccc;{2pt/2pt}ccc;{2pt/2pt}ccc}
\vzero & \dots  & \vzero & a_{1,1} & \dots & a_{1,n} & e_{1,1} & \dots & e_{2n+1, 
1} & \vzero & \dots & \vzero\\
\vdots & \ddots & \vdots & \vdots & \ddots & \vdots & \vdots & \ddots & \vdots & 
\vdots & \ddots & \vdots \\
\vzero & \dots  & \vzero & a_{\ell, 1} & \dots & a_{\ell, n} & e_{1, \ell} & 
\dots & e_{2n+1, \ell} & \vzero & \dots & \vzero \\ \hdashline[2pt/2pt]
-a_{1,1} & \dots & -a_{\ell, 1} & \vzero & \dots & \vzero & \vzero & \dots & 
\vzero & f_{1,1} & \dots & f_{4n+2, 1} \\
\vdots & \ddots & \vdots & \vdots & \ddots & \vdots & \vdots & \ddots & \vdots & 
\vdots & \ddots & \vdots \\
-a_{1,n} & \dots & -a_{\ell, n} & \vzero & \dots & \vzero & \vzero & \dots & 
\vzero & f_{1,n} & \dots & f_{4n+2, n} \\  \hdashline[2pt/2pt]
-e_{1,1} & \dots & -e_{1, \ell} & \vzero & \dots & \vzero & \vzero & \dots & 
\vzero 
& \vzero & \dots & \vzero \\
\vdots & \ddots & \vdots & \vdots & \ddots & \vdots & \vdots & \ddots & \vdots & 
\vdots & \ddots & \vdots \\
-e_{2n+1, 1} & \dots & -e_{2n+1, \ell} & \vzero & \dots & \vzero & \vzero & \dots 
& 
\vzero & \vzero & \dots & \vzero \\  \hdashline[2pt/2pt]
\vzero & \dots & \vzero & -f_{1,1} & \dots & -f_{1, n} & \vzero & \dots & \vzero & 
\vzero & \dots & \vzero \\
\vdots & \ddots & \vdots & \vdots & \ddots & \vdots & \vdots & \ddots & \vdots & 
\vdots & \ddots & \vdots \\
\vzero & \dots & \vzero & -f_{4n+2, 1} & \dots & -f_{4n+2, n} & \vzero & \dots & 
\vzero & \vzero & \dots & \vzero 
\end{array}\right], 
$$
where $a_{i,j}=\begin{bmatrix} a_{i,j}'\\ \vzero \end{bmatrix} 
\in\F^{m+\ell(2n+1)+n(4n+2)}$, $e_{i,j}=\vec{e}_{m+(j-1)(2n+1)+i}$, and 
$f_{i,j}=\vec{e}_{m+\ell(2n+1)+(j-1)(4n+2)+i}$. 

\begin{figure}[!htbp]
\[
\xymatrix@R=8pt@C=6pt{
&&&&& *{} \ar@{-}'[rrrrr] \ar@{-}'[dddd]|{\ell}'[dddddr]|{m}'[dddddrrrrrr]'[ddddrrrrr]'[rrrrr]'[rrrrrdr]'[rrrrrdrdddd] \ar@{}'[ddddrrrrr]|{A} &&&&& *{} & 
 \\ 
&&&&&&&&&&& *{} \ar@{-}'[rrrr]'[rrrrdr]'[rrrrdrd]'[rrrrd]'[d]'[drd]'[drdrrrr] &&&& 
*{} \ar@{-}'[d] &&& 
\\ 
&&&&&&&&&&& *{} \ar@{.}'[rdrrrr]|{I_{2n+1}} &&&& *{} & 
 *{} 
 \\ 
&&&&&&&&&&& *{} \ar@{--}'[rd]  
& *{} \ar@{-}'[d]'[drd]'[drdrrrr]'[drrrrr]'[rrrr]'[rrrrd]'[rrrrddr] &&&& *{}  
\\ 
*{} \ar@{-}'[rrrrr]|{\ell} \ar@{-}'[dddd]|{n} \ar@{}'[ddddrrrrr]|{-A^t}
&&&&& *{} \ar@{-}'[rrrrr]|{n} \ar@{-}'[dddd] &&&&& *{}  
&*{} \ar@{--}'[rdrd]
& *{} \ar@{-}'[rrrr] \ar@{.}'[rrrrrd]|{I_{2n+1}} &&&& *{}  
& *{} 
\\ 
&&&&&& *{} \ar@{-}'[dddd] 
\ar@{--}'[rdrdrd]'[rdrdrdrrrrr]'[rdrdrdrdrrrrrrd][rdrdrdrdrrrrrrdddddddddd] 
& \ar@{--}'[rdrdrd]'[rdrdrdddddddd] 
&&&& *{} \ar@{--}'[rdrd] 
&& *{} \ar@{}'[drrrr]|{\ddots} &&&& *{} 
\\ 
&&&&&& *{} 
 \ar@{--}'[rdrdrd]'[rrrdddrrrrrrrrr] 
&&&&&&& *{} &&&& *{} 
\\ 
&&&&&&&&&&&&& *{} \ar@{-}'[rrrr] 
\ar@{-}'[u]'[urrrr]'[rrrr]'[rrrrrd]'[rrrrr]'[urrrr] 
\ar@{-}'[rd]'[rdrrrr] 
\ar@{.}'[rdrrrr]|{I_{2n+1}} 
& & & & *{} & 
*{} 
\\ 
*{} \ar@{-}'[rrrrr] \ar@{-}'[rd]'[rdrrrrr] 
& & & 
& &  *{} \ar@{-}'[rd] 
& & & & *{} \ar@{--}'[dddd] 
& & && &  *{} & & & &  *{} 
\ar@{-}'[rrrrr]'[rrrrrd]'[d]'[ddr]'[ddrrrrrr]'[rrrrrd]'[rrrrr]'[rrrrrdr]'[rrrrrdrd] 
\ar@{-}'[d] 
& & & & & *{} 
\\ 
& *{} \ar@{-}'[ddd]'[dddrd]'[dddrdr]'[dddr]'[r]'[rrd]'[rrdddd] 
& *{} \ar@{.}'[rdddd] 
& & & *{}  \ar@{--}'[rdrd] 
& *{} 
\ar@{--}'[rdrd] 
&&& *{} 
&&&&&&&&& *{} \ar@{.}'[rdrrrrr]|{I_{4n+2}} &&&&& *{}  
& *{} 
\\ 
&&& *{} \ar@{}'[rrrrddd]|{\ddots} 
&&&&&&&&&&&&&&&& *{}  \ar@{-}'[d]'[drrrrr]'[rrrrr]'[rrrrrdr]'[rrrrrdrd]'[drd]'[d] 
&&&&&*{} 
\\ 
&&&&&&&*{} \ar@{-}'[ddd]'[dddrd]'[dddrdr]'[dddr]'[r]'[rrd]'[rrdddd] \ar@{-}'[r] 
& *{} \ar@{.}'[rdddd] 
&&&&&&&&&&& *{} \ar@{.}'[rrrrrdr]|{I_{4n+2}} 
&&&&&*{} \ar@{-}'[rd] 
&*{} 
\\ 
&*{} \ar@{-}'[r] &*{} 
&&&&&&&*{} 
\ar@{--}'[rdrdrd]'[rdrdrdrrrrrrrrrr] 
&&&&&&&&&&&*{} \ar@{}'[drrrrrr]|{\ddots} 
&&&&&*{} 
\\ 
&&*{}&*{} 
&&&&&&&&&&&&&&&&&&*{} \ar@{-}'[rrrrr]'[rrrrrd]'[d]'[ddr]'[ddrrrrrr]'[rrrrrd]'[rrrrr]'[rrrrrdr]'[rrrrrdrd]  \ar@{-}'[d] 
&&&&&*{} 
\\ 
&&&&&&&*{} \ar@{-}'[r] & *{} 
&&&&&&&&&&&&&*{} \ar@{.}'[rdrrrrr]|{I_{4n+2}}&&&&&*{}
& *{} 
\\ 
&&&&&&&&*{} 
&*{}  \ar@{-}'[dddd]'[ddddrd]'[ddddrdr]'[ddddr]'[r]'[rrd]'[rrddddd] \ar@{-}'[r]  
&*{} \ar@{.}'[rddddd] 
&&&&&&&&&&&&*{} &&&&&*{} 
\\ 
&&&&&&&&&&&*{} 
&&&&&&&&&&&&&&&&&&&&&&&&&&&&&&&&&&&
 \\ 
&&&&&&&&&&& \ar@{}'[rdrdrd]|{\ddots} 
&&&&&&&&&&&&&&&&&&&&&&&&&&&&&&&&&&&&&& 
\\ 
&&&&&&&&&&&&&&*{} \ar@{-}'[dddd]'[ddddrd]'[ddddrdr]'[ddddr]'[r]'[rrd]'[rrddddd] \ar@{-}'[r]  & *{} \ar@{.}'[rddddd] 
&&&&&&&&&&&&&&&&&&&&&&&&&&&&&&
\\ 
&&&&&&&&&*{} \ar@{-}'[r] & *{} 
&&&&&&*{} 
&&&&&&&&&&&&&&&&&&&&&&&&&&&&& 
\\ 
&&&&&&&&&&*{} & *{} 
&&&&&&&&&&&&&&&&&&&&&&&&&&&&&&&&&&&&&&
\\ 
&&&&&&&&&&&&&*{} &*{} 
&&&&&&&&&&&&&&&&&&&&&&&&&&&&&&&
\\ 
&&&&&&&&&&&& &&*{} \ar@{-}'[r] &*{}  
&&&&&&&&&&&&&&&&&&&&&&&&&&&&&&&
\\ 
&&&&&&&&&&&&&&&*{} &*{} 
&&&&&&&&&&&&&&&&&&&&&&&&&&&&&&&
}
\]
\caption{ \label{fig:3-tensor_isometry} Pictorial representation of the reduction for Proposition~\ref{prop:3-tensor_isometry}.}
\end{figure}

We now examine the ranks of the lateral slices $L_i$ of $\tilde \tA$. We claim:
\[
\begin{array}{rccclrcccl}
&&\text{ For $i$...} && &&& \rk(L_i) \\ \hline
1 & \leq & i & \leq & \ell & 2n+1 & \leq & \rk(L_i) & \leq & 3n+1 \\
\ell+1 & \leq & i & \leq & \ell+n & 4n+2 & \leq & \rk(L_i) & \leq & 5n+2 \\
\ell+n+1 & \leq & i & \leq & \ell+n+6n+3 & &&  \rk(L_i) & \leq & n
\end{array}
\]

To see why these hold:
\begin{itemize}
\item For $1\leq i\leq \ell$, the $i$th lateral slice $L_i$ 
is block-diagonal with two non-zero 
blocks. One block is of size $n\times \ell$, and the other is $-I_{2n+1}$. 
Therefore $2n+1\leq \rk(L_i)\leq 3n+1$.
\item For $\ell+1\leq i\leq \ell+n$, the $i$th lateral slice $L_i$ 
 is also block-diagonal with
 two non-zero blocks. One block is of size $\ell\times n$, and the other is 
$-I_{4n+2}$. Therefore $4n+2\leq \rk(L_i)\leq 5n+2$.
\item For $\ell+n+1\leq i\leq \ell+n+6n+3$, after rearranging the columns, the 
$i$th lateral slice $L_i$ has one non-zero block which is 
is $I_\ell$ for the first $2n+1$ slices, and $I_n$ for the next $4n+2$ slices.
Therefore $\rk(L_i)=\ell$ or $n$, and since we have assumed $\ell \leq n$, in either case we have $\rk(L_i)\leq n$. 
\end{itemize}

We then consider the ranks of the linear combinations of the lateral slices.
\begin{itemize}
\item As long as the linear combination involves $L_i$ for $\ell+1\leq i\leq 
\ell+n$, then the resulting matrix has rank at least $4n+2$, because of the matrix 
$-I_{4n+2}$ in the last $4n+2$ rows. 
\item If the linear combination does not involve $L_i$ for $\ell+1\leq i\leq 
\ell+n$, then the resulting matrix has rank at most $4n+1$, because in this case, 
there are at most $\ell+n+2n+1\leq 4n+1$ non-zero rows. 
\item If the linear combination involves $L_i$ for $1\leq i\leq \ell$, then the 
resulting matrix has rank at least $2n+1$, because of the matrix $-I_{2n+1}$ in 
the $(\ell+n+1)$th to the $(\ell+3n+1)$th rows. 
\end{itemize}

We then prove that $\tA$ and $\tB$ are isomorphic as 3-tensors if and only if 
$\langle r(\tA) \rangle$ and $\langle r(\tB) \rangle$ 
are isometric as matrix 
spaces. At first glance, the only if direction seems the easy one, as one expects to 
extend a 3-tensor isomorphism between $\tA$ to $\tB$ to an isometry between 
$\langle r(\tA) \rangle$ and $\langle r(\tB) \rangle$ easily. However, it turns out that this direction 
becomes somewhat technical because of the gadget introduced. This is handled in the following.

\paragraph{For the if direction,} suppose  
$P^t\tilde\tA P=\tilde\tB^Q$, for some $P\in 
\GL(\ell+7n+3, \F)$ and $Q\in \GL(m+\ell(2n+1)+n(4n+2), \F)$. Write $P$ as 
$\begin{bmatrix} P_{1,1} & P_{1,2} & P_{1,3} \\ P_{2,1} & P_{2,2} & P_{2,3} \\ 
P_{3,1} & P_{3,2} & P_{3,3}\end{bmatrix}$, where $P_{1,1}$ is of size $\ell\times 
\ell$, $P_{2,2}$ is of size $n\times n$, and $P_{3,3}$ is of size $(6n+3)\times 
(6n+3)$. By the discussion on the ranks of the linear combinations of the 
lateral slices, we have $P_{2,1}=\vzero$, 
$P_{1,2}=\vzero$, $P_{1,3}=\vzero$, and $P_{2,3}=\vzero$. So $P=\begin{bmatrix} 
P_{1,1} & \vzero & \vzero \\ \vzero & P_{2,2} & \vzero \\ P_{3,1} & P_{3,2} & 
P_{3,3}\end{bmatrix}$, where $P_{1,1}$, $P_{2,2}$, $P_{3,3}$ are invertible. 
Then consider the action of such $P$ on the first $m$ frontal slices of $\tilde\tA$. 
The first $m$ frontal slices of $\tilde\tA$ are of the form 
$\begin{bmatrix}
\vzero & A_i & \vzero \\
-A_i^t & \vzero & \vzero \\
\vzero & \vzero & \vzero 
\end{bmatrix}$, where $A_i$ is of size $\ell\times n$. 
Then we have 
$$
\begin{bmatrix} 
P_{1,1}^t & \vzero & P_{3,1}^t \\ 
\vzero & P_{2,2}^t & P_{3,2}^t \\
\vzero & \vzero & P_{3,3}^t
\end{bmatrix}
\begin{bmatrix}
\vzero & A_i & \vzero \\
-A_i^t & \vzero & \vzero \\
\vzero & \vzero & \vzero 
\end{bmatrix}
\begin{bmatrix} 
P_{1,1} & \vzero & \vzero \\ 
\vzero & P_{2,2} & \vzero \\ 
P_{3,1} & P_{3,2} & P_{3,3}
\end{bmatrix}
=
\begin{bmatrix}
\vzero & P_{1,1}^tA_iP_{2,2} & \vzero \\
-P_{2,2}^tA_iP_{1,1} & \vzero & \vzero \\
\vzero & \vzero & \vzero
\end{bmatrix}.
$$
From the fact that $Q$ is invertible and $P^t \tilde\tA P = \tilde \tB^Q$, by 
considering the $(1,2)$ block, we find that 
every frontal slice of 
$P_{11}^t\tA P_{22}$ lies in $\langle \vB\rangle$ 
(since the gadget does not affect the block-(1,2) position), 
which gives an isomorphism of tensors, as desired.

\paragraph{For the only if direction,} suppose $\tA$ and $\tB$ are isomorphic as 3-tensors, 
that is, $P^t\tA 
Q=\tB^R$, for some $P\in\GL(\ell, \F)$, $Q\in \GL(n, \F)$, and $R\in \GL(m, \F)$. 

We show that there exist $U \in \GL(6n+3, \F)$ and 
$V \in \GL(\ell(2n+1) + n(4n+2), \F)$ 
such that setting 
\[
\begin{array}{rclcl}
\tilde Q & =& \diag(P, Q, U) & \in & \GL(\ell+7n+3, \F) \\
\tilde R & = & \diag(R, V) & \in & \GL(m+\ell(2n+1)+n(4n+2),\F),
\end{array}
\]
 we have 
\[
\tilde Q^t r(\tA) \tilde Q=r(\tB)^{\tilde R},
\]
which will demonstrate that $r(\tA)$ and $r(\tB)$ are pseudo-isometric. 

Since we are claiming that $\tilde R = \diag(R,V) \in \GL(m,\F) \times 
\GL(\ell(2n+1) + n(4n+2), \F)$ works, and $\tilde R$ is block-diagonal, it 
suffices to consider the first $m$ frontal slices separately from the remaining 
slices. For the first $m$ frontal slices, we have:
\[
\tilde Q^t \tilde A_i \tilde Q
=
\begin{bmatrix}
P^t & \vzero & \vzero \\
\vzero & Q^t & \vzero \\
\vzero & \vzero & U^t 
\end{bmatrix}
\begin{bmatrix}
\vzero & A_i & \vzero \\
-A_i^t & \vzero & \vzero \\
\vzero & \vzero & \vzero 
\end{bmatrix}
\begin{bmatrix}
P & \vzero & \vzero \\
\vzero & Q & \vzero \\
\vzero & \vzero & U
\end{bmatrix}
=
\begin{bmatrix}
\vzero & P^t A_i Q & \vzero \\
-Q^t A_i^t P&\vzero & \vzero \\
\vzero & \vzero & \vzero 
\end{bmatrix}.
\]
It follows from the fact that $P^t \tA Q = \tB^R$ that the first $m$ frontal slices of $\tilde 
Q^t r(\tA) \tilde Q$ and of $r(\tB)^{\tilde R}$ are 
the same.

We now consider the remaining frontal slices separately. Towards that end, 
let $\tilde \tA'\in\T((\ell+7n+3)\times(\ell+7n+3)\times (\ell(2n+1)+n(4n+2)), 
\F)$ be the 3-way array obtained by removing the first $m$ frontal slices from 
$\tilde \tA$. That is, the $i$th frontal slice of $\tilde \tA'$ is the $(m+i)$th 
frontal slice of $\tilde \tA$. Similarly construct $\tilde\tB'$ from $\tilde\tB$. 
We are left to show that $\tilde\tA'$ and $\tilde\tB'$ are pseudo-isometric under 
some $\tilde Q=\diag(P, Q, U)$ and $V$. Note that $P$ and 
$Q$ are from the isomorphism between $\tA$ and $\tB$, while $U$ and $V$ are what we still need to design.

We first note that both $\tilde\tA'$ and $\tilde \tB'$ can be viewed as a block 3-way array of size 
$4\times 4\times 2$, whose two frontal slices are the block matrices 
\[\begin{bmatrix}
\tzero & \tzero & \tE & \tzero \\
\tzero & \tzero & \tzero & \tzero \\
-\tE & \tzero & \tzero & \tzero \\
\tzero & \tzero & \tzero & \tzero 
\end{bmatrix} 
\qquad\text{ and }\qquad 
\begin{bmatrix}
\tzero & \tzero & \tzero & \tzero \\
\tzero & \tzero & \tzero & \tF \\
\tzero & \tzero & \tzero & \tzero \\
\tzero & -\tF & \tzero & \tzero 
\end{bmatrix},
\]
where $\tE$ is of size $\ell\times (2n+1)\times \ell(2n+1)$, and $\tF$ is of size 
$n\times (4n+2)\times n(4n+2)$. 
Although these are already identical in $\tA',\tB'$, the issue here is that $P$ 
and $Q$ may alter the slices of $\tilde \tA'$ when they act on $\tA$, so we need a 
way to ``undo'' this action to bring it back to the same slices in $\tB'$.

We now claim that we may further handle these two block slices---the ``$E$'' slices and the ``$F$''-slices---separately, that is, that we may take $U = \diag(U_1, U_2)$ and $V = \diag(V_1, V_2)$ where $U_1\in\GL(2n+1, \F)$, $U_2\in \GL(4n+2, \F)$, $V_1\in\GL(\ell(2n+1), \F)$, and $V_2\in \GL(n(4n+2), \F)$. 

To handle $\tE$, first note that we have
\[
\begin{bmatrix}
P^t \\
 & R^t \\
  & & U_1^t \\
  & & & U_2^t
\end{bmatrix}
\begin{bmatrix}
\vzero & \vzero & E & \vzero \\
\vzero & \vzero & \vzero & \vzero \\
-E^t & \vzero & \vzero & \vzero \\
\vzero & \vzero & \vzero & \vzero
\end{bmatrix}
\begin{bmatrix}
P \\
 & R \\
  & & U_1 \\
  & & & U_2
\end{bmatrix}
=
\begin{bmatrix}
\vzero & \vzero & P^t E U_1 & \vzero \\
\vzero & \vzero & \vzero & \vzero \\
-U_1^t E^t P & \vzero & \vzero & \vzero \\
\vzero & \vzero & \vzero & \vzero 
\end{bmatrix},
\]
where $E \in M(\ell \times(2n+1), \F)$.

Now we examine the lateral slices of $\tE$. 
The $i$th lateral 
slice of $\tE$ 
(up to a suitable permutation)
is 
$$L_i=\begin{bmatrix}\vzero & \dots & \vzero & I_{\ell} & \vzero & \dots & 
\vzero\end{bmatrix},$$ where each $\vzero$ is of size $\ell\times \ell$, 
$I_{\ell}$ is the 
$i$th block, and there are $2n+1$ block matrices in total. The action of $P$ on 
$L_i$ is by left multiplication. So it sends 
$L_i$ to $P^t L_i = \begin{bmatrix}\vzero & \dots & \vzero & P^t & \vzero & \dots & 
\vzero\end{bmatrix}$. 
If we set $U_1$ to be the identity and $V_1 = \diag(P^t, \dotsc, P^t)$, where there are $(2n+1)$ copies of $P^t$ on the diagonal, then we have $L_i V_1 = P^t L_i$, and thus $P^t\tE 
U_1=\tE^{V_1}$. 

It is easy to check that $\tF$ can be handled in the same way, where now $R, U_2, V_2$ play the roles that $P, U_1, V_1$ played before, respectively. This produces the 
desired $U_1$, $U_2$, $V_1$, and $V_2$, and concludes the proof. 
\end{proof}

\begin{corollary} \label{cor:AltSymMatSpIsom}
\ThreeTIlong reduces to \SymMatSpIsomlong.
\end{corollary}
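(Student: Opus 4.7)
The plan is to run essentially the same reduction as in \Prop{prop:3-tensor_isometry}, but with every alternating block replaced by its symmetric analogue. Concretely, given $\tA \in \T(\ell \times n \times m, \F)$ with frontal slices $A_1,\dotsc,A_m$, I first define
\[
A_i^{S} \; := \; \begin{bmatrix} \vzero & A_i \\ A_i^t & \vzero \end{bmatrix} \;\in\; \M(\ell+n,\F),
\]
which is symmetric (rather than alternating). Then I build $r'(\tA) = (\tilde A_1, \dotsc, \tilde A_s) \in \Sym(\ell + 7n + 3, \F)^s$ (using obvious notation for symmetric matrices) by the same recipe as in \Prop{prop:3-tensor_isometry}, except that each elementary alternating matrix $E_{p,q} - E_{q,p}$ appearing in the gadget is replaced by the elementary symmetric matrix $E_{p,q} + E_{q,p}$ (and the $-A_i^t$ block becomes $A_i^t$). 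This produces a symmetric matrix tuple, and the reduction itself is still a polynomial-time kernel reduction.

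The first thing to check is that the rank analysis of the lateral slices of $\tilde \tA$ carries over. This is essentially automatic, because the key facts used about the elementary alternating matrices in \Prop{prop:3-tensor_isometry} were only that $E_{p,q} - E_{q,p}$ has rank $2$ for $p \neq q$ and that stacking $2n+1$ or $4n+2$ such matrices in a coordinate-disjoint pattern produces, up to a column permutation, an identity block. The same is true for the elementary symmetric matrices $E_{p,q} + E_{q,p}$. Hence the three rank ranges (roughly $2n+1$ to $3n+1$, $4n+2$ to $5n+2$, and $\leq n$) for the lateral slices of $\tilde\tA$, and the corresponding ranks of their linear combinations, are preserved verbatim.

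For the ``if'' direction, assuming $P^t \tilde\tA P = \tilde\tB^Q$ for some invertible $P$ and $Q$, the same lateral-slice rank bounds force $P$ into the block lower-triangular form $P = \begin{bmatrix} P_{1,1} & 0 & 0 \\ 0 & P_{2,2} & 0 \\ P_{3,1} & P_{3,2} & P_{3,3} \end{bmatrix}$ with $P_{1,1}, P_{2,2}, P_{3,3}$ invertible. Reading off the $(1,2)$-block of $P^t \tilde A_i P$ for $1 \leq i \leq m$ now gives $P_{1,1}^t A_i P_{2,2}$ (no sign change is introduced, because the symmetric version of $A_i^S$ still has $A_i$ in the upper-right block), and together with the block-triangular structure of $Q$ acting on the first $m$ slices this yields $P_{1,1}^t \tA P_{2,2} = \tB^R$ for some $R \in \GL(m,\F)$, proving $\tA \cong \tB$ as 3-tensors. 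For the ``only if'' direction, given a 3-tensor isomorphism $(P,Q,R)$ between $\tA$ and $\tB$, I build $\tilde Q = \diag(P,Q,U)$ and $\tilde R = \diag(R,V)$ exactly as in \Prop{prop:3-tensor_isometry}; since the construction of $U$ and $V$ there only used identity-type blocks and the transformations $P, Q, R$, dropping the minus signs in $A_i^S$ does not affect the verification that $\tilde Q^t r'(\tA) \tilde Q = r'(\tB)^{\tilde R}$.

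The only mildly delicate point is characteristic $2$, where symmetric matrices include alternating ones and the notion ``symmetric matrix space'' is a priori weaker. However, in characteristic $2$ one has $E_{p,q} + E_{q,p} = E_{p,q} - E_{q,p}$, so the gadget and rank analysis coincide with the alternating case and no obstruction arises. I expect the main obstacle to a completely smooth write-up to be bookkeeping with the absence of minus signs in the $(2,1)$-block of $A_i^S$ while confirming that the block-lower-triangular constraint on $P$ is still obtained; this is precisely where I would focus the verification.
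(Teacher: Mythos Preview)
Your proposal is correct and matches the paper's own proof essentially verbatim: the paper simply says to replace $A_i^{\Lambda}$ by $A_i^s = \begin{bmatrix}\vzero & A_i \\ A_i^t & \vzero\end{bmatrix}$ and the elementary alternating matrices by elementary symmetric matrices, with the rest going through \emph{mutatis mutandis}. You have added more detail than the paper (including the characteristic~$2$ remark), but the approach is identical.
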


\begin{proof}
In the proof of Proposition~\ref{prop:3-tensor_isometry}, we can easily replace $A_i^{\Lambda}$ with $A_i^s = \begin{bmatrix}\vzero & A_i \\ A_i^t & \vzero  \end{bmatrix}$, and the elementary alternating matrices with the elementary symmetric matrices, and the resulting proof goes through \emph{mutatis mutandis}. 
\end{proof}

Finally, we show how to reduce to \GpIlong for matrix groups. We begin with a lemma that we also need for the search-to-decision reduction below. We believe this lemma to be classical, but have not found a reference stating it in quite the form we need.

\begin{lemma}[Constructive version of Baer's correspondence for matrix groups] \label{lem:baer_matrix}
Let $p$ be an odd prime. Over the finite field $\F=\F_{p^e}$, \AltMatSpIsomlong is 
equivalent to \GpIlong for matrix groups over $\F$ that are $p$-groups of class $2$ and exponent $p$. More precisely, there are functions computable in time $\poly(n,m,\log|\F|)$:
\begin{itemize}
\item $G\colon \Lambda(n,\F)^m \to \M(n+m+1,\F)^{n+m}$ and
\item $\text{Alt} \colon \M(n,\F)^m \to \Lambda(m, \F)^{O(m^2)}$ 
\end{itemize}
such that: (1) for an alternating bilinear map $\vA$, the group generated by $G(\vA)$ is the Baer group corresponding to $\vA$, (2) $G$ and $\text{Alt}$ are mutually inverse, in the sense that the group generated by $G(\text{Alt}(M_1, \dotsc, M_m))$ is isomorphic to the group generated by $M_1, \dotsc, M_m$, and conversely $\text{Alt}(G(\vA))$ is pseudo-isometric to $\vA$.
\end{lemma}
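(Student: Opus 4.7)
The plan is to realize both directions of Baer's correspondence via explicit matrix constructions, exploiting that for odd $p$ the class-$2$ exponent-$p$ Baer group of an alternating bilinear map embeds faithfully in $\GL(n+m+1,\F)$ as upper block-unitriangular matrices.

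\textbf{Forward direction ($G$).} Given $\vA=(A_1,\dots,A_m)\in\Lambda(n,\F)^m$, for $(v,w)\in\F^n\times\F^m$ I would set
\[
g(v,w)\;=\;\begin{bmatrix} 1 & v^t & w^t \\ \vzero & I_n & N(v) \\ \vzero & \vzero & I_m \end{bmatrix}\in\GL(n+m+1,\F),
\]
where $N(v)\in\M(n\times m,\F)$ has $(i,k)$-entry $(A_k v)_i$. A direct block-matrix calculation yields $g(v,w)\,g(v',w')=g\bigl(v+v',\,w+w'+(v^t A_k v')_{k\in[m]}\bigr)$, so these matrices form a subgroup of $\GL(n+m+1,\F)$ isomorphic to the Baer group of $\vA$. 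Exponent $p$ follows from $g(v,w)^p=g(pv,\,pw+\binom{p}{2}(v^tA_kv)_k)=I$, using $v^tA_kv=0$ by alternation; class $2$ follows since every $g(0,w)$ is central. I take $G(\vA)$ to be the $n+m$ matrices $\{g(e_i,0)\}_{i\in[n]}\cup\{g(0,e_k)\}_{k\in[m]}$, computable in $\poly(n,m,\log|\F|)$ time.

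\textbf{Reverse direction ($\mathrm{Alt}$).} Given $M_1,\dots,M_m\in\M(n,\F)$ generating a class-$2$ exponent-$p$ group $G$, I would: (i) form all commutators $C_{ij}=M_iM_jM_i^{-1}M_j^{-1}$ for $1\le i<j\le m$, all lying in $[G,G]\le Z(G)$; (ii) take matrix logarithms $L_i=\log(M_i)$, a finite truncated series because every $p$-subgroup of $\GL(n,\F_{p^e})$ consists of unipotent matrices (cf. \Prop{prop:lazard_matrices}), so that $\log(C_{ij})=[L_i,L_j]:=L_iL_j-L_jL_i$ by the class-$2$ BCH identity; (iii) viewing $\M(n,\F_{p^e})$ as the $\F_p$-space $\F_p^{n^2 e}$, use $\F_p$-linear algebra to extract an $\F_p$-basis $L^{(1)},\dots,L^{(d)}$ of $\linspan_{\F_p}\{[L_i,L_j]:i<j\}$, with $d\le\binom{m}{2}$; and (iv) expand $[L_i,L_j]=\sum_{k=1}^d\alpha_{ij}^{(k)}L^{(k)}$ with $\alpha_{ij}^{(k)}\in\F_p$, returning the $d$ alternating $m\times m$ matrices $C^{(k)}$ with $(C^{(k)})_{ij}=\alpha_{ij}^{(k)}$ for $i<j$, extended by antisymmetry. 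This places $\mathrm{Alt}(M_1,\dots,M_m)\in\Lambda(m,\F)^{O(m^2)}$, in $\poly(n,m,\log|\F|)$ time.

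\textbf{Mutually inverse property.} By the classical Baer correspondence, the commutator map $G/Z(G)\times G/Z(G)\to[G,G]$ determines a class-$2$ exp-$p$ group up to isomorphism. Hence $G(\mathrm{Alt}(M_1,\dots,M_m))$ and $\langle M_1,\dots,M_m\rangle$ have (by construction) the same commutator structure and so are isomorphic. Conversely, the identity $[g(e_i,0),g(e_j,0)]=g(0,(2(A_k)_{ij})_k)$ from the forward direction shows that $\mathrm{Alt}(G(\vA))$ spans the same $\F_p$-subspace of $m\times m$ alternating matrices as the rescaled tuple $(2A_1,\dots,2A_m)$. This is pseudo-isometric to $\vA$ via $P=I_n$ and an invertible $R\in\GL(m,\F_p)\subseteq\GL(m,\F)$ absorbing both the factor of $2$ (invertible since $p$ is odd) and the change of $\F_p$-basis on $[G(\vA),G(\vA)]$.

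\textbf{Main obstacle.} The principal delicate step is (iii) of $\mathrm{Alt}$: doing $\F_p$-linear algebra inside an $\F_{p^e}$-matrix algebra to enumerate an $\F_p$-basis of the commutator subspace, and verifying that the resulting $C^{(k)}$ have entries in $\F_p\subseteq\F$ as required by the alternating-matrix output. A secondary issue is the bookkeeping needed to certify genuine pseudo-isometry (rather than a weaker equivalence) between $\mathrm{Alt}(G(\vA))$ and $\vA$: one must carefully package the $\F_p$-basis choice on $[G,G]$ together with the factor of $2$ into a single invertible $R\in\GL(m,\F)$ realizing $P^t\vA P=\mathrm{Alt}(G(\vA))^R$.
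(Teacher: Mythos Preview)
Your forward direction (the construction of $G$) is essentially identical to the paper's: your matrices $g(e_i,0)$ and $g(0,e_k)$ are exactly the $\tilde B_i$ and $\tilde C_j$ in the paper's proof, and you give more detail by explicitly verifying the group law, exponent $p$, and class $2$.

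For the reverse direction ($\mathrm{Alt}$) you take a genuinely different route. The paper invokes Luks' algorithm for computing with solvable matrix groups to extract a non-redundant generating set of $[G,G]$ and of $G/[G,G]$, then reads off the commutator bilinear map directly at the group level. You instead pass through the Lazard correspondence: take matrix logarithms, compute Lie brackets, and do $\F_p$-linear algebra inside the Lie algebra. Your approach is more elementary in the sense that it avoids the Luks black box and instead leverages the Lazard machinery the paper already develops in \Prop{prop:lazard_matrices}; it also makes the alternating structure visible as honest Lie brackets. The paper's approach, by contrast, stays entirely at the group level and sidesteps any concern about denominators in the logarithm series.

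There is, however, one step the paper includes that you omit: reducing to a non-redundant generating set of $G/[G,G]$. Your $\mathrm{Alt}$ produces $m\times m$ matrices indexed by the \emph{given} generators $M_1,\dots,M_m$. If some $M_i$ lies in $[G,G]$ (or more generally if the images of the $M_i$ in $G/[G,G]$ are $\F_p$-linearly dependent), then your alternating map carries an extra degenerate direction. The Baer group of that degenerate map then acquires an extra central $\Z_p$-factor and is strictly larger than $\langle M_1,\dots,M_m\rangle$, so the claim ``$G(\mathrm{Alt}(M_1,\dots,M_m))\cong\langle M_1,\dots,M_m\rangle$'' fails on such inputs. (Concretely: feed in the Heisenberg group with a redundant third generator $M_3=M_1$; your $\mathrm{Alt}$ yields a degenerate $3\times3$ map whose Baer group has order $p^4$, not $p^3$.) The fix within your framework is easy---after computing the $L_i$, first reduce them modulo $\linspan_{\F_p}\{[L_i,L_j]\}$ to an $\F_p$-independent set, which is straightforward $\F_p$-linear algebra---but it should be made explicit for property (2) to hold as stated.
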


\begin{proof}
First, let $G$ be a $p$-group of class $2$ and exponent 
$p$ given by $m$ generating matrices of size $n \times n$ over $\F$. Then from 
the generating matrices of $G$, we first compute a generating 
set of $[G,G]$, by just computing all 
the commutators of the given generators. We can then remove those 
redundant elements from this generating set in time 
$\poly(\log|[G,G]|, \log|\F|)$, using Luks' result on computing with 
solvable matrix groups\cite{Luk92}.
 We then compute a set of  
representatives of a non-redundant generating 
set of $G/[G,G]$, again using Luks's aforementioned result. From 
these data we 
can compute an 
alternating bilinear map representing the commutator map of $G$ 
in time $\poly(n,m, \log|F|)$. 

Conversely, let an alternating bilinear map be given by 
$\vA=(A_1, \dots, A_m)\in \Lambda(n, \F)^m$. From $\vA$, for 
$i\in[n]$, construct 
$B_i=[A_1\vec{e_i}, \dots,
A_m\vec{e_i}]\in\M(n\times m, \F)$. That is, the $j$th column of 
$B_i$ 
is the
$i$th
column of $A_j$. Then for $i\in[n]$, construct
$$\tilde B_i
=
\begin{bmatrix}
1 & e_i^t & 0 \\
0 & I_n & B_i \\
0 & 0 &  I_m
\end{bmatrix}\in \GL(1+n+m, \F),
$$
and for $j\in[m]$, construct
$$
\tilde C_j
=
\begin{bmatrix}
1 & 0 & e_j^t \\
0 & I_n & 0 \\
0 & 0 &  I_m
\end{bmatrix}\in \GL(1+n+m, \F).
$$
Let $G(\vA)$ be the matrix group generated by $\tilde B_i$ and 
$\tilde C_j$. Then
it can be verified easily that,
$G(\vA)$ is isomorphic to the Baer group
corresponding to the alternating bilinear map defined by $\vA$. 
In particular, $[G, G]\cong \F^m \cong
\Z_p^{em}$ (isomorphism of abelian groups), and $G/[G, 
G]\cong \F^n \cong \Z_p^{en}$. This construction can be done in time 
$\poly(n, m, \log |\F|)$.
\end{proof}

\begin{corollary} \label{cor:pgp}
Let $p$ be an odd prime. \ThreeTIlong over $\F=\F_{p^e}$ reduces to \GpIlong for $p$-groups of class 2 and exponent $p$ given by matrices over $\F$, in time $\poly(n, \log |\F|)$ (where $n$ is the max of the dimensions of the 3-tensor).
\end{corollary}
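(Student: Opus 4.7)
The plan is to chain together the two main ingredients established earlier in the excerpt: \Prop{prop:3-tensor_isometry}, which reduces \ThreeTI to \AltMatSpIsomlong over an arbitrary field, and \Lem{lem:baer_matrix}, which gives a constructive version of Baer's correspondence taking an alternating matrix tuple over $\F_{p^e}$ to a matrix generating set for a $p$-group of class $2$ and exponent $p$ (for $p$ odd).

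First, given $3$-way arrays $\tA, \tB$ over $\F = \F_{p^e}$ of dimensions at most $n$ in each direction, I would apply \Obs{obs:nondeg} to assume without loss of generality that $\tA, \tB$ are non-degenerate, and then apply the gadget construction $r$ of \Prop{prop:3-tensor_isometry}. This produces alternating matrix tuples $r(\tA), r(\tB) \in \Lambda(N, \F)^M$ for some $N = O(n)$ and $M = O(n^2)$, with the property that $\tA$ and $\tB$ are isomorphic as $3$-tensors if and only if the spaces $\langle r(\tA) \rangle$ and $\langle r(\tB) \rangle$ are isometric as alternating matrix spaces. The construction runs in time $\poly(n, \log |\F|)$.

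Next, I would feed the alternating matrix tuples $r(\tA)$ and $r(\tB)$ into the function $G(\cdot)$ from \Lem{lem:baer_matrix}, producing two matrix generating sets over $\F$ of size $\poly(n)$, each consisting of matrices of size $\poly(n) \times \poly(n)$, whose generated groups are precisely the Baer groups $\mathcal{G}_{\tA}, \mathcal{G}_{\tB}$ associated to $r(\tA)$ and $r(\tB)$. These groups are $p$-groups of class $2$ and exponent $p$ (here we use $p$ odd). By the isomorphism part of Baer's correspondence (packaged into \Lem{lem:baer_matrix}), $\mathcal{G}_{\tA} \cong \mathcal{G}_{\tB}$ as abstract groups if and only if the spanning alternating matrix spaces $\langle r(\tA) \rangle$ and $\langle r(\tB) \rangle$ are isometric. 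Combining this with the previous step yields $\tA \cong \tB$ as $3$-tensors if and only if $\mathcal{G}_{\tA} \cong \mathcal{G}_{\tB}$, which is the desired reduction.

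The running time is clearly $\poly(n, \log |\F|)$, since each of the two component reductions runs within that bound and the intermediate sizes are $\poly(n)$. There is no real obstacle here: the only subtle point is ensuring compatibility of the two reductions, namely that the output of \Prop{prop:3-tensor_isometry} is an \emph{alternating} matrix tuple (not merely a matrix tuple) over $\F_{p^e}$ with $p$ odd, so that \Lem{lem:baer_matrix} applies verbatim. Both conditions are immediate from the constructions. Hence the claimed reduction holds.
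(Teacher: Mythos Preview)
Your proposal is correct and follows exactly the same approach as the paper: chain \Prop{prop:3-tensor_isometry} with the function $G$ from \Lem{lem:baer_matrix}. The paper's proof is simply a more terse version of what you wrote.
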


\begin{proof}
Combine Proposition~\ref{prop:3-tensor_isometry} with Lemma~\ref{lem:baer_matrix}. Note that for this direction of the reduction, we only need the function $G$ from Lemma~\ref{lem:baer_matrix}, which can be computed in time $\poly(n,\log p)$.
\end{proof}

\subsection{Search to decision reduction for $p$-\GpIlong and \AltMatSpIsomlong}

\begin{thmsearch}
Given an oracle deciding \AltMatSpIsomlong, there is a 
$q^{O(n)}\cdot n!=q^{\tilde{O}(n)}$-time algorithm to find an 
isometry between two alternating matrix spaces $\cA, \cB \in 
\Lambda(n, \F_q)$, if it exists, using at most $q^{O(n)}$ 
oracle queries each of size at most $O(n^2)$. 
\end{thmsearch}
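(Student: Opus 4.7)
The plan is to build a linear-algebraic analogue of the classical search-to-decision reduction for \GI, with a ``set individualization gadget''---a generalization of \Lem{lem:gadget}---playing the role of vertex coloring. Concretely, I would construct, in $\poly(n, \log q)$ time, a function $\Phi$ taking as input an alternating matrix space $\cA \leq \Lambda(n, \F_q)$ together with a set $S = \{u_1, \dotsc, u_k\}$ of projectively distinct vectors in $\F_q^n$, and producing $\Phi(\cA, S) \leq \Lambda(n', \F_q)$ with $n' = O(n^2)$, such that $\Phi(\cA, S) \cong \Phi(\cB, T)$ (under isometry) if and only if there is an isometry $P$ of $\cA$ onto $\cB$ sending $S$ projectively bijectively onto $T$ (i.e., $\{P u_1, \dotsc, P u_k\}$ coincides with $T$ as sets of lines). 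This is obtained by first changing basis to bring the $u_i$ to the first $k$ standard basis vectors, and then attaching identical high-rank ``tag'' slices at each of those distinguished positions, as in \Lem{lem:gadget}. Because the tags at different positions are indistinguishable to an isometry of the augmented space, the induced action on $S$ is forced to be a projective permutation rather than the identity.

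Given the gadget, the algorithm proceeds in two phases. \emph{Phase 1 (oracle queries to determine an unordered image set $T$):} For $i = 1, \dotsc, n$, I scan all $v \in \F_q^n$ and query the decision oracle on the pair $(\Phi(\cA, \{\vec{e_1}, \dotsc, \vec{e_i}\}), \Phi(\cB, T_{i-1} \cup \{v\}))$; I set $T_i := T_{i-1} \cup \{v\}$ for any $v$ at which the oracle returns yes. The defining property of $\Phi$ ensures that whenever $\cA$ and $\cB$ are isometric, such a $v$ exists at every step, and the final $T := T_n$ consists of $n$ projectively distinct vectors arising (up to scalars) as $\{P \vec{e_1}, \dotsc, P \vec{e_n}\}$ for some isometry $P$, and is therefore automatically a basis of $\F_q^n$. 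The total number of oracle queries is at most $n \cdot q^n = q^{O(n)}$, each on an alternating matrix space of side length $O(n^2)$. \emph{Phase 2 (enumeration to fix the permutation and scalars):} The remaining ambiguity is a bijection $\sigma \colon \{1,\dotsc,n\} \to \{1,\dotsc,n\}$ together with scalars $\alpha_1, \dotsc, \alpha_n \in \F_q^*$, giving $n! \cdot (q-1)^n = q^{O(n)} \cdot n!$ candidates $P = [\alpha_1 v_{\sigma(1)} \mid \dotsb \mid \alpha_n v_{\sigma(n)}]$. For each candidate I verify $P^t \cA P = \cB$ directly in $\poly(n, \log q)$ time, for a total cost of $q^{O(n)} \cdot n!$ with no additional oracle queries.

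The main obstacle is designing the gadget so that (i) the individualized vectors in $S$ are indistinguishable from each other from the perspective of isometries of the augmented space, yet (ii) they are clearly distinguished from non-individualized vectors by the rank of their associated tags. The construction in \Lem{lem:gadget} already handles the special case $S = \{\vec{e_1}, \dotsc, \vec{e_n}\}$, forcing isometries of the augmented space to be monomial matrices; the present generalization requires arbitrary linearly independent $S$ (handled by a preliminary change of basis) and partial individualization $|S| < n$ (handled by leaving non-individualized coordinates untagged). Verifying that the gadget slices remain alternating, that the side-length blow-up is $O(n^2)$, and that the rank-based argument of \Lem{lem:gadget} carries over to this slightly enriched setting are the essential technicalities, but they follow the same template as the proofs of \Prop{prop:MonCodeEq} and \Prop{prop:3-tensor_isometry}.
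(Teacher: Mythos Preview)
Your overall strategy—individualize a growing set of positions via a linear-algebraic coloring gadget, use the decision oracle to pin down images one at a time, then enumerate monomial matrices—is exactly the paper's approach. The difference lies in the gadget, and there is a genuine gap in yours.

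You propose tagging only the positions in $S$ and \emph{leaving the non-individualized coordinates untagged}. With that construction, the rank analysis (as in \Lem{lem:gadget}) shows that any isometry $\tilde P$ of the augmented spaces has its first $k$ \emph{rows} supported on the first $k$ columns, with the $k\times k$ block monomial; but it does \emph{not} force the $(n-k)\times k$ block $P_{2,1}$ to vanish, nor does it force the $(n-k)\times(\text{tag})$ block $P_{2,3}$ to vanish. Consequently the upper-left $n\times n$ block $Q$ of $\tilde P$ need not be invertible, and even when it is, $Q e_j$ for $j\in[k]$ can have nonzero entries in rows $k+1,\dots,n$, so $Q$ does not send $S$ to $T$ as sets of lines. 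The ``only if'' direction of your claimed gadget property is therefore unproved, and without it the algorithm can commit to a $T_i$ that corresponds to no isometry at all; the final enumeration in Phase~2 would then find nothing.

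The paper closes this gap by using a \emph{combined} gadget: the positions in $S$ get high-rank tags (rank~$2n$, forcing the monomial structure as you describe), \emph{and} the remaining positions $i+1,\dots,n$ get a second layer of lower-rank tags (rank~$n$). This second layer is precisely the mechanism of \Prop{prop:3-tensor_isometry}, and it is what forces $P_{2,1}=P_{2,3}=0$, yielding the block-diagonal form $\diag(P_{1,1},P_{2,2})$ that cleanly restricts to an isometry of the original spaces. Because this second layer fixes a specific complement of $\langle S\rangle$, the paper's algorithm must also guess that complement at each step (at cost $q^{O(n)}$), a step your proposal omits. Once you add the second tag layer and the complement guess, your argument becomes the paper's.
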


In particular, if \AltMatSpIsomlong can be decided in 
$q^{\tilde{O}(\sqrt{n})}$ time, then isometries between such 
spaces can be found in $q^{\tilde{O}(n)}$ time. See Question~\ref{question:search_decision}.

\begin{proof}
As before, we first present the gadget construction, which is a combination of the 
two gadgets introduced in 
Sections~\ref{sec:graphiso_alternating} 
and~\ref{sec:3tensor_alternating}, respectively. Then based on this gadget, we 
present the search-to-decision reduction. 

\paragraph{Gadget construction.} 
Let $\vA=(A_1, \dots, A_m)$ be an ordered 
linear basis of $\cA$, 
and let  
$\tA \in \M(n\times n\times m, \F_q)$ be the 3-way array constructed from 
$\vA$, so we can write
$$
\tA=\begin{bmatrix}
\vzero & a_{1,2} & a_{1,3} & \dots & a_{1,n} \\
-a_{1,2} & \vzero & a_{2,3} & \dots & a_{2,n} \\
-a_{1,3} & -a_{2,3} & \vzero & \dots & a_{3,n}\\
\vdots & \ddots & \ddots & \ddots & \vdots \\
-a_{1,n} & -a_{2,n} & -a_{3,n} & \dots & \vzero
\end{bmatrix},
$$
where $a_{i,j} \in \F^m$, $1\leq i<j\leq n$ thought of as a vector coming out of the page.

We first consider a $3$-tensor $\tilde\tA_i$ constructed from $\tA$, for any 
$1\leq i\leq n-1$, as $\tilde \tA_i=$
$$
{\footnotesize
\setlength\arraycolsep{1pt}
\left[\begin{array}{cccc;{2pt/2pt}ccc;{2pt/2pt}ccc;{2pt/2pt}ccc;{2pt/2pt}ccc;{2pt/2pt}ccc}
\vzero & a_{1,2} & \dots & a_{1,i} & a_{1, i+1} & \dots & a_{1,n} & -e_{1,1} & 
\dots & -e_{1,2n} & \vzero & \dots & \vzero & \vzero & \dots & \vzero & \vzero & 
\dots & \vzero \\
-a_{1,2} & \vzero & \dots & a_{2,i} & a_{2, i+1} & \dots & a_{2,n} & \vzero & 
\dots & \vzero & -e_{2,1} & \dots & -e_{2,2n} & \vzero & \dots & \vzero & \vzero & 
\dots & \vzero \\
\vdots & \vdots & \ddots & \vdots & \vdots & \ddots & \vdots & \vdots & \ddots & 
\vdots & \vdots & \ddots & \vdots & \vdots & \ddots & \vdots & \vdots & \ddots & 
\vdots \\
-a_{1,i} & -a_{2,i} & \dots & \vzero & a_{i,i+1} & \dots & a_{i, n} & \vzero & 
\dots & \vzero& \vzero & \dots & \vzero & -e_{i,1} & \dots & -e_{i, 2n} & \vzero & 
\dots & \vzero \\ \hdashline[2pt/2pt]
-a_{1,i+1} & -a_{2,i+1} & \dots & -a_{i,i+1} & \vzero & \dots & a_{i+1, n} & 
\vzero & \dots & \vzero& \vzero & \dots & \vzero& \vzero & \dots & \vzero & 
-f_{1,1} & \dots & -f_{1,n} \\
\vdots & \vdots & \ddots & \vdots & \vdots & \ddots & \vdots & \vdots & \ddots & 
\vdots & \vdots & \ddots & \vdots & \vdots & \ddots & \vdots & \vdots & \ddots & 
\vdots \\
-a_{1,n} & -a_{2,n} & \dots & -a_{i,n} & -a_{i+1, n} & \dots & \vzero& \vzero & 
\dots & \vzero& \vzero & \dots & \vzero & \vzero & \dots & \vzero& -f_{n-i, 1} & 
\dots & -f_{n-i, n} \\ \hdashline[2pt/2pt]
e_{1,1} & \vzero & \dots & \vzero& \vzero & \dots & \vzero& \vzero & \dots & 
\vzero& \vzero & \dots & \vzero& \vzero & \dots & \vzero& \vzero & \dots & \vzero\\
\vdots & \vdots & \ddots & \vdots & \vdots & \ddots & \vdots & \vdots & \ddots & 
\vdots & \vdots & \ddots & \vdots & \vdots & \ddots & \vdots & \vdots & \ddots & 
\vdots \\
e_{1,2n} & \vzero & \dots & \vzero& \vzero & \dots & \vzero& \vzero & \dots & 
\vzero& \vzero & \dots & \vzero& \vzero & \dots & \vzero& \vzero & \dots & \vzero\\ \hdashline[2pt/2pt]
\vzero & e_{2,1} & \dots & \vzero & \vzero & \dots & \vzero& \vzero & \dots & 
\vzero& \vzero & \dots & \vzero& \vzero & \dots & \vzero& \vzero & \dots & \vzero\\
\vdots & \vdots & \ddots & \vdots & \vdots & \ddots & \vdots & \vdots & \ddots & 
\vdots & \vdots & \ddots & \vdots & \vdots & \ddots & \vdots & \vdots & \ddots & 
\vdots \\
\vzero & e_{2,2n} & \dots & \vzero & \vzero & \dots & \vzero& \vzero & \dots & 
\vzero& \vzero & \dots & \vzero& \vzero & \dots & \vzero& \vzero & \dots & \vzero\\ \hdashline[2pt/2pt]
\vzero & \vzero & \dots & e_{i,1} & \vzero & \dots & \vzero& \vzero & \dots & 
\vzero& \vzero & \dots & \vzero& \vzero & \dots & \vzero& \vzero & \dots & \vzero\\
\vdots & \vdots & \ddots & \vdots & \vdots & \ddots & \vdots & \vdots & \ddots & 
\vdots & \vdots & \ddots & \vdots & \vdots & \ddots & \vdots & \vdots & \ddots & 
\vdots \\
\vzero & \vzero & \dots & e_{i,2n} & \vzero & \dots & \vzero& \vzero & \dots & 
\vzero& \vzero & \dots & \vzero& \vzero & \dots & \vzero& \vzero & \dots & \vzero\\ \hdashline[2pt/2pt]
\vzero & \vzero & \dots & \vzero & f_{1,1} & \dots & f_{n-i,1}& \vzero & \dots & 
\vzero& \vzero & \dots & \vzero& \vzero & \dots & \vzero& \vzero & \dots & \vzero\\
\vdots & \vdots & \ddots & \vdots & \vdots & \ddots & \vdots & \vdots & \ddots & 
\vdots & \vdots & \ddots & \vdots & \vdots & \ddots & \vdots & \vdots & \ddots & 
\vdots \\
\vzero & \vzero & \dots & \vzero & f_{1,n} & \dots & f_{n-i,n}& \vzero & \dots & 
\vzero& \vzero & \dots & \vzero& \vzero & \dots & \vzero& \vzero & \dots & \vzero
\end{array}\right].
}
$$

Consider the lateral slices of $\tilde\tA_i$. 
\begin{itemize}
\item The first $i$ lateral slices have rank in $[2n, 3n)$. Note that the rank is \emph{strictly} less than $3n$ because some tube fibers (coming out of the page) are $\vzero$ in the upper-left 
$n\times n$ sub-array.
\item The next $n-i$ lateral slices have rank in $[n, 2n)$. 
\item The remaining $2ni+n$ lateral slices have rank in $[1,n)$ (since $i \geq 1$.)
\end{itemize}

By combining the arguments for the two gadgets introduced in 
Sections~\ref{sec:graphiso_alternating} 
and~\ref{sec:3tensor_alternating} respectively, we have the following. From \Sec{sec:3tensor_alternating}, for 
invertible matrices $P$ and $Q$ to satisfy $P^t\tilde\tA_i 
P=\tilde\tB_i^Q$, $P$ has to be of the form
$
\begin{bmatrix}
P_{1,1} & \vzero & \vzero \\
\vzero & P_{2,2} & \vzero \\
P_{3,1} & P_{3,2} & P_{3,3}
\end{bmatrix}$,
where $P_{1,1}$ is of size $i\times i$, $P_{2,2}$ is of size $(n-i)\times (n-i)$, 
and $P_{3,3}$ is of size $(2ni+n)\times (2ni+n)$. Furthermore, from \Sec{sec:graphiso_alternating}, $P_{1,1}$ is a 
monomial matrix. In particular, if such $P$ and $Q$ exist, then it implies that 
$\tA$ and $\tB$ are isometric by a matrix of the form $\begin{bmatrix} P_{1,1} & 
\vzero \\ \vzero & P_{2,2}\end{bmatrix}$ where $P_{1,1}$ is a monomial matrix of 
size $i\times i$. 
Note that the presence of $P_{3,i}$, $i=1,2,3$, does not interfere here, because 
of the argument in the if direction in the proof of 
Proposition~\ref{prop:3-tensor_isometry}.
On the other hand, if $\tA$ and $\tB$ are isometric by a matrix 
of such form, then $\tilde\tA_i$ and $\tilde\tB_i$ are also isometric.

\paragraph{The search-to-decision reduction.} Given these preparations, we now 
present the search-to-decision reduction for 
\AltMatSpIsomlong. Recall that this requires us to use 
the decision oracle $\oracle$ to compute an explicit isometry transformation $P\in 
\GL(n, q)$, if $\cA$ and $\cB$ are indeed isometric. Think of $P$ as sending the 
standard basis $(\vec{e_1}, \dots, \vec{e_n})$ to another basis $(v_1, \dots, 
v_n)$, where $e_i$ and $v_i$ are in $\F_q^n$.

In the first step, we guess $v_1$, the image of $e_1$, and a complement subspace 
of $\langle v_1\rangle$, at the cost of $q^{O(n)}$. For each such guess, 
let $P_1$ be the matrix which sends $e_1 \mapsto v_1$ and sends $\langle e_2, 
\dotsc, e_n \rangle$ to the chosen complementary subspace in some fashion. We 
apply $P_1$ to 
$\tA$, and call the resulting $3$-way array 
$\tA$ in the following. Then construct $\tilde\tA_1$ and $\tilde\tB_1$, and feed 
these two 
instances to the oracle $\oracle$. 
Note that, since $P_{1,1}$ (using notation as above) must be monomial, any 
equivalence between $\tilde \tA_1$ and $\tilde \tB_1$ must preserve our choice of 
$v_1$ up to scale. Thus, clearly, if $\tA$ and $\tB$ are indeed 
isometric and we guess the correct image of $e_1$, then the oracle $\oracle$ will 
return yes (and conversely).

In the second step, we guess $v_2$, the image of $e_2$, and a complement subspace 
of $\langle v_2\rangle$ within $\langle e_2, \dots, e_n\rangle$, at the cost of 
$q^{O(n)}$. 
Note here that the previous step guarantees that there is an isometry respecting 
the direct sum decomposition $\langle v_1 \rangle \oplus \langle e_2, \dotsc, e_n 
\rangle$, so we need only search for a complement of $v_2$ within $\langle e_2, 
\dotsc, e_n \rangle$, and \emph{not} a more general complement of $\langle v_1, 
v_2 \rangle$ in all of $\F_q^n$. This is crucial for the runtime, as at the $n/2$ 
step, the latter strategy would result in searching through $q^{\Theta(n^2)}$ 
possibilities.

For each such guess, we apply the corresponding transformation to 
$\tA$ (and again call the resulting $3$-way array $\tA$). Then construct 
$\tilde\tA_2$ 
and $\tilde\tB_2$, and feed these two instances to the oracle $\oracle$. Clearly, 
if $\cA$ and $\cB$ are indeed isometric and we guess the correct image of $e_2$ 
(and $e_1$ from the previous step), 
then the oracle $\oracle$ will return yes. However, there is a small caveat here, 
namely we may guess some image of $e_2$, such that $\cA$ and $\cB$ are actually 
isometric by some matrix $P$ of the form $\begin{bmatrix} P_{1,1} & \vzero \\ 
\vzero & P_{2,2} \end{bmatrix}$ where $P_{1,1}$ is a monomial matrix of size $2$. 
But this is fine, as it still means that our choices of $\{v_1, v_2\}$ is correct 
as a set up to scaling. So we proceed.

In general, in the $i$th step, we know that $\cA$ and $\cB$ are isometric by some 
$P=\begin{bmatrix} P_{1,1} & \vzero \\ \vzero & P_{2,2} \end{bmatrix}$ where 
$P_{1,1}$ is a monomial matrix of size $(i-1)\times (i-1)$. We guess $v_i$, the 
image of $e_i$ in $\langle e_i, \dots, e_n\rangle$, and a complement subspace of 
$\langle v_i\rangle$ within $\langle e_i, \dots, e_n\rangle$. 
This cost is 
$q^{O(n)}$. For each such guess, we apply the corresponding transformation to 
$\tA$ (and call the resulting $3$-way array $\tA$). Then construct 
$\tilde\tA_i$ 
and $\tilde\tB_i$, and feed these two instances to the oracle $\oracle$. Once we 
guess correctly, we ensure that $\cA$ and $\cB$ are isometric by 
$P=\begin{bmatrix} P_{1,1} & \vzero \\ \vzero & P_{2,2} \end{bmatrix}$ where 
$P_{1,1}$ is a monomial matrix of size $i\times i$.

So after the $(n-1)$th step, we know that $\cA$ and $\cB$ are isometric by a 
monomial transformation. The number of all monomial transformations is by $(q-1)^n\cdot n!\leq q^n\cdot 2^{n\log 
n}=q^{\tilde{O}(n)}$. Therefore we can enumerate all 
monomial transformations and check correspondingly. 

Note that all the instances we feed into the oracle $\oracle$ are of size 
$O(n^2)$. This concludes the proof. 
\end{proof}

\begin{corsearch}
[Search to decision for testing isomorphism of 
a class of 
$p$-groups]
Let $p$ be an odd prime. Given an oracle deciding isomorphism 
of $p$-groups of class 2 and exponent $p$ given by generating 
matrices over $\F_p$
of size $\poly(n)$, there is a $|G|^{O(\log \log |G|)}$-time 
algorithm to find an isomorphism between such groups, using at 
most $\poly(|G|)$ oracle queries each of size at most 
$\poly(n)$.
\end{corsearch}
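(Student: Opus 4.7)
The plan is to combine the search-to-decision reduction for \AltMatSpIsomlong (\Thm{thm:search_decision}) with the constructive matrix-group form of Baer's correspondence (\Lem{lem:baer_matrix}), analogously to how \Cor{cor:pgp} combined \Prop{prop:3-tensor_isometry} with \Lem{lem:baer_matrix}. Given two $p$-groups $G$ and $H$ of class $2$ and exponent $p$, each specified by generating matrices over $\F_p$, the first step is to apply the function $\text{Alt}$ of \Lem{lem:baer_matrix} to each input, producing in polynomial time alternating bilinear maps $\vA_G, \vA_H$ on $\F_p^d$ that encode the commutator maps of $G$ and $H$, where $d = \dim_{\F_p}(G/[G,G]) \le \log_p |G| =: \ell$; along the way, we also retain the explicit isomorphisms from $G$ and $H$ to the corresponding Baer-form matrix groups $G(\vA_G)$ and $G(\vA_H)$.

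Next, invoke \Thm{thm:search_decision} on the pair $(\langle \vA_G \rangle, \langle \vA_H \rangle)$; this finds an explicit isometry (if one exists) in time $p^{\tilde{O}(d)}$ using at most $p^{O(d)}$ oracle queries to an \AltMatSpIsomlong decider, each query being on an alternating matrix space of dimension $O(d^2)$. To simulate each such query using only the given $p$-GpI oracle, apply the function $G(\cdot)$ of \Lem{lem:baer_matrix} to the query instance, converting it into a pair of matrix $p$-groups of class $2$ and exponent $p$ living in $\GL(O(d^4), \F_p) \subseteq \GL(\poly(\ell), \F_p)$, and feed the resulting pair to the given oracle. \Lem{lem:baer_matrix} guarantees that two alternating matrix spaces are isometric if and only if the constructed matrix $p$-groups are isomorphic, so the simulated oracle is correct.

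Finally, convert the isometry returned by \Thm{thm:search_decision} into an explicit isomorphism $G \to H$: by functoriality of the correspondence, a pseudo-isometry between $\vA_G$ and $\vA_H$ lifts directly to a matrix group isomorphism $G(\vA_G) \to G(\vA_H)$, and composing with the two isomorphisms $G \to G(\vA_G)$ and $G(\vA_H) \to H$ retained in step one yields the desired isomorphism $G \to H$. Since $d \le \ell$, the total running time is $p^{\tilde{O}(d)} = p^{O(d \log d)} \le p^{O(\ell \log \ell)} = |G|^{O(\log \ell)} = |G|^{O(\log\log|G|)}$; the number of oracle calls is $p^{O(d)} \le |G|^{O(1)} = \poly(|G|)$, and each oracle query is on matrices of size $\poly(\ell) = \poly(n)$, as required. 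The main technical hurdle will be the explicit book-keeping in steps one and three: one must track the base-change data used when computing $\text{Alt}(G)$ and $\text{Alt}(H)$ so that the pseudo-isometry obtained in matrix-space coordinates can be efficiently pulled back to an isomorphism expressed on the originally given generating matrices; once this is in place, the rest is a routine composition of the existing reductions.
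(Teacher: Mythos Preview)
Your proposal is correct and follows essentially the same approach as the paper's proof: both combine \Thm{thm:search_decision} with the constructive Baer correspondence of \Lem{lem:baer_matrix}, converting the input groups to alternating matrix spaces, simulating each \AltMatSpIsomlong oracle call via the $G(\cdot)$ map and the given $p$-group oracle, and then lifting the resulting pseudo-isometry back to a group isomorphism. Your write-up is in fact more explicit about the parameter accounting than the paper's own proof; the only small slip is the $O(d^4)$ estimate for the oracle-query matrix size (it is actually $O(d^2)$), but this does not affect the stated $\poly(n)$ bound.
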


\begin{proof}
The result follows from 
Theorem~\ref{thm:search_decision} with 
the constructive version of Baer's 
Correspondence in the model of matrix groups over finite 
fields (Lemma~\ref{lem:baer_matrix}). 

In more detail, given Lemma~\ref{lem:baer_matrix} 
we can follow the procedure 
in the proof of Theorem~\ref{thm:search_decision}. For the 
given $p$-groups, we compute their commutator maps. Then 
whenever 
we need to feed the decision oracle, we transform from the 
alternating bilinear map to a generating set of a $p$-group of 
class $2$ and exponent $p$ with this bilinear map as the 
commutator map. After getting the desired pseudo-isometry for 
the alternating bilinear maps, we can easily recover an 
isomorphism between the originally given $p$-groups. This 
concludes the proof.
\end{proof}

\section{Other reductions for the Main Theorem~\ref{thm:main}}\label{sec:reduction_other}

In this section, we present other reductions to finish the proof of 
Theorem~\ref{thm:main}. The reductions here are based on the constructions 
which may be summarized as ``putting the given 3-way array to an appropriate 
corner of a larger 3-way array.'' Such an idea is quite classical in the 
context of matrix problems and wildness \cite{gel-pon}; 
here we use the same idea for problems on 3-way arrays. 

\subsection{From 3-Tensor Isomorphism to Matrix Space Conjugacy}

\begin{proposition}\label{prop:3-tensor_conjugacy}
\ThreeTIlong reduces to \MatSpConjlong. Symbolically,  $U \otimes V \otimes W$  reduces to $V' \otimes V'^* \otimes W$, where $\dim V'  = \dim U + \dim V$. 
\end{proposition}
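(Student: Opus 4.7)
My plan is to use the ``upper-right corner'' embedding. Given a $3$-way array $\tA \in \T(\ell \times n \times m, \F)$ with frontal slices $A_1, \dots, A_m$, I would define
\[
\tilde A_k := \begin{bmatrix} \vzero & A_k \\ \vzero & \vzero \end{bmatrix} \in \M(\ell+n, \F), \qquad \tilde \cA := \langle \tilde A_1, \dots, \tilde A_m \rangle,
\]
and analogously $\tilde \cB$ from $\tB$. This maps into $V' \otimes V'^* \otimes W$ with $\dim V' = \ell + n$ as claimed. By \Obs{obs:nondeg} I may assume $\tA, \tB$ are non-degenerate, and I will show $\tA \cong \tB$ as $3$-tensors iff $\tilde \cA, \tilde \cB$ are conjugate matrix spaces.

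The forward direction is routine: if $(P, Q, R)$ witnesses $\tA \cong \tB$ with $P A_k Q = \sum_{k'} r_{k',k} B_{k'}$ for all $k$, then $\tilde P := \diag(P^{-1}, Q)$ satisfies $\tilde P^{-1} \tilde A_k \tilde P = \begin{bmatrix} \vzero & P A_k Q \\ \vzero & \vzero \end{bmatrix} = \sum_{k'} r_{k',k} \tilde B_{k'}$, so $\tilde P^{-1} \tilde \cA \tilde P = \tilde \cB$. The reverse direction is the crux, and the principal obstacle I expect is to show that any conjugating $P$ must respect the block decomposition, since a priori a conjugating $P$ could mix the upper-left and lower-right blocks.

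The key invariant I would exploit is that the subspace $U := \F^\ell \times 0 \subseteq \F^{\ell+n}$ is intrinsically determined by $\tilde\cA$: non-degeneracy gives $U = \bigcap_{M \in \tilde \cA} \ker M$ (from linear independence of the lateral slices, which says $\cap_k \ker A_k = 0$) and also $U = \sum_{M \in \tilde \cA} \operatorname{Im} M$ (from linear independence of the horizontal slices, which says $\sum_k \operatorname{Im} A_k = \F^\ell$). Since conjugation acts covariantly on kernels, any $P \in \GL(\ell+n, \F)$ with $P^{-1} \tilde \cA P = \tilde \cB$ must send the common kernel of $\tilde \cA$ to that of $\tilde \cB$, giving $P \cdot U = U$, so $P = \begin{bmatrix} P_1 & X \\ \vzero & P_2 \end{bmatrix}$ is block upper-triangular. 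A direct block computation then gives
\[
P^{-1} \tilde A_k P = \begin{bmatrix} \vzero & P_1^{-1} A_k P_2 \\ \vzero & \vzero \end{bmatrix},
\]
so the nuisance off-diagonal block $X$ drops out automatically. The condition $P^{-1} \tilde \cA P = \tilde \cB$ therefore becomes $P_1^{-1} \langle A_1, \dots, A_m \rangle P_2 = \langle B_1, \dots, B_m \rangle$ as subspaces of $\M(\ell \times n, \F)$, which is exactly $\tA \cong \tB$ as $3$-tensors, with the required $R \in \GL(m, \F)$ supplied by the change of basis matching the two spanning sets.
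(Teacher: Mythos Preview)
Your proposal is correct and uses the same construction as the paper. The only difference is presentational: for the reverse direction, the paper writes out the block equation $\tilde P \tilde \vB' = \tilde \vA \tilde P$ directly and reads off $P_{2,1} A_i = 0$ from the $(2,2)$ block, then invokes non-degeneracy of $\vA$ to force $P_{2,1}=0$; you instead package the same non-degeneracy as the intrinsic identification $U = \bigcap_{M\in\tilde\cA}\ker M$ and observe that conjugation must preserve this invariant subspace. Both arguments are equivalent, and your formulation is arguably cleaner since it makes transparent why the off-diagonal block $X$ plays no role.
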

\begin{proof} 
\textbf{The construction.} 
For a 3-way array $\tA \in \T(\ell\times n\times m, \F)$, let $\vA=(A_1, \dots, A_m)\in \M(\ell\times n, \F)^m$ 
be the matrix tuple consisting of frontal slices of $\tA$. Construct $\tilde \vA=(\tilde A_1, \dots, \tilde A_m)\in \M(\ell+n, \F)^m$ from $\vA$, where $\tilde A_i=\begin{bmatrix}
\vzero & A_i \\
\vzero & \vzero 
\end{bmatrix}$. See Figure~\ref{fig:3-tensor_conjugacy}.

\begin{figure}[!htbp]
\[
\xymatrix@R=8pt@C=6pt{
& *{} \ar@{-}'[rrrrr]'[rrrrrdddd]'[dddd]'[] \ar@{-}'[rd]'[rdrrrrr]
&& *{} \ar@{-}'[dd]'[rrrdd]'[rrrddrd] \ar@{-}'[rd] \ar@{}'[rrrdd]|(0.7){\tA} 
&&& *{} \ar@{-}'[rd]'[rddddd]'[dddd] \\
& &*{} 
&& *{} \ar@{-}'[dd] &&&*{} \\
& &&*{} \ar@{-}'[rd]'[rdrrr] &&&*{} 
&& \\
\tilde \tA = & &&&*{} 
&&& *{} 
\\
& *{} \ar@{-}'[rd]'[rdrrrrr] &&&&& *{} && \\ 
& &*{} \ar@{-}'[uuuu] &&&&&*{} &  
}
\]
\caption{ \label{fig:3-tensor_conjugacy} Pictorial representation of the reduction for Proposition~\ref{prop:3-tensor_conjugacy}.}
\end{figure}

Given two non-degenerate 3-way arrays $\tA,\tB$ which we wish to test for isomorphism (we can assume non-degeneracy without loss of generality, see Observation~\ref{obs:nondeg}), we claim that $\tA \cong \tB$ as 3-tensors if and only if the matrix spaces $\langle 
\tilde\vA\rangle$ and $\langle\tilde\vB\rangle$ are conjugate. 

\textbf{For the only if direction,} since $\tA$ and $\tB$ are isomorphic as 3-tensors, 
there exist $P\in \GL(\ell, \F)$, $Q\in\GL(n, \F)$, and 
$R\in\GL(m, \F)$, such that $P\vA Q=\vB^R=(B_1', \dots, B_m')\in\M(\ell\times n, 
\F)^m$. Let 
$\tilde P=\begin{bmatrix} 
P^{-1} & \vzero \\
\vzero & Q
\end{bmatrix}$. Then $\tilde P^{-1}\tilde A_i \tilde P=
\begin{bmatrix}
P & \vzero \\
\vzero & Q^{-1}
\end{bmatrix}\cdot 
\begin{bmatrix}
\vzero & A_i \\
\vzero & \vzero 
\end{bmatrix}\cdot 
\begin{bmatrix}
P^{-1} & \vzero \\
\vzero & Q
\end{bmatrix}
=
\begin{bmatrix}
\vzero & P A_i Q \\
\vzero & \vzero 
\end{bmatrix}
=
\begin{bmatrix}
\vzero & B_i'\\
\vzero & \vzero
\end{bmatrix}
$.
It follows that, $\tilde P^{-1}\tilde\vA \tilde P=\tilde\vB^R$, which just says that 
$\tilde P^{-1} \langle\tilde\vA\rangle \tilde P =\langle\tilde\vB\rangle$.

\textbf{For the if direction,} since $\langle\tilde\vA\rangle$ and $\langle\tilde\vB\rangle$ are conjugate, 
there exist $\tilde P\in\GL(\ell+n, \F)$, and $\tilde R\in\GL(m, \F)$, such that 
$\tilde P^{-1}\tilde\vA \tilde P=\tilde\vB^{\tilde R}$. Write $\tilde\vB^{\tilde R}:=\tilde\vB'=(\tilde B_1', 
\dots, \tilde 
B_m')$, where $\tilde B_i'=\begin{bmatrix}
\vzero & B_i'\\
\vzero & \vzero 
\end{bmatrix}$, $B_i'\in\M(\ell\times n, \F)$. Let $\tilde P=\begin{bmatrix}
P_{1,1} & P_{1,2}\\
P_{2,1} & P_{2,2}
\end{bmatrix}$, where $P_{1,1}\in \M(\ell, \F)$. Then as $\tilde\vA 
\tilde P=\tilde P\tilde\vB'$, we have for every $i\in[m]$,
\begin{equation}\label{eq:3TI_conjugacy}
\begin{bmatrix}
P_{1,1} & P_{1,2}\\
P_{2,1} & P_{2,2}
\end{bmatrix}
\begin{bmatrix}
\vzero & A_i\\
\vzero & \vzero 
\end{bmatrix}
=
\begin{bmatrix}
\vzero & P_{1,1} A_i \\
\vzero & P_{2,1} A_i
\end{bmatrix}
=
\begin{bmatrix}
B_i'P_{2,1} & B_i'P_{2,2} \\
\vzero & \vzero 
\end{bmatrix}
=
\begin{bmatrix}
\vzero & B_i'\\
\vzero & \vzero 
\end{bmatrix}
\begin{bmatrix}
P_{1,1} & P_{1,2}\\
P_{2,1} & P_{2,2}
\end{bmatrix}.
\end{equation} This in particular implies that for every $i\in[m]$, 
$P_{2,1}A_i=\vzero$. In 
other words, every row of $P_{2,1}$ lies in the common left kernel of $A_i$ with 
$i\in[m]$. Since $\vA$ is non-degenerate, $P_{2,1}$ must be the zero matrix. It 
follows that $\tilde P=\begin{bmatrix}
P_{1,1} & P_{1,2} \\
\vzero & P_{2,2}
\end{bmatrix}\in\GL(\ell+n, \F)$, so $P_{1,1}$ and $P_{2,2}$ are both invertible 
matrices. By Equation~\ref{eq:3TI_conjugacy}, we have $P_{1,1}\vA=\vB^{\tilde R} P_{2,2}$, 
where $P_{1,1}\in\GL(\ell, \F)$, $P_{2,2}\in\GL(n, \F)$, and $\tilde R\in\GL(m, \F)$, 
which just says that $\tA$ and $\tB$ are isomorphic as 3-tensors.
\end{proof}

\begin{corollary}
\ThreeTIlong reduces to
\begin{enumerate}
\item \algprobm{Matrix Lie Algebra Conjugacy}, where $L$ is commutative; 

\item \algprobm{Associative Matrix Algebra Conjugacy}, where $A$ is commutative (and in fact has the property that $ab=0$ for all $a,b \in A$; note that $A$ is not unital); 

\item \algprobm{Matrix Lie Algebra Conjugacy}, where $L$ is solvable of derived length 2, and $L / [L, L] \cong \F$; and, 

\item \algprobm{Associative Matrix Algebra Conjugacy}, where the Jacobson radical $J(A)$ squares to zero, and $A / J(A) \cong \F$.
\end{enumerate}
\end{corollary}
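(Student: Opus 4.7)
The plan is to piggyback on the construction in the proof of \Prop{prop:3-tensor_conjugacy}. Recall that there we produced matrices $\tilde A_i = \begin{bmatrix} \vzero & A_i \\ \vzero & \vzero \end{bmatrix}$ of size $(\ell+n)\times (\ell+n)$. The crucial observation is that any two such matrices satisfy $\tilde A_i \tilde A_j = \vzero$, simply because the bottom block-row of each is zero. Hence $\langle \tilde\vA\rangle$ is automatically closed under both multiplication and commutator, and as an associative (resp., Lie) algebra it is commutative with trivial product (resp., abelian). This already gives cases (1) and (2): the same matrix $\tilde P$ exhibited in the proof of \Prop{prop:3-tensor_conjugacy} realises the conjugacy, and the construction of that proposition is already a kernel reduction.

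For cases (3) and (4) the idea is to add exactly one ``scalar'' generator to $\langle \tilde\vA\rangle$ in order to synthesise the desired algebraic structure without destroying the reduction. For (4), I would adjoin the identity $I_{\ell+n}$, forming $\cA' := \langle I_{\ell+n}, \tilde A_1, \dotsc, \tilde A_m\rangle$. Since $I$ commutes with everything and the $\tilde A_i$ multiply to zero, $\cA'$ is a commutative associative algebra whose Jacobson radical is exactly $\langle \tilde\vA\rangle$ (all $\tilde A_i$ are nilpotent and $J(\cA')^2 = 0$), and $\cA'/J(\cA') \cong \F$. For (3), I would adjoin $D := \diag(I_\ell, 0_n)$, and observe that $[D, \tilde A_i] = \tilde A_i$ while $[\tilde A_i, \tilde A_j] = 0$, so $\cL' := \langle D, \tilde A_1, \dotsc, \tilde A_m\rangle$ is closed under commutator, with $[\cL', \cL'] = \langle \tilde\vA\rangle$ abelian, making $\cL'$ solvable of derived length $2$ with $\cL'/[\cL', \cL'] \cong \F$.

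To verify the reduction in both enriched cases, the $(\Rightarrow)$ direction is immediate: the conjugation $\tilde P = \diag(P^{-1}, Q)$ from \Prop{prop:3-tensor_conjugacy} fixes $I_{\ell+n}$ and also fixes $D = \diag(I_\ell, 0_n)$ (it acts as $PI_\ell P^{-1} = I_\ell$ on the top block and trivially on the bottom), so it conjugates $\cA'_A \to \cA'_B$ and $\cL'_A \to \cL'_B$. For $(\Leftarrow)$, the point is that the ``nilpotent part'' is characteristic: any algebra isomorphism (and hence any conjugation as matrix algebras) sends $J(\cA'_A)$ to $J(\cA'_B)$, i.e.\ $\langle \tilde \vA\rangle$ to $\langle \tilde\vB\rangle$; likewise any Lie-algebra isomorphism sends $[\cL'_A, \cL'_A]$ to $[\cL'_B, \cL'_B]$, again $\langle \tilde\vA\rangle \mapsto \langle \tilde\vB\rangle$. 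Hence a conjugacy of the enriched space restricts to a conjugacy of the original matrix space, and we can invoke \Prop{prop:3-tensor_conjugacy} to recover an isomorphism of the input tensors.

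The only mild technical point I anticipate is (b) confirming that the characteristic-subspace extraction really does give a kernel reduction with the same $\tilde P$, rather than merely that some conjugacy exists; this is really a consequence of the fact that $J(\cA')$ (resp.\ $[\cL', \cL']$) equals $\langle \tilde\vA\rangle$ by direct calculation, so the restriction is literally the same $\tilde P$. Nothing more delicate is needed, and all of the above is computable in polynomial time in the input dimensions, yielding the claimed reductions.
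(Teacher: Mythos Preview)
Your argument is correct. Parts (1), (2) and the construction for (3) coincide with the paper's proof (the paper's $M_0$ is your $D$). There are two genuine differences worth noting. First, for (4) you adjoin the full identity $I_{\ell+n}$, whereas the paper adjoins the \emph{same} idempotent $M_0=\diag(I_\ell,0)$ used for (3); your resulting algebra is commutative and unital, the paper's is neither (since $M_0\tilde A_i=\tilde A_i$ but $\tilde A_i M_0=0$), yet both satisfy $J(A)^2=0$ and $A/J(A)\cong\F$. Second, and more interestingly, your converse argument for (3) and (4) is structural: you observe that conjugation is automatically an algebra (resp.\ Lie algebra) isomorphism, hence carries the Jacobson radical (resp.\ derived subalgebra) to itself, which immediately reduces back to \Prop{prop:3-tensor_conjugacy}. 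The paper instead reruns the block computation of that proposition with the extra generator present, checking by hand that $P_{2,1}=0$ still follows. Your approach is cleaner and makes transparent \emph{why} adding a single characteristic direction cannot spoil the reduction; the paper's approach has the minor advantage of using a single adjoined element for both (3) and (4), and of not needing to identify $J(A)$ abstractly (which, for the paper's non-unital choice, requires a short separate computation anyway).
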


\begin{proof}
We use the notation from the proof of Proposition~\ref{prop:3-tensor_conjugacy}. Note that the matrix spaces constructed there, e.\,g., $\tilde \vA$, are all subspaces of the $(\ell + n) \times (\ell + n)$ matrix space $\mathcal{U} := \begin{bmatrix} \vzero & M(\ell \times n,\F) \\ \vzero & \vzero \end{bmatrix}$.

For (1) and (2), observe that for any two matrices $A,A' \in \mathcal{U}$, we have $AA' = 0$, and thus $[A,A'] = AA' - A'A = 0$ as well. Thus any matrix subspace of $\mathcal{U}$ is both a commutative matrix Lie algebra and a commutative associative matrix algebra with zero product.

For (3) and (4), we note that we can alter the construction of Proposition~\ref{prop:3-tensor_conjugacy} by including the matrix $M_0 = \begin{bmatrix} I_\ell & \vzero \\ \vzero & \vzero \end{bmatrix}$ in both matrix spaces $\tilde \cA$ and $\tilde \cB$ without disrupting the reduction. Indeed, for the forward direction we have that (again, following notation as above) $\tilde P^{-1} \begin{bmatrix} I_\ell & \vzero \\ \vzero & \vzero \end{bmatrix} \tilde P = \begin{bmatrix} P & \vzero \\ \vzero & Q^{-1} \end{bmatrix}\begin{bmatrix} I_\ell & \vzero \\ \vzero & \vzero \end{bmatrix}\begin{bmatrix} P^{-1} & \vzero \\ \vzero & Q\end{bmatrix} = \begin{bmatrix} I_\ell & \vzero \\ \vzero & \vzero \end{bmatrix}$. 

For the reverse direction, we then have that for $\tilde \vB' = \tilde \vB^{\tilde R}$, we have $\tilde B'_i = \begin{bmatrix}\alpha I_d & B_i' \\ \vzero & \vzero \end{bmatrix}$. Let $\tilde P=\begin{bmatrix}
P_{1,1} & P_{1,2}\\
P_{2,1} & P_{2,2}
\end{bmatrix}$, where $P_{1,1}\in \M(\ell, \F)$. Then as $\tilde\vA 
\tilde P=\tilde P\tilde\vB'$, we have for every $i\in[m]$,
\begin{equation}\label{eq:3TI_conjugacy}
\begin{bmatrix}
P_{1,1} & P_{1,2}\\
P_{2,1} & P_{2,2}
\end{bmatrix}
\begin{bmatrix}
\vzero & A_i\\
\vzero & \vzero 
\end{bmatrix}
=
\begin{bmatrix}
\vzero & P_{1,1} A_i \\
\vzero & P_{2,1} A_i
\end{bmatrix}
=
\begin{bmatrix}
\alpha P_{1,1} + B_i' P_{2,1} & B_i' P_{2,2} \\
\alpha P_{2,1} & \vzero 
\end{bmatrix}
=
\begin{bmatrix}
\alpha I_d & B_i'\\
\vzero & \vzero 
\end{bmatrix}
\begin{bmatrix}
P_{1,1} & P_{1,2}\\
P_{2,1} & P_{2,2}
\end{bmatrix}.
\end{equation} 
Considering the (2,1) block of this equation, we find that if $\alpha \neq 0$, then immediately $P_{2,1} = \vzero$. But even if $\alpha = 0$, then we are back to the same argument as in Proposition~\ref{prop:3-tensor_conjugacy}, namely that by the non-degeneracy of $\vA$, we still get $P_{2,1} = \vzero$ by considering the (2,2) block. The remainder of the argument only depended on the (1,2) block of the preceding equation, which is the same as before.

Finally, to see the structure of the corresponding algebras, we must consider how our new element $M_0$ interacts with the others. Easy calculations reveal:
\[
M_0^2 = M_0 \qquad
M_0 \tilde A_i = \tilde A_i \qquad 
\tilde A_i M_0 = \vzero \qquad 
[M_0, \tilde A_i] = M_0 \tilde A_i - \tilde A_i M_0 = \tilde A_i
\]

(3) For the structure of the Lie algebra, we have from the above equations that any commutator is either 0 or lands in $\mathcal{U}$. And since $[M_0, \tilde A_i] = \tilde A_i$, we have that $[L,L]$ is the subspace of $\mathcal{U}$ that we started with before including $M_0$. Since everything in that subspace commutes, we get that $[[L,L], [L,L]] = 0$, and thus the Lie algebra is solvable of derived length 2. Moreover, $L / [L, L]$ is spanned by the image of $M_0$, whence it is isomorphic to $\F$.

(4) Recall that for rings without an identity, the Jacobson radical can be characterized as $J(A) = \{ a \in A | (\forall b \in A)(\exists c \in A)[c + ba = cba]\}$ \cite[p.~63]{lam}. Note that the only nontrivial cases to check are those for which $b = M_0$, since otherwise $ba = 0$ and then we may take $c=0$ as well. So we have $J(A) = \{a \in A | (\exists c \in A)[c + M_0 a = c M_0 a]\}$. But since $M_0$ is a left identity, this latter equation is just $c + a = ca$. For any $a \in \mathcal{U}$, we may take $c = -a$, since then both sides of the equation are zero, and thus $J(A)$ includes all the matrices in the original space from Proposition~\ref{prop:3-tensor_conjugacy}. However, $M_0 \notin J(A)$, for there is no $c$ such that $c + M_0 = c M_0$: any element of $A$ can be written $\alpha M_0 + u$ for some $u \in \mathcal{U}$. Writing $c$ this way, we are trying to solve the equation $\alpha M_0 + u + M_0 = (\alpha M_0 + u)M_0 = \alpha M_0$. Thus we conclude $u=0$, and then we get that $\alpha+1 = \alpha$, a contradiction. So $M_0 \notin J(A)$, and thus $A / J(A)$ is spanned by the image of $M_0$, whence it is isomorphic to $\F$.
\end{proof}

\subsection{From Matrix Space Isometry to Algebra Isomorphism and Trilinear Form Equivalence}

\begin{proposition} \label{prop:isometry_algebra}
\MatSpIsomlong reduces to \algprobm{Algebra Isomorphism} and \NcCubicFormlong. Symbolically, $V \otimes V \otimes W$ reduces to $V' \otimes V' \otimes V'^*$ and to $V' \otimes V' \otimes V'$, where $\dim V' = \dim V + \dim W$.
\end{proposition}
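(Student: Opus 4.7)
The plan is to place the given matrix tuple $\vA = (A_1,\dotsc,A_m) \in \M(n,\F)^m$ into a ``corner'' of a larger structure on $V' = V \oplus W$ (with $\dim V = n, \dim W = m$, hence $\dim V' = n+m$) so that algebraic invariants force any iso on the enlarged object to be block-respecting, which then decodes to an isometry. By Observation~\ref{obs:nondeg} I assume the input is non-degenerate; in particular, the frontal slices $A_1,\dotsc,A_m$ are linearly independent and their common left and right kernels are trivial.

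For the reduction to \AlgIsolong ($V' \otimes V' \otimes V'^*$), define the algebra $\tilde\cA$ on basis $\{x_1,\dotsc,x_n,y_1,\dotsc,y_m\}$ by
\[
 x_i \cdot x_j = \sum_{k=1}^m \tA(i,j,k)\, y_k, \qquad x_i \cdot y_k = y_k \cdot x_i = y_k \cdot y_l = 0,
\]
and similarly $\tilde\cB$ from $\tB$. For the ``only if'' direction, given an isometry $P^t \vA P = \vB^R$, the block-diagonal map $g = \operatorname{diag}(P^t, R) \in \GL(V')$ is easily checked to be an algebra isomorphism, since all products involving $W$ vanish on both sides and the $x$-products transform exactly by the isometry condition. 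For the ``if'' direction, observe that the two-sided annihilator $\operatorname{Ann}(\tilde\cA) = \{a \in \tilde\cA : a \cdot \tilde\cA = \tilde\cA \cdot a = 0\}$ is an isomorphism invariant. Using non-degeneracy (any $v_V \in V$ in the annihilator would lie in the common left and right kernels of all $A_k$), one shows $\operatorname{Ann}(\tilde\cA) = W$ and similarly for $\tilde\cB$. Hence any algebra iso $g$ must satisfy $g(W) = W$, which means $g$ is block upper-triangular $\begin{bmatrix} P & S \\ 0 & Q \end{bmatrix}$. Computing $g(x_i) \cdot g(x_j)$ in $\tilde\cB$, the contributions from $S$ are killed because they land in $W$ and $W$ annihilates everything, so the computation collapses to the isometry condition in terms of $(P,Q)$.

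For the reduction to \NcCubicFormlong ($V' \otimes V' \otimes V'$), define the trilinear form $\tilde T_A$ on $V'$ by $\tilde T_A(i,j,k) = \tA(i,j,k-n)$ when $i,j \leq n$ and $k > n$, zero otherwise, and similarly $\tilde T_B$. The forward direction again uses a block-diagonal $g$: an isometry $(P,R)$ yields the form equivalence $g = \operatorname{diag}(P, R^{-1})$ by a direct index chase using $(g \cdot \tilde T)(i',j',k') = \sum P_{ii'} P_{jj'} P_{kk'} \tilde T(i,j,k)$. For the reverse, the key invariants are the three ``kernels'' of a trilinear form, $\{v : \tilde T(v,\cdot,\cdot) = 0\}$ and its analogues in the other two slots. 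By non-degeneracy, the left and middle kernels of $\tilde T_A$ are exactly $W$, and the right kernel is exactly $V$; each is preserved by any form equivalence $g$. This forces $g(V) = V$ and $g(W) = W$, i.e.\ $g$ is fully block-diagonal $\operatorname{diag}(P,Q)$, and the same index chase then reads off the isometry $(P, Q^{-1})$.

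The main obstacle is the reverse direction in each case: one must rule out ``cross-term'' maps between the $V$- and $W$-summands. The conceptual engine is to identify enough isomorphism-invariant subspaces of the enlarged object to pin down the desired block structure---the annihilator in the algebra case (which only gives upper-triangularity, but this turns out to be enough because $W$ absorbs products), and the triple of slot-wise kernels in the trilinear-form case (which gives full diagonal block structure). In both cases, non-degeneracy of the original tensor is exactly what makes these invariants sharp enough to match $V$ and $W$.
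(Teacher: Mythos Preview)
Your construction is identical to the paper's, and your argument is correct. There is one cosmetic slip: from $g(W)=W$ with the basis ordered $(V,W)$, the matrix of $g$ is block \emph{lower}-triangular $\begin{bmatrix} P & 0 \\ S & Q \end{bmatrix}$, not upper-triangular as you wrote; but your subsequent reasoning (``the $S$-contributions land in $W$ and are annihilated'') is unaffected by which off-diagonal block survives.

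Your reverse-direction arguments, however, take a genuinely different route from the paper's. For both the algebra and the trilinear-form case, the paper does not invoke intrinsic invariants such as the annihilator or the slot-wise kernels. Instead it argues directly from the shape of the $3$-way array: the first $n$ frontal slices of $\tA'$ are identically zero, and matching the two sides of the isomorphism condition (using only that the $B_k$ are linearly independent, which is automatic since they are given as a basis of $\cB$) forces the lower-left block of $\tilde P$ to vanish. Your approach buys conceptual transparency---you name exactly which isomorphism-invariant subspaces pin down the block structure---and in the trilinear case it even delivers a fully block-diagonal $g$ rather than merely block-triangular. The paper's approach buys a slightly weaker hypothesis: it needs only linear independence of the frontal slices, not the trivial common left/right kernels that your annihilator and kernel computations rely on (though, as you note, Observation~\ref{obs:nondeg} makes this harmless).
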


\newcommand{\Alg}{\text{Alg}}
\begin{proof}
\textbf{The construction.} Given a matrix space $\cA$ by an ordered linear basis $\vA = (A_1, \dotsc, A_m)$, construct the 3-way array $\tA' \in T((n+m) \times (n+m) \times (n+m), \F)$ whose frontal slices are:
\[
A_i' = \vzero \quad (\text{for } i \in [n]) \qquad A_{n+i}' = \begin{bmatrix} A_i & \vzero \\ \vzero & \vzero \end{bmatrix} \quad (\text{for } i \in [m]).
\]
Let $\Alg(\tA')$ denote the algebra whose structure constants are defined by $\tA'$, and let $f_{\tA'}$ denote the trilinear form whose coefficients are given by $\tA'$. 

Given two matrix spaces $\cA, \cB$, we claim that $\cA$ and $\cB$ are isometric if and only if $\Alg(\tA') \cong \Alg(\tB')$ (isomorphism of algebras) if and only if $f_{\tA'}$ and $f_{\tA'}$ are equivalent as trilinear forms. The proofs are broken into the following two lemmas, which then complete the proof of the proposition.
\end{proof}

\begin{lemma}\label{lem:VVW_VVVstar}
Let notation be as above. The matrix spaces $\cA,\cB$ are isometric if and only if $\Alg(\tA')$ and $\Alg(\tB')$ are isomorphic.
\end{lemma}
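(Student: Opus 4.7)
The plan is to show that the subspace $W := \linspan\{e_{n+1}, \dots, e_{n+m}\}$ can be read off intrinsically from $\Alg(\tA')$ (under a mild non-degeneracy assumption), so that an algebra isomorphism must respect the decomposition $V \oplus W$ and its restriction/quotient pieces assemble into a pseudo-isometry of the matrix spaces.

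Writing $V := \linspan\{e_1, \dots, e_n\}$, the construction of $\tA'$ makes $\Alg(\tA')$ into an associative, $3$-nilpotent algebra with $V \cdot V \subseteq W$ via the bilinear map $T_\cA(u,v) := (u^t A_1 v, \dots, u^t A_m v)^t$, while $V \cdot W = W \cdot V = W \cdot W = 0$. In particular $W \subseteq \mathrm{Ann}(\Alg(\tA'))$, the two-sided annihilator. First I would pass to a non-degenerate representative, as in Observation~\ref{obs:nondeg}, assuming the $A_l$ are linearly independent and that their common left and right kernels are trivial; then I would verify the reverse inclusion $\mathrm{Ann}(\Alg(\tA')) \subseteq W$: if $v + w$ with $v \in V$, $w \in W$ annihilates $\Alg(\tA')$ on both sides, then $T_\cA(v, e_j) = 0$ and $T_\cA(e_j, v) = 0$ for every $j$, forcing $v$ into the common left and right kernels of the $A_l$, and hence $v = 0$. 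Thus $\mathrm{Ann}(\Alg(\tA')) = W$ is canonically determined by the algebra, and symmetrically for $\Alg(\tB')$.

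For the forward direction, a pseudo-isometry $(P, S) \in \GL(n, \F) \times \GL(m, \F)$---that is, $S\, T_\cA(u,v) = T_\cB(Pu, Pv)$ for all $u, v \in V$---immediately yields the block-diagonal algebra isomorphism $\tilde P := P \oplus S$; products on $V \times V$ are handled by the intertwining equation, and all other products vanish on both sides.

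The main obstacle is the reverse direction. Given any algebra isomorphism $\phi : \Alg(\tA') \to \Alg(\tB')$, the annihilator characterization forces $\phi(W_A) = W_B$, so in the decomposition $V \oplus W$ the matrix of $\phi$ is block lower-triangular:
\[
\phi = \begin{pmatrix} P & 0 \\ X & S \end{pmatrix},
\]
with $S := \phi|_{W_A}$ an isomorphism $W_A \to W_B$ and $P$ the isomorphism induced on $\Alg(\tA')/W_A \cong V_A \to V_B \cong \Alg(\tB')/W_B$. Unpacking $\phi(u \cdot_A v) = \phi(u) \cdot_B \phi(v)$ for $u, v \in V_A$, all cross terms introduced by $X$ land inside $W_B \cdot \Alg(\tB') + \Alg(\tB') \cdot W_B = 0$, so what survives is precisely $S \cdot T_\cA(u,v) = T_\cB(Pu, Pv)$, the desired pseudo-isometry. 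The crux of the whole argument is thus the intrinsic characterization of $W$ as the annihilator; this is exactly where non-degeneracy is used, and without it a potentially larger annihilator could spoil the block decomposition.
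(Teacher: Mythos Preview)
Your argument is correct, but it takes a somewhat different route from the paper's, and there is one point worth flagging.

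\textbf{Comparison with the paper.} The paper does not identify $W$ intrinsically as an annihilator. Instead it argues directly on the frontal slices: since $A_i' = 0$ for $i \le n$, the equation $\tilde P^t \vA' \tilde P = \vB'^{\tilde P}$ forces the first $n$ matrices of $\vB'^{\tilde P}$ to vanish; as these are linear combinations $\sum_j \tilde P_{j,i} B_j'$ and the nonzero $B_j'$ are linearly independent (automatic, since the $B_j$ form a basis of $\cB$), one reads off $\tilde P_{n+l,i}=0$ for $l\in[m]$, $i\in[n]$, i.e.\ the $(2,1)$ block vanishes. This uses only the linear independence of the $B_j$, which is already part of the hypotheses, so no preprocessing is needed and the lemma holds for \emph{all} $\cA,\cB$ as stated.

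Your annihilator approach is conceptually cleaner and makes the block-triangular structure manifest, but it genuinely needs the extra hypothesis that the intersection of the common left and right kernels is trivial---otherwise $\mathrm{Ann}(\Alg(\tA'))$ strictly contains $W$ and the argument breaks. You handle this by preprocessing, which is fine for the overarching reduction in Proposition~\ref{prop:isometry_algebra}, but strictly speaking you then prove the lemma only for non-degenerate inputs, not as stated. Also note that Observation~\ref{obs:nondeg} is phrased for \ThreeTIlong; here you would need its analogue for \MatSpIsomlong (quotienting out the common left-and-right kernel, which transforms by $P^{-1}$ under isometry), a routine but separate step.

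So: your proof buys a structural viewpoint (the annihilator is characteristic) at the price of an added non-degeneracy reduction; the paper's slice argument is more elementary and works in full generality without it.
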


\begin{proof}
Let $\vA,\vB$ be the ordered bases of $\cA, \cB$, respectively. Recall that $\cA, \cB$ are isometric if and only if there exist 
$(P, R) \in\GL(n, \F) \times \GL(m, \F)$ such that $P^t\vA P=\vB^R$. Also recall that 
$\Alg(\tA')$ and 
$\Alg(\tB')$ are isomorphic as algebras if and only if there exists $\tilde P\in\GL(n+m, \F)$ 
such that $\tilde P^t\vA' \tilde P=\vB'^{\tilde P}$. Since $A_i$ (resp. $B_i$) form a linear basis of 
$\cA$ (resp. $\cB$), we have that $A_i$ (resp. 
$B_i$) are linearly independent.

\paragraph{The only if direction} is easy to verify. Given an isometry $(P, R)$ between 
$\cA$ and $\cB$, let $\tilde P=\begin{bmatrix}
P & \vzero \\
\vzero & R
\end{bmatrix}$. Let $\tilde P^t \vA' \tilde P=(A_1'', \dots, A_{n+m}'')$. Then for $i\in[n]$, 
$A_i''=\vzero$. For $n+1\leq i\leq n+m$, $A_i''=\begin{bmatrix} P^t A_i P & \vzero \\
\vzero & \vzero \end{bmatrix}$. Let $\vB'^{\tilde P}=(B_1'', \dots, B_{n+m}'')$. Then for 
$i\in[n]$, $B_i''=\vzero$. For $n+1\leq i\leq n+m$, $B_i''$ is the $(i-n)$th 
matrix in $\vB^R$, which in turn equals $P^t A_i P$ by the assumption on $P$ and 
$R$. This proves the only if direction.

\paragraph{For the if direction,} let $\tilde P=\begin{bmatrix}
P & X \\
Y & R
\end{bmatrix}\in\GL(n+m, \F)$ be an algebra isomorphism, where $P$ is of size 
$n\times n$. Let 
$\tilde P\vA' \tilde P^t=(A_1'', 
\dots, A_{n+m}'')$, and $\vB'^{\tilde P}=(B_1'',\dots, B_{n+m}'')$. Since for $i\in[n]$, 
$A_i'=\vzero$, we have $A_i''=\vzero=B_i''$. Therefore $Y$ has to be $\vzero$, 
because 
$B_i$'s 
are linearly independent. It follows that $\tilde P=
\begin{bmatrix}
P & X \\
\vzero & R
\end{bmatrix}$, where $P$ and $R$ are invertible.
So for $1\leq i\leq 
m$, we have $\tilde P^t A_{i+n}' \tilde P=\begin{bmatrix}
P^t & \vzero \\
X^t & R^t
\end{bmatrix}\begin{bmatrix}
A_i & \vzero \\
\vzero & \vzero 
\end{bmatrix} 
\begin{bmatrix}
P & X \\
\vzero & R
\end{bmatrix}
=
\begin{bmatrix}
P^t A_i P& P^t A_i X\\
X^t A_i P & X^t A_i X
\end{bmatrix}
$. Also the last $m$ matrices in $\vB'^{\tilde P}$ are $\begin{bmatrix} B_i'' & \vzero 
\\\vzero & \vzero \end{bmatrix}$, where $B_i''$ is the $i$th matrix in $\vB^R$.
This implies that $P\in\GL(n, 
\F)$ and $R\in \GL(m, \F)$ 
together form an isometry between $\cA$ and $\cB$.
\end{proof}

\begin{corollary} \label{cor:pseudo_special}
\MatSpIsomlong reduces to
\begin{enumerate}
\item \label{cor:pseudo_special:assoc_comm_unital} \algprobm{Associative Algebra Isomorphism}, for algebras that are commutative and unital; 

\item \algprobm{Associative Algebra Isomorphism}, for algebras that are commutative and 3-nilpotent ($abc=0$ for all $a,b,c \in A$); and, 

\item \algprobm{Lie Algebra Isomorphism}, for Lie algebras that are 2-step nilpotent ($[u,[v,w]]=0$ for all $u,v,w \in L$).
\end{enumerate}
\end{corollary}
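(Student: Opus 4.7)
The plan is to apply the construction from the proof of Proposition~\ref{prop:isometry_algebra} to matrix spaces with extra symmetry, and observe that the resulting algebra automatically lies in the desired subclass. By Theorem~\ref{thm:main}(\ref{thm:main:isom}), general \MatSpIsomlong reduces to its symmetric (resp.\ alternating) variant, so I would assume throughout that the given basis $\vA = (A_1, \dotsc, A_m)$ of $\cA$ consists of symmetric matrices (for (1) and (2)) or of alternating matrices (for (3)).

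For (2), first verify that when each $A_i$ is symmetric, the frontal slices of $\tA'$---namely $A'_i = \vzero$ for $i \in [n]$ and $A'_{n+i} = \begin{bmatrix} A_i & \vzero \\ \vzero & \vzero \end{bmatrix}$---are all symmetric, so the product $x_i \cdot x_j := \sum_k \tA'(i,j,k)\, x_k$ defining $\mathrm{Alg}(\tA')$ is commutative. A short case analysis shows every triple product vanishes: for $i, j \in [n]$ one has $x_i \cdot x_j \in \linspan(x_{n+1}, \dotsc, x_{n+m})$, and any generator $x_{n+t}$ multiplied by anything is $0$; if one of $i, j$ exceeds $n$, then $x_i \cdot x_j$ is already $0$. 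Hence $\mathrm{Alg}(\tA')$ is commutative, associative, and $3$-nilpotent. For (3), the parallel check with alternating $A_i$ yields a product with $[x,x] = 0$ (using that alternating matrices have zero diagonal), and the same triple-product vanishing makes the Jacobi identity automatic and shows the Lie algebra is $2$-step nilpotent. Correctness of each reduction---that $\cA$ and $\cB$ are isometric iff the corresponding algebras are isomorphic---will follow directly from Lemma~\ref{lem:VVW_VVVstar}: its argument only uses linear independence of the $A_i$ to force the off-diagonal block $Y$ of a candidate isomorphism to vanish, and that step is unaffected by imposing symmetry or alternation.

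For (1), I would pass from the commutative, $3$-nilpotent algebra $R_{\cA} := \mathrm{Alg}(\tA')$ produced in (2) to its standard unitization $R_{\cA}^+ := \F \cdot 1 \oplus R_{\cA}$, which is commutative, associative, and unital by construction. To see this is still an iff-reduction, I would show $R_{\cA}^+ \cong R_{\cB}^+$ iff $R_{\cA} \cong R_{\cB}$: one direction extends an isomorphism by $1 \mapsto 1$; the other uses that $R_{\cA}$ is nilpotent and $R_{\cA}^+ / R_{\cA} \cong \F$ is a field, so $R_{\cA}$ coincides with the Jacobson radical $J(R_{\cA}^+)$, which is preserved under any algebra isomorphism; hence any isomorphism of unitizations restricts to an isomorphism of the original nilpotent algebras.

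The only delicate point I anticipate is this last identification $J(R^+) = R$, but it is standard once one observes that $R$ is a nil ideal with field quotient. Everything else is a structural check on the construction from Proposition~\ref{prop:isometry_algebra}, and no new technical ideas beyond packaging the earlier results are required.
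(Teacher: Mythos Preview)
Your proposal is correct and essentially matches the paper's own proof: both reduce first to the symmetric (resp.\ alternating) case via \Thm{thm:main}(\ref{thm:main:isom})/\Cor{cor:AltSymMatSpIsom}, then apply the construction of \Prop{prop:isometry_algebra} and verify the resulting algebra is commutative $3$-nilpotent (resp.\ $2$-step nilpotent Lie), and for (1) pass to the unitization. One minor sharpening: for (3) you should invoke the full alternating property $v^t A_i v = 0$ rather than merely ``zero diagonal,'' since the latter alone would not yield $v \circ v = 0$ for all $v$ in characteristic~$2$.
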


\begin{proof}
We follow the notation from the proof of Lemma~\ref{lem:VVW_VVVstar}. We begin by observing that $\Alg(\tA')$ is a 3-nilpotent algebra, and therefore is automatically associative. Let $V' = V \oplus W$, where $\dim V = n$, $\dim W = m$, and, as a subspace of $V' \cong \F^{n+m}$, $V$ has a basis given by $e_1, \dotsc, e_n$ and $W$ has a basis given by $e_{n+1}, \dotsc, e_{n+m}$. Let $\circ$ denote the product in $\Alg(\tA')$, so that $x_i \circ x_j = \sum_{k} \tA'(i,j,k) x_k$. Note that because the lower $m$ rows and the rightmost $m$ columns of each frontal slice of $\tA'$ are zero, we have that $w \circ x = x \circ w = 0$ for any $w \in W$ and any $x \in V'$. Thus only way to get a nonzero product is of the form $v \circ v'$ where $v,v' \in V$, and here the product ends up in $W$, since the only nonzero frontal slices are $n+1, \dotsc, n+m$. Since any nonzero product ends up in $W$, and anything in $W$ times anything at all is zero, we have that $abc=0$ for all $a,b,c \in \Alg(\tA')$, that is, $\Alg(\tA')$ is 3-nilpotent. Any 3-nilpotent algebra is automatically associative, since the associativity condition only depends on products of three elements.

(2) If instead of general \MatSpIsomlong, we start from \SymMatSpIsomlong (which is also $\cc{3TI}$-complete by Corollary~\ref{cor:AltSymMatSpIsom}), then we see that the algebra is commutative, for we then have $\tA'(i,j,k) = \tA'(j,i,k)$, which corresponds to $x_i \circ x_j = x_j \circ x_i$.

(1) As is standard, from the algebra $A = \Alg(\tA')$, we may adjoin a unit by considering $A' = A[e] / (e \circ x = x \circ e = x | x \in A')$. In terms of vector spaces, we have $A' \cong A \oplus \F$, where the new $\F$ summand is spanned by the identity $e$. This standard algebraic construction has the property that two such algebras $A,B$ are isomorphic if and only if their corresponding unit-adjoined algebras $A', B'$ are (see, e.\,g., \cite{dorroh1932concerning, wikipediaAdjoin}). 

(3) By starting from an alternating matrix space $\cA$ (and noting that \AltMatSpIsomWords is still $\cc{3TI}$-complete, by Corollary~\ref{cor:AltSymMatSpIsom}), we get that $\Alg(\tA')$ is alternating, that is, $v \circ v = 0$. Since we still have that it is 3-nilpotent, $a \circ b \circ c = 0$, we find that $\circ$ automatically satisfies the Jacobi identity. An alternating product satisfying the Jacobi identity is, by definition, a Lie bracket (that is, we can define $[v,w] := v \circ w$), and thus we get a Lie algebra with structure constants $\tA'$. Translating the 3-nilpotency condition $a \circ b \circ c = 0$ into the Lie bracket notation, we get $[a, [b,c]] = 0$, or in other words that the Lie algebra is nilpotent of class 2.
\end{proof}

\begin{corollary} \label{cor:comm_cubic_form}
\ThreeTIlong reduces to \CubicFormlong. 
\end{corollary}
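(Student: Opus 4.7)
The plan is to chain existing reductions with one new direct construction. By \Cor{cor:AltSymMatSpIsom}, \ThreeTIlong already reduces to \SymMatSpIsomlong, so it suffices to reduce symmetric matrix space isometry to \CubicFormlong. Throughout, I work in characteristic $\neq 2,3$ (as clarified for \CubicFormlong in \Cor{cor:main}), so that polarization gives a bijection between cubic forms and symmetric $3$-tensors.

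Given a symmetric matrix space $\cA = \langle A_1, \dotsc, A_m \rangle \leq \mathrm{Sym}(n, \F)$, I associate the cubic form
\[
f_\cA(x, y) \;=\; \sum_{k=1}^m y_k \cdot (x^t A_k x)
\]
in $n + m$ commuting variables $x_1, \dotsc, x_n, y_1, \dotsc, y_m$. Up to a harmless scalar, $f_\cA$ is the polarization of the fully-symmetrized version of the tensor $\tA'$ produced by \Prop{prop:isometry_algebra} applied to $\cA$. The forward direction is immediate: if $(P, R) \in \GL(n, \F) \times \GL(m, \F)$ realizes $P^t A_k P = \sum_l r_{l,k} B_l$, then $\tilde P := \diag(P, R^{-1}) \in \GL(n+m, \F)$ satisfies $f_\cA(\tilde P z) = f_\cB(z)$.

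The heart of the argument is the backward direction. Given $\tilde P \in \GL(n+m, \F)$ with $f_\cA(\tilde P z) = f_\cB(z)$, write $\tilde P = \bigl(\begin{smallmatrix} P_{11} & P_{12} \\ P_{21} & P_{22} \end{smallmatrix}\bigr)$ with $P_{11} \in \M(n, \F)$ and $P_{22} \in \M(m, \F)$. My plan is to first show that the linear subspace $\{x = 0\} \subseteq \F^{n+m}$ is a cubic-form invariant of $f_\cA$: concretely, it coincides with the ``vertex of the second directional derivative,'' namely
\[
\{\, v \in \F^{n+m} \;:\; w \mapsto v^t H_{f_\cA}(w)\, v \equiv 0 \text{ as a polynomial in } w \,\}.
\]
A direct block-form calculation of the Hessian $H_{f_\cA}$ shows that this locus equals $\{x = 0\}$ under the mild non-degeneracy assumption that $\cA$ has no nonzero common isotropic vector (i.e., no $0 \neq u \in \F^n$ with $u^t A_k u = 0$ for all $k$). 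Consequently $\tilde P$ must preserve $\{x = 0\}$, forcing $P_{12} = 0$. With $\tilde P$ now block lower-triangular, matching the pure $x^3$ monomials of the identity $f_\cA(\tilde P z) = f_\cB(z)$ gives a system of polynomial constraints on $P_{21}$; combined with a dual Hessian-rank invariant that identifies $\{y = 0\}$ (using that $H_{f_\cA}((0, v^y))$ has a vanishing $(m \times m)$ bottom-right block, while $H_{f_\cA}((v^x, 0))$ does not), I argue $P_{21} = 0$ as well. The resulting block-diagonal $\tilde P = \diag(P_{11}, P_{22})$ then reads off the isometry $(P, R) := (P_{11}, P_{22}^{-1})$ between $\cA$ and $\cB$.

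The main obstacle will be pushing the invariant-based argument through uniformly over all fields in which $6$ is a unit, especially small finite fields where the non-degeneracy assumption (``$\cA$ has no nonzero common isotropic vector'') may be delicate. If needed, I will preprocess $\cA$ by embedding it into a slightly larger symmetric matrix space whose common isotropic locus is trivial (in the spirit of Observation~\ref{obs:nondeg}), or augment $f_\cA$ by a rigidifying cubic gadget analogous to the linear-algebraic coloring constructions of \Sec{sec:reduction_gadget}, which would enforce the block decomposition structurally rather than via field-specific invariants.
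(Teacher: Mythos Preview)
The paper's proof is a two-line chain: \ThreeTI reduces to isomorphism of commutative, unital, associative algebras by \Cor{cor:pseudo_special}(\ref{cor:pseudo_special:assoc_comm_unital}), and Agrawal--Saxena \cite{AS06} reduce the latter to \CubicFormlong. Your direct construction $f_\cA(x,y)=\sum_k y_k\, x^t A_k x$ is natural, but the backward direction has a genuine gap.

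The step that fails is the characterization of $\{x=0\}$ as $\{v: D_v^2 f_\cA\equiv 0\}$, which needs $\cA$ to have no nonzero common isotropic vector. This is not mild for the instances you actually receive: the symmetric spaces produced by \Cor{cor:AltSymMatSpIsom} are built from blocks $A_i^s=\bigl(\begin{smallmatrix}0 & A_i\\ A_i^t & 0\end{smallmatrix}\bigr)$ of size $(\ell+n)\times(\ell+n)$ together with elementary symmetric gadget matrices $E_{p,q}+E_{q,p}$ in which one index lies in $[\ell+n]$ and the other in the gadget block. Any vector supported only on coordinates $\ell+1,\dots,\ell+n$ is then isotropic for every matrix in the space, so your invariant locus strictly contains $\{x=0\}$ on exactly the inputs your reduction would be fed. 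The preprocessing or gadget fixes you sketch are not carried out, and designing one that simultaneously kills common isotropics and preserves the isometry problem is where the real work would lie. Two side remarks: once $P_{12}=0$ you are already done---matching the $y$-linear part of $f_\cA(\tilde P z)=f_\cB(z)$ gives $B_l=\sum_k(P_{22})_{kl}\,P_{11}^tA_kP_{11}$ with $P_{11},P_{22}$ invertible---so the separate argument for $P_{21}=0$ is unnecessary; and that argument is also wrong as stated, since the bottom-right $m\times m$ block of $H_{f_\cA}(z)$ equals $(\partial_{y_k}\partial_{y_l}f)=0$ identically at every point, hence cannot distinguish $(v^x,0)$ from $(0,v^y)$.
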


\begin{proof}
Agrawal and Saxena \cite{AS06} show that \algprobm{Commutative Algebra 
Isomorphism} reduces to \CubicFormlong. Combine with Corollary~\ref{cor:pseudo_special}(\ref{cor:pseudo_special:assoc_comm_unital}). 
\end{proof}

The reduction from $V\otimes V\otimes W$ to $V' \otimes V' \otimes V'$ is achieved by 
the same construction.
\begin{lemma}\label{lem:VVW_VVV}
Let $\vA, \vB, \vA'$, and $\vB'$ be as above. Then $\vA$ and $\vB$ 
are pseudo-isometric if and only if $\vA'$ and $\vB'$ are isomorphic as 
trilinear forms.
\end{lemma}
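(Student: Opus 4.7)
The plan is to follow the blueprint of the proof of Lemma~\ref{lem:VVW_VVVstar}, with the only substantive change being that the third-factor action of $\tilde P$ now comes via $\tilde P^{-1}$ instead of via $\tilde P$; this is precisely the distinction between trilinear-form equivalence and algebra isomorphism, as explained in Section~\ref{sec:problems}. The paper's action formula for trilinear forms yields the characterization: $\vA'$ and $\vB'$ are equivalent as trilinear forms if and only if there exists $\tilde P \in \GL(n+m, \F)$ with $\tilde P^t \vA' \tilde P = \vB'^{\tilde P^{-1}}$. The construction of $\vA'$ and $\vB'$ is left unchanged.

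For the only-if direction, given a pseudo-isometry $(P, R)$ with $P^t \vA P = \vB^R$, I will take $\tilde P = \diag(P, R^{-1}) \in \GL(n+m, \F)$, so that $\tilde P^{-1} = \diag(P^{-1}, R)$. A slice-by-slice computation then verifies $\tilde P^t \vA' \tilde P = \vB'^{\tilde P^{-1}}$: the first $n$ slices of both sides vanish (because $A'_i = B'_i = 0$ for $i \in [n]$ together with the block-diagonal form of $\tilde P^{-1}$), and for slice $n+k''$ both sides equal $\bigl[\begin{smallmatrix} P^t A_{k''} P & 0 \\ 0 & 0 \end{smallmatrix}\bigr] = \bigl[\begin{smallmatrix} (\vB^R)_{k''} & 0 \\ 0 & 0 \end{smallmatrix}\bigr]$, where the key point on the right is that the $(2,2)$-block of $\tilde P^{-1}$---which equals $R$---is exactly what drives the third-factor action on the relevant slices of $\vB'$.

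For the if direction, suppose $\tilde P^t \vA' \tilde P = \vB'^{\tilde P^{-1}}$ for some invertible $\tilde P$. Writing $\tilde P^{-1}$ in $2 \times 2$ block form with $n \times n$ top-left block, the first $n$ slices of the left-hand side vanish (since $A'_i = 0$ for $i \in [n]$), so the corresponding slices of the right-hand side must also vanish; expanding these forces the bottom-left block of $\tilde P^{-1}$ to annihilate the tuple $(B_1, \dotsc, B_m)$, and linear independence of the $B_i$ then gives that this block is zero. Hence $\tilde P^{-1}$, and therefore $\tilde P$, is block upper-triangular; writing $\tilde P = \bigl[\begin{smallmatrix} P_{11} & P_{12} \\ 0 & P_{22} \end{smallmatrix}\bigr]$, the diagonal blocks $P_{11}, P_{22}$ are invertible and the $(2,2)$-block of $\tilde P^{-1}$ is $P_{22}^{-1}$. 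Comparing the $(1,1)$-block of slice $n+k''$ of the main equation then yields $P_{11}^t A_{k''} P_{11} = (\vB^{P_{22}^{-1}})_{k''}$ for every $k'' \in [m]$, which is exactly the statement that $(P_{11}, P_{22}^{-1})$ is a pseudo-isometry of $\vA$ with $\vB$.

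The off-diagonal block equations, such as $P_{11}^t A_{k''} P_{12} = 0$, will impose additional constraints on $P_{12}$ (forcing its columns into intersections of kernels of the $A_{k''}$), but these are not needed for the proof: the $(1,1)$-block alone already furnishes the desired pseudo-isometry, so no non-degeneracy hypothesis on $\vA$ or $\vB$ is required. The main thing to be careful about is getting the third-factor action right (the $\tilde P^{-1}$ versus $\tilde P$ distinction from the algebra case); everything else is routine block-matrix tracking of the gadget structure of $\vA'$ and $\vB'$.
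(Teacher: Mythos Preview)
Your proof is correct and follows essentially the same approach as the paper. The only cosmetic difference is that you write the trilinear-form equivalence condition as $\tilde P^t \vA' \tilde P = \vB'^{\tilde P^{-1}}$ and argue with the block structure of $\tilde P^{-1}$, whereas the paper writes it equivalently as $\tilde P^t \vA'^{\tilde P}\tilde P = \vB'$ and argues directly with the blocks of $\tilde P$; both routes force the same block-upper-triangular form and extract the pseudo-isometry $(P_{11}, P_{22}^{-1})$ in the same way.
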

\begin{proof}
Recall that $\vA$ and $\vB$ are pseudo-isometric if there exist 
$P\in\GL(n, \F), R\in\GL(m, \F)$ such that $P^t\vA P=\vB^R$. Also recall that 
$\vA'$ and 
$\vB'$ are equivalent as trilinear forms if there exists 
$\tilde P\in\GL(n+m, \F)$ 
such that $\tilde P^t\vA'^{\tilde P} \tilde P=\vB'$. Since $A_i$ (resp. $B_i$) form a linear basis of 
$\cA$, we have that $A_i$ (resp. 
$B_i$) are linearly independent.

\paragraph{The only if direction} is easy to verify. Given an pseudo-isometry $P, R$ between 
$\vA$ and $\vB$, let $\tilde P=\begin{bmatrix}
P & \vzero \\
\vzero & R^{-1}
\end{bmatrix}$. Then it can be verified easily that $\tilde P$ is a trilinear form equivalence between $\vA'$ and $\vB'$, following the same approach in 
the proof of  
Lemma~\ref{lem:VVW_VVVstar}.

\paragraph{For the if direction,} write $\tilde P=\begin{bmatrix}
P & X \\
Y & R
\end{bmatrix}\in\GL(n+m, \F)$ be a trilinear form equivalence between 
$\vA'$ and 
$\vB'$. We first observe that the last $m$ matrices in $\tilde P^t\vA' \tilde P$ are still 
linearly independent. Then, because of the first $n$ matrices in $\vB'$ are all 
zero matrices, $Y$ has to be the zero matrix. It follows that 
$\tilde P=\begin{bmatrix}
P & X \\
\vzero & R
\end{bmatrix}$, where $P$ and $R$ are invertible. Then it can be 
verified easily that $P$ and $R^{-1}$ form an pseudo-isometry between $\vA$ and 
$\vB$, 
following the same approach in 
the proof of  
Lemma~\ref{lem:VVW_VVVstar}.
\end{proof}

\section{Reducing \DeeTIlong to \ThreeTIlong} \label{sec:dto3}
\newcommand{\Path}{\text{Path}}

\begin{thmdto3}
$d$-\TIlong reduces to \AlgIsolong. If the input tensor has size $n_1 \times n_2 \times \dotsb \times n_d$, then the output algebra has dimension $O(d^2 n^{d-1})$ where $n = \max\{n_i\}$.
\end{thmdto3}

\begin{remark} \label{rmk:dto3_dim}
One cannot do too much better in terms of size of the output, as the following argument suggests. Over finite fields, we may count the number of orbits, which provides a rigorous lower bound on the size blow-up of any kernel reduction (see, e.\,g., \cite[Sec.~4.2.4]{FortnowGrochowPEq}). Over infinite fields, if we consider algebraic reductions, they must preserve dimension, so we can make a similar (albeit more heuristic) argument by considering the ``dimension'' of the set of orbits. Here we have put ``dimension'' in quotes because the set of orbits is not a well-behaved topological space (it is typically not even $T_1$), but even in this case the same argument should essentially hold. The space of $n \times n \times \dotsb \times n$ $d$-tensors has dimension $n^d$, and the group $\GL_n \times \dotsb \times \GL_n$ has dimension $dn^2$, so the ``dimension'' of the set of orbits is at least $n^d - dn^2 \sim n^d$ ($d \geq 3$); over $\F_q$, the number of orbits is at least $q^{n^d - dn^2}$. For algebras of dimension $N$, the space of such algebras is $\leq N^3$-dimensional, so the ``dimension'' of the set of orbits is at most $N^3$; over $\F_q$, the number of orbits is at most $q^{N^3}$. Thus we need $N^3 \gtrsim n^d$, whence $N \gtrsim n^{d/3}$. 
\end{remark}

\begin{proof}[Proof idea]
The idea here is similar to the reduction from \ThreeTI to \AlgIsolong: we want to 
create an algebra in which all products eventually land in an ideal, and 
multiplication of algebra elements by elements in the ideal is described by the 
tensor we started with. For a 3-tensor this was very natural, as the structure 
constants of any algebra form a 3-tensor. In that case, we are using it to say how 
to write the product of 2 elements as a linear combination (the third factor of 
the tensor) of basis elements. With a $d$-tensor for $d \geq 3$, we now want to 
use it to describe how to write the product of $d-1$ elements as a linear 
combination of basis elements. The tricky part here is that in an algebra we must 
still describe the product of any \emph{two} elements. The idea is to create a set 
of generators, let them freely generate monomials up to degree $d-2$, and then 
when we get a product of $d-1$ generators, rewrite it as a linear combination of 
generators according to the given tensor. This idea almost provides one direction 
of the reduction: if two $d$-tensors $\tA, \tB$ are isomorphic, then the 
corresponding algebras $\cA,\cB$ are isomorphic. 
However, there is an issue with implementing this, namely that monomials are 
commutative, but our tensors $\tA,\tB$ need not be symmetric, and moreover, they 
need not even be ``square'' (have all side lengths equal). 
In \cite[Thm.~5]{AS05} they reduce \DFormlong to \algprobm{Commutative Algebra 
Isomorphism} along similar lines, but there the starting objects are themselves 
commutative, so this was not an issue. 
In our case, we will get around this 
using a certain noncommutative algebra where the only nonzero products are those 
which come ``in the right order.''

Another potentially tricky aspect of the reduction is the converse: suppose we build our algebras $\cA,\cB$ as above from two $d$-tensors, and $\cA,\cB$ are isomorphic; how can we guarantee that $\tA$ and $\tB$ are isomorphic? For this, we would like to be able to identify certain subsets of the algebras as characteristic (invariant under any automorphism), so that those characteristic subsets force the isomorphism to take a particular form, which we can then massage into an isomorphism between the tensors $\tA, \tB$. Our way of doing this is to encode the ``degree'' structure into the path algebra of a graph, as described in the next section. If the graph has no automorphisms, then the path algebra has the advantage that for any two vertices $i,j$, the subset of $\cA$ spanned by the paths from $i$ to $j$ is nearly characteristic in a way we make precise below. 
\end{proof}

\subsection{Preliminaries for Theorem~\ref{thm:d_to_3}} \label{sec:dto3_prelim}
To make the above proof idea precise, we will need a little background on Leavitt 
path algebras (a.k.a. quiver algebras) and their quotients. For a textbook reference on these algebras, see
\cite[Ch.~II]{AssemSimsonSkowronski}, and for a textbook treatment of 
Wedderburn--Artin theory and the Jacobson radical, see \cite{lam}. Aside from the 
definition of path algebra, most of this section will end up being used as a black 
box; we include it mostly for ease of reference.

We start with some important, classical results on the structure of associative algebras. The 
\emph{Jacobson radical} of an associative algebra $A$, here denoted $R(A)$, is the 
intersection of all maximal right ideals. Equivalently, $R(A) = \{a \in A : 
\text{every element of } 1 + AxA \text{ is invertible}\}$. A unital algebra $A$ 
over a field $\F$ is \emph{semisimple} if $R(A) = 0$; in this case, by 
Wedderburn's Theorem (see below), $A$ is isomorphic to a direct sum of matrix 
algebras over finite-degree division rings extending $\F$. An algebra $A$ is 
called \emph{separable} if it is semisimple over every field extending $\F$, that 
is, $A \otimes_{\F} \mathbb{K}$ is semisimple for all fields $\mathbb{K}$ 
extending $\F$. Equivalently, $A$ is separable if it is isomorphic to 
$\bigoplus_{i=1}^d \M(d_i,\F_i)$, where each $\F_i$ is a division ring extending 
$\F$ such that the center $Z(\F_i)$ is a separable field extension of $\F$. If 
$\F$ has characteristic zero or is perfect (which includes all finite fields), 
then all its extensions are separable. For the algebra we construct, it will 
simply be a direct sum of copies of $\F$, so it is automatically separable over any field.

An element $a \in A$ is \emph{idempotent} if $a^2 = a$. An idempotent $e$ is \emph{primitive} if it cannot be written as the sum of two nonzero idempotents. Two idempotents $e,f$ are \emph{orthogonal} if $ef = fe = 0$. A \emph{complete set of primitive orthogonal idempotents} of $A$ is a set $\{e_1, \dotsc, e_n\}$ of primitive idempotents which are pairwise orthogonal, and such that the set is maximal subject to this condition. 

\begin{theorem}[{Wedderburn--Mal'cev, see, e.\,g., \cite{farnsteiner}}] \label{thm:WM}
Let $A$ be an finite-dimensional, associative, unital algebra over a field $\F$. Then 
\begin{enumerate}
\item $A / R(A) \cong \bigoplus_{i=1}^d \M(d_i,\F_i)$ (as algebras), where each 
$\F_i$ is a division ring of finite degree over $\F$. 
\item If $A/R(A)$ is separable, then there exists a subalgebra $S \subseteq A$ such that $A = S \oplus R(A)$ (as $\F$-vector spaces).
\item \label{thm:WM:conj}If $T \subseteq A$ is any separable subalgebra, then there exists $r \in R(A)$ such that $(1+r) T (1+r)^{-1} \subseteq S$.
\end{enumerate}
\end{theorem}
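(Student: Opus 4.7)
For part (1), since $A$ is finite-dimensional the descending chain $A \supseteq R(A) \supseteq R(A)^2 \supseteq \dotsb$ stabilizes, and Nakayama's lemma (applied to $R(A)$ as a module over itself) forces $R(A)^n = 0$ for some $n$; consequently $\bar{A} := A/R(A)$ has zero Jacobson radical and is therefore semisimple. The plan is then to decompose $\bar{A}$ as a finite direct sum of its isotypic components under the left regular representation (each a minimal two-sided ideal, hence a simple algebra) and to identify each simple summand with $\M(d_i,\F_i)$ via Jacobson density and Schur's lemma, where $\F_i$ is the division ring opposite to the endomorphism ring of the corresponding simple $\bar{A}$-module; this division ring is finite-dimensional over $\F$ because $A$ is.

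For part (2), I would induct on the nilpotency index $n$ with $R(A)^n = 0$; the case $n = 1$ is trivial since then $A$ itself is semisimple. For the inductive step, apply the statement to $A/R(A)^{n-1}$ (whose radical has nilpotency index $n-1$) to obtain a separable Wedderburn complement $\bar{S} \subseteq A/R(A)^{n-1}$, and lift $\bar{S}$ through $\pi\colon A \to A/R(A)^{n-1}$. Pick any $\F$-linear section $\sigma\colon \bar{S} \to \pi^{-1}(\bar{S})$ of $\pi$; then $c(x,y) := \sigma(x)\sigma(y) - \sigma(xy)$ takes values in $R(A)^{n-1}$, which is annihilated on both sides by $R(A)$ and therefore becomes a well-defined $\bar{S}$-bimodule. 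A short check shows $c$ is a Hochschild $2$-cocycle, and separability of $\bar{S}$ gives $H^2(\bar{S}, R(A)^{n-1}) = 0$, so $c = \delta b$ for some linear $b\colon \bar{S} \to R(A)^{n-1}$; replacing $\sigma$ by $\sigma - b$ yields an algebra-homomorphic section whose image $S$ is the desired complement.

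For part (3), the plan is to conjugate $T$ into $S$ layer by layer along the filtration $R(A) \supset R(A)^2 \supset \dotsb \supset R(A)^n = 0$. Separability of $T$ together with nilpotence of $R(A)$ forces $T \cap R(A) = 0$, so $T$ embeds into $A/R(A) \cong S$; in particular $T \subseteq S + R(A)$ without any conjugation. Inductively assume $T \subseteq S + R(A)^k$, write each $t \in T$ uniquely as $s(t) + \rho(t)$ with $s(t) \in S$ and $\rho(t) \in R(A)^k$, and note that $s\colon T \to S$ is an algebra homomorphism while the induced map $\bar\rho\colon T \to R(A)^k/R(A)^{k+1}$ is a Hochschild $1$-cocycle for the $T$-bimodule structure pulled back along $s$. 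Separability of $T$ gives $H^1(T, R(A)^k/R(A)^{k+1}) = 0$, so $\bar\rho(t) \equiv s(t)r - rs(t) \pmod{R(A)^{k+1}}$ for some $r \in R(A)^k$; since $r$ is nilpotent, $(1+r)^{-1} = 1 - r + r^2 - \dotsb$ exists in $A$, and a direct computation shows $(1+r)T(1+r)^{-1} \subseteq S + R(A)^{k+1}$. Iterating through $k = 1, \dotsc, n-1$ and composing the successive units $1 + r_k$ (each in $1 + R(A)^k$) produces a single element $1 + r$ with $r \in R(A)$ that conjugates $T$ into $S$.

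The main obstacle is establishing the vanishing $H^i(\bar{S}, M) = 0$ for $i = 1, 2$ whenever $\bar{S}$ is separable over $\F$ and $M$ is an arbitrary $\bar{S}$-bimodule; this is precisely what makes separability (rather than mere semisimplicity) essential, since over imperfect fields inseparable semisimple quotients can genuinely obstruct both the splitting in (2) and the conjugacy in (3). I would deduce this vanishing from the standard characterization that $\bar{S}$ is separable iff the multiplication map $\bar{S} \otimes_\F \bar{S}^{\mathrm{op}} \to \bar{S}$ admits an $\bar{S}$-bimodule section, equivalently iff a separability idempotent $e \in \bar{S} \otimes_\F \bar{S}^{\mathrm{op}}$ exists; such an idempotent makes $\bar{S}$ projective as a module over its enveloping algebra, which immediately kills all positive Hochschild cohomology and supplies the cocycle-splittings used above.
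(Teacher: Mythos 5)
The paper does not actually prove this statement---it is quoted as the classical Wedderburn--Mal'cev theorem with a citation to \cite{farnsteiner}---so the only comparison is with the standard cited argument, and your sketch is exactly that argument: Wedderburn--Artin structure theory for (1), vanishing of Hochschild $H^2(\bar S, M)$ for separable $\bar S$ (via the separability idempotent) to split off the radical by induction on its nilpotency index for (2), and vanishing of $H^1$ to conjugate $T$ into $S$ layer by layer along the filtration $R(A) \supset R(A)^2 \supset \dotsb$ for (3). The outline is correct, so there is nothing further to flag.
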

The last part of the preceding theorem is what we will use to show that the set of paths $i \to j$ in our graph is ``nearly characteristic;'' that is, it is not characteristic, but it is characteristic up to conjugacy (=inner automorphisms).

\begin{definition}[Leavitt path algebra]
Given a directed multigraph $G$ (possibly with parallel edges and self-loops, a.k.a. quiver), its \emph{Leavitt path algebra} $\Path(G)$ is the algebra of paths in $G$, where multiplication is given by concatenation of paths (when this is well-defined), and zero otherwise. That is, $\Path(G)$ is generated by $\{e_v : v \in V(G)\} \cup \{x_a : a \in E(G)\}$, where the generators $e_v$ are thought of as the ``path of length $0$'' at vertex $v$. The defining relations in $\Path(G)$ are that the product of two paths is their concatenation if the end of the first equals the start of the second, and zero otherwise. More formally, the relations are:
\begin{eqnarray*}
e_v e_w & = & \delta_{v,w} e_v \\
e_v x_{a} & = & \delta_{v,\text{start}(a)} x_{a} \\
x_{a} e_{v} & = & \delta_{v,\text{end}(a)} x_{a} \\
x_a x_b & = & 0 \text{ if } \text{start}(b) \neq \text{end}(a),
\end{eqnarray*}
where $\delta_{x,y}$ is the Kronecker delta: it is $1$ if $x=y$ and $0$ otherwise.
\end{definition}
Note that we are allowed to take formal linear combinations of paths in this algebra, as it is an $\F$-algebra (so in particular, it is an $\F$-vector space). The \emph{arrow ideal} of $\Path(G)$ is the two-sided ideal generated by the arrows, and has a basis consisting of all paths of length $\geq 1$; it is denoted $R_G$.

\begin{lemma}[{See \cite[Cor.~II.1.11]{AssemSimsonSkowronski}}] \label{lem:idempotents}
If $G$ is finite, connected, and acyclic, then $R(\Path(G))$ is the arrow ideal $R_G$, and has a basis consisting of all paths of length $\geq 1$, and the set $\{e_v : v \in V(G)\}$ is a complete set of primitive orthogonal idempotents.
\end{lemma}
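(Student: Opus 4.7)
The plan is to prove the three claims of the lemma in the order (paths as basis) $\Rightarrow$ (idempotent structure) $\Rightarrow$ (identification of the Jacobson radical), since the first two are essentially a direct unpacking of the defining relations, while the radical computation uses them as input.

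First I would observe that $\Path(G)$ is finite-dimensional: since $G$ is finite and acyclic, every directed path has length at most $|V(G)|-1$, so there are only finitely many paths. The defining relations are exactly the concatenation rules, so the set of all directed paths (including the length-$0$ paths $e_v$) spans $\Path(G)$; linear independence follows because the relations only identify a product of generators either with the corresponding path or with $0$, never with a nontrivial linear combination. Thus the paths form an $\F$-basis of $\Path(G)$, and the paths of length $\geq 1$ span a two-sided ideal $R_G$, which is precisely the arrow ideal.

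Next I would verify the idempotent statements directly from the defining relations. Orthogonality and idempotence of the $e_v$ are immediate from $e_v e_w = \delta_{v,w} e_v$. For completeness I would check that $\mathbf 1 := \sum_{v \in V(G)} e_v$ acts as identity on every basis path: for a path $p$ starting at $v$ and ending at $w$ we have $e_v p = p = p e_w$ and $e_u p = 0 = p e_u$ for $u \notin \{v,w\}$, so $\mathbf 1 \cdot p = p = p \cdot \mathbf 1$. Consequently any element orthogonal to every $e_v$ must be $0$, so $\{e_v\}$ is maximal among sets of pairwise orthogonal idempotents. To see that each $e_v$ is primitive I would examine the corner algebra $e_v \Path(G) e_v$: it is spanned by paths from $v$ to $v$, but since $G$ is acyclic the only such path is $e_v$ itself, so $e_v \Path(G) e_v = \F \cdot e_v \cong \F$, and a field has only the idempotents $0$ and $1$. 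Hence $e_v$ cannot be written as a sum of two nonzero orthogonal idempotents.

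Finally I would identify the Jacobson radical by sandwiching $R(\Path(G))$ between $R_G$ and itself. For the inclusion $R_G \subseteq R(\Path(G))$, note that any product of $|V(G)|$ paths of length $\geq 1$ is a path of length $\geq |V(G)|$, which cannot exist in an acyclic graph on $|V(G)|$ vertices; hence $R_G^{|V(G)|} = 0$, and any nilpotent two-sided ideal is contained in the Jacobson radical. For the reverse inclusion, I would show that the quotient $\Path(G)/R_G$ is semisimple: the images of the $e_v$ remain a complete set of orthogonal idempotents, every basis element of $R_G$ vanishes, and the only basis elements left are the $e_v$, so the quotient is isomorphic as an $\F$-algebra to $\bigoplus_{v \in V(G)} \F e_v \cong \F^{|V(G)|}$, which is semisimple. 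Since $R(\Path(G))$ is the smallest ideal with semisimple quotient, we conclude $R(\Path(G)) \subseteq R_G$ and hence equality. The main (though still mild) subtlety is making sure that the quotient basis argument is rigorous — this is where acyclicity is used a second time, to ensure no surprise cancellations happen in $\Path(G)/R_G$.
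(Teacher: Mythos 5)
Your proof is correct, and it is essentially the standard argument that the cited reference [Cor.~II.1.11, Assem--Simson--Skowro\'nski] gives (the paper itself supplies no proof, deferring entirely to that citation): paths as a basis, $e_vAe_v\cong\F$ by acyclicity for primitivity, and the sandwich $R_G$ nilpotent $\subseteq R(\Path(G))\subseteq R_G$ via the semisimple quotient $\Path(G)/R_G\cong\F^{|V(G)|}$. The only nitpicks are cosmetic: connectedness is never needed, and your closing remark misattributes the role of acyclicity --- the quotient identification $\Path(G)/R_G\cong\F^{|V(G)|}$ holds for any finite quiver, acyclicity being used only for finite-dimensionality and the nilpotency of $R_G$.
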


\begin{corollary}
Let $G$ be a finite, connected, acyclic graph, and $I$ an ideal of $\Path(G)$ contained in $R_G$; let $A = \Path(G) / I$. Then (1) $R(A) = R_G / I$, (2) $A/R(A) \cong \F^{\oplus |V(G)|}$, whence $A/R(A)$ is separable, and (3) $\{\overline{e}_v : v \in V(G)\}$ is a complete set of primitive orthogonal idempotents, where $\overline{e}_v$ is the image of $e_v$ under the quotient map $\Path(G) \to \Path(G) / I = A$.
\end{corollary}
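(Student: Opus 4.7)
The plan is to derive the corollary from Lemma~\ref{lem:idempotents} together with general ring-theoretic facts about quotients by ideals contained in the Jacobson radical. None of the steps should require substantial calculation; the main task is to chain together standard facts correctly.

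For part (1), I would invoke the standard fact that if $I$ is a two-sided ideal of a ring $B$ with $I \subseteq R(B)$, then $R(B/I) = R(B)/I$. Since Lemma~\ref{lem:idempotents} tells us $R(\Path(G)) = R_G$ and our hypothesis is $I \subseteq R_G$, this gives $R(A) = R_G/I$ immediately. For part (2), I would then apply the third isomorphism theorem: $A/R(A) = (\Path(G)/I) / (R_G/I) \cong \Path(G)/R_G$. The lemma tells us that $R_G$ has basis consisting of all paths of length $\geq 1$, so $\Path(G)/R_G$ has basis given by the images of the vertex idempotents $\{\bar e_v\}_{v \in V(G)}$. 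The relations $e_v e_w = \delta_{v,w} e_v$ from the path algebra definition then show this quotient is isomorphic to $\F^{\oplus |V(G)|}$ as algebras, where each summand is spanned by one $\bar e_v$. Separability of $\F^{\oplus |V(G)|}$ is classical (a finite product of copies of $\F$ remains a product of copies under any base change).

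For part (3), idempotency and pairwise orthogonality of the $\bar e_v$ in $A$ follow immediately from the corresponding relations in $\Path(G)$ (these relations survive any quotient). Completeness likewise carries over, since $\sum_{v \in V(G)} e_v$ is the identity of $\Path(G)$ and hence $\sum_v \bar e_v$ is the identity of $A$. The one point requiring a brief argument is primitivity: an idempotent $e \in A$ is primitive if and only if its image in $A/R(A)$ is primitive, since orthogonal idempotent decompositions lift uniquely through the radical (again a standard fact, e.\,g., by Hensel-type lifting of idempotents modulo a nil ideal). By part (2), the image of $\bar e_v$ in $A/R(A) \cong \F^{\oplus |V(G)|}$ is the standard basis vector of the $v$-th factor, which is clearly primitive in a product of fields. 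This forces $\bar e_v$ itself to be primitive in $A$.

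The only step that requires any care is the verification that primitivity descends through the quotient by $I \subseteq R(\Path(G))$ and then lifts back through the quotient by $R(A)$; but this is a textbook property of the Jacobson radical rather than anything specific to path algebras. No single step looks like a genuine obstacle, so the main work is just assembling the citations cleanly.
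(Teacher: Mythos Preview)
Your proposal is correct and follows essentially the same route as the paper for parts (1) and (2): invoke the standard fact $R(B/I) = R(B)/I$ when $I \subseteq R(B)$, then the third isomorphism theorem to identify $A/R(A) \cong \Path(G)/R_G \cong \F^{|V(G)|}$.

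For part (3) there is a minor but noteworthy difference. The paper argues directly: it checks that each $\overline{e}_v$ is nonzero (because its image in $\Path(G)/R_G$ is nonzero and the projection factors through $A$), and that the set is maximal by showing any further orthogonal primitive idempotent would have to lie outside $R(A)$ (since $R_G$, hence $R_G/I$, is nilpotent for a finite acyclic $G$) and would then yield an extra primitive idempotent in $\Path(G)/R_G$, a contradiction. Your argument instead uses $\sum_v \overline{e}_v = 1$ for completeness and the idempotent-lifting correspondence through a nil radical for primitivity. Both are standard and short; your version is arguably cleaner on primitivity, while the paper's version makes the role of nilpotence of $R_G$ (coming from acyclicity of $G$) slightly more explicit. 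Either way the substance is the same.
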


\begin{proof}
(1) This holds for any ideal contained in the radical of any finite-dimensional associative unital algebra \cite[Prop.~4.6]{lam}. 

(2) It is clear that as vector spaces, $\Path(G) = \langle e_1, \dotsc, e_n 
\rangle \oplus R_G$ (where $n=|V(G)|$), and the span of the $e_i$ is easily seen 
to be an algebra isomorphic to $\F^n$, where the $i$-th copy of $\F$ is spanned by 
$\pi(e_i)$, where $\pi\colon \Path(G) \to \Path(G) / R_G$ is the natural 
projection. Thus $\Path(G) / R_G \cong \F^n$. Since $R(A) = R_G / I$, we have $A / 
R(A) = (\Path(G) / I) / (R_G / I) \cong \Path(G) / R_G \cong \F^n$. As a 
semisimple algebra, we thus have that $A/R(A) \cong \bigoplus \M(1,\F)$, and as 
$\F$ is always a separable extension over itself, $A/R(A)$ is separable.

(3) The property of being a set of primitive orthogonal idempotents is preserved 
by homomorphisms, so there are only two things to check here: first, that none of 
the $\overline{e}_v$ is zero modulo $I$, and second, that there are no additional 
primitive idempotents in $A$ that are mutually orthogonal with every $\overline{e}_v$. To see that none of the $\overline{e}_v$ are zero, 
note that $\pi\colon \Path(G) \to \Path(G) / R_G$ factors through $A$; then since 
$\pi(e_v) \neq 0$ for any $v$ (from the previous paragraph), it must be the case 
that $\overline{e}_v \neq 0$ as well. Finally, we must show this is a complete set 
of primitive orthogonal idempotents. Suppose not; that is, suppose there is some 
$e \notin \{\overline{e}_v : v \in V(G)\}$ such that $e$ is a primitive idempotent 
that is orthogonal in $A$ to every $\overline{e}_v$. First, we claim that $e 
\notin R(A) = R_G / I$. For, since $G$ is a finite acyclic graph, its arrow ideal 
$R_G$ is nilpotent: there are no paths longer than $n-1=|V(G)-1|$, so we must have 
$R_G^n = 0$, whence $R_G$ cannot contain any idempotents. Since $R_G$ is 
nilpotent, the same must be true of $R_G / I$, whence $R_G$ cannot contain any 
idempotents, so $e$ cannot be in $R_G$. But then the image of $e$ in $A/R_G$ is 
nonzero (since $e \notin R_G$), so $e$ is another primitive idempotent orthogonal 
to every $\pi(e_v)$ in $\Path(G) / R_G = A / R(A)$. But this is a contradiction, 
since $\{\pi(e_v)\}$ is already a complete set of primitive orthogonal idempotents 
for $A/R(A)$.
\end{proof}

Finally, in the course of the proof, we will use the following construction of Grigoriev:

\begin{theorem}[{Grigoriev \cite[Theorem~1]{Grigoriev83}}] \label{thm:grigoriev}
\GIlong is equivalent to \AlgIsolong for algebras $A$ such that the radical squares to zero and $A/R(A)$ is abelian.
\end{theorem}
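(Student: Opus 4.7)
The plan is to establish polynomial-time reductions in both directions between \GI and \AlgIsolong restricted to associative algebras $A$ with $R(A)^2 = 0$ and $A/R(A)$ abelian, using the Wedderburn--Mal'cev structure theory recorded in Theorem~\ref{thm:WM}.

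For the direction \GI $\le$ \AlgIsolong, given an undirected graph $G = ([n],E)$, I would build an algebra $A_G$ whose semisimple part is $S = \F^n$ with primitive orthogonal idempotents $e_1,\dots,e_n$, one per vertex, and whose radical $R$ is an $S$-bimodule with basis $\{r_{ij}, r_{ji} : \{i,j\} \in E\}$ satisfying $e_i r_{ij} = r_{ij} = r_{ij} e_j$ (and all other idempotent products zero), together with $R^2 = 0$. Then $A_G = S \oplus R$ is associative, $R(A_G) = R$, $R^2 = 0$, and $A_G/R \cong \F^n$ is abelian. Any graph isomorphism $\pi: G \to H$ extends to an algebra isomorphism $\phi: A_G \to A_H$ via $\phi(e_i) = e_{\pi(i)}$ and $\phi(r_{ij}) = r_{\pi(i)\pi(j)}$, and the structure constants can be written down in polynomial time.

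For the converse direction \AlgIsolong $\le$ \GI, I would attach a colored multigraph $G_A$ to each algebra $A$ in the class and show it is a complete isomorphism invariant (up to ordinary \GI). Concretely: by Theorem~\ref{thm:WM}(1) together with $A/R(A)$ being abelian, $S := A/R(A) \cong \bigoplus_{i=1}^n \F_i$ for finite separable field extensions $\F_i/\F$; by (2), $A = S \oplus R$ as $\F$-vector spaces, with primitive orthogonal idempotents $e_1, \dots, e_n$ lifting those of $A/R$. Decompose $R = \bigoplus_{i,j} e_i R e_j$ and let $G_A$ be the directed multigraph on vertices $[n]$ with vertex $i$ colored by the isomorphism class of $\F_i$ and with $\dim_{\F} e_i R e_j$ parallel arrows from $i$ to $j$. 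Given any algebra isomorphism $\phi: A \to A'$, the radical is characteristic so $\phi(R) = R'$, hence $\phi$ descends to an isomorphism $\bar\phi: S \to S'$ which must permute the primitive idempotents via some $\pi \in S_n$ (respecting the $\F_i$-coloring). Invoking Theorem~\ref{thm:WM}(\ref{thm:WM:conj}) to conjugate by some $1+r$ with $r \in R$, I can assume $\phi(e_i) = e_{\pi(i)}$ on the nose, and then $\phi$ restricts to an $S$-bimodule isomorphism $e_i R e_j \to e_{\pi(i)} R' e_{\pi(j)}$, giving matching dimensions. Conversely, any color-preserving isomorphism of $G_A$ to $G_{A'}$ can be promoted to an algebra isomorphism by choosing any bimodule isomorphism between each matched pair of blocks (which exist since dimensions agree over the same field). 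Standard gadgets reduce colored multigraph isomorphism to ordinary \GI.

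The main obstacle is the effective and careful handling of Wedderburn--Mal'cev, namely algorithmically computing the decomposition $A = S \oplus R$, a complete set of primitive orthogonal idempotents, and the block structure $e_i R e_j$ in polynomial time from structure constants; this is classical over finite fields and algebraic number fields (see, e.g., \cite{IR99}) and is what lets both reductions run in polynomial time. A secondary subtlety is the use of Theorem~\ref{thm:WM}(\ref{thm:WM:conj}) to absorb the ``radical ambiguity'' in $\phi(e_i)$, which is what makes the dimensions $\dim e_i R e_j$—rather than some finer invariant—suffice as the complete invariant captured by $G_A$.
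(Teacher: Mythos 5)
Your overall strategy is the same one the paper (following Grigoriev) uses: encode a graph as a split semisimple part plus a square-zero radical (essentially $\Path(G)/R_G^2$), and in the other direction recover a colored multigraph whose arrow multiplicities are $\dim_\F e_i R e_j$, using Wedderburn--Mal'cev to normalize the images of the idempotents. The \GI $\leq$ \AlgIsolong direction and the normalization $\phi(e_i)=e_{\pi(i)}$ via Theorem~\ref{thm:WM}(\ref{thm:WM:conj}) are fine, and this is exactly how the paper argues in its application (where $A/R(A) \cong \F^{|V|}$ is split and the graph is rigid).

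There is, however, a genuine gap in your converse reduction for the class as stated ($A/R(A)$ abelian, not necessarily split). You claim that a color-preserving isomorphism of the multigraphs ``can be promoted to an algebra isomorphism by choosing any bimodule isomorphism between each matched pair of blocks (which exist since dimensions agree over the same field).'' This is false in general: $e_i R e_j$ is not just an $\F$-vector space but a module over $\F_i \otimes_\F \F_j$, and such modules are not classified by their $\F$-dimension when $\F_i \otimes_\F \F_j$ is not simple. Concretely, take $\F = \F_p$ and $\F_1 = \F_{p^2}$, and compare $A_1 = \F_{p^2}[\varepsilon]/(\varepsilon^2)$ with the algebra $A_2 = \F_{p^2} \oplus M$ in which $s\cdot m = sm$ but $m \cdot s = s^p m$ (Frobenius-twisted right action), with $M^2=0$; the same phenomenon occurs over $\Q$ with $\Q(i)$ and complex conjugation. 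Both algebras are unital and associative, have $R(A)^2 = 0$ and $A/R(A) \cong \F_{p^2}$ abelian, and produce the identical colored one-vertex multigraph with $\dim_\F e_1 R e_1 = 2$; yet $A_1$ is commutative and $A_2$ is not, so they are not isomorphic. Hence vertex colors (field isomorphism types) together with $\dim_\F e_i R e_j$ do not form a complete invariant, and your promotion step breaks. To repair it you would need to refine the edge data to the isomorphism type of $e_i R e_j$ as an $\F_i\otimes_\F\F_j$-module and, more delicately, account for the fact that an algebra isomorphism may act on each $\F_i$ by a field automorphism, which must be chosen consistently across all edges incident to vertex $i$ --- a global constraint that plain colored-graph isomorphism does not automatically capture and which needs additional gadgetry (or a restriction to split $A/R(A)$, which is all the present paper ever uses from Grigoriev's construction).
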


In our proof, all we will need aside from Grigoriev's result is to see the construction itself, which we recall here in language consistent with ours.

\begin{proof}[Construction \cite{Grigoriev83}]
Given a graph $G$, construct an algebra $\cA_G$ as follows: it is generated by $\{e_i : i \in V(G)\} \cup \{e_{ij} : (i,j) \in E(G)\}$ subject to the following relations: $e_i e_j = \delta_{ij} e_i$, $e_i e_{kj} = \delta_{ik} e_{kj}$, $e_{kj}e_i = \delta{ij} e_{kj}$, $e_{ij} e_{kl} = 0$ when $j \neq k$, $R(\cA_G)$ is generated by $\{e_{ij}\}$, and the radical squares to zero. It is immediate that this is just $\Path(G) / R_G^2$. From any such algebra $\cA$, Grigoriev recovers a corresponding weighted graph, where the weight on $(i,j)$ is $\dim e_i \cA e_j$. In our setting we use multiple parallel edges rather than weight, but the proof goes through \emph{mutatis mutandis}.
\end{proof}

\subsection{Proof of Theorem~\ref{thm:d_to_3}}
\begin{proof}
Let $\tA$ be an $n_1 \times n_2 \times \dotsb \times n_d$ $d$-tensor. Let $G$ be the following directed multigraph (see Figure~\ref{fig:graph}): it has $d$ vertices, labeled $1,\dotsc,d$, and for $i=1,\dotsc,d-1$, it has $n_i$ parallel arrows from vertex $i$ to vertex $i+1$, and $n_d$ parallel arrows from $1$ to $d$. 

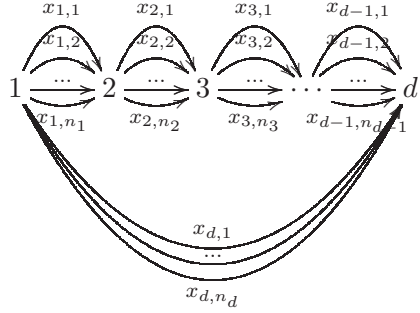
\begin{figure}[!htbp]
\[
\xymatrix{
1 \ar[r] \ar@/^2pc/[r]^{x_{1,1}} \ar@/^1pc/[r]^{x_{1,2}} \ar[r]^{...} \ar@/_/[r]_{x_{1,n_1}}  \ar@/_5pc/[rrrr]^{x_{d,1}} \ar@/_5.5pc/[rrrr]^{...} \ar@/_6pc/[rrrr]_{x_{d,n_d}} & 
2 \ar[r] \ar@/^2pc/[r]^{x_{2,1}} \ar@/^1pc/[r]^{x_{2,2}} \ar[r]^{...} \ar@/_/[r]_{x_{2,n_2}} & 
3 \ar[r] \ar@/^2pc/[r]^{x_{3,1}} \ar@/^1pc/[r]^{x_{3,2}} \ar[r]^{...} \ar@/_/[r]_{x_{3,n_3}} & 
\dotsb \ar[r] \ar@/^2pc/[r]^{x_{d-1,1}} \ar@/^1pc/[r]^{x_{d-1,2}} \ar[r]^{...} \ar@/_/[r]_{x_{d-1,n_{d-1}}} & 
d \\
}
\]
\caption{ \label{fig:graph} The graph $G$ whose path algebra we take a quotient of to construct the reduction for Theorem~\ref{thm:d_to_3}.}
\end{figure}

Because of the structure of this graph, we can index the generators of $\Path(G)$ a little more mnemonically than in the preliminaries above: let the generators corresponding to the $n_i$ arrows from $i \to (i+1)$ be $x_{i,a}$ for $a=1,\dotsc,n_i$, and let the generators corresponding to the $n_d$ arrows $1 \to d$ be $x_{d,a}$ for $a=1,\dotsc,n_d$. Let $\cA$ be the quotient of $\Path(G)$ by the relation\footnote{For those familiar with quiver algebras, we note that this ideal is \emph{not} admissible, as it is not contained in $R_G^2$. It can probably be made admissible by inserting new vertices in the middle of each edge $1 \to d$. However, when we tried to do that in a naive way, we ran into problems verifying the reduction, as what should be a linear transformation either ends up being incorrect or ends up being quadratic, either of which caused issues.}
\begin{equation}
x_{1,i_1} x_{2,i_2} \dotsb x_{d-1,i_{d-1}} = \sum_{j=1}^{n_d} \tA(i_1, i_2, \dotsc, i_{d-1}, j) x_{d,j} \label{eq:rel_tensor}
\end{equation}

At the moment, we only have $\cA$ in terms of generators and relations; however, 
it will be easy to turn it into its basis representation. The key is to bound its 
dimension, which we do now.
Except for paths of length $d-1$ (because of the nontrivial relations (\ref{eq:rel_tensor})), this is just counting the number of paths in the graph described above. The only nonzero monomials of degree $k+1$ are those of the form $x_{i,a_i} x_{i+1,a_{i+1}} x_{i+2, a_{i+2}} \dotsb x_{i+k,a_{i+k}}$. For a given choice of $i \in \{1,\dotsc,d-1-k\}$, there are exactly $n_i n_{i+1} \dotsb n_{i+k}$ such monomials, so we have 
\begin{eqnarray*}
\dim \cA & = & \#\{e_i\} + n_d + \sum_{k < d-1} \sum_{i=1}^{d-1-k} \#\{\text{paths } i \to (i+k)\}\\
& = & d + n_d + \sum_{k=0}^{d-2} \sum_{i=1}^{d-1-k} \prod_{j=i}^{i+k} n_j \\
& \leq & 2n + \sum_{k=0}^{d-2} \sum_{i=1}^{d-1-k} n^{k+1} \\
& \leq & O(d^2 n^{d-1}).
\end{eqnarray*}
Note that in the first line we can exactly specify $\dim \cA$, independent of $\tA$ itself (depending only on its dimensions).
For any fixed $d$, this dimension is polynomial in $n$.
By the linear-algebraic analogue of breadth-first search, we may thus list a basis 
for $\cA$ and its structure constants with respect to that basis.

We claim that the map $\tA \mapsto \cA$ is a reduction. Suppose $\tB$ is another tensor of the same dimension, and let $\cB$ be the associated algebra as above. We claim that $\tA \cong \tB$ as $d$-tensors if and only if $\cA \cong \cB$ as algebras.

\textbf{For the only if direction,} suppose $\tA \cong \tB$ via $(P_1, P_2, \dotsc, P_d) \in \GL(n_1, \F) \times \dotsb \times \GL(n_d, \F)$, that is
\[
\tA(i_1, \dotsc, i_d) = \sum_{j_1, \dotsc, j_d} (P_1)_{i_1,j_1} \dotsb (P_d)_{i_d,j_d} \tB(j_1, \dotsc, j_d)
\]
for all $i_1, \dotsc, i_d$. Then we claim that the block-diagonal matrix $P = 
\diag(P_1, P_2, \dotsc, P_{d-1}, P_d^{-1}) \in \GL(n,\F)$ (where $n=\sum_{i=1}^d 
n_i$), together with mapping $e_i$ to $e_i$, induces an isomorphism from $\cA$ to 
$\cB$. Note that $P$ itself is not an isomorphism, as $\dim \cA \approx n^d$, but 
$P$ specifies a linear map on the generators of $\cA$, which we may then 
exend to all of $\cA$.

First let us see that $P$ indeed gives a well-defined homomorphism $\cA \to \cB$. Since $P$ is only defined on the generators and is, by definition, extended by distributivity, the only thing to check here is that $P$ sends the relations of $\cA$ into the relations of $\cB$. Let $y_{1,1}, \dotsc, y_{1,n_1}, \dotsc, y_{d,n_d}, e_1, \dotsc, e_{d}$ denote the basis of $\cB$ as above. The map $P$ is defined by $P(e_i) = e_i$, 
\[
P(x_{i,a}) = \sum_{a'=1}^{n_i} (P_i)_{aa'} y_{i,a'} \qquad \text{ for } i=1,\dotsc,d-1
\]
and
\[
P(x_{d,a}) = \sum_{a'=1}^{n_d} (P_d^{-t})_{aa'} y_{d,a'}.
\]
By left multiplying by $P_d^t$, we may rewrite this last equation as
\[
y_{d,a} = \sum_{a'=1}^{n_d} (P_d)_{a',a} P(x_{d,a'}),
\]
note the transpose.

To check the relations, let us write out the Leavitt path algebra relations explicitly for our graph, in our notation. The generators of $\cA$ are $x_{1,1}, x_{1,2}, \dotsc, x_{1,n_1}, x_{2,1}, x_{2,2}, \dotsc, x_{2,n_2}, \dotsc, x_{d,n_d}, e_1, \dotsc, e_d$, and the relations are (\ref{eq:rel_tensor}) and the quiver relations:
\begin{eqnarray}
e_i e_j & = & \delta_{i,j} e_i \nonumber \\
e_i x_{j,a} & = & (\delta_{i,j} + \delta_{i,1}\delta_{j,d}) x_{j,a} \nonumber \\
x_{j,a} e_i & = & (\delta_{j+1,i} + \delta_{j,d}\delta_{i,d}) x_{j,a} \nonumber \\
x_{i,a} x_{d,b} & = & 0 \label{eq:rel_d_annihilate} \\
x_{d,b} x_{i,a} & = & 0 \quad (i < d) \nonumber \\
x_{i,a} x_{j,b} & = & 0 \quad \text{ if } j \neq i+1 \nonumber 
\end{eqnarray}
Note that the set $e_i \cA e_j$ is linearly spanned by the paths $i \to j$ in this graph.

The relations involving the $e_i$ are easy to verify, since they only depend on the first subscript of $x_{i,a}$ (resp., $y_{j,b}$), and $P$ does not alter this subscript. 

For relation (\ref{eq:rel_d_annihilate}), we have:
\begin{eqnarray*}
P(x_{i,a} x_{d,b}) & = & P(x_{i,a}) P(x_{d,b}) \\
& = & \left(\sum_{a'=1}^{n_i} (P_i)_{aa'} y_{i,a'}\right)\left(\sum_{b'=1}^{n_d} (P_d)_{bb'} y_{d,b'}\right) \\
& = & \sum_{a'=1}^{n_i} \sum_{b'=1}^{n_d} (P_i)_{aa'} (P_d)_{bb'} y_{i,a'} y_{d,b'} = 0,
\end{eqnarray*}
where the final inequality comes from the defining relations $y_{i,a'} y_{d,b'} = 0$ in $\cB$.

The verification for remaining quiver relations is similar, since $P$ does not alter the start and end vertices of any arrow (though it may send a single arrow $i \to j$ in $\cA$ to a linear combination of arrows $i \to j$ in $\cB$).

We now verify the relation (\ref{eq:rel_tensor}). We have
\begin{align*}
& P(x_{1,i_1} x_{2,i_2} \dotsb x_{d-1,i_{d-1}}) \\
& =  \sum_{j_1=1}^{n_1} \sum_{j_2=1}^{n_2} \dotsb \sum_{j_{d-1}=1}^{n_{d-1}} (P_1)_{i_1,j_1} (P_2)_{i_2,j_2} \dotsb (P_{d-1})_{i_{d-1}, j_{d-1}} y_{1,j_1} y_{2,j_2} \dotsb y_{d-1,j_{d-1}} \\
& =  \sum_{j_1,j_2,\dotsb,j_{d-1}} (P_1)_{i_1,j_1} (P_2)_{i_2,j_2} \dotsb (P_{d-1})_{i_{d-1}, j_{d-1}} \sum_{j_d=1}^{n_d} \tB(j_1,j_2,\dotsc,j_d) y_{d,j_d}  \\
& =  \sum_{j_1, \dotsb, j_{d-1}} (P_1)_{i_1,j_1} (P_2)_{i_2,j_2} \dotsb (P_{d-1})_{i_{d-1}, j_{d-1}} \sum_{j_d=1}^{n_d} \tB(j_1,j_2,\dotsc,j_d) \sum_{i_d=1}^{n_d} (P_d)_{i_d,j_d} P(x_{d,i_d}) \\
& =  \sum_{i_d=1}^{n_d} \left( \sum_{j_1, \dotsb, j_{d-1},j_d} (P_1)_{i_1,j_1} \dotsb (P_d)_{i_d,j_d} \tB(j_1,\dotsc,j_d) \right) P(x_{d,i_d}) \\
& =  \sum_{i_d=1}^{n_d} \tA(i_1,\dotsc,i_d) P(x_{d,i_d}),
\end{align*}
as desired. Thus the map $\cA \to \cB$ induced by $P$ is an algebra homomorphism.

Next, since $P$ is an isomorphism of $(d+n)$-dimensional vector spaces, the map it induces $\cA \to \cB$ is surjective on the generators of $\cB$, whence it is surjective onto all of $\cB$. Finally, since $\dim \cA = \dim \cB < \infty$, any linear surjective map $\cA \to \cB$ is automatically bijective, so this map is indeed an isomorphism of algebras.

\textbf{For the if direction,} suppose that $f\colon \cA \to \cB$ is an isomorphism of algebras. Since the Jacobson radical is characteristic, we have $f(R(\cA)) = R(\cB)$. Then $\{f(e_v) : v \in V\}$ is a set of primitive orthogonal idempotents in $\cB$, and their span $T = \langle f(e_v) : v \in V \rangle$ is a separable subalgebra (isomorphic to $\F^n$) such that $\cB = T \oplus R(\cB)$. By the Wedderburn--Mal'cev Theorem (Theorem~\ref{thm:WM}(\ref{thm:WM:conj})), there is some $r \in R(\cB)$ such that $(1+r)T(1+r)^{-1} = \langle e_1, \dotsc, e_n \rangle =: S$. Since the $e_i$ are the only primitive idempotents in $S$, we must have that $(1+r)f(e_i)(1+r)^{-1} = e_{\pi(i)}$ for all $i$ and some permutation $\pi \in S_n$. 

Next we will show that this permutation is in fact the identity, so that $(1+r)f(e_i)(1+r)^{-1} = e_i$ for all $i$. For this, consider $\cA' = \cA / R(\cA)^2$ and similarly $\cB'$. These are precisely the algebras considered by Grigoriev \cite{Grigoriev83} (reproduced as Theorem~\ref{thm:grigoriev} above). Since $R(\cA)$ is characteristic, so is its square, and thus $f$ induces an isomorphism $\cA' \stackrel{\cong}{\to} \cB'$. By Theorem~1 of Grigoriev \cite{Grigoriev83}, any isomorphism $\cA' \to \cB'$ induces an isomorphism of the corresponding graphs, so this isomorphism must map $e_i$ to $e_i$ for each $i$ (since our graph $G$ has no automorphisms). Thus $\pi$ must be the identity, and $(1+r)f(e_i)(1+r)^{-1} = e_i$ for all $i$.

Since conjugation is an automorphism, let $f'\colon \cA \to \cB$ be $c_{1+r} \circ f$, where $c_{1+r}(b) = (1+r)b(1+r)^{-1}$. By the above, $f'(e_i) = e_i$ for all $i$. 
Thus $f'(e_i \cA e_j) = e_i \cB e_j$. In particular, define $P_i$ to be the restriction of $f'$ to $e_i \cA e_{i+1}$ for $i=1,\dotsc,d-1$ and $P_d$ to be the restriction of $f'$ to $e_1 \cA e_d$. Then we have that $P_i$ is a linear bijection from the span of $x_{i,1},\dotsc,x_{i,n_i}$ to the span of $y_{i,1},\dotsc,y_{i,n_i}$ for all $i$. We claim that $P = (P_1,\dotsc,P_{d-1},P_d^{-t})$ is a tensor isomorphism $\tA \to \tB$, that is, 
\[
\tA(i_1, \dotsc, i_d) = \sum_{j_1, \dotsc, j_d} (P_1)_{i_1,j_1} \dotsb (P_d^{-t})_{i_d,j_d} \tB(j_1, \dotsc, j_d).
\]
From the fact that $f'$ is an isomorphism, we have
\begin{eqnarray*}
\sum_{i_d=1}^{n_d} \tA(i_1, \dotsc, i_d) f'(x_{d,i_d}) & = & f'(x_{1,i_1} x_{2,i_2} \dotsb x_{d-1,i_{d-1}}) \\
\sum_{i_d=1}^{n_d} \tA(i_1, \dotsc, i_d) \sum_{j_d=1}^{n_d} (P_d)_{i_d,j_d} y_{d,j_d} & = & f'(x_{1,i_1}) f'(x_{2,i_2}) \dotsb f'(x_{d-1,i_{d-1}}) \\
 & = & \sum_{j_1,\dotsc,j_{d-1}} (P_1)_{i_1,j_1} (P_2)_{i_2,j_2} \dotsb (P_{d-1})_{i_{d-1}, j_{d-1}} y_{1,j_1} y_{2,j_2} \dotsb y_{d-1,j_{d-1}} \\
 & = & \sum_{j_1, \dotsc, j_{d-1}} (P_1)_{i_1,j_1} (P_2)_{i_2,j_2} \dotsb (P_{d-1})_{i_{d-1}, j_{d-1}} \sum_{j_d=1}^{n_d} \tB(j_1,\dotsc,j_d) y_{d,j_d}
\end{eqnarray*}
For each $j_d \in \{1,\dotsc,n_d\}$, equating the coefficient of $y_{d,j_d}$ gives
\begin{eqnarray*}
\sum_{i_d=1}^{n_d} \tA(i_1, \dotsc, i_d)  (P_d)_{i_d,j_d} & = & \sum_{j_1, \dotsc, j_{d-1}} (P_1)_{i_1,j_1} (P_2)_{i_2,j_2} \dotsb (P_{d-1})_{i_{d-1}, j_{d-1}} \tB(j_1,\dotsc,j_d)
\end{eqnarray*}
Let $\tA(i_1, \dotsc, i_{d-1}, -)$ be the natural row vector of length $n_d$, and similarly for $\tB(j_1, \dotsc, j_{d-1}, -)$. Then we may rewrite the preceding set of $n_d$ equations (one for each choice of $j_d$) in matrix notation as
\[
\tA(i_1, \dotsc, i_{d-1}, -) \cdot P_d = \sum_{j_1, \dotsc, j_{d-1}} (P_1)_{i_1,j_1} (P_2)_{i_2,j_2} \dotsb (P_{d-1})_{i_{d-1}, j_{d-1}} \tB(j_1,\dotsc, j_{d-1},-)
\]
Right multiplying by $P_d^{-1}$, we then get
\begin{eqnarray*}
\tA(i_1, \dotsc, i_{d-1},-) & = & \sum_{j_1, \dotsc, j_{d-1}} (P_1)_{i_1,j_1} (P_2)_{i_2,j_2} \dotsb (P_{d-1})_{i_{d-1}, j_{d-1}} \tB(j_1,\dotsc, -) P_d^{-1} \\
\tA(i_1,\dotsc,i_d) & = & \sum_{j_1, \dotsc, j_{d-1},j_d} (P_1)_{i_1,j_1} (P_2)_{i_2,j_2} \dotsb (P_{d-1})_{i_{d-1}, j_{d-1}} \tB(j_1,\dotsc, j_d) (P_d^{-1})_{j_d,i_d} \\
& = & \sum_{j_1,\dotsc,j_d} (P_1)_{i_1,j_1} (P_2)_{i_2,j_2} \dotsb (P_{d-1})_{i_{d-1}, j_{d-1}} (P_d^{-t})_{i_d,j_d} \tB(j_1,\dotsc, j_d),
\end{eqnarray*}
as claimed.
\end{proof}

\section{Conclusion: universality and open questions} \label{sec:conclusion}

\subsection{Towards universality for basis-explicit linear structures} \label{sec:universality}
A classic result is that \GI is complete for isomorphism problems of explicitly given structures (see, e.\,g., \cite[Section~15]{ZKT}). Here we formally state the linear-algebraic analogue of this result, and observe trivially that the results of \cite{FGS19} already show that 3-Tensor Isomorphism is universal among what we call ``basis-explicit'' (multi)linear structures of degree 2.

First let us recall the statement of the result for \GI, so we can develop the 
appropriate analogue for tensor isomorphism. A \emph{first-order signature} is a 
list of positive integers $(r_1, r_2, \dotsc, r_k; f_1, \dotsc, f_\ell)$; a 
\emph{model} of this signature consists of a set $V$ (colloquially referred to as 
``vertices''), $k$ relations $R_i\subseteq V^{r_i}$, 
and $\ell$ functions $F_i \colon V^{f_i} 
\to V$. The numbers $r_i$ are thus the arities of the relations $R_i$, and the 
$f_i$ are the arities of the functions $F_i$.\footnote{Sometimes one also includes 
constants in the definition, but these can be handled as relations of arity 1. 
While we could have done the same for functions, treating a function of arity $f$ 
as its graph, which is a relation of arity $f+1$, distinguishing between relations 
and functions will be useful when we come to our linear-algebraic analogue.}
Two such models $(V; R_1, \dotsc, R_k; F_1, \dotsc, F_\ell)$ and $(V'; R_1', \dotsc, R_k'; F_1', \dotsc, F_\ell')$ are isomorphic if there is a bijection $\varphi\colon V \to V'$ that sends $R_i$ to $R_i'$ for all $i$ and $F_i$ to $F_i'$ for all $i$. In symbols, $\varphi$ is an isomorphism if $(v_1, \dotsc, v_{r_i}) \in R_i \Leftrightarrow (\varphi(v_1), \dotsc, \varphi(v_{r_i})) \in R_i'$ for all $i$ and all $v_* \in V$, and similarly if $\varphi(F_i(v_1, \dotsc, v_{f_i})) = F_i'(\varphi(v_1), \dotsc, \varphi(v_{f_i}))$ for all $i$ and all $v_* \in V$. By an ``explicitly given structure'' or ``explicit model'' we mean a model where each relation $R_i$ is given by a list of its elements and each function is given by listing all of its input-output pairs. Fixing a signature, the isomorphism problem for that signature is to decide, given two explicit models of that signature, whether they are isomorphic. This isomorphism problem is directly encoded into the isomorphism problem for edge-colored hypergraphs, which can then be reduced to \GI using standard gadgets.

For example, the signature for directed graphs (possibly with self-loops) is simply $\sigma=(2; )$---its models are simply binary relations. If one wants to consider graphs without self-loops, this is a special case of the isomorphism problem for the signature $\sigma$, namely, those explicit models in which $(v,v) \notin R_1$ for any $v$. Note that a graph without self-loops is never isomorphic to a graph with self-loops, and two directed graphs without self-loops are isomorphic as directed graphs if and only if they are isomorphic as models of the signature $\sigma$. In other words, the isomorphism problem for simple directed graphs really is just a special case. The same holds for undirected graphs without self-loops, which are simply models of the signature $\sigma$ in which $(v,v) \notin R_1$ and $R_1$ is symmetric. As another example, the signature for finite groups is $\gamma = (1; 1,2)$: the first relation $R_1$ will be a singleton, indicating which element is the identity, the function $F_1$ is the inverse function $F_1(g) = g^{-1}$, and the second function $F_2$ is the group multiplication $F_2(g,h) = gh$. Of course, models of the signature $\gamma$ can include many non-groups as well, but, as was the case with directed graphs, a group will never be isomorphic to a non-group, and two groups are isomorphic as models of $\gamma$ iff they are isomorphic as groups.

A natural linear-algebraic analogue of the above is as follows. One additional feature we add here for purposes of generality is that we need to make room for dual vector spaces. A \emph{linear signature} is then a list of pairs of nonnegative integers $((r_1, r_1^*), \dotsc, (r_k, r_k^*); (f_1, f_1^*), \dotsc, (f_\ell, f_\ell^*))$ with the property that $r_i + r_i^* > 0$ and $f_i + f_i^* > 0$ for all $i$. By the arity of the $i$-th relation (resp., function) we mean the sum $r_i + r_i^*$ (resp., $f_i + f_i^*$).

\begin{definition}[Linear signature, basis-explicit]
Given a linear signature 
$$\sigma = ((r_1, r_1^*), \dotsc, (r_k, r_k^*); (f_1, f_1^*), \dotsc, (f_\ell, 
f_\ell^*)),$$ a \emph{linear model} for $\sigma$ over a field $\F$ consists of an 
$\F$-vector space $V$, and linear subspaces $R_i \leq V^{\otimes r_i} \otimes 
(V^*)^{\otimes r_i^*}$ for $1 \leq i \leq k$ and linear maps $F_i \colon 
V^{\otimes f_i} \otimes (V^*)^{\otimes f_i^*} \to V$ for $1 \leq i \leq \ell$. Two 
such linear models $(V; R_1, \dotsc, R_k; F_1, \dotsc, F_\ell), (V'; R_1', \dotsc, 
R_k'; F_1', \dotsc, F_\ell')$ are \emph{isomorphic} if there is a linear bijection 
$\varphi\colon V \to V'$ that sends $R_i$ to $R_i'$ for all $i$ and $F_i$ to 
$F_i'$ for all $i$ (details below). 

A \emph{basis-explicit linear model} is given by a basis for each $R_i$, and, for each element of a basis of the domain of $F_i$, the value of $F_i$ on that element. Vectors here are written out in their usual dense coordinate representation.
\end{definition}

In particular, this means that an element of $V^{\otimes r}$---say, a basis element of $R_1$---is written out as a vector of length $(\dim V)^r$. We will only be concerned with finite-dimensional linear models.

Given $\varphi \colon V \to V'$, let $\varphi^{\otimes r_i \otimes r_i^*}$ denote the linear map $\varphi^{\otimes r_i \otimes r_i^*}\colon V^{\otimes r_i} \otimes (V^*)^{\otimes r_i^*} \to V'^{\otimes r_i} \otimes (V'^*)^{\otimes r_i^*}$ which is defined on basis vectors factor-wise: $\varphi^{\otimes r_i \otimes r_i^*}(v_1 \otimes \dotsb \otimes v_{r_i} \otimes \ell_1 \otimes \dotsb \otimes \ell_{r_i^*}) = \varphi(v_1) \otimes \dotsb \otimes \varphi(v_{r_i}) \otimes \varphi^*(\ell_1) \otimes \dotsb \otimes \varphi^*(\ell_{r_i^*})$, and then extended to the whole space by linearity. (Recall that $V^* = \hom(V, \F)$, so elements of $V^*$ are linear maps $\ell\colon V \to \F$, and thus $\varphi^*(\ell) := \ell \circ \varphi^{-1}$ is a map from $V' \to V \to \F$, i.\,e., an element of $V'^*$, as desired). Similarly, when we say that $\varphi$ sends $F_i$ to $F_i'$, we mean that $\varphi(F_i(v_1 \otimes \dotsb \otimes v_{f_i} \otimes \ell_1 \otimes \dotsb \otimes \ell_{f_i^*})) = F_i'(\varphi^{\otimes f_i \otimes f_i^*}(v_1 \otimes \dotsb \otimes v_{f_i} \otimes \ell_1 \otimes \dotsb \otimes \ell_{f_i^*}))$.

\begin{remark}
We use the term ``basis-explicit'' rather than just ``explicit,'' because over a \emph{finite} field, one may also consider a linear model of $\sigma$ as an explicit model of a different signature (where the different signature additionally encodes the structure of a vector space on $V$, namely, the addition and scalar multiplication), and then one may talk of a single mathematical object having explicit representations---where everything is listed out---and basis-explicit representations---where things are described in terms of bases. An example of this distinction arises when considering isomorphism of $p$-groups of class $2$: the ``explicit'' version is when they are given by their full multiplication table (which reduces to \GI), while the ``basis-explicit'' version is when they are given by a generating set of matrices or a polycyclic presentation (which \GI reduces \emph{to}). 
\end{remark}

\begin{theorem}[{Futorny--Grochow--Sergeichuk \cite{FGS19}}]
Given any linear signature $\sigma$ where all relationship arities are at most 3 and all function arities are at most 2, the isomorphism problem for finite-dimensional basis-explicit linear models of $\sigma$ reduces to \ThreeTIlong in polynomial time.
\end{theorem}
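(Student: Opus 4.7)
The plan is to repackage each piece of data in a linear model as a tuple of 3-tensors living in one of the tensor spaces $V^{\otimes a} \otimes (V^*)^{\otimes b}$ with $a+b = 3$, and then invoke the main result of \cite{FGS19}, which handles simultaneous isomorphism of multiple 3-tensors of possibly mixed types (covariant/contravariant in each factor) and reduces it to ordinary \ThreeTI.

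First, I would convert each relation $R_i \leq V^{\otimes r_i} \otimes (V^*)^{\otimes r_i^*}$ (with $r_i + r_i^* \leq 3$) into a basis $B_1^{(i)}, \dotsc, B_{d_i}^{(i)}$ viewed as an ordered tuple of tensors in that ambient space. Two subspaces $R_i, R_i'$ are sent to one another by the induced action of $\varphi$ if and only if the corresponding tuples are equivalent after some change of basis on the index set $\{1,\dotsc,d_i\}$ — i.e., they agree as objects of $V^{\otimes r_i} \otimes (V^*)^{\otimes r_i^*} \otimes W_i$ (or its dual in the last factor) where $W_i$ is a fresh auxiliary $d_i$-dimensional vector space. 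If $r_i + r_i^* < 3$, I would pad by tensoring with a fixed nonzero vector in a new 1-dimensional factor, a standard trick which makes the total arity exactly 3 without altering the isomorphism type of the orbit problem (scalars on the padding factor can be absorbed into $\varphi$ up to rescaling issues that FGS19 already handles).

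Second, I would convert each function $F_i\colon V^{\otimes f_i} \otimes (V^*)^{\otimes f_i^*} \to V$ (with $f_i + f_i^* \leq 2$) into its graph as an element of $\hom(V^{\otimes f_i} \otimes (V^*)^{\otimes f_i^*}, V) \cong (V^*)^{\otimes f_i} \otimes V^{\otimes f_i^*} \otimes V$, of total arity $f_i + f_i^* + 1 \leq 3$. Under this identification, the requirement that $\varphi$ intertwines $F_i$ and $F_i'$ becomes exactly the requirement that $\varphi$ sends one tensor to the other in the appropriate mixed tensor space, so again padding up to arity $3$ gives one tensor per function.

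Finally, I would bundle together all the tensors from the relations and functions into one simultaneous-isomorphism instance: finitely many tensors, each in a space of the form $V^{\otimes a} \otimes (V^*)^{\otimes b}$ with $a+b = 3$, indexed additionally (only for relations) by auxiliary spaces $W_i$ that are \emph{independent} across $i$. This is exactly the class of problems that \cite{FGS19} (and Figure~\ref{fig:summary} above) reduces to \ThreeTI: FGS19 shows that multiple tensors, mixed types including $V$ versus $V^*$ positions, independent auxiliary factors, and subspace (as opposed to element) variants all reduce to a single 3-tensor isomorphism instance, via block and gadget constructions. The main obstacle — and the real content of what we cite from FGS19 — is handling the contragradient action on $V^*$ uniformly with the standard action on $V$, and gluing the multiple independent tensors into one without introducing spurious automorphisms; the gadgets of \cite{FGS19} accomplish precisely this. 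Since the number of bundled tensors and their dimensions are polynomially bounded in the input size of the basis-explicit model, the overall reduction runs in polynomial time.
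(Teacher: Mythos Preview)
The paper does not give its own proof of this theorem; it is stated as a result of \cite{FGS19} and simply cited. Your outline is a reasonable reconstruction of how one packages a basis-explicit linear model into the setting of \cite{FGS19}, and you correctly identify that all the real work---handling mixed $V$/$V^*$ factors, multiple simultaneous tensors, and the extra auxiliary spaces---is carried by the gadget constructions there rather than anything new here.

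One small point to tighten: for a relation $R_i$ of arity exactly $3$, introducing the auxiliary space $W_i$ to encode the subspace structure yields an element of $V^{\otimes r_i} \otimes (V^*)^{\otimes r_i^*} \otimes W_i$, which has total arity $4$, not $3$; your padding argument only applies when $r_i + r_i^* < 3$. So you are implicitly relying on \cite{FGS19} to handle not merely lists of individual 3-tensors but also such indexed families (a tuple of 3-tensors modulo the diagonal $\GL(V)$ action together with the $\GL(W_i)$ action on the index). This is indeed within the scope of what \cite{FGS19} proves---their framework covers multiple arrays with shared sides and independent auxiliary factors---but your write-up slightly understates where the arity bookkeeping lands.
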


Because of the equivalence between \DeeTIlong and \ThreeTIlong (Theorem~\ref{thm:d_to_3} + \cite{FGS19}), we expect the analogous result to hold for arbitrary $d$. Thus an analogue of the results of \cite{FGS19} for $d$-tensors would yield the full analogue of the universality result for \GI.

\begin{question}
Is \DeeTIlong universal for isomorphism problems on $d$-way arrays? That is, prove the analogue of the results of \cite{FGS19} for $d$-way arrays for any $d \geq 3$.
\end{question}

\subsection{Other open questions}
Our search-to-decision reduction (Theorem~\ref{thm:search_decision}) produces instances of dimension $O(n^2)$ from instances of dimension $n$. As stated, this means that a simply-exponential ($q^{\tilde O(n)}$-time) decision algorithm would result only in a $q^{\tilde O(n^2)}$ search algorithm, but the latter runtime is trivial. We note that it may be possible to alleviate this blow-up by attempting to generalize the logarithmic-size ``coloring palette'' construction for reducing \algprobm{Colored GI} to \GI  from the graph case to the linear-algebraic case.

\begin{question} \label{question:search_decision}
Is there a search-to-decision reduction for \AltMatSpIsomWords (and, consequently, isomorphism of $p$-groups of class 2 and exponent $p$, given in their natural succinct encoding) that runs in time $q^{\tilde O(n)}$, and produces instances of quasi-linear ($\tilde O(n)$) dimension?
\end{question}

In Section~\ref{sec:GI_code} we gave several different reductions from \GI to \AltMatSpIsomWords. To summarize, they are: 
\begin{enumerate}
\item A direct reduction from \GI to \AltMatSpIsomlong (\Prop{prop:GI})
\item \GI $\leq$ \MatLieConjlong \cite{GrochowLie}, which in turn reduces to \ThreeTI \cite{FGS19}, and then to \AltMatSpIsomlong (\Thm{thm:main}); 
\item \GI $\leq$ \CodeEq \cite{PR97, Luks, miyazakiCodeEq}, \CodeEq $\leq$ \MatLieConjlong \cite{GrochowLie}, and then follow the same reductions as in (1);
\item \GI $\leq$ \MonCodeEqlong (the same reduction from \cite{PR97} works for monomial equivalence of codes, see \cite{GrochowLie}), which in turn reduces to \ThreeTI (\Prop{prop:MonCodeEq}), and thence to \AltMatSpIsomlong (\Thm{thm:main})
\item \GI $\leq$ \algprobm{Algebra Isomorphism} \cite{Grigoriev83, AS05}, which reduces to \ThreeTI \cite{FGS19}, and then to \AltMatSpIsomWords (\Thm{thm:main}).
\end{enumerate}

Can one prove that these reductions are all distinct? Are some of them equivalent in some natural sense, e.\,g., up to a change of basis?

Next, most of our results hold for arbitrary fields, or arbitrary fields with 
minor restrictions. However, in all of our reductions, we reduce one problem over 
$\F$ to another problem over the same field $\F$. 

\begin{question}
What is the relationship between $\cc{TI}$ over different fields? In particular, 
what is the relationship between $\cc{TI}_{\F_p}$ and $\cc{TI}_{\F_{p^e}}$, between $\cc{TI}_{\F_p}$ and $\cc{TI}_{\F_q}$ for coprime 
$p,q$, or between $\cc{TI}_{\F_p}$ and $\cc{TI}_{\Q}$?
\end{question}

We note that even the relationship between $\cc{TI}_{\F_p}$ and 
$\cc{TI}_{\F_{p^e}}$ is not particularly clear. For matrix \emph{tuples} (rather than 
spaces; equivalently, representations of finitely generated algebras) it is the 
case that for any extension field $\mathbb{K} \supseteq \F$, two matrix tuples 
over $\F$ are $\F$-equivalent (resp., conjugate) if and only if they are 
$\mathbb{K}$-equivalent \cite{KL86} (see \cite{dSPFields} for a simplified proof). 
However, for equivalence of tensors this need not be the case. This seems closely related to the so-called ``problem of forms'' for various algebras, namely the existence of algebras that are not isomorphic over $\F$, but which become isomorphic over an extension field.

\begin{example}[Non-isomorphic tensors isomorphic over an extension field] 
Over $\R$, let $M_1 = I_4$ and let $M_2 = \diag(1,1,1,-1)$. Since these two matrices have different signatures, they are not isometric over $\R$; since they have the same rank, they \emph{are} isometric over $\C$. To turn this into an example of 3-tensors, first we consider the corresponding instance of \algprobm{Matrix Space Isometry} given by $\mathcal{M}_1 = \langle M_1 \rangle$ and $\mathcal{M}_2 = \langle M_2 \rangle$. Note that $\mathcal{M}_1 = \{\lambda I_4 : \lambda \in \R\}$, so the signatures of all matrices in $\mathcal{M}_1$ are $(4,0)$, $(0,0)$, or $(0,4)$. Similarly, the signatures appearing in $\mathcal{M}_2$ are $(3,1)$, $(0,0)$, and $(1,3)$, so these two matrix spaces are not isometric over $\R$, though they are isometric over $\C$ since $M_1$ and $M_2$ are. Finally, apply the reduction from \algprobm{Matrix Space Isometry} to \ThreeTI \cite{FGS19} to get two 3-tensors $\tA_1, \tA_2$. Since the reduction itself is independent of field, if we consider it over $\R$ we find that $\tA_1$ and $\tA_2$ must not be isomorphic 3-tensors over $\R$, but if we consider the reduction over $\C$ we find that they are isomorphic as 3-tensors over $\C$.

Similar examples can be constructed over finite fields $\F$ of odd characteristic, taking $M_1 = I_2$ and $M_2 = \diag(1,\alpha)$ where $\alpha$ is a non-square in $\F$ (and replacing the role of $\C$ with that of $\mathbb{K} = \F[x] / (x^2 - \alpha)$). Instead of signature, isometry types of matrices over $\F$ are characterized by their rank and whether their determinant is a square or not. In this case, since our matrices are even-dimensional diagonal matrices, scaling them multiplies their determinant by a square. Thus every matrix in $\mathcal{M}_1$ will have its determinant being a square in $\F$, and every nonzero matrix in $\mathcal{M}_2$ will not, but in $\mathbb{K}$ they are all squares.
\end{example}

It would also be interesting to study the complexity of other group actions on 
tensors and how they relate to the problems here. For example, the action of 
unitary groups $U(\C^{n_1}) \times \dotsb \times U(\C^{n_d})$ on $\C^{n_1} \otimes 
\dotsb \otimes \C^{n_d}$ classifies pure quantum states up to ``local unitary 
operations,'' and the action of $\SL(U_1) \times \dotsb \times \SL(U_d)$ on $U_1 
\otimes \dotsb \otimes U_d$, over $\C$, is the well-studied action by stochastic 
local operations with classical communication (SLOCC) on quantum states (e.\,g., 
\cite{gourWallach, miyake, SLOCC4}). Isomorphism of $m$-dimensional lattices in 
$n$-dimensional space can be seen as the natural action of $O_n(\R) \times 
\GL_m(\Z)$ by left and right multiplication on $n \times m$ matrices. As another 
example, orbits for several of the natural actions of $\GL_n(\Z) \times \GL_m(\Z) 
\times \GL_r(\Z)$ on 3-tensors over $\Z$, even for small values of $n,m,r$, are 
the fundamental objects in Bhargava's seminal work on higher composition laws 
\cite{bhargava,bhargava2,bhargava3,bhargava4}. We note that while the orthogonal 
group $O(V)$ is the stabilizer of a 2-form on $V$ (that is, an element of $V 
\otimes V$) and $\SL(V)$ is the stabilizer of the induced action on 
$\bigwedge^{\dim V} V$ (by the determinant)---so gadgets similar to those in this 
paper might be useful---$\GL_n(\Z)$ is not the stabilizer of any such structure. 

In Remark~\ref{rmk:dto3_dim} we observed that any reduction (in the sense of 
\Sec{sec:reductions}) from \DeeTI to \ThreeTI must have a blow-up in dimension 
which is asymptotically $n^{d/3}$, while our construction uses dimension $O(d^2 
n^{d-1})$.

\begin{question}
Is there a reduction from \DeeTI to \ThreeTI (as in \Sec{sec:reductions}) such that the dimension of the output is $\poly(d) \cdot n^{d/3(1 + o(1))}$? 
\end{question}

Finally, in terms of practical algorithms, we wonder how well modern \algprobm{SAT} solvers would do on instances of \ThreeTIlong over $\F_2$ (or over other finite fields, encoded into bit-strings). 

\section*{Acknowledgments}
The authors would like to thanks James B. Wilson for related discussions, and Uriya First, Lek-Heng Lim, and J. M. Landsberg for help searching for references asking whether $\DeeTI$ could reduce to $\ThreeTI$.
J. A. G. would like to thank V. Futorny and V. V. Sergeichuk for their collaboration on the related work \cite{FGS19}.
Ideas leading to this work originated from the workshop ``Wildness
in computer science, physics, and mathematics'' at the Santa Fe Institute. 
Both authors were supported by NSF grant DMS-1750319. Y. Q. was partly 
supported by Australian Research Council DECRA DE150100720.

\appendix

\section{Reducing \CubicFormlong to \DFormlong} \label{app:cubic}

\begin{proposition} \label{prop:cubic_to_d}
\CubicFormlong reduces to \DFormlong, for any $d \geq 3$.
\end{proposition}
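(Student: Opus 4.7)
The case $d = 3$ is immediate. For $d \geq 4$, given a cubic form $f(x_1, \ldots, x_n)$, I would introduce one new variable $y$ and define the degree-$d$ form
\[
F(x, y) := y^{d-3} \cdot f(x_1, \ldots, x_n).
\]
The forward direction is direct: if $f' = f \circ P$ with $P \in \GL(n, \F)$, then $F' = F \circ (P \oplus [1])$, so $F \sim F'$ via the block-diagonal matrix $P \oplus I_1$.

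For the reverse direction, suppose $\varphi \in \GL(n+1, \F)$ satisfies $\varphi \cdot F = F'$. I would exploit the unique factorization of $F$ in $\overline{\F}[x_1, \ldots, x_n, y]$: we have $F = y^{d-3} \cdot \prod_i g_i(x)^{e_i}$, where the $g_i$ are the distinct irreducible factors of $f$ over $\overline{\F}$. Under $\varphi$, each irreducible factor is sent to one of the same degree and multiplicity (up to a scalar). When $d > 6$, the multiplicity $d-3$ of $y$ is strictly larger than any irreducible factor of $f$ can have (since $\deg f = 3$), so $\varphi$ must send $y$ to $\beta y$ for some $\beta \in \F^*$. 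This forces $\varphi^{-1}$ to be block upper-triangular with $\beta$ in the $(n+1, n+1)$ entry and some matrix $A \in \GL(n, \F)$ in the upper-left block; substituting into $\varphi \cdot F = F'$ then yields $\beta^{d-3} f(Ax + by) = f'(x)$ for some column $b$, and setting $y = 0$ gives $f'(x) = \beta^{d-3} f(Ax)$. Hence $f \sim f'$ up to the scalar $\beta^{d-3}$.

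The main obstacles are handling (i) the scalar $\beta^{d-3}$ and (ii) the case $d \leq 6$ in which the multiplicity argument can fail. For (i): since the scalar matrix $\gamma I_n \in \GL(n, \F)$ acts on cubics by multiplication by $\gamma^3$, the scalar $\beta^{d-3}$ only matters modulo $(\F^*)^3$; over a finite field $\F_q$ there are at most $\gcd(3, q-1) \leq 3$ cosets. I would handle this with a polynomial-time Turing reduction, querying the \DFormlong oracle on $F$ versus $y^{d-3} \cdot (\alpha_j f'(x))$ for each coset representative $\alpha_j$, returning ``yes'' iff any comparison succeeds. For (ii): the factorization type of $f$ is a $\GL(n,\F)$-invariant computable in polynomial time. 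In every reducible case (a linear form times an irreducible quadratic, a product of three linear forms with possible multiplicities, or a cube of a single linear form), cubic form equivalence reduces to the equivalence of smaller tuples of linear and quadratic forms, which is elementary linear algebra and does not require the \DFormlong oracle at all. The oracle is invoked only in the absolutely irreducible case, where the multiplicity argument above applies for any $d \geq 4$ because $y$ is then the unique linear factor of $F$ up to scalar. This case split is the ``some work'' noted in the footnote for $d = 4$: for $d > 6$ no preliminary factorization is required because $y^{d-3}$ already has multiplicity exceeding that of any factor of $f$.
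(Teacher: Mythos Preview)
Your construction $f\mapsto y^{d-3}f$ and the forward direction are exactly the paper's. For the reverse direction you follow the same outline---identify $y$ via unique factorization, then read off a cubic equivalence---but you diverge from the paper in two ways: you handle reducible $f$ by direct classification rather than by pushing the case analysis through the gadget (the paper instead carries out a detailed analysis of the product-of-linear-forms and linear-times-quadratic cases inside the reduction), and you explicitly confront the scalar $\beta$ arising from $y\mapsto\beta y$. The paper simply writes the coefficient on $z$ as $1$ without justification, so your instinct that something must be checked here is right.

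However, your Turing-reduction fix for the scalar does not work. The oracle applied to $(y^{d-3}f,\,y^{d-3}g)$ returns ``yes'' exactly when there exist $\beta\in\F^*$ and $A\in\GL(n,\F)$ with $\beta^{d-3}g(Ax)=f(x)$, i.e.\ when $f\sim\gamma g$ for some $\gamma\in(\F^*)^{d-3}$; the extra variable $y$ absorbs any $(d{-}3)$-th power. Your disjunction over coset representatives $\alpha_j$ of $(\F^*)^3$ therefore tests whether $f\sim\gamma\alpha_j f'$ for some $j$ and some $\gamma\in(\F^*)^{d-3}$, which is \emph{weaker} than $f\sim f'$, not a refinement of it---so taking an ``or'' over more $\alpha_j$ cannot help. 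Concretely: over $\F_7$ with $d=4$, take the absolutely irreducible Fermat cubic $f=x_1^3+x_2^3+x_3^3$ and $f'=2f$. Then $yf\sim 2yf$ via $(x,y)\mapsto(x,4y)$ (since $4\cdot 2=1$ in $\F_7$), so already your $\alpha=1$ query succeeds and you output ``yes''. But $f\not\sim 2f$: any $P\in\GL(3,\F_7)$ with $f(Px)=\lambda f(x)$ induces a projective automorphism of the Fermat cubic, and every such automorphism gives $\lambda\in(\F_7^*)^3=\{1,6\}$, whereas $f\sim 2f$ would require $\lambda=4$. A genuine repair must constrain $\beta$ inside the gadget itself (for instance, an additional term forcing $\beta^d=1$ makes $\beta^{d-3}=\beta^{-3}$ automatically a cube), after which the case analysis identifying $y$ must be redone.
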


We suspect that a similar construction would give a reduction from 
\algprobm{Degree-$d'$ Form Equivalence} to \DFormlong for any $d' \leq d$, but our 
argument relies on a case analysis that is somewhat specific to $d'=3$. Our 
argument might be adaptable to any fixed value of $d'$ the prover desires, with a 
consequently more complicated case analysis, but to prove it for all $d'$ 
simultaneously seems to require a different argument.

\begin{proof}
The reduction itself is quite simple: $f \mapsto z^{d-3} f$, where $z$ is a new variable not appearing in $f$. If $A$ is an equivalence between $f$ and $g$---that is, $f(x) = g(Ax)$---then $\diag(A, 1_z)$ is an equivalence from $z^{d-3} f$ to $z^{d-3}g$. Conversely, suppose $\tilde f = z^{d-3} f$ is equivalent to $\tilde g = z^{d-3} g$ via $\tilde f(x) = \tilde g(Bx)$. We split the proof into several cases. 

\paragraph{If $d=3$,} then $z$ is not present so we already have that $f$ and $g$ are equivalent.

\paragraph{If $f$ is not divisible by $\ell^{d-3}$ for some linear form $\ell$,} then $z^{d-3}$ is the unique factor in both $z^{d-3} f$ and $z^{d-3} g$ which is raised do the $d-3$ power. Thus any equivalence $B$ between these two must map $z$ to itself, hence has the form
\[
B = \left(\begin{array}{ccc|c}
* & \dotsc & * & 0 \\
\vdots & \ddots & \vdots & \vdots \\
* & \dotsc & * & 0 \\ \hline
* & \dotsc & * & 1
\end{array}\right),
\]
(if we put $z$ last in our basis, and think of the matrix as acting on the left of the column vectors corresponding to the variables). However, since both $f$ and $g$ do not depend on $z$, it must be the case that whatever contributions $z$ makes to $g(Bx)$, they all cancel. More precisely, all monomials involving $z$ in $g(Bx)$ must cancel, so if we alter $B$ into $\tilde B$ that $\tilde Bx_i$ never includes $z$ (that is, if we make the stars in the last row above all zero), then $g(\tilde B x) = g(Bx)$, hence $f(x) = g(\tilde Bx)$, so $f$ and $g$ are equivalent.

The preceding case always applies when $d > 6$, for then $d-3 > 3$, but $\deg f = 3$. We are left to handle the following cases:
\begin{enumerate}
\item $d \leq 6$ and $f$ is a product of linear forms;
\item $d=4$, $f$ is a product of a linear form and an irreducible quadratic form.
\end{enumerate}

\paragraph{Suppose $f$ is a product of linear forms,} then let us define $\rk(f)$ as the number of linearly independent linear forms appearing in the factorization of $f$. Note that if $\rk(f)=1$, then $f = \alpha \ell^3$ for some $\alpha \in \F$, if $\rk(f)=2$, then $f = \ell_1^2 \ell_2$ (now we can absorb any constant into $\ell_2$), and if $\rk(f)=3$ then $f = \ell_1 \ell_2 \ell_3$ with all $\ell_i$ linearly independent. Then we have that $f \sim g$ if and only if $g$ is also a product of linear forms of the same rank. For $\GL_n$ acts transitively on $k$-tuples of linearly independent vectors for all $k \leq n$, and in order to have $\rk(f)$ linearly independent forms, we must have $n \geq \rk(f)$. Since we have supposed $z^{d-3} f \sim z^{d-3} g$, by uniqueness of factorization $g$ must be a product of linear forms of the same rank as $f$, and thus indeed $f \sim g$.

\paragraph{If $d=4$ and $f = \ell \varphi$ where $\ell$ is linear and $\varphi$ is an irreducible quadratic,} then to understand the situation we begin by first doing a change of basis on $f$ to put $\varphi$ into a form in which its kernel is evident. Note that none of these simplifications are part of the reduction, but rather they are to help us prove that the reduction works. Thinking of $\varphi$ as given by its matrix $M_{\varphi}$ such that $\varphi(x) = x^t M_{\varphi} x$, we can always change basis to get $M_{\varphi}$ into the form
\[
\begin{bmatrix}
M' & 0 \\
0& 0_{n-r}
\end{bmatrix}
\]
where $r = \rk(M_{\varphi}) = \rk(M')$. Since $\varphi$ does not depend on $z$, if we think of $\varphi$ as a quadratic form on $\{x_1, \dotsc, x_n, z\}$, then the matrices are the same, but larger by one additional zero row and column.

Next we will try to simplify $\ell$ as much as possible while maintaining the (new) form of $M_\varphi = \diag(M',\vzero)$. For this we first compute the stabilizer of the new form of $M_{\varphi}$. We can compute the stabilizer as the set of invertible matrices $A$ such that:
\begin{eqnarray*}
\begin{bmatrix}
A_{11}^t & A_{21}^t \\
A_{12}^t & A_{22}^t
\end{bmatrix}
\begin{bmatrix}
M' & 0 \\
0 & 0_{n-r+1}
\end{bmatrix}
\begin{bmatrix}
A_{11} & A_{12} \\
A_{21} & A_{22} 
\end{bmatrix}
=
\begin{bmatrix}
M' & 0 \\
0 & 0_{n-r+1}
\end{bmatrix}. 
\end{eqnarray*}
This turns into the following equations on the blocks of $X$:
\[
\begin{array}{rclcrcl}
A_{11}^t M' A_{11} & = & M' & \qquad & 
A_{12}^t M' A_{11} & = & 0 \\
A_{12}^t M' A_{12} & = & 0 & \qquad & 
A_{11}^t M' A_{12} & = & 0
\end{array}
\]
From the first equation and the fact that $M'$ is full rank, we find that $A_{11}$ must be an invertible $r \times r$ matrix. From the next equation and the fact that both $M$ and $A_{11}$ are full rank, we then find that $A_{12} = 0$. Thus the stabilizer of $M_{\varphi}$ is:
\[
S := \left\{ \begin{bmatrix} A_{11} & 0 \\ A_{21} & A_{22} \end{bmatrix} : A_{11}^t M' A_{11} = M' \text{ and } A_{22} \text{ is invertible} \right\}.
\]

Now we simplify $\ell$. Note that $S$ acts on $\ell$ as a column vector. Consider $\ell = \sum_{i=1}^n \ell_i x_i$, with $\ell_i \in \F$; we will say ``$\ell$ contains $x_i$'' if and only if $\ell_i \neq 0$. If $\ell$ contains some $x_{r+k}$ with $k \geq 1$, then by setting $A_{11} = I_r$ and $A_{21} = 0$, we may choose $A_{22}$ to be any invertible matrix which sends $(\ell_{r+1}, \dotsc, \ell_n, \ell_{n+1})$ (recall the trailing $\ell_{n+1}$ for the $z$ coordinate) to $(1,0,\dotsc,0)$, and thus without loss of generality we may assume that $\ell$ only contains $x_i$ with $1 \leq i \leq r+1$. 

\renewcommand{\theenumi}{\alph{enumi}}
Next, note that if $\ell$ contains some $x_i$ for $1 \leq i \leq r$ \emph{and} $x_{r+1}$, then we may use the action of $S$ to eliminate the $x_{r+1}$. Namely, by taking $A_{11} = I_r$, $A_{22} = I_{n+1}$, and $A_{21} = (-\ell_{r+1} / \ell_i) E_{1i}$. This makes $\ell_i x_i$ in $\ell$ contribute $-\ell_{r+1}$ to the $x_{r+1}$ coordinate, eliminating $x_{r+1}$. Thus, under the action of $S$, we need only consider two cases for linear forms under the action of $S$: a linear form is equivalent to either 
\begin{enumerate}
\item one which contains some $x_i$ with $1 \leq i \leq r$, in which case we can bring it to a form in which it contains \emph{no} $x_{r+j}$ with $j \geq 1$ (and no $z$), \emph{or} 
\item it contains no $x_i$ with $1 \leq i \leq r$, in which case we can use the action of $S$ to bring it to the form $\ell = x_{r+1}$. 
\end{enumerate}
Let us call the corresponding linear forms ``type (a)'' and ``type (b).'' Note that the linear form $z$ is of type (b).

Now, write $f = \ell \varphi$ and $g = \ell' \varphi'$, and assume that we have applied the preceding change of basis to bring $f$ to the form specified above. Recall that we are assuming $\tilde f \sim \tilde g$, and need to show that $f \sim g$. If, after applying the same change of basis to $g$, we do not have $M_{\varphi'} = M_{\varphi}$, then $f \not\sim g$ and also $\tilde f \not\sim \tilde g$---contrary to our assumption---since $\varphi$ (resp., $\varphi'$) is the unique irreducible quadratic factor of $\tilde f$ (resp., $\tilde g$). So we may assume that, after this change of basis, $\varphi = \varphi'$, both of which have $M_{\varphi} = \diag(M', 0_{n-r+1})$ with $r = \rank(M_{\varphi})$. 

Next, since we are assuming $\tilde f \sim \tilde g$, and $z$ itself is of type (b), so it must be the case that the types of $\ell,\ell'$ are the same. Thus we have two cases to consider: either they are both of type (a), or both of type (b). 

\paragraph{Suppose both $\ell,\ell'$ are of type (a).} In this case, the equivalence between $\tilde f$ and $\tilde g$ cannot send $z$ to $\ell'$ and $\ell$ to $z$, for both $\ell,\ell'$ are of type (a), whereas $z$ is of type (b). Thus the equivalence between $\tilde f$ and $\tilde g$ must restrict to an equivalence between $f$ and $g$ (when we ignore $z$, or set its contribution to the other variables to zero, as in the above case where $f$ was not divisible by $\ell^{d-3}$).

\paragraph{Suppose both $\ell,\ell'$ are of type (b).} In this case, it is possible that the equivalence from $\tilde f$ to $\tilde g$ \emph{could} send $z$ to $\ell'$ and $\ell$ to $z$ (since all three of $\ell,\ell',z$ are in case (b)); however, we will see that in this case, even such a situation will not cause an issue. Without loss of generality, by the change of bases described above, we have $\tilde f = z x_{r+1} \varphi$ and $\tilde g = z \ell' \varphi$ (the same $\varphi$), where $\ell'$ contains no $x_i$ with $1 \leq i \leq r$. Using elements of $S$ with $A_{11} = I_r$, and $A_{21} = 0$, we then get an action of $\GL_{n-r+1}$ (via $A_{22}$) on linear forms in the variables $x_{r+1}, \dotsc, x_n, z$. Since $\ell'$ is linearly independent from $z$ (in particular, it does not contain $z$) and the action of $\GL$ is transitive on pairs of linearly independent vectors, we may use $S$ to fix $\varphi$ and $z$, and send $x_{r+1}$ to $\ell'$, giving the desired equivalence $f \sim g$.
\end{proof}

\bibliographystyle{alphaurl}
\bibliography{references}

\newcommand{\etalchar}[1]{$^{#1}$}
\begin{thebibliography}{CdGVL12}

\bibitem[AD17]{AllenderDas}
Eric Allender and Bireswar Das.
\newblock Zero knowledge and circuit minimization.
\newblock {\em Inf. Comput.}, 256:2--8, 2017.
\newblock \href {http://dx.doi.org/10.1016/j.ic.2017.04.004}
  {\path{doi:10.1016/j.ic.2017.04.004}}.

\bibitem[AS05]{AS05}
Manindra Agrawal and Nitin Saxena.
\newblock Automorphisms of finite rings and applications to complexity of
  problems.
\newblock In {\em {STACS} 2005, 22nd Annual Symposium on Theoretical Aspects of
  Computer Science, Proceedings}, pages 1--17, 2005.
\newblock \href {http://dx.doi.org/10.1007/978-3-540-31856-9_1}
  {\path{doi:10.1007/978-3-540-31856-9_1}}.

\bibitem[AS06]{AS06}
Manindra Agrawal and Nitin Saxena.
\newblock Equivalence of {$\mathbb{F}$}-algebras and cubic forms.
\newblock In {\em {STACS} 2006, 23rd Annual Symposium on Theoretical Aspects of
  Computer Science, Proceedings}, pages 115--126, 2006.
\newblock \href {http://dx.doi.org/10.1007/11672142_8}
  {\path{doi:10.1007/11672142_8}}.

\bibitem[ASS06]{AssemSimsonSkowronski}
Ibrahim Assem, Daniel Simson, and Andrzej Skowro\'{n}ski.
\newblock {\em Elements of the representation theory of associative algebras.
  {V}ol. 1}, volume~65 of {\em London Mathematical Society Student Texts}.
\newblock Cambridge University Press, Cambridge, 2006.
\newblock Techniques of representation theory.
\newblock \href {http://dx.doi.org/10.1017/CBO9780511614309}
  {\path{doi:10.1017/CBO9780511614309}}.

\bibitem[Bab85]{babai85}
L~Babai.
\newblock Trading group theory for randomness.
\newblock In {\em Proceedings of the Seventeenth Annual ACM Symposium on Theory
  of Computing}, STOC '85, pages 421--429. ACM, 1985.
\newblock \href {http://dx.doi.org/10.1145/22145.22192}
  {\path{doi:10.1145/22145.22192}}.

\bibitem[Bab14]{BabaiSR}
L\'{a}szl\'{o} Babai.
\newblock On the automorphism groups of strongly regular graphs {I}.
\newblock In {\em Proceedings of the 5th Conference on Innovations in
  Theoretical Computer Science}, ITCS '14, pages 359--368, 2014.
\newblock \href {http://dx.doi.org/10.1145/2554797.2554830}
  {\path{doi:10.1145/2554797.2554830}}.

\bibitem[Bab16]{Bab16}
L{\'{a}}szl{\'{o}} Babai.
\newblock Graph isomorphism in quasipolynomial time [extended abstract].
\newblock In {\em Proceedings of the 48th Annual {ACM} {SIGACT} Symposium on
  Theory of Computing, {STOC} 2016}, pages 684--697, 2016.
\newblock \arXiv{1512.03547}{[cs.DS]} version 2.
\newblock \href {http://dx.doi.org/10.1145/2897518.2897542}
  {\path{doi:10.1145/2897518.2897542}}.

\bibitem[Bae38]{Bae38}
Reinhold Baer.
\newblock Groups with abelian central quotient group.
\newblock {\em Trans. AMS}, 44(3):357--386, 1938.
\newblock \href {http://dx.doi.org/10.1090/S0002-9947-1938-1501972-1}
  {\path{doi:10.1090/S0002-9947-1938-1501972-1}}.

\bibitem[BBS09]{BBS09}
L{\'{a}}szl{\'{o}} Babai, Robert Beals, and {\'{A}}kos Seress.
\newblock Polynomial-time theory of matrix groups.
\newblock In {\em Proceedings of the 41st Annual {ACM} Symposium on Theory of
  Computing, {STOC} 2009}, pages 55--64, 2009.
\newblock \href {http://dx.doi.org/10.1145/1536414.1536425}
  {\path{doi:10.1145/1536414.1536425}}.

\bibitem[BCGQ11]{BCGQ}
L{\'a}szl{\'o} Babai, Paolo Codenotti, Joshua~A. Grochow, and Youming Qiao.
\newblock Code equivalence and group isomorphism.
\newblock In {\em Proceedings of the {Twenty-Second} {Annual} {ACM--SIAM}
  {Symposium} on {Discrete} {Algorithms} ({SODA11})}, pages 1395--1408,
  Philadelphia, PA, 2011. SIAM.
\newblock \href {http://dx.doi.org/10.1137/1.9781611973082.107}
  {\path{doi:10.1137/1.9781611973082.107}}.

\bibitem[BES80]{BES80}
L{\'{a}}szl{\'{o}} Babai, Paul Erd{\H{o}}s, and Stanley~M. Selkow.
\newblock Random graph isomorphism.
\newblock {\em {SIAM} J. Comput.}, 9(3):628--635, 1980.
\newblock \href {http://dx.doi.org/10.1137/0209047}
  {\path{doi:10.1137/0209047}}.

\bibitem[BG94]{BG94}
Mihir Bellare and Shafi Goldwasser.
\newblock The complexity of decision versus search.
\newblock {\em {SIAM} J. Comput.}, 23(1):97--119, 1994.
\newblock \href {http://dx.doi.org/10.1137/S0097539792228289}
  {\path{doi:10.1137/S0097539792228289}}.

\bibitem[BGL{\etalchar{+}}19]{BGL+19}
Peter~A. Brooksbank, Joshua~A. Grochow, Yinan Li, Youming Qiao, and James~B.
  Wilson.
\newblock Incorporating {Weisfeiler}--{Leman} into algorithms for group
  isomorphism.
\newblock \arXiv{1905.02518}{[cs.CC]}, 2019.

\bibitem[Bha04a]{bhargava}
Manjul Bhargava.
\newblock Higher composition laws. {I}. {A} new view on {Gauss} composition,
  and quadratic generalizations.
\newblock {\em Ann. of Math. (2)}, 159(1):217--250, 2004.
\newblock \href {http://dx.doi.org/10.4007/annals.2004.159.217}
  {\path{doi:10.4007/annals.2004.159.217}}.

\bibitem[Bha04b]{bhargava2}
Manjul Bhargava.
\newblock Higher composition laws. {II}. {O}n cubic analogues of {G}auss
  composition.
\newblock {\em Ann. of Math. (2)}, 159(2):865--886, 2004.
\newblock \href {http://dx.doi.org/10.4007/annals.2004.159.865}
  {\path{doi:10.4007/annals.2004.159.865}}.

\bibitem[Bha04c]{bhargava3}
Manjul Bhargava.
\newblock Higher composition laws. {III}. {T}he parametrization of quartic
  rings.
\newblock {\em Ann. of Math. (2)}, 159(3):1329--1360, 2004.
\newblock \href {http://dx.doi.org/10.4007/annals.2004.159.1329}
  {\path{doi:10.4007/annals.2004.159.1329}}.

\bibitem[Bha08]{bhargava4}
Manjul Bhargava.
\newblock Higher composition laws. {IV}. {T}he parametrization of quintic
  rings.
\newblock {\em Ann. of Math. (2)}, 167(1):53--94, 2008.
\newblock \href {http://dx.doi.org/10.4007/annals.2008.167.53}
  {\path{doi:10.4007/annals.2008.167.53}}.

\bibitem[BL08]{BL08}
Peter~A. Brooksbank and Eugene~M. Luks.
\newblock Testing isomorphism of modules.
\newblock {\em J. Algebra}, 320(11):4020 -- 4029, 2008.
\newblock \href {http://dx.doi.org/10.1016/j.jalgebra.2008.07.014}
  {\path{doi:10.1016/j.jalgebra.2008.07.014}}.

\bibitem[BM88]{BM88}
L.~Babai and S.~Moran.
\newblock {Arthur}--{Merlin} games: A randomized proof system, and a hierarchy
  of complexity classes.
\newblock {\em Journal of Computer and System Sciences}, 36(2):254 -- 276,
  1988.
\newblock \href {http://dx.doi.org/10.1016/0022-0000(88)90028-1}
  {\path{doi:10.1016/0022-0000(88)90028-1}}.

\bibitem[BMW18]{BMW18}
Peter~A Brooksbank, Joshua Maglione, and James~B Wilson.
\newblock {Rosenberg}--{Zelinsky} sequences for tensors and non-associative
  algebras.
\newblock arXiv preprint \arXiv{1812.00275}{[math.RA]}, 2018.

\bibitem[BW12]{BW12}
Peter~A. Brooksbank and James~B. Wilson.
\newblock Computing isometry groups of {Hermitian} maps.
\newblock {\em Trans. Amer. Math. Soc.}, 364:1975--1996, 2012.
\newblock \href {http://dx.doi.org/10.1090/S0002-9947-2011-05388-2}
  {\path{doi:10.1090/S0002-9947-2011-05388-2}}.

\bibitem[BW15]{BW15}
Peter~A Brooksbank and James~B Wilson.
\newblock The module isomorphism problem reconsidered.
\newblock {\em Journal of Algebra}, 421:541--559, 2015.
\newblock \href {http://dx.doi.org/10.1016/j.jalgebra.2014.09.004}
  {\path{doi:10.1016/j.jalgebra.2014.09.004}}.

\bibitem[C{\DJ}07]{SLOCC4}
Oleg Chterental and Dragomir~\v{Z}. {\DJ}okovi{\'{c}}.
\newblock Normal forms and tensor ranks of pure states of four qubits.
\newblock In G.~D. Ling, editor, {\em Linear Algebra Research Advances},
  chapter~4, pages 133--167. Nova Science Publishers, New York, 2007.
\newblock \arXiv{quant-ph/0612184}{}.

\bibitem[CdGVL12]{cicaloEtAl}
Serena Cical{\`o}, Willem~A. de~Graaf, and Michael Vaughan-Lee.
\newblock An effective version of the {Lazard} correspondence.
\newblock {\em J. Algebra}, 352(1):430 -- 450, 2012.
\newblock \href {http://dx.doi.org/10.1016/j.jalgebra.2011.11.031}
  {\path{doi:10.1016/j.jalgebra.2011.11.031}}.

\bibitem[CDT09]{CDT09}
Xi~Chen, Xiaotie Deng, and Shang-Hua Teng.
\newblock Settling the complexity of computing two-player {Nash} equilibria.
\newblock {\em J. ACM}, 56(3):Art. 14, 57, 2009.
\newblock \href {http://dx.doi.org/10.1145/1516512.1516516}
  {\path{doi:10.1145/1516512.1516516}}.

\bibitem[CIK97]{CIK97}
Alexander Chistov, G\'{a}bor Ivanyos, and Marek Karpinski.
\newblock Polynomial time algorithms for modules over finite dimensional
  algebras.
\newblock In {\em Proceedings of the 1997 international symposium on Symbolic
  and algebraic computation}, ISSAC '97, pages 68--74. ACM, 1997.
\newblock \href {http://dx.doi.org/10.1145/258726.258751}
  {\path{doi:10.1145/258726.258751}}.

\bibitem[dG00]{Gra00}
W.A. de~Graaf.
\newblock {\em Lie Algebras: Theory and Algorithms}, volume~56 of {\em
  North-Holland Mathematical Library}.
\newblock Elsevier Science, 2000.

\bibitem[Dor32]{dorroh1932concerning}
JL~Dorroh.
\newblock Concerning adjunctions to algebras.
\newblock {\em Bull. AMS}, 38(2):85--88, 1932.
\newblock \href {http://dx.doi.org/10.1090/S0002-9904-1932-05333-2}
  {\path{doi:10.1090/S0002-9904-1932-05333-2}}.

\bibitem[dSP10]{dSPFields}
Cl\'{e}ment de~Seguins~Pazzis.
\newblock Invariance of simultaneous similarity and equivalence of matrices
  under extension of the ground field.
\newblock {\em Linear Algebra Appl.}, 433(3):618--624, 2010.
\newblock \href {http://dx.doi.org/10.1016/j.laa.2010.03.022}
  {\path{doi:10.1016/j.laa.2010.03.022}}.

\bibitem[EG00]{EG00}
Wayne Eberly and Mark Giesbrecht.
\newblock Efficient decomposition of associative algebras over finite fields.
\newblock {\em Journal of Symbolic Computation}, 29(3):441--458, 2000.
\newblock \href {http://dx.doi.org/10.1006/jsco.1999.0308}
  {\path{doi:10.1006/jsco.1999.0308}}.

\bibitem[Exc]{StackExchangeIntermediate}
Theoretical Computer Science~Stack Exchange.
\newblock Problems between {P and NPC}.
\newblock
  \url{https://cstheory.stackexchange.com/questions/79/problems-between-p-and-npc/}.

\bibitem[Far05]{farnsteiner}
Rolf Farnsteiner.
\newblock The theorem of {Wedderburn}--{Malcev}: {$H^2(A,N)$} and extensions.
\newblock Lecture at {BIREP}: Representations of Finite Dimensional Algebras
  and Quantum Groups at Bielefield, 2005.
\newblock URL: \url{https://www.math.uni-bielefeld.de/~sek/select/RF6.pdf}.

\bibitem[FG11]{FortnowGrochowPEq}
Lance Fortnow and Joshua~A. Grochow.
\newblock Complexity classes of equivalence problems revisited.
\newblock {\em Inform. and Comput.}, 209(4):748--763, 2011.
\newblock Also available as \arXiv{0907.4775}{[cs.CC]}.
\newblock \href {http://dx.doi.org/10.1016/j.ic.2011.01.006}
  {\path{doi:10.1016/j.ic.2011.01.006}}.

\bibitem[FGS19]{FGS19}
Vyacheslav Futorny, Joshua~A. Grochow, and Vladimir~V. Sergeichuk.
\newblock Wildness for tensors.
\newblock {\em Lin. Alg. Appl.}, 566:212--244, 2019.
\newblock \href {http://dx.doi.org/10.1016/j.laa.2018.12.022}
  {\path{doi:10.1016/j.laa.2018.12.022}}.

\bibitem[FN70]{FN70}
V.~Felsch and J.~Neub\"user.
\newblock On a programme for the determination of the automorphism group of a
  finite group.
\newblock In Pergamon J.~Leech, editor, {\em Computational Problems in Abstract
  Algebra (Proceedings of a Conference on Computational Problems in Algebra,
  Oxford, 1967)}, pages 59--60, Oxford, 1970.

\bibitem[GMR85]{GMR85}
S~Goldwasser, S~Micali, and C~Rackoff.
\newblock The knowledge complexity of interactive proof-systems.
\newblock In {\em Proceedings of the Seventeenth Annual ACM Symposium on Theory
  of Computing}, STOC '85, pages 291--304. ACM, 1985.
\newblock \href {http://dx.doi.org/10.1145/22145.22178}
  {\path{doi:10.1145/22145.22178}}.

\bibitem[GNW04]{GNW}
Marcus Greferath, Alexandr Nechaev, and Robert Wisbauer.
\newblock Finite quasi-{Frobenius} modules and linear codes.
\newblock {\em J. Algebra Appl.}, 3(3):247--272, 2004.
\newblock \href {http://dx.doi.org/10.1142/S0219498804000873}
  {\path{doi:10.1142/S0219498804000873}}.

\bibitem[GP69]{gel-pon}
I.~M. Gelfand and V.~A. Ponomarev.
\newblock Remarks on the classification of a pair of commuting linear
  transformations in a finite-dimensional space.
\newblock {\em Functional Anal. Appl.}, 3:325--326, 1969.
\newblock \href {http://dx.doi.org/10.1007/BF01076321}
  {\path{doi:10.1007/BF01076321}}.

\bibitem[GQ17]{GQ17}
Joshua~A. Grochow and Youming Qiao.
\newblock Algorithms for group isomorphism via group extensions and cohomology.
\newblock {\em SIAM J. Comput.}, 46(4):1153--1216, 2017.
\newblock Preliminary version in IEEE Conference on Computational Complexity
  (CCC) 2014 (DOI:10.1109/CCC.2014.19). Also available as
  \arXiv{1309.1776}{[cs.DS]} and ECCC Technical Report TR13-123.
\newblock \href {http://dx.doi.org/10.1137/15M1009767}
  {\path{doi:10.1137/15M1009767}}.

\bibitem[Gri81]{Grigoriev83}
D.~Ju. Grigoriev.
\newblock Complexity of ``wild'' matrix problems and of the isomorphism of
  algebras and graphs.
\newblock {\em Zap. Nauchn. Sem. Leningrad. Otdel. Mat. Inst. Steklov. (LOMI)},
  105:10--17, 198, 1981.
\newblock Theoretical applications of the methods of mathematical logic, III.
\newblock \href {http://dx.doi.org/10.1007/BF01084390}
  {\path{doi:10.1007/BF01084390}}.

\bibitem[Gro12a]{GrochowLie}
Joshua~A. Grochow.
\newblock Matrix {Lie} algebra isomorphism.
\newblock In {\em IEEE Conference on Computational Complexity (CCC12)}, pages
  203--213, 2012.
\newblock Also available as \arXiv{1112.2012}{[cs.CC]} and ECCC Technical
  Report TR11-168.
\newblock \href {http://dx.doi.org/10.1109/CCC.2012.34}
  {\path{doi:10.1109/CCC.2012.34}}.

\bibitem[Gro12b]{grochowPhD}
Joshua~A. Grochow.
\newblock {\em Symmetry and equivalence relations in classical and geometric
  complexity theory}.
\newblock PhD thesis, University of Chicago, Chicago, IL, 2012.
\newblock URL: \url{http://www.cs.colorado.edu/~jgrochow/grochow-thesis.pdf}.

\bibitem[GS]{M2}
Daniel~R. Grayson and Michael~E. Stillman.
\newblock Macaulay2, a software system for research in algebraic geometry.
\newblock Available at \url{https://faculty.math.illinois.edu/Macaulay2/}.

\bibitem[GW13]{gourWallach}
Gilad Gour and Nolan~R. Wallach.
\newblock Classification of multipartite entanglement of all finite
  dimensionality.
\newblock {\em Phys. Rev. Lett.}, 111:060502, Aug 2013.
\newblock \arXiv{1304.7259}{[quant-ph]}.
\newblock \href {http://dx.doi.org/10.1103/PhysRevLett.111.060502}
  {\path{doi:10.1103/PhysRevLett.111.060502}}.

\bibitem[HBD17]{HBD17}
Harald~Andr\'es Helfgott, Jitendra Bajpai, and Daniele Dona.
\newblock Graph isomorphisms in quasi-polynomial time.
\newblock \arXiv{1710.04574}{[math.GR]}, 2017.

\bibitem[Hig60]{HigmanEnum}
Graham Higman.
\newblock Enumerating {$p$}-groups. {I}. {Inequalities}.
\newblock {\em Proc. London Math. Soc. (3)}, 10:24--30, 1960.
\newblock \href {http://dx.doi.org/10.1112/plms/s3-10.1.24}
  {\path{doi:10.1112/plms/s3-10.1.24}}.

\bibitem[HL16]{HL16}
Jesko H{\"u}ttenhain and Pierre Lairez.
\newblock The boundary of the orbit of the 3-by-3 determinant polynomial.
\newblock {\em Comptes Rendus Mathematique}, 354(9):931--935, 2016.
\newblock \arXiv{1512.02437}{[math.AG]}.
\newblock \href {http://dx.doi.org/10.1016/j.crma.2016.07.002}
  {\path{doi:10.1016/j.crma.2016.07.002}}.

\bibitem[H{\"u}t17]{H17}
Jesko H{\"u}ttenhain.
\newblock {\em Geometric Complexity Theory and Orbit Closures of Homogeneous
  Forms}.
\newblock PhD thesis, TU Berlin, 2017.
\newblock URL:
  \url{https://www3.math.tu-berlin.de/algebra/work/huettenhain_thesis.pdf}.

\bibitem[IKS10]{IKS10}
G{\'{a}}bor Ivanyos, Marek Karpinski, and Nitin Saxena.
\newblock Deterministic polynomial time algorithms for matrix completion
  problems.
\newblock {\em {SIAM} J. Comput.}, 39(8):3736--3751, 2010.
\newblock \href {http://dx.doi.org/10.1137/090781231}
  {\path{doi:10.1137/090781231}}.

\bibitem[IQ18]{IQ17}
G{\'{a}}bor Ivanyos and Youming Qiao.
\newblock Algorithms based on *-algebras, and their applications to isomorphism
  of polynomials with one secret, group isomorphism, and polynomial identity
  testing.
\newblock In Artur Czumaj, editor, {\em Proceedings of the Twenty-Ninth Annual
  {ACM-SIAM} Symposium on Discrete Algorithms, {SODA} 2018, New Orleans, LA,
  USA, January 7-10, 2018}, pages 2357--2376. {SIAM}, 2018.
\newblock \href {http://dx.doi.org/10.1137/1.9781611975031.152}
  {\path{doi:10.1137/1.9781611975031.152}}.

\bibitem[IR99]{IR99}
G{\'a}bor Ivanyos and Lajos R{\'o}nyai.
\newblock Computations in associative and {Lie} algebras.
\newblock In {\em Some tapas of computer algebra}, pages 91--120. Springer,
  1999.
\newblock \href {http://dx.doi.org/10.1007/978-3-662-03891-8_5}
  {\path{doi:10.1007/978-3-662-03891-8_5}}.

\bibitem[Irn05]{irniger}
Christophe-Andr\'{e}~Mario Irniger.
\newblock {\em Graph matching---filtering databases of graphs using machine
  learning techniques}.
\newblock PhD thesis, Universit{\"a}t Bern, 2005.

\bibitem[Iva00]{Iva00}
G{\'a}bor Ivanyos.
\newblock Fast randomized algorithms for the structure of matrix algebras over
  finite fields.
\newblock In {\em Proceedings of the 2000 international symposium on Symbolic
  and algebraic computation}, pages 175--183. ACM, 2000.
\newblock \href {http://dx.doi.org/10.1145/345542.345620}
  {\path{doi:10.1145/345542.345620}}.

\bibitem[JQSY19]{JQSY19}
Zhengfeng Ji, Youming Qiao, Fang Song, and Aaram Yun.
\newblock General linear group action on tensors: A candidate for post-quantum
  cryptography.
\newblock \arXiv{1906.04330}{[cs.CR]}, 2019.

\bibitem[KB09]{KB09}
Tamara~G Kolda and Brett~W Bader.
\newblock Tensor decompositions and applications.
\newblock {\em SIAM review}, 51(3):455--500, 2009.
\newblock \href {http://dx.doi.org/10.1137/07070111X}
  {\path{doi:10.1137/07070111X}}.

\bibitem[Khu98]{khukhro}
E.~I. Khukhro.
\newblock {\em {$p$}-automorphisms of finite {$p$}-groups}, volume 246 of {\em
  London Mathematical Society Lecture Note Series}.
\newblock Cambridge University Press, Cambridge, 1998.
\newblock \href {http://dx.doi.org/10.1017/CBO9780511526008}
  {\path{doi:10.1017/CBO9780511526008}}.

\bibitem[KL86]{KL86}
Lee Klingler and Lawrence~S. Levy.
\newblock Sweeping-similarity of matrices.
\newblock {\em Linear Algebra Appl.}, 75:67--104, 1986.
\newblock \href {http://dx.doi.org/10.1016/0024-3795(86)90182-5}
  {\path{doi:10.1016/0024-3795(86)90182-5}}.

\bibitem[Koi96]{Koi96}
Pascal Koiran.
\newblock Hilbert's {Nullstellensatz} is in the polynomial hierarchy.
\newblock {\em J. Complexity}, 12(4):273--286, 1996.
\newblock \href {http://dx.doi.org/10.1006/jcom.1996.0019}
  {\path{doi:10.1006/jcom.1996.0019}}.

\bibitem[KS06]{KS06}
Neeraj Kayal and Nitin Saxena.
\newblock Complexity of ring morphism problems.
\newblock {\em Computational Complexity}, 15(4):342--390, 2006.
\newblock \href {http://dx.doi.org/10.1007/s00037-007-0219-8}
  {\path{doi:10.1007/s00037-007-0219-8}}.

\bibitem[KST93]{KST93}
Johannes K\"{o}bler, Uwe Sch\"{o}ning, and Jacobo Tor\'{a}n.
\newblock {\em The graph isomorphism problem: its structural complexity}.
\newblock Birkhauser Verlag, Basel, Switzerland, Switzerland, 1993.
\newblock \href {http://dx.doi.org/10.1007/978-1-4612-0333-9}
  {\path{doi:10.1007/978-1-4612-0333-9}}.

\bibitem[Lad75]{Ladner}
Richard~E. Ladner.
\newblock On the structure of polynomial time reducibility.
\newblock {\em J. ACM}, 22(1):155--171, 1975.
\newblock \href {http://dx.doi.org/10.1145/321864.321877}
  {\path{doi:10.1145/321864.321877}}.

\bibitem[Lam91]{lam}
T.~Y. Lam.
\newblock {\em A first course in noncommutative rings}, volume 131 of {\em
  Graduate Texts in Mathematics}.
\newblock Springer-Verlag, New York, 1991.
\newblock \href {http://dx.doi.org/10.1007/978-1-4684-0406-7}
  {\path{doi:10.1007/978-1-4684-0406-7}}.

\bibitem[Lan12]{Lan12}
J.M. Landsberg.
\newblock {\em Tensors: Geometry and Applications}, volume 128 of {\em Graduate
  studies in mathematics}.
\newblock American Mathematical Soc., 2012.
\newblock \href {http://dx.doi.org/10.1090/gsm/128}
  {\path{doi:10.1090/gsm/128}}.

\bibitem[LQ17]{LQ17}
Yinan Li and Youming Qiao.
\newblock Linear algebraic analogues of the graph isomorphism problem and the
  {Erd{\H{o}}s}--{R{\'{e}}nyi} model.
\newblock In Chris Umans, editor, {\em 58th {IEEE} Annual Symposium on
  Foundations of Computer Science, {FOCS} 2017}, pages 463--474. {IEEE}
  Computer Society, 2017.
\newblock \href {http://dx.doi.org/10.1109/FOCS.2017.49}
  {\path{doi:10.1109/FOCS.2017.49}}.

\bibitem[Luk82]{Luks82}
Eugene~M. Luks.
\newblock Isomorphism of graphs of bounded valence can be tested in polynomial
  time.
\newblock {\em J. Comput. Syst. Sci.}, 25(1):42 -- 65, 1982.
\newblock \href {http://dx.doi.org/10.1016/0022-0000(82)90009-5}
  {\path{doi:10.1016/0022-0000(82)90009-5}}.

\bibitem[Luk92]{Luk92}
Eugene~M. Luks.
\newblock Computing in solvable matrix groups.
\newblock In {\em FOCS 1992, 33rd Annual Symposium on Foundations of Computer
  Science}, pages 111--120. {IEEE} Computer Society, 1992.
\newblock \href {http://dx.doi.org/10.1109/SFCS.1992.267813}
  {\path{doi:10.1109/SFCS.1992.267813}}.

\bibitem[Luk93]{Luks}
Eugene~M. Luks.
\newblock Permutation groups and polynomial-time computation.
\newblock In {\em Groups and computation ({N}ew {B}runswick, {NJ}, 1991)},
  volume~11 of {\em DIMACS Ser. Discrete Math. Theoret. Comput. Sci.}, pages
  139--175. Amer. Math. Soc., Providence, RI, 1993.

\bibitem[Mac62]{MacWilliams}
Florence~Jessie MacWilliams.
\newblock {\em Combinatorial problems of elementary abelian groups}.
\newblock PhD thesis, Radcliffe College, 1962.

\bibitem[Mac71]{MacLane}
Saunders MacLane.
\newblock {\em Categories for the working mathematician}.
\newblock Springer-Verlag, New York-Berlin, 1971.
\newblock Graduate Texts in Mathematics, Vol. 5.
\newblock \href {http://dx.doi.org/10.1007/978-1-4757-4721-8}
  {\path{doi:10.1007/978-1-4757-4721-8}}.

\bibitem[McK80]{McK80}
Brendan~D. McKay.
\newblock Practical graph isomorphism.
\newblock {\em Congr. Numer.}, pages 45--87, 1980.

\bibitem[Mil78]{Mil78}
Gary~L. Miller.
\newblock On the {$n^{\log n}$} isomorphism technique (a preliminary report).
\newblock In {\em STOC}, pages 51--58. ACM, 1978.
\newblock \href {http://dx.doi.org/10.1145/800133.804331}
  {\path{doi:10.1145/800133.804331}}.

\bibitem[Miy96]{miyazakiCodeEq}
Takunari Miyazaki.
\newblock {Luks's} reduction of graph isomorphism to code equivalence.
\newblock Comment to E. W. Clark,
  \href{https://groups.google.com/forum/#!msg/sci.math.research/puZxGj9HXKI/CeyH2yyyNFUJ}{https://groups.google.com/forum/\#!msg/sci.math.research/puZxGj9HXKI/CeyH2yyyNFUJ},
  1996.

\bibitem[Miy04]{miyake}
Akimasa Miyake.
\newblock Multipartite entanglement under stochastic local operations and
  classical communication.
\newblock {\em Int. J. Quant. Info.}, pages 65--77, 2004.
\newblock \arXiv{quant-ph/0401023}{}.
\newblock \href {http://dx.doi.org/10.1142/S0219749904000080}
  {\path{doi:10.1142/S0219749904000080}}.

\bibitem[MP14]{MP14}
Brendan~D. McKay and Adolfo Piperno.
\newblock Practical graph isomorphism, {II}.
\newblock 60(0):94 -- 112, 2014.
\newblock \href {http://dx.doi.org/10.1016/j.jsc.2013.09.003}
  {\path{doi:10.1016/j.jsc.2013.09.003}}.

\bibitem[Mul11]{Mul11}
Ketan Mulmuley.
\newblock On {P} vs. {NP} and geometric complexity theory: Dedicated to sri
  ramakrishna.
\newblock {\em J. {ACM}}, 58(2):5:1--5:26, 2011.
\newblock \href {http://dx.doi.org/10.1145/1944345.1944346}
  {\path{doi:10.1145/1944345.1944346}}.

\bibitem[Nai13]{naik}
Vipul Naik.
\newblock {\em Lazard correspondence up to isoclinism}.
\newblock PhD thesis, The University of Chicago, 2013.
\newblock URL: \url{https://vipulnaik.com/thesis/}.

\bibitem[Old36]{oldenburger}
Rufus Oldenburger.
\newblock Non-singular multilinear forms and certain {$p$}-way matrix
  factorizations.
\newblock {\em Trans. Amer. Math. Soc.}, 39(3):422--455, 1936.
\newblock \href {http://dx.doi.org/10.2307/1989760}
  {\path{doi:10.2307/1989760}}.

\bibitem[Pat96]{Pat96}
Jacques Patarin.
\newblock Hidden fields equations {(HFE)} and isomorphisms of polynomials
  {(IP):} two new families of asymmetric algorithms.
\newblock In {\em Advances in Cryptology - {EUROCRYPT} '96, International
  Conference on the Theory and Application of Cryptographic Techniques,
  Saragossa, Spain, May 12-16, 1996, Proceeding}, pages 33--48, 1996.
\newblock \href {http://dx.doi.org/10.1007/3-540-68339-9_4}
  {\path{doi:10.1007/3-540-68339-9_4}}.

\bibitem[Poo14]{Poonen}
Bjorn Poonen.
\newblock Undecidable problems: a sampler.
\newblock In {\em Interpreting {G}\"{o}del}, pages 211--241. Cambridge Univ.
  Press, Cambridge, 2014.
\newblock \arXiv{1204.0299}{[math.LO]}.

\bibitem[PR97]{PR97}
Erez Petrank and Ron~M. Roth.
\newblock Is code equivalence easy to decide?
\newblock {\em {IEEE} Trans. Inf. Theory}, 43(5):1602--1604, 1997.
\newblock \href {http://dx.doi.org/10.1109/18.623157}
  {\path{doi:10.1109/18.623157}}.

\bibitem[PSS18]{PSS18}
Max Pfeffer, Anna Seigal, and Bernd Sturmfels.
\newblock Learning paths from signature tensors.
\newblock arXiv preprint \arXiv{1809.01588}{[math.NA]}, 2018.

\bibitem[R{\'o}n88]{Ronyai88}
Lajos R{\'o}nyai.
\newblock Zero divisors in quaternion algebras.
\newblock {\em J. Algorithms}, 9(4):494--506, 1988.
\newblock \href {http://dx.doi.org/10.1016/0196-6774(88)90014-4}
  {\path{doi:10.1016/0196-6774(88)90014-4}}.

\bibitem[Ros13a]{Ros13a}
David~J. Rosenbaum.
\newblock Bidirectional collision detection and faster deterministic
  isomorphism testing.
\newblock arXiv preprint \arXiv{1304.3935}{[cs.DS]}, 2013.

\bibitem[Ros13b]{Ros13b}
David~J. Rosenbaum.
\newblock Breaking the {$n^{\log n}$} barrier for solvable-group isomorphism.
\newblock In {\em Proceedings of the Twenty-Fourth Annual ACM-SIAM Symposium on
  Discrete Algorithms}, pages 1054--1073. SIAM, 2013.
\newblock Preprint \arXiv{1205.0642}{[cs.DS]}.

\bibitem[Sen00]{Sen00}
Nicolas Sendrier.
\newblock Finding the permutation between equivalent linear codes: The support
  splitting algorithm.
\newblock {\em {IEEE} Trans. Information Theory}, 46(4):1193--1203, 2000.
\newblock \href {http://dx.doi.org/10.1109/18.850662}
  {\path{doi:10.1109/18.850662}}.

\bibitem[Ser77]{SergeichukPgpWild}
V.~V. Sergeichuk.
\newblock The classification of metabelian {$p$}-groups.
\newblock In {\em Matrix problems ({R}ussian)}, pages 150--161. Akad. Nauk
  Ukrain. SSR Inst. Mat., Kiev, 1977.

\bibitem[Ser00]{Sergeichuk2000}
Vladimir~V. Sergeichuk.
\newblock Canonical matrices for linear matrix problems.
\newblock {\em Linear Algebra Appl.}, 317(1-3):53--102, 2000.
\newblock \href {http://dx.doi.org/10.1016/S0024-3795(00)00150-6}
  {\path{doi:10.1016/S0024-3795(00)00150-6}}.

\bibitem[Ser03]{Ser03}
{\'A}kos Seress.
\newblock {\em Permutation group algorithms}, volume 152.
\newblock Cambridge University Press, 2003.
\newblock \href {http://dx.doi.org/10.1017/CBO9780511546549}
  {\path{doi:10.1017/CBO9780511546549}}.

\bibitem[Sim78]{Sim78}
Charles~C Sims.
\newblock Some group-theoretic algorithms.
\newblock In {\em Topics in algebra}, pages 108--124. Springer, 1978.
\newblock \href {http://dx.doi.org/10.1007/BFb0103126}
  {\path{doi:10.1007/BFb0103126}}.

\bibitem[SS07]{SimsonSkowronski}
Daniel Simson and Andrzej Skowro\'{n}ski.
\newblock {\em Elements of the representation theory of associative algebras.
  {V}ol. 3}, volume~72 of {\em London Mathematical Society Student Texts}.
\newblock Cambridge University Press, Cambridge, 2007.
\newblock Representation-infinite tilted algebras.

\bibitem[SV17]{SV17}
Rachna Somkunwar and Vinod~Moreshwar Vaze.
\newblock A comparative study of graph isomorphism applications.
\newblock {\em International Journal of Computer Applications}, 162(7):34--37,
  Mar 2017.
\newblock \href {http://dx.doi.org/10.5120/ijca2017913414}
  {\path{doi:10.5120/ijca2017913414}}.

\bibitem[Val76]{valiant}
Leslie~G. Valiant.
\newblock Relative complexity of checking and evaluating.
\newblock {\em Inf. Process. Lett.}, 5(1):20--23, 1976.
\newblock \href {http://dx.doi.org/10.1016/0020-0190(76)90097-1}
  {\path{doi:10.1016/0020-0190(76)90097-1}}.

\bibitem[Val79]{Val79}
Leslie~G. Valiant.
\newblock Completeness classes in algebra.
\newblock In Michael~J. Fischer, Richard~A. DeMillo, Nancy~A. Lynch, Walter~A.
  Burkhard, and Alfred~V. Aho, editors, {\em Proceedings of the 11h Annual
  {ACM} Symposium on Theory of Computing, April 30 - May 2, 1979, Atlanta,
  Georgia, {USA}}, pages 249--261. {ACM}, 1979.
\newblock \href {http://dx.doi.org/10.1145/800135.804419}
  {\path{doi:10.1145/800135.804419}}.

\bibitem[Val84]{valiantProj}
L.~G. Valiant.
\newblock An algebraic approach to computational complexity.
\newblock In {\em Proceedings of the {I}nternational {C}ongress of
  {M}athematicians, {Vol.} 2 ({W}arsaw, 1983)}, pages 1637--1643. PWN, Warsaw,
  1984.
\newblock URL:
  \url{https://www.mathunion.org/fileadmin/ICM/Proceedings/ICM1983.2/ICM1983.2.ocr.pdf}.

\bibitem[{Wik}19]{wikipediaAdjoin}
{Wikipedia contributors}.
\newblock Rng (algebra): adjoining an identity element --- {Wikipedia}, the
  free encyclopedia, 2019.
\newblock [Online; accessed 19-Feb-2019].
\newblock URL:
  \url{https://en.wikipedia.org/wiki/Rng_(algebra)#Adjoining_an_identity_element}.

\bibitem[Wil09]{Wil09a}
James~B. Wilson.
\newblock Decomposing {$p$}-groups via {Jordan} algebras.
\newblock {\em J. Algebra}, 322:2642--2679, 2009.
\newblock \href {http://dx.doi.org/10.1016/j.jalgebra.2009.07.029}
  {\path{doi:10.1016/j.jalgebra.2009.07.029}}.

\bibitem[Wil14]{Wil14}
James~B. Wilson.
\newblock {2014 conference on \emph{Groups, Computation, and Geometry} at
  Colorado State University, co-organized by P. Brooksbank, A. Hulpke, T.
  Penttila, J. Wilson, and W. Kantor}.
\newblock Personal communication, 2014.

\bibitem[Wil15]{WilsonWildSlides}
James~B. Wilson.
\newblock Surviving in the wilderness.
\newblock Talk presented at the Sante Fe Institute Workshop on Wildness in
  Computer Science, Physics, and Mathematics, 2015.

\bibitem[ZKT85]{ZKT}
V.~N. Zemlyachenko, N.~M. Korneenko, and R.~I. Tyshkevich.
\newblock Graph isomorphism problem.
\newblock {\em J. Soviet Math.}, 29(4):1426--1481, May 1985.
\newblock \href {http://dx.doi.org/10.1007/BF02104746}
  {\path{doi:10.1007/BF02104746}}.

\end{thebibliography}

\end{document}